\newtheorem*{proposition*}{Proposition}
\newtheorem*{theorem*}{Theorem}
\newtheorem*{conjecture*}{Conjecture}
\newtheorem*{claim*}{Claim}
\newtheorem*{lemma*}{Lemma}
\newtheorem*{corollary*}{Corollary}
\newtheorem{theorem}{Theorem}[section]
\newtheorem{thm}{Theorem}
\newtheorem{thmx}{Theorem}
\newtheorem{proposition}[theorem]{Proposition}
\newtheorem{lemma}[theorem]{Lemma}
\newtheorem{corollary}[theorem]{Corollary}
\newtheorem{conj}[thm]{Conjecture}
\newtheorem{conjx}[thmx]{Conjecture}
\newtheorem*{definition*}{Definition}
\newtheorem{definition}{Definition}[section]
\newtheorem*{assumption*}{Assumption}
\newtheorem*{remark*}{Remark}
\newtheorem{remark}{Remark}[section]
\newcommand{\R}{\mathbb{R}}
\newcommand{\s}{\mathbb{S}}
\newcommand{\N}{\mathbb{N}}
\newcommand{\snabla}{\slashed{\nabla}}
\newcommand{\sg}{\slashed{g}}
\numberwithin{equation}{section}
\begin{document}

\title{Linear waves in the interior of extremal black holes I}
\author{Dejan Gajic}
\address{University of Cambridge, Department of Applied Mathematics and Theoretical Physics, Wilberforce Road, Cambridge CB3 0WA, United Kingdom}
\email{D.Gajic@damtp.cam.ac.uk}
\date{}
\begin{abstract}
We consider solutions to the linear wave equation in the interior region of extremal Reissner--Nordstr\"om black holes. We show that, under suitable assumptions on the initial data, the solutions can be extended continuously beyond the Cauchy horizon and moreover, that their local energy is finite. This result is in contrast with previously established results for subextremal Reissner--Nordstr\"om black holes, where the local energy was shown to generically blow up at the Cauchy horizon.
\end{abstract}
\maketitle
\tableofcontents

\section{Introduction}
One of the most striking features of black hole spacetimes is the possible presence of spacetime singularities in their interior. The nature of singularities remains generally poorly understood, however, even when one restricts to vacuum spacetimes arising from small perturbations of Kerr initial data. 

In order to better understand black hole interiors, we consider the simplest toy model for the Einstein equations, governing the dynamics of perturbations of Kerr initial data, the linear wave equation on a fixed background spacetime:
\begin{equation}
\label{eq:waveqkerr}
\square_{g}\phi=0.
\end{equation}
Here, we either take $g=g_{a,M}$, the metric of a Kerr spacetime with mass $M$ and angular momentum parameter $a$, with $|a|\leq M$, or $g=g_{e,M}$, the metric of a Reissner--Nordstr\"om spacetime with mass $M$ and charge $e$, with $|e|\leq M$. We will view Reissner--Nordstr\"om as a ``poor man's version'' of Kerr.

In this paper, we initiate the mathematical study of (\ref{eq:waveqkerr}) in the interior region of \emph{extremal} black holes. Specifically, we consider (\ref{eq:waveqkerr}) in extremal Reissner--Nordstr\"om ($|e|=M$). Aretakis proved decay in time for solutions $\phi$ and their tangential derivatives along the event horizon in \cite{Aretakis2011a,Aretakis2011}, starting from Cauchy data on a spacelike hypersurface intersecting the event horizon. We first show that, under the assumption of the decay rates from \cite{Aretakis2011a,Aretakis2011}, $\phi$ is continuously extendible beyond the Cauchy horizon (the inner horizon). 

More surprisingly, by assuming \emph{stronger} decay for the tangential derivatives, we also show that the energy of $\phi$ remains finite, with respect to any uniformly timelike vector field and spacelike or null hypersurface intersecting the Cauchy horizon. The required decay along the event horizon is proved in upcoming work with Angelopoulos and Aretakis \cite{Angelopoulos2015}. The finiteness of the local energy of $\phi$ at the Cauchy horizon stands in sharp contrast to the interior region of \emph{subextremal} Reissner--Nordstr\"om, where the corresponding energy of $\phi$ has been shown to blow up for generic data \cite{Luk2015}. This distinctive property of the extremal Reissner--Nordstr\"om interior was first observed by Murata--Reall--Tanahashi in a numerical setting \cite{mureta1}.

By restricting to spherically symmetric solutions, we show that $\phi$ can in fact be extended as a $C^2$ function beyond the Cauchy horizon, if we assume precise asymptotics of $\phi$ that are suggested by the numerics in \cite{Lucietti2013}. 

We will show in a subsequent paper \cite{Gajic2015a} that under analogous assumptions for the decay in time of $\phi$ along the event horizon of extremal Kerr ($|a|=M$) and \emph{assuming axisymmetry} of $\phi$, we can extend $\phi$ continuously beyond the Cauchy horizon in the corresponding black hole interior. Moreover, the local energy of $\phi$ is finite at the Cauchy horizon. We will also show that the axisymmetry assumption on $\phi$ can be dropped in slowly rotating extremal Kerr--Newman spacetimes.

\subsection{Previous results for the linear wave equation on black hole backgrounds}
A \emph{qualitative} understanding of the behaviour of solutions to (\ref{eq:waveqkerr}) in the interior requires a \emph{quantitative} understanding of the solutions in the exterior. That is to say, continuous extendibility in the interior depends on precise decay rates for the solutions along the event horizon. We will review in this section decay results for the wave equation (\ref{eq:waveqkerr}) in the exterior of extremal and subextremal members of the Kerr--Newman family. We will also discuss boundedness and blow-up results in their interior regions.

Polynomial decay in time for solutions to (\ref{eq:waveqkerr}) in the exterior was obtained for the full subextremal range $|a|<M$ of Kerr spacetimes by Dafermos--Rodnianski--Shlapentokh-Rothman in \cite{Dafermos2014c}. This result is the culmination of many partial results by various different authors over the past decades. See \cite{dafrod5,Dafermos2014c} for comprehensive lists of references. The results of \cite{Dafermos2014c} have been generalised to subextremal Kerr--Newman in \cite{Civin2014}.

The geometry of the exterior region of \emph{extremal} Kerr, where $|a|=M$, exhibits important qualitative differences from the geometry of subextremal Kerr. Most notably, the local red-shift effect at the event horizon is absent in extremal Kerr. This is manifested in the study of (\ref{eq:waveqkerr}) as an \emph{Aretakis instability}, and was discovered by Aretakis in extremal Reissner--Nordstr\"om \cite{Aretakis2011a,Aretakis2011} and extremal Kerr \cite{are4}.

Aretakis showed more generally that non-decay of solutions or their transversal derivatives along the event horizon of extremal Kerr is a consequence of the existence of conserved quantities that do not vanish for generic data, the \emph{Aretakis constants}. He showed moreover that, under the assumption of pointwise decay of solutions and their tangential derivatives along the event horizon in affine time $v$, second-order transversal derivatives even \emph{blow up} as $v\to \infty$. Quantitative decay estimates for axisymmetric solutions in extremal Kerr were proved in \cite{are6,are5}. Conserved quantities along the event horizon can also arise for higher-spin equations on extremal Kerr \cite{lure1}. Boundedness and decay statements for (\ref{eq:waveqkerr}) on extremal Kerr \emph{without} the assumption of axisymmetry of the solutions remain open problems.

By using the previously established quantitative decay rates for solutions to (\ref{eq:waveqkerr}) along the event horizon of subextremal Reissner--Nordstr\"om spacetimes ($|e|<M$), it was shown that solutions are uniformly bounded in the interior and can be extended in $C^0$ beyond the Cauchy horizon.
\begin{thmx}[Franzen \cite{Franzen2014}]
\label{thm:franz}
Let $\phi$ be a solution to (\ref{eq:waveqkerr}) in \emph{subextremal} Reissner--Nordstr\"om arising from sufficiently regular and suitably decaying data on a spacelike hypersurface $\Sigma$ that is asymptotically flat at two ends. Then there exists a constant $C=C(M,\Sigma)>0$ and a constant $D_0>0$ that depends on the initial data, such that
\begin{equation*}
|\phi|\leq C D_0,
\end{equation*}
everywhere in the future domain of development of $\Sigma$. Moreover, $\phi$ admits a $C^0$ extension beyond the Cauchy horizon.
\end{thmx}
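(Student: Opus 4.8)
The plan is to exploit the rigid geometry of the subextremal interior—bounded to the past by the event horizon $\mathcal{H}$ (at $r=r_+$) and to the future by the Cauchy horizon $\mathcal{CH}$ (at $r=r_-$, with $r_\pm=M\pm\sqrt{M^2-e^2}$)—and to convert the polynomial decay of $\phi$ along $\mathcal{H}$, supplied by the exterior theory, into uniform pointwise control all the way up to $\mathcal{CH}$. First I would introduce double-null coordinates $(u,v)$ covering the interior, in which $g=-\Omega^2\,du\,dv+r^2 g_{\mathbb{S}^2}$ with $\Omega^2\sim|(r-r_+)(r-r_-)|/r^2$, so that $\mathcal{H}$ and $\mathcal{CH}$ sit at the two null boundaries and $\Omega^2$ is exponentially small near each. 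Since Reissner--Nordstr\"om is spherically symmetric, I would decompose $\phi=\sum_{\ell,m}\phi_{\ell m}(u,v)\,Y_{\ell m}$; each $\psi_{\ell m}:=r\phi_{\ell m}$ then solves a reduced $1+1$ equation $\partial_u\partial_v\psi_{\ell m}=W_\ell\,\psi_{\ell m}$, where the coefficient $W_\ell$ carries a factor of $\Omega^2$ and is therefore integrable in $u$ and in $v$ up to either horizon.

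The next steps establish control away from $\mathcal{CH}$. The exterior decay theory provides, along $\mathcal{H}$, bounds of the schematic form $|\phi|+|\partial_v\phi|\lesssim D_0\,v^{-p}$ together with analogous decay for angular derivatives; combined with the data on $\Sigma$ this controls the characteristic flux entering the interior. Near $\mathcal{H}$ I would apply the red-shift vector field associated with the positive surface gravity $\kappa_+>0$: the resulting energy identity shows that the nondegenerate flux through a constant-$r$ slice just inside $\mathcal{H}$ is controlled by, and inherits the decay of, the horizon flux. Across the bulk, where $\Omega^2$ and $W_\ell$ are uniformly bounded above and below and the constant-$r$ slices are uniformly spacelike, a standard energy estimate with respect to a uniformly timelike vector field (e.g.\ $\partial_u+\partial_v$, timelike in the interior) propagates this flux with only a bounded multiplicative loss depending on $M,e$.

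The crux is the passage to $\mathcal{CH}$, and this is where I expect the main difficulty to lie. Here the negative surface gravity $\kappa_-$ drives the blue-shift: the natural energy flux can grow like $e^{2|\kappa_-|v}$, defeating any polynomial decay of the data, which is precisely the mechanism behind the energy blow-up of \cite{Luk2015}. The key realization is that the blue-shift amplifies \emph{derivatives} but not the \emph{amplitude} of $\phi$. To capture this I would integrate the reduced equation along characteristics, writing
\begin{equation*}
\psi_{\ell m}(u,v)=\psi_{\ell m}(u_0,v)+\psi_{\ell m}(u,v_0)-\psi_{\ell m}(u_0,v_0)+\int_{u_0}^{u}\!\!\int_{v_0}^{v}W_\ell(u',v')\,\psi_{\ell m}(u',v')\,dv'\,du',
\end{equation*}
and running a Gronwall argument in the characteristic rectangle. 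Because $W_\ell$ is integrable in both null directions up to $\mathcal{CH}$—a consequence of the exponential smallness of $\Omega^2$—the source term stays under control, and $\psi_{\ell m}$, hence $\phi_{\ell m}$, remains bounded by $CD_0$ even as $\partial_v\phi$ blows up. Summing over $\ell,m$ using the angular decay and Sobolev embedding on $\mathbb{S}^2$ then yields $|\phi|\le CD_0$ throughout the interior.

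Finally, for the $C^0$ extension I would show that for each fixed $u$ the limit $\lim_{v\to\infty}\phi(u,v)$ exists by establishing that $\partial_v\phi\in L^1_v$ along every outgoing ray, so that $\phi(u,v)=\phi(u,v_0)+\int_{v_0}^{v}\partial_v\phi\,dv'$ converges; the required integrability follows from differentiating the reduced equation in $v$ and rerunning the characteristic Gronwall analysis of the previous step. Uniformity in $u$ and in the angular variables, again via Sobolev embedding, upgrades this pointwise convergence to a genuinely continuous extension of $\phi$ across the Cauchy horizon. The heart of the argument is thus the disentangling in the third step: conceding that the energy may diverge while showing that the amplitude stays bounded is exactly the subtlety that separates $C^0$ boundedness from energy boundedness.
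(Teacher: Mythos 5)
First, a point of orientation: the paper does not prove Theorem \ref{thm:franz} at all --- it is quoted from \cite{Franzen2014} as background, and the introduction only sketches Franzen's ingredients (a red-shift multiplier near $\mathcal{H}^+$, multipliers of the form $r^q(\partial_u+\partial_v)$ in the trapped bulk, $|u|^p\partial_u+v^q\partial_v$-type multipliers near $\mathcal{CH}^+$, all applied to the full solution after commuting with the angular Killing fields $O_i$, followed by Sobolev embedding on the spheres). Your proposal takes a genuinely different route: spherical-harmonic decomposition plus a pointwise characteristic/Gr\"onwall argument for each reduced $1+1$ equation. The paper explicitly flags this route as yielding only the \emph{weaker} fixed-mode statement (which it attributes to McNamara \cite{Mcnamara1978a}), and the reason is precisely the gap in your argument: the potential $W_\ell$ carries a factor $\ell(\ell+1)$, so although $\sup_v\int|W_\ell|\,du'$ and $\sup_u\int|W_\ell|\,dv'$ are finite (using $\int\Omega^2\,du'\sim r_+-r(u,v)\le r_+-r_-$), they are of size $\ell(\ell+1)(r_+-r_-)$, and the two-variable Gr\"onwall inequality returns a bound of the form $D_\ell\,e^{C\ell(\ell+1)}$ for the $\ell$-th mode. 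For merely smooth data the mode amplitudes $D_\ell$ decay faster than any polynomial in $\ell$ but not like $e^{-c\ell^2}$, so the sum over $\ell,m$ demanded by your final Sobolev step diverges. This is not a presentational issue but the essential obstruction that forces Franzen's $L^2$-based vector field method, in which commutation with the Killing fields makes every constant independent of the angular frequency.

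The $C^0$-extension step also has a gap as written. Integrating $\partial_u(\partial_v\psi_{\ell m})=W_\ell\,\psi_{\ell m}$ in $u$ and using only \emph{boundedness} of $\psi_{\ell m}$ gives $|\partial_v\psi_{\ell m}(u,v)|\le|\partial_v\psi_{\ell m}|_{\mathcal{H}^+}(v)+C_\ell\,\sup|\psi_{\ell m}|\,(r_+-r(u,v))$, and the second term tends to the nonzero constant $C_\ell\sup|\psi_{\ell m}|(r_+-r_-)$ as $v\to\infty$; hence it is not integrable in $v$, and ``differentiating the reduced equation and rerunning Gr\"onwall'' only yields boundedness of $\partial_v\phi$ in the coordinate $v$, which is useless at $\mathcal{CH}^+$ where $v\to\infty$ (indeed the transversal derivative in a regular coordinate generically blows up there, cf.\ Theorem \ref{thm:lukoh}). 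To get $\partial_v\phi\in L^1_v$ along outgoing rays you must first propagate the \emph{quantitative} polynomial decay $|\phi|\lesssim v^{-p}$, $p>1$, from the event horizon through the entire interior up to $\mathcal{CH}^+$ and insert it into the source term of the characteristic integration; your red-shift/bulk discussion gestures at this but the proposal never closes the loop. The remaining architecture --- red-shift near $\mathcal{H}^+$, uniform estimates in the compact bulk, and the key insight that the blue-shift destroys the energy but not the amplitude --- is sound and consistent with the actual proof.
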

A weaker result, the statement that the above holds for \emph{fixed} spherical harmonic modes, can be deduced from the results of McNamara in \cite{Mcnamara1978a}, together with established quantitative decay results along the event horizon in the exterior of Kerr--Newman \cite{Civin2014}.

In the above subextremal setting there is also blow-up present in the interior. In particular, for generic initial data, $\phi$ cannot be extended as a function in $H^1_{\textnormal{loc}}$ beyond the Cauchy horizon.

\begin{thmx}[Luk-Oh \cite{Luk2015}]
\label{thm:lukoh}
Let $\phi$ be a solution to (\ref{eq:waveqkerr}) in \emph{subextremal} Reissner--Nordstr\"om arising from generic, smooth, compactly supported data on a spacelike hypersurface $\Sigma$ that is asymptotically flat at two ends. Then $\phi$ fails to be in $H^1_{\textnormal{loc}}$ in a neighbourhood of each point on the future Cauchy horizon.
\end{thmx}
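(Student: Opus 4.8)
The plan is to show that, in a coordinate regular across the Cauchy horizon, the transverse derivative of $\phi$ grows exponentially because of the blue-shift associated with the nonzero inner surface gravity $\kappa_- > 0$, so that the $H^1$ integral diverges. I would work in double-null coordinates $(u,v)$ on the interior region, with $g = -\Omega^2\, du\, dv + r^2\, d\omega^2$ and $r$ decreasing from $r_+$ to $r_-$; the right component $\mathcal{CH}^+_R$ of the Cauchy horizon is the ideal boundary $\{v = \infty\}$, reached at fixed $u$, along which $\Omega^2 \sim e^{-2\kappa_- v}$. A coordinate $V$ regular across $\mathcal{CH}^+_R$ obeys $dV = \Omega^2\, dv$, whence $\partial_V\phi = \Omega^{-2}\,\partial_v\phi \sim e^{2\kappa_- v}\,\partial_v\phi$, and in the regular chart $(u,V,\omega)$ the volume element is comparable to $r^2\, du\, dV\, d\omega$. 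Consequently, for any neighbourhood $U$ of a point of $\mathcal{CH}^+_R$,
\begin{equation*}
\|\phi\|_{H^1(U)}^2 \gtrsim \int_U (\partial_V\phi)^2\, du\, dV\, d\omega \sim \int e^{2\kappa_- v}\,(\partial_v\phi)^2\, du\, dv\, d\omega ,
\end{equation*}
and it is enough to prove that $\partial_v\phi$, in the Eddington coordinate $v$, fails to decay exponentially as $v\to\infty$: any inverse-polynomial lower bound is then amplified by the factor $e^{2\kappa_- v}$ so as to make the integral diverge.

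Since distinct spherical harmonics are $L^2(\s^2)$-orthogonal, the energy of $\phi$ dominates that of its spherical mean $\phi_0 = \tfrac{1}{4\pi}\int_{\s^2}\phi\, d\omega$, which solves the spherically symmetric wave equation; I would therefore reduce to the $\ell = 0$ mode and set $\psi = r\phi_0$, so that the equation becomes the transport relation $\partial_u\partial_v\psi = W\psi$, where the potential $W = (\partial_u\partial_v r)/r$ carries a factor $\Omega^2$ and is integrable in $v$ up to $\mathcal{CH}^+_R$. By Theorem \ref{thm:franz}, $\phi$ and hence $\psi$ is uniformly bounded, so the bulk contribution of the potential is lower order. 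Integrating the transport equation in $u$ from the event horizon $\mathcal{H}^+_R = \{u = u_{\min}\}$ gives
\begin{equation*}
\partial_v\psi(u,v) = \partial_v\psi\big|_{\mathcal{H}^+_R}(v) + \int_{u_{\min}}^{u} W\,\psi\, du' = \partial_v\psi\big|_{\mathcal{H}^+_R}(v)\,\bigl(1 + o(1)\bigr) \qquad (v\to\infty),
\end{equation*}
uniformly for $u$ in compact subsets of the interior. Because $r \equiv r_+$ along $\mathcal{H}^+_R$, the boundary term is $r_+\,\partial_v\phi_0|_{\mathcal{H}^+_R}(v)$, the tangential derivative of the data along the event horizon; thus $\partial_v\psi$ near $\mathcal{CH}^+_R$ inherits the decay rate of the horizon data, and in particular its failure to decay exponentially.

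It remains to supply a matching \emph{lower} bound along the event horizon. Invoking the quantitative late-time asymptotics for the wave equation in the subextremal exterior (building on the decay theory of \cite{Civin2014}), I would establish that the spherical mode satisfies $\phi_0|_{\mathcal{H}^+_R}(v) = c\,v^{-p} + o(v^{-p})$ and $\partial_v\phi_0|_{\mathcal{H}^+_R}(v) = -p\,c\,v^{-p-1} + o(v^{-p-1})$ for the generic Price-law exponent $p$, where the coefficient $c = c[\,\text{data}\,]$ is an explicit linear functional of the Cauchy data — a Newman--Penrose-type (almost) conserved quantity. Genericity is precisely the condition $c \neq 0$; since $c$ depends only on the spherical part of the data and $\{c = 0\}$ is a proper closed subspace whose complement is open and dense, $c \neq 0$ holds for generic data. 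Granting $c \neq 0$, the previous step yields $|\partial_v\psi(u,v)| \gtrsim |c|\,v^{-p-1}$ for large $v$, uniformly on compact $u$-ranges, so that the displayed integral obeys
\begin{equation*}
\int e^{2\kappa_- v}\,(\partial_v\phi_0)^2\, du\, dv\, d\omega \gtrsim \int e^{2\kappa_- v}\,v^{-2p-2}\, dv = +\infty
\end{equation*}
on every neighbourhood of every point of $\mathcal{CH}^+_R$; the left component $\mathcal{CH}^+_L$ is handled symmetrically. This proves $\phi \notin H^1_{\mathrm{loc}}$ near the Cauchy horizon.

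The main obstacle is the \emph{lower} bound along the event horizon together with the identification of the generic coefficient $c$. Price-law upper bounds are classical, but a matching lower bound with a nonzero, explicitly computable leading coefficient — and enough control on the error terms in the expansion of $\partial_v\phi_0$ for the polynomial lower bound to genuinely survive — demands sharp quantitative control of the solution throughout the exterior; this is where the genericity hypothesis is both formulated and consumed. A secondary difficulty is to make the interior propagation estimate uniform up to $\mathcal{CH}^+_R$ and across the full angular sector, so that the conclusion holds in a neighbourhood of \emph{each} horizon point rather than only along a single outgoing cone.
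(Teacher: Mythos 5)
Theorem \ref{thm:lukoh} is not proved in this paper; it is imported from \cite{Luk2015}, so there is no internal proof to compare against. Your outline does reproduce the broad architecture of the Luk--Oh argument: the blue-shift amplification $\partial_V\phi=\Omega^{-2}\partial_v\phi\sim e^{2\kappa_- v}\partial_v\phi$, which turns any polynomial lower bound along the event horizon into divergence of the $H^1$ norm at $\mathcal{CH}^+$; propagation of such a lower bound into the interior via the transport form of the spherically symmetric wave equation; and genericity expressed as the nonvanishing of a linear functional of the data. However, the two steps you defer to the end are precisely where the content of \cite{Luk2015} lies, and as written they contain genuine gaps.

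First, the lower bound along $\mathcal{H}^+$. You propose that the leading coefficient $c$ be ``a Newman--Penrose-type (almost) conserved quantity'' and that $\{c=0\}$ be a proper closed subspace. But for \emph{compactly supported} data --- the class in the theorem --- the first Newman--Penrose constant $I_0[\phi]$ vanishes identically (this is noted in Section \ref{sec:priceslaw} of the present paper), so the obvious candidate functional is zero on the entire data class and cannot encode genericity. Identifying the correct functional (in \cite{Luk2015} it is constructed from the Newman--Penrose constant of a suitable time integral of $\phi$) and showing that its nonvanishing forces a $v^{-4}$-type lower bound on $\partial_v(r\phi_0)|_{\mathcal{H}^+}$ --- and only in a time-integrated sense, since pointwise late-time asymptotics of the form $\phi_0|_{\mathcal{H}^+}=cv^{-p}+o(v^{-p})$ for compactly supported data were not available and are strictly stronger than needed --- is the heart of their paper. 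As written, your argument assumes its hardest step.

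Second, localization. The theorem asserts failure of $H^1$ on a neighbourhood of \emph{each point} of $\mathcal{CH}^+$. Your reduction via $L^2(\s^2)$-orthogonality only bounds the energy on a full sphere bundle from below by that of $\phi_0$; on a small angular sector the cross terms between $\phi_0$ and the higher modes do not cancel, and one cannot simply discard the higher modes by showing their local energy near $\mathcal{CH}^+$ is finite, because the same weight $e^{2\kappa_- v}$ makes their energy divergent as well unless they decay exponentially along $\mathcal{H}^+$, which they do not. Some genuinely additional argument is required here; you flag this as a ``secondary difficulty,'' but it is not merely technical. Relatedly, your interior propagation claim $\partial_v\psi(u,v)=\partial_v\psi|_{\mathcal{H}^+}(v)\bigl(1+o(1)\bigr)$ requires showing that $\sup|\psi|\cdot\int|W|\,du'$ is small compared with $v^{-p-1}$ over the full range of $u$ from the event horizon, which in Eddington--Finkelstein coordinates is non-compact and forces the usual decomposition into red-shift, no-shift and blue-shift regions as in \cite{Franzen2014}; this is doable but not automatic from boundedness of $\psi$ alone.
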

\begin{figure}[b]
\begin{center}
\includegraphics[width=2.2in]{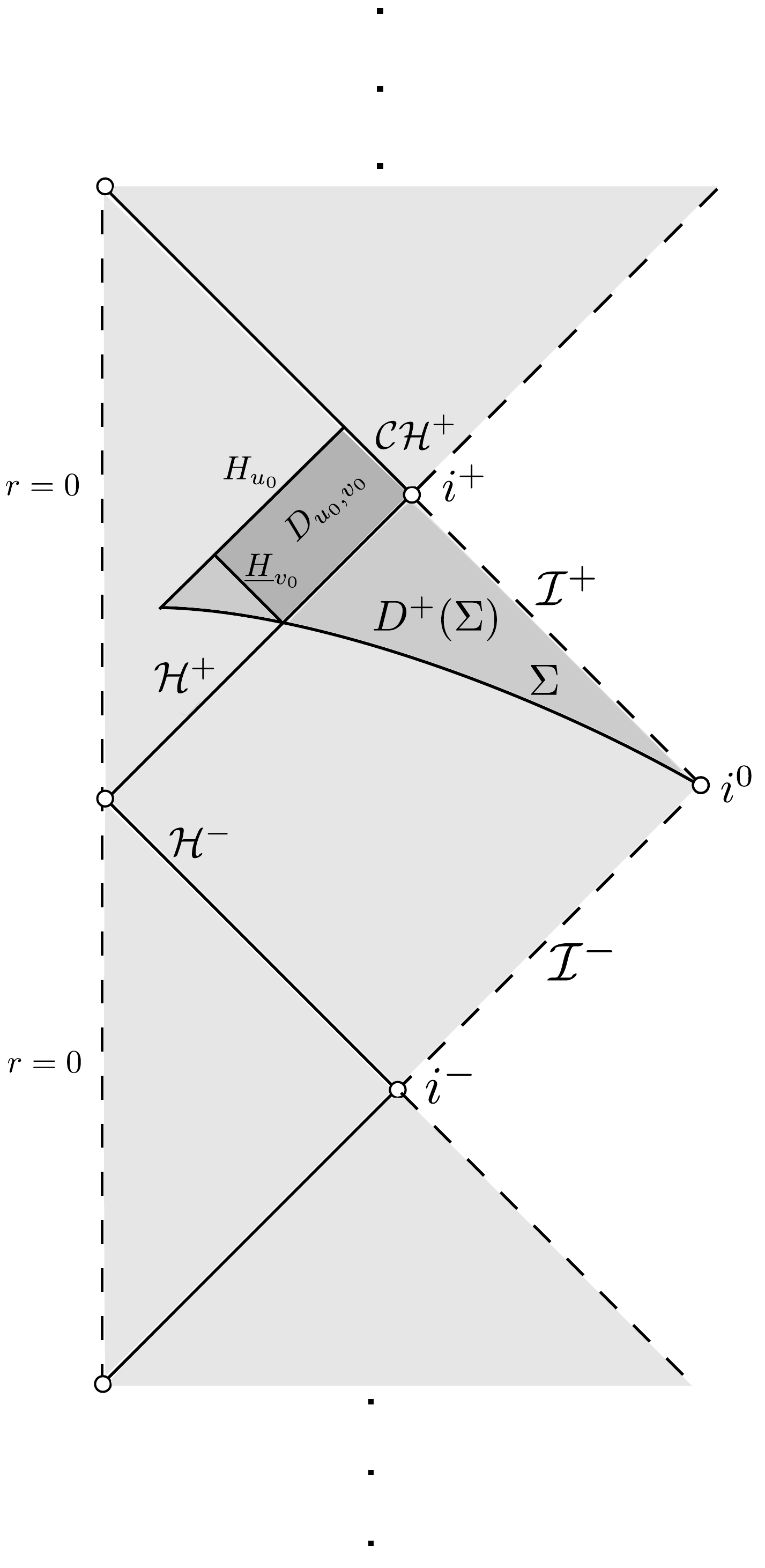}
\caption{\label{fig:fullspacetime} The Penrose diagram of maximal analytically extended extremal Reissner--Nordstr\"om.}
\end{center}
\end{figure}

Blow-up for derivatives of solutions to (\ref{eq:waveqkerr}) at the Cauchy horizon of subextremal Reissner--Nordstr\"om was first proved by McNamara \cite{Mcnamara1978} under an assumption on the event horizon. See also earlier numerical work by Simpson--Penrose \cite{Simpson1973}. Later, Poisson--Israel considered spherically symmetric nonlinear models describing ingoing and outgoing shells of radiation \cite{Poisson1990} and showed that the Kretschmann scalar blows up at the Cauchy horizon, due to the blow-up of a local mass quantity. The phenomenon of ``mass-inflation'' and the consequent blow-up of the Kretschmann scalar were proved by Dafermos in a more general setting \cite{Dafermos2003a,Dafermos2005c}, by considering the non-linear, spherically symmetric Einstein--Maxwell-scalar field model. See Section \ref{sec:nonlinresults} for more details.

\subsection{The linear wave equation in the interior of extremal Reissner--Nordstr\"om}
\label{sec:discussionresults}
In this paper, we impose Cauchy data for (\ref{eq:waveqkerr}) on a spacelike hypersurface $\Sigma$ in extremal Reissner--Nordstr\"om. It is appropriate to consider a hypersurface $\Sigma$ that is asymptotically flat at one end and intersects the black hole interior. Due to the geometry of the interior, $\Sigma$ must necessarily be incomplete. We restrict to the future domain of dependence of $\Sigma$, denoted by $D^+(\Sigma)$. The event horizon is denoted by $\mathcal{H}^+$ and we denote the segment of the future Cauchy horizon emanating from future timelike infinity $i^+$ in the interior by $\mathcal{CH}^+$. See Figure \ref{fig:fullspacetime}. For convenience, we will also denote the entire inner horizon of extremal Reissner--Nordstr\"om by $\mathcal{CH}^+$.  

We will often restrict to a subset $D_{u_0,v_0}$ of $D^+(\Sigma)\cup \mathcal{CH}^+$ that is the intersection of the causal future of the outgoing null hypersurface $\mathcal{H}^+$, the causal future of the ingoing null hypersurface $\uline{H}_{v_0}$ in the interior and the causal past of the outgoing null hypersurface $H_{u_0}$ in the interior. Here, $u$ and $v$ are Eddington--Finkelstein double null coordinates, $\uline{H}_{v_0}$ is a constant $v$ hypersurface in $\mathcal{M}\cup \mathcal{H}^+$ that intersects $\mathcal{H}^+$ and $H_{u_0}$ is a constant $u$ hypersurface in $\mathcal{M}$. We take $D_{u_0,v_0}$ to include a segment of $\mathcal{CH}^+$. See Figure \ref{fig:fullspacetime}. We can express $D_{u_0,v_0}$ as the set
\begin{equation*}
D_{u_0,v_0}=\{x\in D^+(\Sigma)\cup\mathcal{CH}^+\::\: U(x)\in[0,U(u_0)],\:V(x)\in[V(v_0),M],\:(U(x),V(x))\neq (0,M)\},
\end{equation*}
with $U$ an ingoing double-null coordinate that is regular at $\mathcal{H}^+$, where $U=0$, and $V$ an outgoing double-null coordinate that is regular at $\mathcal{CH}^+$, where $V=M$.

We can view $\phi$ restricted to $\mathcal{H}^+\cap\{v\geq v_0\}$ and $\uline{H}_{v_0}\cap \{u\leq u_0\}$, arising from Cauchy data on $\Sigma$, as characteristic initial data, in order to decouple the analysis in the interior from the analysis in the exterior.

We will first of all show that an analogous result to Theorem \ref{thm:franz} holds in extremal Reissner--Nordstr\"om.
\begin{thm}($L^{\infty}$-boundedness and $C^0$-extendibility)
\label{thm:linftyboundv1}
Let $\phi$ be a solution to (\ref{eq:waveqkerr}) in \emph{extremal} Reissner--Nordstr\"om arising from suitably decaying data on $\Sigma$. Then there exists a constant $C=C(M,\Sigma)>0$ and a constant $D_0>0$ that depends on the initial data, such that
\begin{equation*}
|\phi|\leq CD_0,
\end{equation*}
everywhere in $D^+(\Sigma)$. Moreover, $\phi$ admits a $C^0$ extension beyond $\mathcal{CH}^+$.
\end{thm}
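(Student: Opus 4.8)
The strategy parallels the proof of Franzen's Theorem~\ref{thm:franz}, adapted to the degenerate geometry of the extremal interior and to the Aretakis decay rates along $\mathcal{H}^+$. I would work in the double null coordinates $(U,V)$ of the statement, in which $\mathcal{H}^+=\{U=0\}$, $\mathcal{CH}^+=\{V=M\}$, and the metric reads $g=-\Omega^2\,dU\,dV+r^2\gamma$ with $\Omega^2$ bounded and nonvanishing up to $\mathcal{CH}^+$. Decomposing $\phi=\sum_{\ell,m}\phi_\ell Y_{\ell m}$ into spherical harmonics and setting $\psi_\ell=r\phi_\ell$, the equation $\square_g\phi=0$ reduces on each mode to the $1+1$-dimensional form
\begin{equation*}
\partial_U\partial_V\psi_\ell=W_\ell\,\psi_\ell,\qquad W_\ell=\frac{\partial_U\partial_V r}{r}-\frac{\Omega^2\,\ell(\ell+1)}{4r^2},
\end{equation*}
with characteristic data $\psi_\ell|_{\mathcal{H}^+}$ and $\psi_\ell|_{\underline{H}_{v_0}}$ induced by the Cauchy data on $\Sigma$, and with $W_\ell$ a bounded coefficient on the bounded domain $D_{u_0,v_0}$.

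The geometric input that makes this scheme work is the precise behaviour near the \emph{degenerate} horizon $r=M$. A computation with the tortoise coordinate $r_*=\int r^2(r-M)^{-2}\,dr$ shows that in Eddington--Finkelstein coordinates $(u,v)$ the conformal factor degenerates only polynomially, $\Omega^2_{EF}\sim M^2 v^{-2}$ as $v\to\infty$ (in contrast to the exponential degeneration a non-degenerate horizon would produce), and that $M-V\sim v^{-1}$, so that $\mathcal{CH}^+$ is reached at the finite value $V=M$ and a continuous limit as $v\to\infty$ is the same as a $C^0$ extension up to $\{V=M\}$. I would record this as quantitative bounds on $\Omega^2$, $r-M$, $\partial_U r$, $\partial_V r$ and $\partial_U\partial_V r$ throughout $D_{u_0,v_0}$; the upshot is that $W_\ell$ extends boundedly and that, along $\mathcal{H}^+$, the assumed Aretakis decay rates give a finite incoming flux $\int_{\mathcal{H}^+}|\partial_V\phi|\,dV<\infty$ — this is the precise meaning of ``suitably decaying data.''

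With these in hand I would close the $L^\infty$ bound by a bootstrap/Gr\"onwall argument. Integrating $\partial_U(\partial_V\psi_\ell)=W_\ell\psi_\ell$ in $U$ from $\mathcal{H}^+$ expresses $\partial_V\psi_\ell(U,V)$ through the boundary flux $\partial_V\psi_\ell(0,V)$ plus a bulk integral of the bounded coefficient $W_\ell$ against $\psi_\ell$; integrating once more in $V$ from $\underline{H}_{v_0}$ then controls $\psi_\ell$ itself. Since $W_\ell$ is bounded on the bounded domain and the flux is integrable, a Gr\"onwall inequality recovers the bootstrap assumption $|\psi_\ell|\le CD_0$, and dividing by $r\to M>0$ yields $|\phi_\ell|\le CD_0$. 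Because all the integrals converge \emph{absolutely} as $V\to M$, the limit $\lim_{V\to M}\psi_\ell(U,\cdot)$ exists and is continuous in $U$, which is exactly the continuous extension of $\phi_\ell$ across $\{V=M\}$.

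The main obstacle I anticipate is the passage from fixed spherical modes to the full solution $\phi$: the per-mode constants grow with $\ell(\ell+1)$, so a naive sum $\sum_{\ell,m}\phi_\ell Y_{\ell m}$ does not obviously converge in $L^\infty$ or to a continuous limit. I would resolve this by running the transport-and-Gr\"onwall scheme not on individual modes but on the quantities obtained by commuting $\square_g$ with the rotation Killing fields, equivalently with powers of $\sD$, propagating $L^2(\s^2)$ bounds on sufficiently many angular derivatives up to $\mathcal{CH}^+$ and then applying Sobolev embedding on the spheres of symmetry to obtain the pointwise statement. The delicate part is to verify that the commutators preserve the boundedness of the coefficients and the integrability of the flux, and that the assumed decay on $\mathcal{H}^+$ is uniform across the angular-derivative hierarchy, so that the final constant $C$ depends only on $M$ and $\Sigma$ while $D_0$ captures the data norm.
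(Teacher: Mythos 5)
Your overall architecture (asymptotics of the geometry near $r=M$, estimates along null hypersurfaces, commutation with the rotation Killing fields, Sobolev embedding on the spheres) matches the paper, and your per-mode transport-plus-Gr\"onwall argument is essentially how the paper treats the spherically symmetric mode in Section \ref{sec:sphericalsymmetry} (and is the content of the McNamara-type fixed-mode result mentioned in the introduction). But the step you propose to resolve the mode-summation problem -- running the ``transport-and-Gr\"onwall scheme'' on $L^2(\s^2)$ norms of angularly commuted quantities -- has a genuine gap: it loses derivatives and never closes. In the transport form $\partial_u(r\partial_v\phi)=-\partial_v r\,\partial_u\phi+r\partial_v r\,\slashed{\Delta}\phi$, the source $\slashed{\Delta}\phi$ is \emph{not} controlled by the quantities you are propagating for $\phi$; it requires two additional angular derivatives. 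Since $O^k\phi$ satisfies the same equation with source $\slashed{\Delta}O^k\phi$, every rung of the hierarchy calls on $O^{k+2}\phi$, so no finite number of commutations suffices. (This is precisely why the paper's weighted $L^1$ method is confined to $\phi_0$, and why Section \ref{sec:decaydernosymm} only recovers a logarithmically degenerate bound on $\partial_V\phi$ for general $\phi$.) The paper closes the argument differently: it abandons $L^1$ transport estimates for general $\phi$ and instead proves weighted $L^2$ energy estimates with the multiplier $N_{2,q}=|u|^2\partial_u+v^q\partial_v$ via the divergence identity (Propositions \ref{prop:mainenergyestimate} and \ref{prop:highordeestimate}). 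There the angular derivatives enter quadratically as $|\snabla\phi|^2$ in both the fluxes and the bulk term $K^{N_{2,q}}$, whose sign and size are controlled by the deformation tensor -- no derivative is lost -- and the two-variable Gr\"onwall lemma (Lemma \ref{lm:gronwall}) absorbs the remaining error terms into the fluxes. Commutation with $O_i$ is then needed only twice, to pass from $L^2(\s^2)$ to $L^\infty(\s^2)$ in Proposition \ref{prop:boundpsirn}, and the $C^0$ extension follows from a Cauchy-sequence argument (Proposition \ref{prop:C0extension}) requiring $q>1$.

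A secondary but substantive error: your claim that $W_\ell$ is ``a bounded coefficient on the bounded domain $D_{u_0,v_0}$'' is false. In the coordinates $(U,V)$ regular at both horizons one has $\frac{du}{dU}\sim u^2$ and $\frac{dv}{dV}\sim v^2$ while $\Omega^2\sim (v+|u|)^{-2}$, so the conformal factor and $\partial_U\partial_V r$ behave like $|u|^2v^2(v+|u|)^{-2}$ and $|u|^2v^2(v+|u|)^{-3}$ respectively; both blow up near the corner $i^+$ where $\mathcal{H}^+$ and $\mathcal{CH}^+$ meet (e.g.\ along $|u|\sim v\to\infty$). It is exactly this degeneration at $i^+$ that forces the specific weights $|u|^2$ and $v^q$ in the energies, the restriction $p=2$ in Proposition \ref{prop:mainenergyestimate}, and the integrability conditions $\int h\,du+\int k\,dv<\infty$ in the Gr\"onwall lemma; a na\"ive ``bounded coefficient on a bounded domain'' Gr\"onwall does not apply.
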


In view of the decay results along $\mathcal{H}^+$ in \cite{Aretakis2011a,Aretakis2011}, it is sufficient to prove the results of Theorem \ref{thm:linftyboundv1}, starting from appropriate characteristic initial data.

We will also show that, \textbf{in contrast with Theorem \ref{thm:lukoh}}, $\phi$ can be extended beyond $\mathcal{CH}^+$ as a continuous function in $H^1_{\textnormal{loc}}$ in the extremal case. We first formulate a theorem where characteristic initial data are posed on $\mathcal{H}^+$.
\begin{thm}[$H^1_{\textnormal{loc}}$-extendibility; first version]
\label{thm:H1boundv1}
Consider suitably regular characteristic initial data on $\uline{H}_{v_0}\cup \mathcal{H}^+$, such that
\begin{equation}
\label{eq:initialdecH+}
\int_{\mathcal{H}^+\cap\{v\geq v_0\}} v^2(\partial_v\phi)^2+|\snabla\phi|^2<\infty.
\end{equation}
Then $\phi$ can be extended beyond $\mathcal{CH}^+$ as a $H^1_{\textnormal{loc}}$ function.
\end{thm}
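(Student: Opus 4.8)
The plan is to work in double-null coordinates $(u,v)$ and reduce the wave equation $\square_g\phi=0$ to an evolution equation for $\phi$ along outgoing and ingoing null directions, then propagate the finiteness of the flux quantity in (\ref{eq:initialdecH+}) from $\mathcal{H}^+$ into the interior and up to $\mathcal{CH}^+$. In spherical symmetry-adapted coordinates, writing $\psi = r\phi$, the equation takes the form $\partial_u\partial_v\psi = (\textnormal{lower order})\cdot\psi + (\textnormal{angular terms})$, where the potential involves $\frac{\partial_u\partial_v r}{r}$ and the mass function. First I would decompose $\phi$ into spherical harmonics $\phi = \sum_{\ell,m}\phi_{\ell m}(u,v)Y_{\ell m}$, so that the transport system for each mode decouples up to the angular eigenvalue $\ell(\ell+1)$; the hypothesis (\ref{eq:initialdecH+}) controls $\sum_\ell \int v^2(\partial_v\phi_{\ell m})^2 + \ell(\ell+1)\phi_{\ell m}^2$, which is exactly the energy flux through $\mathcal{H}^+$.

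The core estimate will be an energy identity obtained by multiplying the wave equation by a suitable weight times $\partial_v\phi$ (and symmetrically by $\partial_u\phi$) and integrating over a characteristic rectangle $D_{u_0,v_0}$. The crucial geometric input is the behaviour of $\partial_u r$ and the surface gravity near $\mathcal{CH}^+$: in the extremal case the relevant metric coefficient $-\partial_u r\,\partial_v r$ (equivalently the factor $\Omega^2$) decays only \emph{polynomially} rather than exponentially as one approaches the Cauchy horizon, because the surface gravity of the inner horizon \emph{vanishes} in extremal Reissner--Nordstr\"om. This is the structural reason the blow-up mechanism of Theorem \ref{thm:lukoh} is absent. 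Concretely, I would set up a $v$-weighted energy, $E(u) = \int_{H_u} v^2(\partial_v\phi)^2 + |\snabla\phi|^2 \, dv + \int (\partial_u\phi)^2\,du$, and derive a Gr\"onwall-type inequality $E(u_0) \lesssim E(0) + \int_0^{u_0} B(u) E(u)\,du$, where $B(u)$ collects the coefficients coming from the potential and the transport of the $\partial_u$-flux. The decisive point is that $B(u)$ is \emph{integrable} in $u$ up to $\mathcal{CH}^+$ precisely because of the vanishing surface gravity, so Gr\"onwall closes and yields a uniform bound on $E$ up to and including the Cauchy horizon.

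The main obstacle I anticipate is controlling the transversal derivative $\partial_u\phi$ and the coupling between the two null fluxes: the hypothesis (\ref{eq:initialdecH+}) is posed on $\mathcal{H}^+$ only, so one must first propagate control from $\mathcal{H}^+$ onto an initial ingoing cone and then set up the redshift-type (here, rather, the \emph{absence} of blueshift) estimate to carry the flux across the whole interior rectangle. One needs a sufficiently strong $v$-weight to absorb the growth of $v$ along ingoing cones as $u\to \mathcal{CH}^+$, which is where the sharp $v^2$ weight in (\ref{eq:initialdecH+}) enters: it must be calibrated exactly against the rate at which $\Omega^2$ degenerates so that the weighted flux stays finite. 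Verifying that the error terms generated by differentiating the weight and by the angular potential are dominated by the good bulk terms — rather than overwhelming them — is the technical heart of the argument.

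Once the uniform energy bound is established, $H^1_{\textnormal{loc}}$-extendibility across $\mathcal{CH}^+$ follows by a standard argument: the finite flux through ingoing and outgoing cones up to $V=M$ shows that $\phi$ and its first derivatives lie in $L^2$ on a neighbourhood of the Cauchy horizon in the regular coordinate $V$, and density of the truncated region allows one to define a limiting trace on $\mathcal{CH}^+$ and extend $\phi$ as an $H^1_{\textnormal{loc}}$ function in the regular double-null coordinates $(U,V)$, completing the proof.
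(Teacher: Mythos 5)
Your overall architecture coincides with the paper's: the proof is a weighted energy estimate on the characteristic rectangle $D_{u_0,v_0}$, closed by the two-variable Gr\"onwall inequality (Lemma \ref{lm:gronwall}), with the decisive geometric input being the uniform polynomial decay $\Omega^2\lesssim (v+|u|)^{-2}$ toward $i^+$ (equivalently the vanishing surface gravity), and the conclusion follows from the flux bounds via (\ref{est:energytoH1}). Your reading of the hypothesis (\ref{eq:initialdecH+}) as the $N_{2,2}$-flux through $\mathcal{H}^+$ is also correct, since $|u|^2\Omega^2\to 1$ there. Two remarks on where you diverge: the paper does not decompose into spherical harmonics but works directly with $|\snabla\phi|^2$ through the deformation tensor of the multiplier; the mode decomposition is admissible but buys nothing here, because the problematic angular bulk term reappears mode by mode as $\ell(\ell+1)\Omega^2 r^{-2}\phi_{\ell m}^2$ with the same sign difficulty, and you must still check uniformity in $\ell$.

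The genuine gap is in the energy you actually write down, $E(u)=\int_{H_u}v^2(\partial_v\phi)^2+|\snabla\phi|^2\,dv+\int(\partial_u\phi)^2\,du$: the ingoing flux carries no $|u|^2$ weight, i.e.\ your multiplier is effectively $N_{0,2}=\partial_u+v^2\partial_v$. The paper's Proposition \ref{prop:mainenergyestimate} shows this choice does not close. The obstruction is not the null bilinear term (which Cauchy--Schwarz handles for a range of $p$) but the angular part of the bulk term: in the decomposition (\ref{eq:KNangularkerrn2}) the coefficient of $|u|^{p-1}|\snabla\phi|^2$ tends to $-(2-p)/2$ in the region $|u|\gg v$ near $\mathcal{H}^+$, so for any $p<2$ one acquires a negative spacetime term of size $|u|^{p-1}|\snabla\phi|^2$ that cannot be absorbed into the flux $\int_{\uline{H}_v}v^q\Omega^2|\snabla\phi|^2$ with an integrable Gr\"onwall coefficient, since $|u|^{p-1}v^{-q}$ is unbounded in $u$ at fixed $v$. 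Only at $p=2$ does this leading coefficient vanish identically, leaving a remainder of size $(v^q+|u|^2)\log(v+|u|)\,\mathcal{O}((v+|u|)^{-2})$ that is absorbable. So the $|u|^2$ weight on $\partial_u\phi$ (and correspondingly the $\Omega^2$-weighted angular fluxes $|u|^2\Omega^2|\snabla\phi|^2$ and $v^q\Omega^2|\snabla\phi|^2$ on the two families of cones) is forced, not optional; your ``calibrate the $v$-weight against the degeneration of $\Omega^2$'' heuristic identifies the outgoing weight correctly but misses that the ingoing weight is equally rigid. A secondary, minor point: the data on $\uline{H}_{v_0}$ are prescribed as part of the characteristic initial data, so there is nothing to ``propagate from $\mathcal{H}^+$ onto an initial ingoing cone''; one simply assumes the corresponding weighted flux on $\uline{H}_{v_0}$ finite as part of ``suitably regular''.
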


The requirement (\ref{eq:initialdecH+}) follows from \emph{improved} decay results in the exterior of extremal Reissner--Nordstr\"om for solutions arising from suitable Cauchy data on $\Sigma$, compared to the decay results in \cite{Aretakis2011a,Aretakis2011}. See upcoming work with Angelopoulos and Aretakis \cite{Angelopoulos2015}. We can therefore reformulate Theorem \ref{thm:H1boundv1} in terms of initial Cauchy data on $\Sigma$.

\begin{thm}[$H^1_{\textnormal{loc}}$-extendibility; second version, using \cite{Angelopoulos2015}]
\label{thm:H1boundv2}
Let $\phi$ be a solution to (\ref{eq:waveqkerr}) in extremal Reissner--Nordstr\"om arising from suitably regular and decaying data on $\Sigma$. Then $\phi$ can be extended beyond $\mathcal{CH}^+$ as a $H^1_{\textnormal{loc}}$ function.
\end{thm}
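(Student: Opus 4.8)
The plan is to deduce Theorem \ref{thm:H1boundv2} from Theorem \ref{thm:H1boundv1} by reducing the Cauchy problem with data on $\Sigma$ to the characteristic initial value problem in the interior. The key conceptual point, already anticipated in Section \ref{sec:discussionresults}, is that the behaviour of $\phi$ in $D_{u_0,v_0}$ depends only on its restriction to the null hypersurfaces $\mathcal{H}^+\cap\{v\geq v_0\}$ and $\uline{H}_{v_0}\cap\{u\leq u_0\}$, so that the analysis in the interior decouples completely from the exterior.

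First I would solve (\ref{eq:waveqkerr}) in the exterior region starting from the prescribed Cauchy data on $\Sigma$ and restrict the resulting solution to $\mathcal{H}^+$ and to the ingoing segment $\uline{H}_{v_0}$. By the domain of dependence property and uniqueness for the characteristic initial value problem, the solution generated by these traces as characteristic data agrees with the original solution throughout $D_{u_0,v_0}$. It therefore suffices to verify that these traces meet the hypotheses of Theorem \ref{thm:H1boundv1}.

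The regularity hypothesis is inherited from the Cauchy data: for suitably regular data on $\Sigma$, standard propagation of regularity in the exterior region---which contains no interior horizons and enjoys a favourable causal structure up to $\mathcal{H}^+$---ensures that $\phi$ is correspondingly regular along $\uline{H}_{v_0}$ and along $\mathcal{H}^+$. The crucial remaining step is to verify the weighted decay estimate (\ref{eq:initialdecH+}) on $\mathcal{H}^+$. This is exactly where the improved exterior decay rates of \cite{Angelopoulos2015} enter: they provide decay for $\partial_v\phi$ and for the angular derivatives $\snabla\phi$ along the event horizon that is strong enough to control the $v^2$-weighted integral of $(\partial_v\phi)^2$ together with the integral of $|\snabla\phi|^2$. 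Once (\ref{eq:initialdecH+}) is established, Theorem \ref{thm:H1boundv1} applies and yields the $H^1_{\textnormal{loc}}$-extension of $\phi$ across $\mathcal{CH}^+$.

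I expect the main obstacle to lie not in the reduction itself, which is essentially a bookkeeping of well-posedness, but in the quantitative decay along $\mathcal{H}^+$: one needs rates for the tangential derivatives that are strictly stronger than those of \cite{Aretakis2011a,Aretakis2011}, since the bare Aretakis rates are insufficient to close the $v^2$-weighted integral in (\ref{eq:initialdecH+}). Delegating this quantitative input to \cite{Angelopoulos2015}, the proof of Theorem \ref{thm:H1boundv2} reduces to confirming that these rates do imply the finiteness of the weighted flux, after which the conclusion is immediate from the characteristic formulation.
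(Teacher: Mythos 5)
Your proposal matches the paper's own argument: Theorem \ref{thm:H1boundv2} is obtained there exactly by restricting the Cauchy development to $\mathcal{H}^+\cap\{v\geq v_0\}$ and $\uline{H}_{v_0}$, invoking the improved exterior decay of \cite{Angelopoulos2015} to verify the weighted flux condition (\ref{eq:initialdecH+}), and then applying Theorem \ref{thm:H1boundv1} (itself proved via the $q=2$ energy estimates of Theorem \ref{thm:eestimatesint} together with (\ref{est:energytoH1})). The reduction and the delegation of the quantitative decay input are the same in both.
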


It was first suggested by Murata--Reall--Tanahashi in \cite{mureta1} that spherically symmetric $\phi$ should be extendible in $H_{\textnormal{loc}}^1$ (and in fact, in $C^1$) beyond the Cauchy horizon, in the context of perturbations of extremal Reissner--Nordstr\"om in the  nonlinearly coupled spherically symmetric Einstein--Maxwell-scalar field model. See the discussion in Section \ref{sec:nonlinresults}.

\subsubsection{Late-time tails along the event horizon of extremal Reissner--Nordstr\"om}
\label{sec:priceslaw}
We can further refine the results of Theorem \ref{thm:H1boundv1} and \ref{thm:H1boundv2}, obtaining \emph{pointwise} estimates for the derivatives of $\phi$ at the Cauchy horizon, if we have a more precise picture of the decay of $\phi$ in $v$ along $\mathcal{H}^+$, where $v$ is an outgoing Eddington--Finkelstein double-null coordinate. In this section, we discuss the asymptotics of $\phi$ in $v$ that are expected to hold along the event horizon of subextremal and extremal Reissner--Nordstr\"om.

Recall that in Minkowski space, one can show arbitrarily fast decay of $\phi$ in $t$ in a bounded region $\{r\leq R\}$ if it decays suitably fast initially in $r$. In particular, if $\phi$ is initially compactly supported, it will vanish in $\{r\leq R\}$ at late times.

In black hole spacetimes, one expects non-trivial asymptotic behaviour $\phi$ in time, even for compactly supported initial data. In subextremal Reissner--Nordstr\"om the expected behaviour of these ``late-time tails'' is governed by \emph{Price's law} \cite{Price1972}. In particular, Price's law predicts for fixed spherical harmonic modes $\phi_l$, with $l\in \N_0$, arising from compactly supported initial data on a spacelike Cauchy hypersurface, that
\begin{equation}
\label{eq:priceslawsub}
|\phi_l|\sim v^{-3-2l}\quad \textnormal{and}\quad |\partial_v\phi_l|\sim v^{-4-2l},
\end{equation}
along the event horizon, where the constants in (\ref{eq:priceslawsub}) are generically non-vanishing.

Price's law (\ref{eq:priceslawsub}) gives in particular an \emph{upper} bound for the decay of $\phi$ along the event horizon of subextremal Reissner--Nordstr\"om. Rigorous results pertaining to the upper bound in (\ref{eq:priceslawsub}) have been obtained in \cite{Dafermos2005a,Donninger2011a,Tataru2009,dosch1,Metcalfe2012}. Moreover, Luk--Oh showed in \cite{Luk2015} that $\phi$ cannot decay with a polynomial rate faster than $v^{-3}$ along the event horizon of subextremal Reissner--Nordstr\"om, for generic, compactly supported data. This fact is made use of in Theorem \ref{thm:lukoh}.

The methods in \cite{Dafermos2005a,Donninger2011a,Tataru2009,dosch1,Metcalfe2012,Luk2015} break down in extremal Reissner--Nordstr\"om, in view of the absence of the local red-shift effect. Heuristics and numerics regarding late-time tails for extremal Reissner--Nordstr\"om in \cite{Lucietti2013,ori2} suggest an extremal variant of ``Price's law'' that in particular predicts:
\begin{equation}
\label{eq:priceslaw}
|\phi_l|\sim v^{-(l+1)}\quad \textnormal{and}\quad |\partial_v\phi_l|\sim v^{-(l+2)},
\end{equation}
along $\mathcal{H}^+$, where the constants in (\ref{eq:priceslaw}) are non-vanishing for generic, compactly supported initial data on $\Sigma$.

The decay rates appearing in Price's law (\ref{eq:priceslaw}) are related to the existence of conserved quantities along $\mathcal{I}^+$ and $\mathcal{H}^+$. The vanishing of these conserved quantities affects the asymptotics of $\phi$. We first define the conserved quantities.

It turns out that the following limit in outgoing Eddington--Finkelstein coordinates $(u,r,\theta,\varphi)$, if well-defined, is independent of $u$:
\begin{equation*}
\lim_{r\to \infty}\int_{\s^2}r^2\partial_r(r\phi)(u,r,\theta,\varphi)\,d\mu_{\s^2}= 4\pi I_0[\phi],
\end{equation*}
where $I_0[\phi]$ is a constant, determined by the initial data on $\Sigma$. The constant $I_0[\phi]$ is known as the \emph{first Newman--Penrose constant} \cite{Newman1968}. See also \cite{Lucietti2013,Aretakis2013}. 

Similarly, the following limit in ingoing Eddington--Finkelstein coordinates $(v,r,\theta,\varphi)$ is independent of $v$:
\begin{equation*}
\lim_{r\to M}\int_{\s^2}\partial_r(r\phi)(v,r,\theta,\varphi)\,d\mu_{\s^2}= 4\pi M H_0[\phi],
\end{equation*}
where $H_0[\phi]$ is a constant, determined by the initial data on $\Sigma$. The constant $H_0[\phi]$ is known as the \emph{zeroth Aretakis constant} \cite{Aretakis2011}.

For generic, compactly supported initial data, $I_0[\phi]=0$, but $H_0[\phi]\neq 0$. If additionally, the data are supported \emph{away} from $\mathcal{H}^+$, then $H_0[\phi]=0$. Solutions arising from initial data such that \emph{both} $H_0[\phi]=0$ and $I_0[\phi]=0$ are considered in a numerical setting in \cite{Lucietti2013}. In this case, the following late-time tails are suggested:
\begin{equation}
\label{eq:priceslawvan}
|\phi_l|\sim v^{-(l+2)}\quad \textnormal{and}\quad |\partial_v\phi_l|\sim v^{-(l+3)},
\end{equation}
where the constants in (\ref{eq:priceslawvan}) are non-vanishing for generic initial data on $\Sigma$ with $H_0[\phi]=0$ and $I_0[\phi]=0$. 

See also upcoming work with Angelopoulos and Aretakis on late-time tails in spherically symmetric black hole backgrounds \cite{Angelopoulos2015temp}.

One can consider the asymptotic behaviour of $\phi$ along $\mathcal{H}^+$ \emph{beyond} Price's law as stated in (\ref{eq:priceslaw}) by including the next-to-leading order term in $v^{-1}$. The numerical analysis in \cite{Lucietti2013} suggests that the leading-order terms for the spherically symmetric mode $\phi_0$ along $\mathcal{H}^+$, arising from compactly supported initial data on $\Sigma$, are given by
\begin{equation*}
- MH_0[\phi]v^{-1}+M^3H_0[\phi]v^{-2}\log v.
\end{equation*}
Motivated by the above terms, we will introduce the following slightly stronger assumptions for the asymptotic behaviour of $\phi_0$ along $\mathcal{H}^+$:
\begin{align}
\label{eq:beyondpriceslaw}
&\phi_0|_{\mathcal{H}^+}(v)=- MH_0[\phi]v^{-1}+M^3H_0[\phi]v^{-2}\log v+\mathcal{O}_2(v^{-2}),\\
\label{eq:addassumpinitial}
&\lim_{v\to \infty}\left[v^2\partial_v(v^2\partial_v\phi_0)|_{\mathcal{H}^+}(v)-2M^3H_0[\phi]\log v\right]\: \textnormal{is well-defined}.
\end{align}
Here, we use the notation $\mathcal{O}_2$ to group the higher-order terms in $v^{-1}$, i.e.\ all terms in $\mathcal{O}_2(v^{-2})$ decay at least as fast as $v^{-2}$ and $k$-th order derivatives, up to $k=2$, decay at least as fast as $v^{-2-k}$.\footnote{See also \cite{Gomez1994}, where similar asymptotics are predicted for the radiation fields of spherically symmetric solutions to (\ref{eq:waveqkerr}) along $\mathcal{I}^+$ in Schwarzschild, with respect to an ingoing Eddington--Finkelstein null coordinate $u$, arising from initial data with non-vanishing first Newman--Penrose constant imposed along an outgoing null hypersurface with additional reflective boundary conditions on the surface $\{r=R\}$, with $R>2M$.}

Note that the asymptotics (\ref{eq:beyondpriceslaw}) ensure that the quantity $v^2\partial_v(v^2\partial_v\phi_0)|_{\mathcal{H}^+}(v)-2M^3H_0[\phi]\log v$ in (\ref{eq:addassumpinitial}) is uniformly bounded in $v$, but do not guarantee that the limit as $v\to\infty$ is well-defined.

\subsubsection{Pointwise estimates for first-order and second-order derivatives}
\label{sec:intropointwiseest}
We can obtain pointwise estimates for $\phi$ and its derivatives under stronger assumptions on the asymptotics of $\phi$ along $\mathcal{H}^+$ than those required for Theorem \ref{thm:H1boundv1}. 

First of all, we show that spherically symmetric $\phi=\phi_0$ are extendible in $C^1$ beyond $\mathcal{CH}^+$, under an assumption that is compatible with the upper bound in (\ref{eq:priceslaw}). We consider, as in Theorem \ref{thm:H1boundv1}, characteristic initial data along $\mathcal{H}^+$.
\begin{thm}[$C^1$-extendibility of $\phi_0$]
\label{thm:c1boundphi0}
Consider suitably regular \uline{spherically symmetric} characteristic initial data on $\mathcal{H}^+\cup \uline{H}_{v_0}$, such that $\lim_{v\to \infty} v^2 \partial_v\phi_0|_{\mathcal{H}^+}(v)$ is well-defined. Then the arising solution $\phi_0$ can be extended as a $C^1$ function beyond $\mathcal{CH}^+$.
\end{thm}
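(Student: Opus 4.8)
The plan is to reduce the problem to a $1+1$-dimensional wave equation with an integrable potential, to propagate the decay of the data into the interior by a Gr\"onwall argument, and then to show that both null derivatives of the solution converge as one approaches $\mathcal{CH}^+$, working in a coordinate that is regular there.

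\emph{Reduction and regular gauge.} In Eddington--Finkelstein double-null coordinates $(u,v)$, where the metric is $-D\,du\,dv+r^2\,d\omega^2$ with $D=(1-M/r)^2$, set $\psi=r\phi_0$. Spherical symmetry collapses $\square_g\phi_0=0$ to $\partial_u\partial_v\psi=V_{\mathrm{pot}}\,\psi$ with $V_{\mathrm{pot}}=-DD'/(4r)$ and \emph{no} first-order terms. Since $DD'=2M(r-M)^3/r^5$, one has $V_{\mathrm{pot}}=O((r-M)^3)$ as $r\to M$; using $|r-M|\sim 2M^2/v$ near $\mathcal{CH}^+$ (and the analogous behaviour near $\mathcal{H}^+$) the potential is integrable along both null directions. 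The Cauchy horizon is reached as $v\to\infty$; a coordinate $V=V(v)$ regular there satisfies $V\to M$ and $V'(v)\sim D\sim v^{-2}$, so the regular derivative in the $v$-direction—transversal to $\mathcal{CH}^+$—is $\partial_V=(V')^{-1}\partial_v\sim v^2\partial_v$. The essential geometric fact is that the surface gravity of $\mathcal{CH}^+$ \emph{vanishes} in the extremal case, so $V(v)$ is polynomial in $v$ and there is no exponential blue-shift, in sharp contrast to the subextremal setting of Theorem \ref{thm:lukoh}. The hypothesis that $\lim_{v\to\infty}v^2\partial_v\phi_0|_{\mathcal{H}^+}$ exists is thus precisely the statement that the scaled derivative $v^2\partial_v\psi$—tangential along $\mathcal{H}^+$, where $\psi=M\phi_0$ and $\partial_v\psi=M\partial_v\phi_0$—has a limit, and the task is to carry this to where the same derivative becomes transversal to $\mathcal{CH}^+$.

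\emph{A priori bounds and continuity.} Treating the data on $\mathcal{H}^+\cup\underline{H}_{v_0}$ as a Goursat problem, I integrate $\partial_u\partial_v\psi=V_{\mathrm{pot}}\psi$ and run a Gr\"onwall argument; the integrability of $V_{\mathrm{pot}}$ along both null directions (including as $u\to-\infty$ along $\mathcal{H}^+$ and $v\to\infty$ toward $\mathcal{CH}^+$) yields $|\psi|\lesssim D_0$, as in Theorem \ref{thm:linftyboundv1}, together with $|\partial_v\psi|\lesssim D_0\,v^{-2}$ and $|\partial_u\psi|\lesssim D_0$. Since $\partial_v\psi\in L^1_v$, the identity $\psi(u,v)=\psi(u,v_0)+\int_{v_0}^v\partial_{v'}\psi\,dv'$ shows $\psi(u,v)$ converges as $v\to\infty$ to some $\Psi(u)$, giving the continuous extension of $\psi$, hence of $\phi_0=\psi/r$, to $\mathcal{CH}^+$. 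Integrating the equation in $v$ and using $V_{\mathrm{pot}}\in L^1_v$ shows that $\partial_u\psi(u,v)$ also converges as $v\to\infty$; this furnishes the tangential derivative at $\mathcal{CH}^+$, and one checks it equals $\Psi'$.

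\emph{Convergence of the transversal derivative (the crux).} Integrating in $u$ inward from $\mathcal{H}^+$ gives $\partial_v\psi(u,v)=\partial_v\psi|_{\mathcal{H}^+}(v)+\int_{-\infty}^{u}V_{\mathrm{pot}}(u',v)\,\psi(u',v)\,du'$, which I multiply by $v^2$. The boundary term $v^2\partial_v\psi|_{\mathcal{H}^+}(v)=Mv^2\partial_v\phi_0|_{\mathcal{H}^+}(v)$ converges by hypothesis. For the correction, the weight $v^2$ exactly balances the semi-infinite characteristic integral of the $O((v-u)^{-3})$ potential, since $\int_{-\infty}^u(v-u')^{-3}\,du'\sim\tfrac12 v^{-2}$; using $\psi\to\Psi$ this term converges to a finite, generally $u$-dependent limit rather than merely staying bounded. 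Hence $v^2\partial_v\psi$, i.e.\ $\partial_V\psi$, has a limit at $\mathcal{CH}^+$. Combined with the previous step, $\psi$ and both of its regular null derivatives extend continuously up to $\mathcal{CH}^+$, so $\psi$ is $C^1$ there in the regular gauge; it therefore admits a $C^1$ extension across $\mathcal{CH}^+$, and dividing by the smooth, nonvanishing $r$ yields the $C^1$ extension of $\phi_0$.

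\emph{Main obstacle.} The delicate point is upgrading boundedness of the transversal derivative $v^2\partial_v\psi$ to genuine convergence at $\mathcal{CH}^+$. This hinges on the sharp $(r-M)^3$ rate of $V_{\mathrm{pot}}$, on the polynomial (non-exponential) scaling of the regular coordinate—the absence of blue-shift at an extremal Cauchy horizon—and on the exact compensation of the $v^2$ weight by the semi-infinite characteristic integration. It is here that extremality is indispensable: the same polynomial decay of the data fails to control the transversal derivative in the subextremal case, where it generically blows up.
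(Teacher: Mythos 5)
Your strategy is in essence the paper's own (Section \ref{sec:sphericalsymmetry}): the reduction $\partial_u\partial_v(r\phi_0)=(\partial_u\partial_v r)\,\phi_0$ is exactly equation (\ref{eq:waveqalt}) (equivalently the transport equations (\ref{eq:weth})--(\ref{eq:wez})), the two-variable Gr\"onwall argument is Lemma \ref{lm:gronwall} and Proposition \ref{lm:l1}, and the final step --- integrate the equation in $u$ from $\mathcal{H}^+$, feed the hypothesis into the boundary term, control the bulk term --- is the content of Propositions \ref{prop:boundsphderv} and \ref{prop:sphsymmc1}. The issue is that the step you yourself single out as the crux is not justified as written.

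You claim that $v^2\int_{-\infty}^{u}V_{\mathrm{pot}}(u',v)\,\psi(u',v)\,du'$ converges as $v\to\infty$ ``using $\psi\to\Psi$''. Pointwise convergence $\psi(u',v)\to\Psi(u')$ at fixed $u'$ is not the relevant input: substituting $u'=-v\sigma$ shows that the normalised measure $v^{2}V_{\mathrm{pot}}(\cdot,v)\,du'$ places essentially all of its mass at $|u'|\sim v\to\infty$, so the limit of this term is governed by the behaviour of $\psi$ near the corner $i^+$, not by $\Psi$ on any compact $u$-range; no dominated-convergence argument at fixed $u'$ applies, and (contrary to your parenthetical) the resulting limit is in fact $u$-\emph{independent} to leading order --- compare Proposition \ref{cor:asymppartialVphieverywhere}, where $\partial_V(r\phi_0)\to MH_0$ on $\mathcal{CH}^+$ with no $u$-dependence. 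To close the step you need a quantitative corner estimate of the form $|\psi(u',v)-\psi(-\infty,v)|\lesssim |u'|^{-1}$ uniformly in $v$ (obtained by integrating $\partial_v(\partial_u\psi)=V_{\mathrm{pot}}\psi$ in $v$, which requires decay of $\partial_u\psi$ along $\uline{H}_{v_0}$ --- part of what ``suitably regular data'' must mean) together with the existence of $\lim_{v\to\infty}\phi_0|_{\mathcal{H}^+}(v)$; with these ingredients, which your a priori bounds essentially already supply, the argument can be repaired. The paper sidesteps the issue by working instead with $\partial_u(r\partial_V\phi_0)=-\partial_Vr\,\partial_u\phi_0$ in the coordinate $V$ regular at $\mathcal{CH}^+$: there the integrand is dominated, uniformly in $v$, by an integrable function of $u'$ alone (Propositions \ref{prop:boundsphderu} and \ref{lm:l1}), so no weight has to be balanced against an escaping measure and continuity at $\mathcal{CH}^+$ follows from the absolutely convergent iterated integrals (\ref{eq:c11})--(\ref{eq:c12}) in Proposition \ref{prop:sphsymmc1}.
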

Note that the assumption that $\lim_{v\to \infty} v^2 \partial_v\phi_0|_{\mathcal{H}^+}(v)$ is well-defined implies the uniform decay estimate $|\partial_v\phi_0|_{\mathcal{H}^+}|\lesssim v^{-2}$, which is a \emph{stronger} statement than the assumption (\ref{eq:initialdecH+}) of Theorem \ref{thm:H1boundv1} for $\phi_0$.

For solutions $\phi$ of (\ref{eq:waveqkerr}) \emph{without} any symmetry restrictions, we can still show extendibility beyond $\mathcal{CH}^+$ in the H\"older space $C^{0,\alpha}$, with $\alpha<1$, if we assume decay rates along $\mathcal{H}^+$ that are consistent with the upper bound in Price's law (\ref{eq:priceslaw}).

 \begin{thm}[$C^{0,\alpha}$-extendibility of $\phi$]
\label{thm:c11boundphi}
Let $0\leq \alpha<1$. Consider suitably regular characteristic initial data on $\mathcal{H}^+\cup \uline{H}_{v_0}$, such that along $\mathcal{H}^+$
\begin{align*}
\sum_{|k|\leq 2}\left(\int_{\s^2}|\partial_v\snabla^k\phi|^2\,d\mu_{\s^2}\right)(v)&\leq C_0v^{-2(\alpha+1)},\\
\sum_{k=1}^4 \left( \int_{\s^2} |\snabla^k\phi|^2\,d\mu_{\s^2}\right)(v)&\leq Cv^{-2\alpha}.
\end{align*}
for some constant $C_0>0$, where $\snabla^k$ denotes $k$-th order angular derivatives. 

Then the arising solution $\phi$ can be extended as a $C^{0,\alpha}$ function beyond $\mathcal{CH}^+$.
\end{thm}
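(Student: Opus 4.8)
The plan is to prove extendibility by deriving uniform H\"older estimates for $\phi$ in the double-null coordinate $V$ (regular at $\mathcal{CH}^+$, where $V=M$), integrating the wave equation twice from the data on $\mathcal{H}^+$ and $\uline{H}_{v_0}$. First I would write the wave equation $\square_g\phi=0$ in the double-null coordinates $(U,V)$ adapted to the interior, separating out the angular Laplacian $\snabla$ on the spheres. Schematically this takes the form $\partial_U\partial_V(r\phi)=\mathcal{E}$, where the error term $\mathcal{E}$ involves $r\phi$ itself together with $\sD\phi$ (the spherical Laplacian applied to $\phi$) and lower-order coefficients built from the metric functions and their derivatives; the key structural input is the precise rate at which these coefficients degenerate or stay bounded as one approaches $\mathcal{CH}^+$ in the extremal case.

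The core of the argument is to track how the polynomial decay assumed along $\mathcal{H}^+$ propagates into the interior. The assumptions are designed so that, in the $v$-variable along $\mathcal{H}^+$, the quantity $\int_{\s^2}|\partial_v\snabla^k\phi|^2$ decays like $v^{-2(\alpha+1)}$ and the angular derivatives $\int_{\s^2}|\snabla^k\phi|^2$ decay like $v^{-2\alpha}$. I would first use energy estimates adapted to a uniformly timelike vector field (as developed for Theorem \ref{thm:H1boundv1}) to control $\phi$ and a sufficient number of its angular derivatives throughout $D_{u_0,v_0}$, obtaining $L^2(\s^2)$-bounds at each $(U,V)$ with the appropriate weights inherited from the data. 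Because the relevant coordinate transformation exchanges the affine time $v$ on $\mathcal{H}^+$ for the regular coordinate $V$ near $\mathcal{CH}^+$, the polynomial-in-$v$ decay translates into a H\"older-type modulus of continuity in $V$: integrating $\partial_V\phi$ in $V$ against the assumed decay yields a bound of the form $|\phi(U,V)-\phi(U,M)|\lesssim |M-V|^{\alpha}$, which is exactly the $C^{0,\alpha}$ statement. The exponent $\alpha<1$ emerges directly from the exponent $-2(\alpha+1)$ in the hypothesis through this integration, and the angular-regularity hypothesis on $\snabla^k\phi$ up to order $4$ is what upgrades pointwise control via a Sobolev embedding on $\s^2$ (which requires more than two angular derivatives in $L^2$) so that the $L^2(\s^2)$ bounds become genuine pointwise bounds on $\phi$.

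I would then verify that the limit $\phi(U,M)$ exists for each $U$ and defines a $C^{0,\alpha}$ function on $\mathcal{CH}^+$, and that the estimate is uniform as $U$ ranges over $[0,U(u_0)]$, so that $\phi$ extends as a genuinely two-variable H\"older function across $\{V=M\}$. The main obstacle, I expect, is controlling the error term $\mathcal{E}$ in the wave equation uniformly up to $\mathcal{CH}^+$: the angular term $\sD\phi$ does not by itself decay as fast as $\partial_v\phi$, which is precisely why the hypothesis imposes the \emph{weaker} rate $v^{-2\alpha}$ on the angular derivatives and demands control of \emph{four} angular derivatives rather than two. Propagating these two different decay rates simultaneously through the double integration, while ensuring the feedback of $\sD\phi$ into the evolution of $\phi$ does not destroy the H\"older bound, is the delicate point; the borderline case $\alpha\to1$ is exactly where this feedback becomes critical and the H\"older regularity can no longer be improved to $C^1$ without the spherical symmetry used in Theorem \ref{thm:c1boundphi0}.
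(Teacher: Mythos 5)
You have correctly identified the overall strategy---propagate pointwise decay of $\partial_v\phi$ from $\mathcal{H}^+$ into the interior via the transport form of the wave equation, $\partial_u(r\partial_v\phi)=-\partial_v r\,\partial_u\phi+r\partial_v r\,\sD\phi$, and then convert a bound $|\partial_v\phi|\lesssim v^{-(1+\alpha)}$ into the H\"older modulus $|M-V|^{\alpha}$ using $\partial_V=\tfrac{dv}{dV}\partial_v\sim v^2\partial_v$---and you have correctly diagnosed the angular Laplacian as the obstruction. But your proposal stops exactly where the proof has to do its real work: you name the propagation of the two different decay rates as ``the delicate point'' without supplying the mechanism, and the tools you invoke do not close it. The weighted energy estimates of Theorem \ref{thm:eestimatesint} only give finiteness of fluxes $\int_{H_u}v^q(\partial_v\phi)^2$ with $q\leq 2$; they yield no pointwise rate $v^{-(1+\alpha)}$ for $\partial_v\phi$ in the interior. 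And estimating $\int_{-\infty}^u r\partial_v r\,\sD\phi\,du'$ using only uniform boundedness of $\sD\phi$ (which is all that the $C^0$-extendibility of angular derivatives provides) gives $(v+|u|)^{-1}\sim v^{-1}$ along each $H_u$, i.e.\ only the case $\alpha=0$. This is a genuine gap.

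The missing idea is the decomposition $\psi:=\phi-\phi_{\mathcal{H}^+}$, where $\phi_{\mathcal{H}^+}(u,v,\theta,\varphi):=\phi(-\infty,v,\theta,\varphi)$ is extended $u$-independently into the interior (Section \ref{sec:decaydernosymm}). Since $\psi$ vanishes on $\mathcal{H}^+$ and the inhomogeneity $\square_g\psi$ carries $(v+|u|)^{-2}$ weights, the energy flux of $\psi$ decays like $|u|^{-s}$ (Proposition \ref{improvedpointwisedecaykn}), whence $\int_{\s^2}|\snabla^2\psi|^2\lesssim|u|^{-(s+1)}$ (Proposition \ref{cor:improvedpointbound}). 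Splitting $\sD\phi=\sD\psi+\sD\phi_{\mathcal{H}^+}$ in the transport equation, the $\sD\psi$ contribution is controlled by this gain of decay in $|u|$, while the $\sD\phi_{\mathcal{H}^+}$ contribution is $u$-independent, so only $\int_{-\infty}^u r\partial_v r\,du'=\mathcal{O}((v+|u|)^{-1})$ multiplies it; combined with the hypothesis $\sum_k\int_{\s^2}|\snabla^k\phi|^2|_{\mathcal{H}^+}\lesssim v^{-2\alpha}$ this yields the factor $(v+|u|)^{-2}v^{-2\alpha}\lesssim v^{-2(1+\alpha)}$ in $\int_{\s^2}(\partial_v\phi)^2$, matching the rate assumed for $\partial_v\phi$ on the horizon and closing the estimate (Proposition \ref{prop:boundderVpsi}). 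Without this subtraction---or some equivalent device exploiting that $\phi$ approaches its horizon values as $u\to-\infty$---the feedback of $\sD\phi$ that you flag destroys any H\"older exponent $\alpha>0$, so the proposal as written does not constitute a proof.
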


Furthermore, under the more precise assumption for the asymptotics for $\phi_0$ along the horizon, given in (\ref{eq:beyondpriceslaw})  and (\ref{eq:addassumpinitial}), which are predicted to hold for suitably regular and compactly supported data on $\Sigma$, we show that $\phi_0$ is extendible as a $C^2$ function beyond $\mathcal{CH}^+$.

\begin{thm}[$C^2$-extendibility of $\phi_0$]
\label{thm:c2blowup}
Consider suitably regular \uline{spherically symmetric} characteristic initial data on $\mathcal{H}^+\cup \uline{H}_{v_0}$, such that the asymptotics in (\ref{eq:beyondpriceslaw}) and (\ref{eq:addassumpinitial}) hold. Then the arising solution $\phi_0$ can be extended as a $C^2$ function beyond $\mathcal{CH}^+$.
\end{thm}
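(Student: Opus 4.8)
The plan is to work in the interior region $D_{u_0,v_0}$ using the double-null coordinates $(u,v)$, where $\mathcal{CH}^+$ is reached as $v\to\infty$, and to reduce the wave equation $\square_g\phi=0$ for the spherically symmetric mode $\phi_0$ to a $1{+}1$-dimensional equation for the radiation field $\psi_0 := r\phi_0$. In spherical symmetry the equation takes the schematic form $\partial_u\partial_v\psi_0 = \frac{\partial_u\partial_v r}{r}\,\psi_0$, where the coefficient $\tfrac{\partial_u\partial_v r}{r}$ decays like $\Omega^2/r^2$ near $\mathcal{CH}^+$, with $\Omega^2$ the conformal factor. The key geometric input is that in extremal Reissner--Nordstr\"om the conformal factor $\Omega^2$ decays \emph{polynomially}, not exponentially, in $v$ along the relevant null directions, and a careful bookkeeping of this rate against the assumed polynomial asymptotics of the horizon data is what makes the borderline $C^2$ statement work. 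First I would fix a convenient gauge adapted to $\mathcal{CH}^+$ (the coordinate $V$ regular at $\mathcal{CH}^+$, where $V=M$) so that limits as the Cauchy horizon is approached are expressed as limits $v\to\infty$ with the transverse coordinate held fixed.

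The strategy is to integrate the equation iteratively, starting from the control already established for lower-order quantities by the earlier theorems. Concretely, I would first re-derive the $C^1$ bounds of Theorem \ref{thm:c1boundphi0}, giving that $\phi_0$ and its first derivatives extend continuously to $\mathcal{CH}^+$; the new content is to control the second derivatives $\partial_v^2\phi_0$, $\partial_u\partial_v\phi_0$, and $\partial_u^2\phi_0$. The tangential second derivative $\partial_v^2\psi_0$ is governed along $\mathcal{H}^+$ by the refined asymptotics \eqref{eq:beyondpriceslaw}: differentiating, the leading behaviour of $v^2\partial_v(v^2\partial_v\phi_0)$ contains the term $2M^3H_0[\phi]\log v$, and assumption \eqref{eq:addassumpinitial} is precisely what is needed to extract a well-defined limit after subtracting this logarithmic divergence. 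I would then propagate this into the interior by integrating $\partial_u\partial_v\psi_0$ in $u$ at fixed $v$, using Gr\"onwall-type estimates to show the integral $\int \frac{\partial_u\partial_v r}{r}\,\psi_0\,du$ converges and that the resulting $\partial_v\psi_0$ limits exist and are continuous across $\mathcal{CH}^+$. The mixed and transversal second derivatives are obtained by commuting the wave equation with $\partial_v$ and $\partial_u$ respectively and repeating the integration, at each stage checking that the source terms are integrable up to $V=M$.

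The main obstacle I expect is the logarithmic term. Because the refined Price's law \eqref{eq:beyondpriceslaw} carries a $v^{-2}\log v$ correction, a naive estimate of $v^2\partial_v(v^2\partial_v\phi_0)$ diverges logarithmically, so the second transversal derivative $\partial_v^2\phi_0$ is only borderline controllable: one must show that, after integration against the polynomially decaying geometric coefficients, the logarithm is exactly absorbed and does not obstruct the existence of the limit at $\mathcal{CH}^+$. This is why the extra assumption \eqref{eq:addassumpinitial} is imposed rather than \eqref{eq:beyondpriceslaw} alone --- the former guarantees a genuine limit of the renormalized quantity, not merely a bound. The delicate point is tracking the precise power of $v$ (equivalently $\Omega^2$) at which the $\log v$ appears, and verifying that the transversal integration converts it into a convergent, rather than divergent, contribution.

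Once the three second derivatives are shown to admit continuous limits as $v\to\infty$, the final step is routine: I would change coordinates from $(u,v)$ to a regular system $(U,V)$ across $\mathcal{CH}^+$ (with $V=M$ on the horizon), check that the Jacobians and their derivatives remain bounded and nondegenerate, and conclude that $\phi_0$, together with all its derivatives up to second order in the regular coordinates, extends continuously across $\mathcal{CH}^+$, i.e.\ that $\phi_0$ admits a $C^2$ extension. I would also verify consistency with the wave equation on the extended side, so that the extension is not merely $C^2$ as a function but solves $\square_g\phi_0=0$ in the regular chart.
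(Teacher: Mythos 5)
Your proposal follows essentially the same route as the paper: rewrite the spherically symmetric wave equation as transport equations for $r\phi_0$, commute with $\partial_V$ and integrate in the $u$-direction, observe that the logarithmic divergence produced by the bulk integral $\int \partial_V^2 r\,\partial_u\phi_0\,du$ (whose coefficient $2H_0$ is fixed by propagating the Aretakis constant into the interior) is cancelled exactly by the logarithmic divergence of $\partial_V^2\phi_0|_{\mathcal{H}^+}$ dictated by (\ref{eq:beyondpriceslaw}), and then use (\ref{eq:addassumpinitial}) to upgrade boundedness of the renormalized quantity to existence of a continuous limit --- this is precisely the content of Propositions \ref{prop:asymptoticspartialuphi0}, \ref{prop:blowup2nddersphsymm}, \ref{cor:c2boundphi0} and \ref{prop:sphsymmc2}, followed by the passage to the regular coordinate $V$. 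The only caveats are terminological (you initially call $\partial_v^2\phi_0$ ``tangential'' before correcting to ``transversal'', and the integration in $u$ that you describe is tangential to $\mathcal{CH}^+$, not transversal), and your phrasing that the logarithm is ``absorbed'' by the decaying geometric coefficients slightly obscures the actual mechanism, which is an exact cancellation between two separately divergent contributions rather than a damping effect.
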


In the case that $H_0\neq 0$, given the leading-order behaviour of $\phi_0$ along $\mathcal{H}^+$ in (\ref{eq:beyondpriceslaw}), the proof of the above theorem requires the next-to-leading order term in the asymptotics of $\phi_0$ to correspond \emph{exactly} with the next-to-leading order term in (\ref{eq:beyondpriceslaw}). We moreover show that any deviation in the next-to-leading order term in the asympttotics leads instead to \uline{blow-up} of the second-order transversal derivatives of $\phi_0$ at $\mathcal{CH}^+$. If $H_0=0$, the precise terms in the asymptotics (\ref{eq:beyondpriceslaw}) are not important to conclude that $\phi_0$ is extendible in $C^2$ beyond $\mathcal{CH}^+$. 

In comparison, the results of Dafermos in \cite{Dafermos2005c} imply in particular the blow-up of the first-order derivatives of $\phi_0$ at each point on the Cauchy horizon in subextremal Reissner--Nordstr\"om, if 
Price's law (\ref{eq:priceslawsub}) for \uline{sub}extremal Reissner--Nordstr\"om is assumed as a \emph{lower} bound for $\phi_0$ (in fact, a weaker lower bound is sufficient).

\subsection{Main ideas in the proofs of Theorem \ref{thm:linftyboundv1} and \ref{thm:H1boundv1}}
\label{sec:mainideasThm12}
We will illustrate in this section the main ideas in the proofs of Theorem \ref{thm:linftyboundv1} and \ref{thm:H1boundv1}.

In order to obtain the required $L^2$ estimates in the interior, we consider appropriately weighted energies along null hypersurfaces. It is natural to construct the corresponding energy currents with respect to suitable vector fields (vector field multipliers) and to moreover act with vector fields (commutation vector fields) on solutions to (\ref{eq:waveqkerr}). This kind of construction is known in the literature as ``the vector field method''; see \cite{Klainerman2010}. We discuss the vector field method in more detail in Section \ref{sec:vectorfieldmethod}, in the setting of the interior of extremal Reissner--Nordstr\"om. Weighted energy estimates derived in Section \ref{sec:eestimatesrn} allow us to obtain $L^2$ estimates on the 2-spheres foliating the null hypersurfaces, which in turn lead to $L^{\infty}$ estimates, by applying standard Sobolev estimates on the 2-spheres; see Section \ref{sec:pointwise}.

The absence of the local red-shift effect (i.e.\ the vanishing of the surface gravity) in the extremal case affects the choice of vector fields that can be used to construct weighted energies in the interior. In particular, as shown in \cite{Aretakis2011a} in a neighbourhood of the event horizon in the exterior of extremal Reissner--Nordstr\"om, there exists no timelike vector field such that the spacetime term in the corresponding energy estimates has the right sign at the event horizon (a ``red-shift vector field''). Thus, decay along the event horizon cannot be propagated to spacelike hypersurfaces in a neighbourhood of the horizon in the interior by employing a red-shift vector field, as was done in Schwarzschild \cite{Luk2010} and later applied to subextremal Reissner--Nordstr\"om in \cite{Franzen2014}.

There is another important geometric difference between the interior of extremal and subextremal Reissner--Nordstr\"om: the 2-spheres in the extremal interior that are preserved under isometries of spherical symmetry are not trapped. In terms of the area radius $r$ of the spheres foliating the spacetime and Eddington--Finkelstein double-null coordinates $(u,v)$, this means that
\begin{equation*}
\partial_u r<0,\quad \partial_vr\geq 0,
\end{equation*}
in the interior, whereas in subextremal Reissner--Nordstr\"om the above signs coincide. The difference in the sign of $\partial_vr$ in particular means that the spacetime terms in the energy estimates corresponding to vector fields of the form
\begin{equation*}
r^q(\partial_u+\partial_v),
\end{equation*}
containing all derivatives of $\phi$, do not have a favourable sign, in contrast with subextremal Reissner--Nordstr\"om (for $q>0$ sufficiently large). Both the above vector field and the red-shift vector field are crucial for the arguments in \cite{Franzen2014}.

In our extremal case we can, regardless of the above differences, do energy estimates \emph{directly} with respect to a uniformly timelike vector field of the form
\begin{equation*}
u^2\partial_u+v^2\partial_v.
\end{equation*}
See (\ref{def:mainvectorfieldN}) for a definition of the full family of vector fields that we will consider. We are able to control all the spacetime error terms in the corresponding energy estimates by absorbing them into the energy fluxes along null hypersurfaces. Similar vector fields are also used in the subextremal case, but only sufficiently close to the Cauchy horizon. The reason why they can be used everywhere in a neighbourhood of timelike infinity $i^+$ in the interior of extremal Reissner--Nordstr\"om is related to a third important difference between the extremal and subextremal cases: the metric component $g_{uv}$ in Eddington--Finkelstein double-null coordinates decays uniformly in every direction towards $i^+$ in the extremal case, as shown in Section \ref{sec:geom}, whereas $g_{uv}$ is constant along constant $r$ hypersurfaces that approach $i^+$ in the subextremal case.

\subsection{Main ideas in the proofs Theorem \ref{thm:c1boundphi0}, \ref{thm:c11boundphi} and \ref{thm:c2blowup}}
In this section we will describe the main ideas involved in obtaining the pointwise results in the theorems of Section \ref{sec:intropointwiseest}.

By commuting (\ref{eq:waveqkerr}) with ingoing null vector fields and angular momentum operators, we can extend the results of Theorem \ref{thm:linftyboundv1} to obtain uniform pointwise boundedness and $C^0$-extendibility of arbitrarily many derivatives of $\phi$ that are tangential to $\mathcal{CH}^+$; see Section \ref{sec:pointwise}. If we restrict to spherically symmetric $\phi$ we can also obtain uniform pointwise boundedness of a regular outgoing transversal derivative $\partial_V\phi$ at $\mathcal{CH}^+$,\footnote{Note that \emph{a priori} uniform boundedness of $\partial_V\phi$ only implies boundedness \emph{at each point} along $\mathcal{CH}^+$ of $\partial_r\phi$, with respect to outgoing Eddington-Finkelstein coordinates $(u,r, \theta, \varphi)$ that can be extended beyond $\mathcal{CH}^+$. However, as the quantity $\partial_r(r\phi)$ is conserved along $\mathcal{H}^+$ (cf. the conserved Aretakis constants along $\mathcal{H}^+$) and $\phi$ is uniformly bounded, it follows that $\partial_r\phi|_{\mathcal{CH}^+}$ must in fact be uniformly bounded.} by writing the wave equation (\ref{eq:waveqkerr}) as a transport equation and integrating in the ingoing null direction, using boundedness of a weighted $L^1$ norm for $\partial_u\phi$ along ingoing radial null geodesics. This is done in Section \ref{sec:weightedL1} and \ref{sec:Linftyfirstorder}. We can show similarly that $\partial_V\phi$ can be extended as a continuous function beyond $\mathcal{CH}^+$ to conclude that $\phi$ can be extended as a $C^1$ function, if $\phi$ is spherically symmetric; see Section \ref{sec:regularityatCH+}.

For a solution $\phi$ that is not spherically symmetric, the above method fails; nevertheless we are still able to show in Section \ref{sec:decaydernosymm} that $\partial_V \phi$ blows up at most logarithmically in $v$ as we approach $\mathcal{CH}^+$. As a consequence, $\phi$ can be extended as a $C^{0,\alpha}$ function beyond $\mathcal{CH}^+$, for $\alpha<1$.

In the case of spherically symmetric $\phi$, we can also use the wave equation to propagate the asymptotic behaviour of $\partial_u \phi$ along $\uline{H}_{v_0}$ to all $\uline{H}_{v}$. For characteristic initial data with a vanishing Aretakis constant $H_0$, this asymptotic behaviour, together with the asymptotics (\ref{eq:beyondpriceslaw}) for $H_0=0$ along $\mathcal{H}^+$, is sufficient to infer uniform boundedness of $\partial_V^2\phi$ for spherically symmetric $\phi$; see Section \ref{sec:Linftysecondorder}.\footnote{As for the first-order derivative $\partial_V\phi$, uniform boundedness of $\partial_V^2\phi$ only guarantees boundedness of $\partial_r^2\phi$ at each point of $\mathcal{CH}^+$. In fact, assuming the asymptotics (\ref{eq:beyondpriceslaw}) with $H_0\neq 0$, it can be shown that $\partial_r^2\phi$ \emph{blows up} as we approach $i^+$ along $\mathcal{CH}^+$ (cf. the blow-up of $\partial_r^2\phi$ towards $i^+$ along $\mathcal{H}^+$ for $H_0\neq 0$ in \cite{Aretakis2011}.)} Here, we commute the wave equation (\ref{eq:waveqkerr}) with $\partial_V$ and rewrite it as a transport equation for $\partial_V^2\phi$. Recall moreover that arbitrary many tangential derivatives to $\mathcal{CH}^+$ are also uniformly bounded.

If we consider instead data with a non-vanishing $H_0$, we can still propagate the asymptotic behaviour of $\partial_u \phi$ along $\uline{H}_{v_0}$ to all $\uline{H}_{v}$. In this case, the constants appearing in front of the leading-order term and next-to-leading-order term in (\ref{eq:beyondpriceslaw}) become vital for the argument. That is to say, the asymptotic behaviour of $\partial_u \phi$ implies that the \emph{difference} $\partial_V^2\phi(u,v)-\partial_V^2\phi|_{\mathcal{H}^+}(v)$ must \emph{blow up} logarithmically in $v$, as we approach $\mathcal{CH}^+$. The only way of preventing $\partial_V^2\phi(u,v)$ from blowing up at $\mathcal{CH}^+$ is to require the leading-order term in the asymptotics of $\partial_V^2\phi|_{\mathcal{H}^+}(v)$ to cancel out \emph{precisely} the term in the difference $\partial_V^2\phi(u,v)-\partial_V^2\phi|_{\mathcal{H}^+}(v)$ that blows up logarithmically in $v$. Remarkably, it turns out that this cancellation indeed occurs if we assume the asymptotics (\ref{eq:beyondpriceslaw}). We can therefore conclude that $\partial_V^2\phi$ is bounded at $\mathcal{CH}^+$ for spherically symmetric $\phi$, even if $H_0\neq 0$. This is also done in Section \ref{sec:Linftysecondorder}.

Via similar methods, using again the crucial cancellation described in the above paragraph, we can show that $\partial_V^2\phi$ is continuous at $\mathcal{CH}^+$, for spherically symmetric $\phi$, to conclude that $\phi$ can be extended as a $C^2$ function beyond $\mathcal{CH}^+$; see Section \ref{sec:regularityatCH+}.

\subsection{The linear wave equation in the interior of Kerr(-Newman)}
In a subsequent paper \cite{Gajic2015a} we consider (\ref{eq:waveqkerr}) in the interior region of extremal Kerr. We show that analogous results to Theorem \ref{thm:linftyboundv1} and Theorem \ref{thm:H1boundv1} hold, if we restrict to axisymmetric solutions $\phi$ and assume suitable decay for $\phi$ along the event horizon. We can remove the axisymmetry restriction if we consider slowly rotating extremal Kerr--Newman instead of Kerr, i.e.\ we need $|a|$ to be suitably small compared to $M$. The extendibility of \emph{non}-axisymmetric $\phi$ beyond the Cauchy horizon in $C^0$ or $H^1_{\textnormal{loc}}$ remains an open problem in extremal Kerr. This illustrates how the analysis in the extremal Reissner--Nordstr\"om interior does not quite capture all of the difficulties present in the extremal Kerr interior.

Let us note that an analogue of Theorem \ref{thm:franz} can also be obtained in the subextremal Kerr interior. See upcoming work of Franzen \cite{franz2}. 

\subsection{Nonlinear results and conjectures in interior regions}
\label{sec:nonlinresults}
The linear wave equation (\ref{eq:waveqkerr}) on a fixed Reissner--Nordstr\"om or Kerr spacetime is the simplest toy model for the quantitative behaviour in the interior region of spacetimes arising from small perturbations of Kerr initial data, in the context of the Cauchy problem for the vacuum Einstein equations:
\begin{equation}
\label{eq:ee}
\textnormal{Ric}(g)=0.
\end{equation}

One strategy for exploring the effect of nonlinearities (``backreaction'') in (\ref{eq:ee}) is to impose the restriction of spherical symmetry and consider the nonlinearly coupled Einstein--Maxwell-scalar field system of equations. The Einstein equations simplify greatly in spherical symmetry, while the coupling with the wave equation and Maxwell's equations still allows for a large variety in global structures of the spacetimes.

The linear results for (\ref{eq:waveqkerr}) on a fixed Reissner--Nordstr\"om or Kerr background and the nonlinear results for the spherically symmetric Einstein--Maxwell-scalar field model can moreover be used to formulate conjectures regarding the global structure of the interior region of spacetimes arising from the evolution of perturbations of Kerr initial data for (\ref{eq:ee}) and the behaviour of metric components at the future spacetime boundaries.

\subsubsection{Results for the spherically symmetric Einstein--Maxwell-scalar field model}
We will discuss in this section the spherically symmetric Einstein--Maxwell-scalar field system of equations, which was studied by Dafermos in \cite{Dafermos2003a,Dafermos2005c,Dafermos2014a}. 

Dafermos showed that black hole solutions approaching a subextremal Reissner--Nordstr\"om solution along the event horizon\footnote{That is to say, the area radius of the spheres foliating the event horizon approaches a constant $r_+>|e|$.} have a non-empty Cauchy horizon, beyond which the metric can be extended as a $C^0$ function. Moreover, under a lower bound assumption on the decay of the scalar field along the event horizon, transversal derivatives of $\phi$ and the $L^2_{\textnormal{loc}}$ norm of the Christoffel symbols of the metric blow up at the Cauchy horizon. Dafermos' proof makes use of mass-inflation as a mechanism for blow-up.

Finally, if one restricts to spacetimes arising from small perturbations of two-ended subextremal Reissner--Nordstr\"om data, the entire future boundary of the interior region is in fact a bifurcate Cauchy horizon across which the metric is $C^0$-extendible but the $L^2_{\textnormal{loc}}$ norm of the Christoffel symbols blows up.

The spherically symmetric Einstein--Maxwell-scalar field system has recently also been studied in a similar context with a positive cosmological constant \cite{Costa2015,Costa2014,Costa2014a}. 

Furthermore, the spherically symmetric Einstein--Maxwell-scalar field system has been considered by Murata--Reall--Tanahashi in \cite{mureta1} from a numerical point of view, for spacetimes arising from ``outgoing'' characteristic initial data, where the outgoing initial null hypersurface is isometric to a null hypersurface in the exterior region of Reissner--Nordstr\"om. They found that for fine-tuned initial data the corresponding future developments are black hole spacetimes containing  no (marginally) trapped surfaces of symmetry, that approach extremal Reissner--Nordstr\"om along the event horizon.\footnote{Here, we mean that the area radius of the spheres foliating the event horizon approaches the constant $r_+=|e|$.}

Moreover, the numerics in \cite{mureta1} suggest that the interior region of these spacetimes has a non-empty Cauchy horizon across which the metric is extendible in $C^0$ \emph{with} Christoffel symbols in $L^2_{\textnormal{loc}}$ at the Cauchy horizon. Additionally, both the scalar field $\phi$ and \emph{all} its first-order derivatives remain bounded at the Cauchy horizon. The results of Theorem \ref{thm:c1boundphi0} suggest that this behaviour of the scalar field indeed holds. Moreover, the results of Theorem \ref{thm:c2blowup} suggest that in fact all second-order derivatives should remain bounded at each point along the Cauchy horizon.

\subsubsection{Conjectures for the vacuum Einstein equations}
Analogues of the linear results for solutions to (\ref{eq:waveqkerr}) in subextremal Reissner--Nordstr\"om, and the nonlinear spherically symmetric results in the interior region of ``asymptotically subextremal'' black hole spacetimes, are conjectured to hold when one considers small perturbations of subextremal Kerr initial data in the context of the Cauchy problem for the vacuum Einstein equations (\ref{eq:ee}).

Recently, Dafermos--Luk proved the following theorem \textbf{without symmetry assumptions} in a characteristic initial value problem formulation of (\ref{eq:ee}). 
\begin{thmx}[Dafermos--Luk \cite{Dafermos2015d}]
\label{thm:c0stabilitysubextr}
Consider characteristic initial data for (\ref{eq:ee}) on a bifurcate null hypersurface $\mathcal{H}^+\cup \widetilde{\mathcal{H}}^+$, where $\mathcal{H}^+$ and $\widetilde{\mathcal{H}}^+$ have future-affine complete null generators and their induced geometry is globally close to and dynamically approaches that of the event horizon of a Kerr spacetime at a sufficiently fast polynomial rate, with mass parameters $M$ and $\tilde{M}$, and rotation parameters $0 < |a| < M$ and $0 < |\tilde{a}| < \tilde{M}$, respectively. Then the maximal development can be extended beyond a bifurcate Cauchy horizon $\mathcal{CH}^+$ as a Lorentzian manifold with $C^0$ metric.
\end{thmx}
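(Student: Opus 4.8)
The plan is to work in a double-null gauge adapted to the interior region bounded by the two horizon components $\mathcal{H}^+$ and $\widetilde{\mathcal{H}}^+$, and to run a large bootstrap argument that propagates the smallness and decay of the characteristic data from the horizons up to, and continuously across, the bifurcate Cauchy horizon $\mathcal{CH}^+$. First I would introduce double-null coordinates $(u,v)$ with the two initial horizons sitting at $\{u=0\}$ and $\{v=0\}$, writing the metric in the form
\[
g = -2\Omega^2\,(du\otimes dv + dv\otimes du) + \sg_{AB}\,(d\theta^A - b^A\,dv)(d\theta^B - b^B\,dv),
\]
so that the geometric unknowns are the null lapse $\Omega$, the shift $b^A$, and the induced spherical metric $\sg_{AB}$. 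The reduction of the vacuum equations (\ref{eq:ee}) to this gauge yields the standard system: the null structure (transport) equations for the Ricci coefficients $\chi,\underline{\chi},\eta,\underline{\eta},\omega,\underline{\omega}$, the Gauss and Codazzi constraints on the 2-spheres $S_{u,v}$, and the Bianchi equations for the null components $\alpha,\beta,\rho,\sigma,\underline{\beta},\underline{\alpha}$ of the Weyl curvature. The key observation is that the goal reduces to showing that $\Omega$, $b^A$, and $\sg_{AB}$ admit continuous limits on the hypersurface defining $\mathcal{CH}^+$, equivalently that their $u$- and $v$-derivatives are \emph{integrable} up to the Cauchy horizon; crucially one does \emph{not} attempt to bound the curvature pointwise.

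The heart of the argument is a hierarchy of weighted estimates closed by a continuity/bootstrap argument on the truncated region $\{u\le U,\,v\le V\}$, with constants uniform in $V$. At the top sits a weighted $L^2$ energy for the (renormalized) null curvature components, obtained by contracting the Bianchi system with carefully chosen vector-field multipliers and integrating the resulting divergence identities over null hypersurfaces; below that sit transport estimates recovering the Ricci coefficients from the curvature, and finally transport estimates recovering the metric coefficients from the Ricci coefficients. Higher-order control would be obtained by commuting the Bianchi and structure equations with the angular momentum operators and the null vector fields, using elliptic estimates for the operators built from $\snabla$ on the spheres $S_{u,v}$ to trade angular for transversal regularity. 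The decisive ingredient is the choice of $u$- and $v$-weights: near $\mathcal{CH}^+$ there is a \emph{blue-shift} effect, the exact opposite of the red-shift at the event horizon, which amplifies energy as $v\to\infty$; the weights, in combination with the quantitative polynomial decay of the data along $\mathcal{H}^+$ and $\widetilde{\mathcal{H}}^+$ assumed in the hypotheses and with the reference inner-horizon geometry of the comparison Kerr solution, must be tuned so that this amplification is at worst rendered \emph{integrable} in the quantities governing the metric.

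I would then close the bootstrap by recovering each assumed bound with a strictly better constant: the weighted energy estimate controls the curvature flux in terms of the data flux plus nonlinear error terms, the errors being absorbed using the bootstrap assumptions and the weights, and the transport estimates subsequently returning improved pointwise and $L^2$ control of the Ricci and metric coefficients. Once the scheme closes uniformly in $V$, the integrability of $\partial_v\sg$, $\partial_v\Omega$, $\partial_v b$ and their $u$-analogues yields uniform limits as the horizon is approached, defining a continuous extension of $g$ across $\mathcal{CH}^+$; a symmetric argument across $\widetilde{\mathcal{CH}}^+$ together with a matching at the bifurcation sphere then assembles the full bifurcate Cauchy horizon.

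The hard part will be the \emph{top-order weighted energy estimate in the presence of the blue-shift}. In contrast with exterior stability problems, here the natural curvature energy can, and for generic data will, grow without bound as $v\to\infty$ — this is the tensorial, no-symmetry manifestation of mass inflation — so one cannot aim for uniform curvature bounds. The entire estimate scheme must instead be engineered so that the blue-shift growth is measured against, and ultimately integrated with the help of, the decay of the data, in such a way that the \emph{metric} coefficients remain bounded and continuous even while the curvature and the Christoffel symbols need not. Making the weighted multiplier and commutator estimates close despite this growth, while simultaneously controlling the nonlinear error terms that couple all six null curvature components without any of the reductions afforded by spherical symmetry, is the crux of the difficulty and the principal novelty of the Dafermos--Luk argument.
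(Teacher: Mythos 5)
This statement is Theorem \ref{thm:c0stabilitysubextr}, which the paper does not prove: it is quoted verbatim from Dafermos--Luk \cite{Dafermos2015d} as background for the subextremal picture, and the present paper's own contributions concern the linear wave equation on extremal Reissner--Nordstr\"om, not the vacuum Einstein equations. There is therefore no proof in this paper against which your attempt can be checked; the comparison would have to be made with \cite{Dafermos2015d} itself.

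Taken on its own terms, your outline does capture the broad architecture of the actual Dafermos--Luk argument (double-null gauge, null structure plus Bianchi system, weighted energy/transport hierarchy closed by a bootstrap, continuity of $\Omega$, $b^A$, $\sg_{AB}$ obtained from integrability of their derivatives rather than from pointwise curvature bounds, blue-shift versus data decay). But as written it is a program, not a proof, and it glosses over the step on which the whole scheme stands or falls: you list all six null curvature components $\alpha,\beta,\rho,\sigma,\underline{\beta},\underline{\alpha}$ as objects of the top-order energy estimate, whereas the decisive point of the renormalization (going back to Luk--Rodnianski and Luk's weak null singularity construction) is that $\alpha$ and $\underline{\alpha}$ are \emph{eliminated} from the energy hierarchy altogether --- near a weak null singularity $\underline{\alpha}$ fails to be square integrable in any weighted sense compatible with the data, so a scheme that tries to propagate a weighted $L^2$ flux for it will not close. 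Identifying the reduced set of quantities that can be estimated, the precise degenerate $u$- and $v$-weights, and the null structure of the nonlinear error terms that makes the reduced system consistent is the content of the several-hundred-page proof; asserting that the weights ``must be tuned'' so that the blue-shift amplification is integrable names the difficulty without resolving it.
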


One can consider Theorem \ref{thm:franz} as a linear, toy model version of Theorem \ref{thm:c0stabilitysubextr}. Moreover, the nonlinear, spherically symmetric toy model version is contained in the results of \cite{Dafermos2003a,Dafermos2005c,Dafermos2014a}. 

Theorem \ref{thm:c0stabilitysubextr} is accompanied by a conjecture.
\begin{conjx}[\cite{Dafermos2015d}]
\label{conj:instabilitych}
\hspace{1pt}
\begin{itemize}\setlength\itemsep{1em}
\item[(i)]
The assumptions on $\mathcal{H}^+\cup \widetilde{\mathcal{H}}^+$ from Theorem \ref{thm:c0stabilitysubextr} hold for spacetimes arising from suitably small perturbations of two-ended asymptotically flat subextremal Kerr initial Cauchy data for (\ref{eq:ee}).
\item[(ii)] 
Under suitable additional assumptions on the induced geometry of $\mathcal{H}^+\cup \widetilde{\mathcal{H}}^+$ from Theorem
\ref{thm:c0stabilitysubextr}, $\mathcal{CH}^+$ is a weak null singularity, across which the metric is inextendible as a Lorentzian manifold with locally square integrable Christoffel symbols.

\item[(iii)] The additional assumptions in \textnormal{(ii)} hold for \uline{generic} asymptotically flat initial data in \emph{(i)}.
\end{itemize}
\end{conjx}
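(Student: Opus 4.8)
The plan is to treat the three parts (i)--(iii) in sequence, organizing the whole argument around propagating \emph{sharp}, two-sided decay information from the exterior, through the event horizon, and into the interior. For (i), the first step is to establish the nonlinear stability of the exterior of a subextremal Kerr spacetime together with a quantitatively sharp polynomial decay rate for the induced geometry along the event horizon. I would set up the Cauchy problem for (\ref{eq:ee}) with data close to two-ended Kerr, fix a double-null or generalized wave gauge adapted to the dynamically determined event horizon, and close a bootstrap on a hierarchy of $r^p$-weighted energy estimates, using a redshift multiplier near $\mathcal{H}^+$ and the mode stability of Kerr upgraded to the nonlinear, non-axisymmetric setting. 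The desired conclusion is that the induced metric and connection coefficients along $\mathcal{H}^+ \cup \widetilde{\mathcal{H}}^+$ converge to their Kerr values at a polynomial rate, which is exactly the hypothesis needed to invoke Theorem \ref{thm:c0stabilitysubextr}. This ingredient is essentially equivalent to the full nonlinear stability of Kerr with Price-law rates and is the most technically demanding part of the program.

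Granting the hypotheses of Theorem \ref{thm:c0stabilitysubextr}, part (ii) already has the $C^0$ extension in hand from \cite{Dafermos2015d}, so the remaining task is to prove \emph{inextendibility} as a Lorentzian manifold with locally square-integrable Christoffel symbols. Here I would transplant the mass-inflation mechanism from the spherically symmetric Einstein--Maxwell--scalar field model \cite{Dafermos2005c} to the vacuum, symmetry-free setting: identify a renormalized curvature or connection quantity whose transport equation in the ingoing null direction forces growth that fails to be $L^2_{\textnormal{loc}}$ at $\mathcal{CH}^+$, with the growth \emph{driven} by a \emph{lower} bound on the radiation along $\mathcal{H}^+$. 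The additional assumptions of (ii) are precisely these non-degenerate lower bounds, i.e.\ a nonlinear analogue of the Luk--Oh lower bound underlying Theorem \ref{thm:lukoh}.

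Finally, for (iii) I would reduce the non-degeneracy condition of (ii) to the non-vanishing of a finite list of asymptotic quantities attached to the data --- analogues of the Newman--Penrose constant and of the leading Price-law coefficient, computable from the linear theory --- and argue that their simultaneous vanishing is a positive-codimension, hence non-generic, condition, via a density or transversality argument on the space of initial data. The hard part will be (ii): obtaining the \emph{lower} bound for blow-up without any symmetry requires sharp two-sided control near $\mathcal{CH}^+$ that the upper-bound stability estimates of (i) do not furnish, and one must rule out delicate cancellations of exactly the type that, in the \emph{extremal} case, conspire to keep $\partial_V^2\phi$ bounded (cf.\ the cancellation exploited in Theorem \ref{thm:c2blowup}). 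Controlling the nonlinear backreaction precisely enough to exclude such cancellations generically is, I expect, the principal obstacle.
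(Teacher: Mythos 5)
The statement you were asked about is not a theorem of this paper: it is Conjecture~\ref{conj:instabilitych}, quoted verbatim from \cite{Dafermos2015d}, and the paper offers no proof of it. The only role it plays here is as context: the author points out that Theorem~\ref{thm:lukoh} (Luk--Oh) is the linear toy-model version and that the nonlinear spherically symmetric toy-model version is contained in \cite{Dafermos2003a,Dafermos2005c,Dafermos2014a}, and then contrasts all of this with the \emph{extremal} results actually proved in the paper (where the $H^1_{\textnormal{loc}}$ blow-up of the subextremal case is absent). So there is no proof in the paper against which your attempt can be measured; the statement is an open problem.

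Your proposal is a reasonable roadmap of how the community expects the conjecture to be resolved, but it is a research program rather than a proof, and each of its three stages is itself a major open problem that you do not actually carry out. Part (i) as you describe it amounts to the full nonlinear stability of the subextremal Kerr exterior with sharp polynomial decay along the event horizon --- you cannot ``close a bootstrap'' in two sentences; this is precisely the unproven stability statement the paper alludes to, whose linear analogue is \cite{Dafermos2014c}. Part (ii) requires transplanting the mass-inflation/weak-null-singularity mechanism to vacuum without symmetry, including a \emph{lower} bound on radiation along $\mathcal{H}^+$ propagated into the interior; identifying ``a renormalized curvature quantity whose transport equation forces growth'' is the entire difficulty, not a step. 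Part (iii) presupposes that the non-degeneracy condition of (ii) reduces to finitely many asymptotic charges and that one can run a transversality argument on the moduli of vacuum data --- neither is established. You do correctly flag the danger of cancellations (of the kind this paper exploits in Theorem~\ref{thm:c2blowup} to get \emph{boundedness} in the extremal case), which is a genuine conceptual point; but flagging the obstacle is not the same as overcoming it. In short: nothing in your sketch is wrong as a statement of intent, but no step of it constitutes an argument, and the paper itself makes no claim to prove this conjecture.
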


Theorem \ref{thm:lukoh} can be viewed as the linear, toy model version of Conjecture \ref{conj:instabilitych}; the nonlinear, spherically symmetric toy model version is treated in \cite{Dafermos2003a,Dafermos2005c,Dafermos2014a}. The statement (i) in Conjecture \ref{conj:instabilitych} is part of a conjectured stability statement for the subextremal Kerr exterior. See \cite{Dafermos2015d}. In this case, the linear, toy model analogue is proved in \cite{Dafermos2014c}.

Luk performed a \emph{local} construction of spacetimes with a weak null singularity and \uline{without} any symmetry assumptions in \cite{Luk2013}.

The results of the follow-up paper \cite{Gajic2015a} in the extremal Kerr interior allow us to make the following conjecture for axisymmetric perturbations of extremal Kerr characteristic initial data for (\ref{eq:ee}).
\begin{conj}
Consider axisymmetric characteristic initial data for (\ref{eq:ee}) on a hypersurface $\mathcal{H}^+$, where $\mathcal{H}^+$ has future-affine complete null generators, and a hypersurface $\uline{H}_{\textnormal{initial}}$, such that the induced geometry on both hypersurfaces is globally close to extremal Kerr data, and along $\mathcal{H}^+$ the geometry dynamically approaches that of the event horizon of extremal Kerr at a sufficiently fast polynomial rate. 

Then the maximal development can be extended beyond a non-trivial Cauchy horizon $\mathcal{CH}^+$ (emanating from $i^+$) as a Lorentzian manifold with $C^0$ metric at which the Christoffel symbols remain bounded in $L^2_{\textnormal{loc}}$, with respect to a suitable coordinate system.
\end{conj}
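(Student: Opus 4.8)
The plan is to treat this conjecture as the nonlinear, extremal counterpart of the Dafermos--Luk stability statement (Theorem \ref{thm:c0stabilitysubextr}), upgraded so as to capture the additional regularity $L^2_{\textnormal{loc}}$ of the Christoffel symbols that is suggested by the finite-energy phenomenon of the linear model (Theorem \ref{thm:H1boundv1}). I would work throughout in a double null gauge $(u,v,\theta,\varphi)$ adapted to the bifurcate characteristic data on $\mathcal{H}^+\cup\uline{H}_{\textnormal{initial}}$, and set up a characteristic initial value problem for the full system consisting of the null structure equations for the connection coefficients together with the Bianchi equations for the renormalised null curvature components. The axisymmetry assumption should be exploited from the outset: the Killing field $\partial_\varphi$ descends to the quotient, so one can reduce the vacuum equations to a $(2+1)$-dimensional wave-map (Ernst) system for the norm of the Killing field and the twist potential, coupled to the reduced Einstein equations on the orbit space, in which the propagating degrees of freedom behave to leading order like the axisymmetric solutions of the scalar wave equation analysed in \cite{Gajic2015a}. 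This is precisely the reduction that renders the extremal Kerr interior tractable without a global understanding of the non-axisymmetric exterior.

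The core of the argument is a bootstrap/continuity argument on a region of the form $D_{u_0,v_0}$ up to and including a segment of $\mathcal{CH}^+$ emanating from $i^+$. I would posit bootstrap assumptions encoding, on the one hand, $L^\infty$ control of the metric and its first derivatives, and on the other hand weighted $L^2$ control of the curvature and of the second derivatives of the metric, and close them by the following sequence of estimates. First, propagate the assumed polynomial decay along $\mathcal{H}^+$ into the interior by energy estimates constructed with the uniformly timelike multipliers of the form $u^2\partial_u+v^2\partial_v$ from the linear analysis; the absence of the red-shift (vanishing surface gravity) forces these degenerate weights rather than the exponential red-shift weights of the subextremal case, exactly as in the passage from \cite{Franzen2014} to the present paper. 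Second, use the decay of the cross metric component $g_{uv}$ towards $i^+$, together with the non-trapped character of the symmetry spheres ($\partial_u r<0$, $\partial_v r\geq 0$), to absorb the spacetime error terms into the fluxes along null hypersurfaces, as in the proofs of Theorems \ref{thm:linftyboundv1} and \ref{thm:H1boundv1}. Third, feed the resulting $L^2$ bounds into Sobolev estimates on the symmetry spheres to recover pointwise $C^0$ control of the metric, hence its $C^0$-extendibility across $\mathcal{CH}^+$.

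The improvement over Theorem \ref{thm:c0stabilitysubextr}, namely the $L^2_{\textnormal{loc}}$ boundedness of the Christoffel symbols, is where the extremal structure is genuinely needed and is the step I expect to be most delicate. Here one must show that the finite-energy behaviour of the linear model persists nonlinearly, so that the weighted energy flux of the curvature through null hypersurfaces crossing $\mathcal{CH}^+$ remains finite and no mass-inflation-type blow-up occurs. This requires the \emph{improved} decay of the data along $\mathcal{H}^+$ of the type $\int_{\mathcal{H}^+} v^2(\partial_v\phi)^2+|\snabla\phi|^2<\infty$ (the geometric analogue of (\ref{eq:initialdecH+})), which in the extremal Kerr setting is expected to follow from a quantitative decay theory along the event horizon in the spirit of \cite{Angelopoulos2015}. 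Establishing such sharp, commuted decay rates for the nonlinear problem, rather than assuming them, is the principal gap between this conjecture and a theorem.

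Finally, one must verify that the constructed double null gauge extends regularly across $\mathcal{CH}^+$ and that the nonlinear error terms, in particular those generated by the twist potential, are controlled by the same weighted norms. The main danger is that these semilinear terms could degrade the decay just enough to reintroduce a weak null singularity, and ruling this out is exactly where the axisymmetric reduction and the degenerate $u^2\partial_u+v^2\partial_v$ weights must work in tandem; this is the hard part of the program.
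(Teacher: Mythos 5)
The statement you are trying to prove is stated in the paper as a \emph{conjecture}, and the paper offers no proof of it: it is put forward as the nonlinear, extremal analogue of Theorem \ref{thm:c0stabilitysubextr}, motivated by the linear results of this paper and of the follow-up \cite{Gajic2015a}, in the same way that Conjecture \ref{conj:instabilitych} accompanies the subextremal theorem of \cite{Dafermos2015d}. There is therefore nothing in the paper to compare your argument against, and your proposal should be judged on its own terms as a putative proof of an open problem. On those terms it is not a proof but a program outline, and you essentially concede this yourself: the two steps you flag as ``the principal gap'' and ``the hard part of the program'' --- establishing the sharp commuted decay rates along $\mathcal{H}^+$ for the nonlinear problem rather than assuming them, and controlling the semilinear error terms from the Ernst reduction well enough to rule out a weak null singularity --- are precisely the content of the conjecture. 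A proof sketch whose decisive steps are declared open is a restatement of the conjecture, not a resolution of it.

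Beyond that structural point, there are concrete reasons the strategy cannot simply be transplanted from the linear analysis in this paper. Everything proved here concerns the linear equation (\ref{eq:waveqkerr}) on a \emph{fixed} extremal Reissner--Nordstr\"om background; the conjecture concerns the quasilinear system (\ref{eq:ee}), where the double null gauge, the weights $u^2\partial_u+v^2\partial_v$, and the crucial decay of $g_{uv}$ towards $i^+$ are themselves dynamical quantities that must be recovered inside the bootstrap, not read off from an explicit metric. The finiteness of the flux $\int_{\mathcal{H}^+} v^2(\partial_v\phi)^2+|\snabla\phi|^2$ in (\ref{eq:initialdecH+}) is, in the linear problem, an \emph{assumption} on the data whose validity is deferred to \cite{Angelopoulos2015}; in the nonlinear problem the analogous statement for the geometric quantities along a dynamically perturbed extremal horizon is not known even conjecturally at the level of precision your argument requires. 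Moreover, the relevant prior nonlinear technology (\cite{Dafermos2015d}, \cite{Luk2013}) is built around the subextremal red-shift and the formation of a weak null singularity, i.e.\ around proving $C^0$ extendibility \emph{with} blow-up of the Christoffel symbols; upgrading to $L^2_{\textnormal{loc}}$ Christoffel symbols is exactly the point where no existing nonlinear mechanism is available, and your proposal does not supply one. Your heuristics do track the reasons the paper states the conjecture --- the absence of trapping, the degenerate multipliers, the analogy between Theorem \ref{thm:H1boundv1} and finite local energy of the connection --- but they do not close any of the gaps that make it a conjecture.
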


We do not even venture a conjecture in the case of non-axisymmetric initial data for (\ref{eq:ee}), because there are as of yet no boundedness and decay estimates for non-axisymmetric solutions $\phi$ to (\ref{eq:waveqkerr}) in the exterior, nor in the interior of extremal Kerr. Indeed, in both regions there are very significant additional obstacles that arise for non-axisymmetric $\phi$ that remain unresolved.

\subsection{Outline}
In Section \ref{sec:geom} we introduce notation and various foliations of the interior region of extremal Reissner--Nordstr\"om. In particular, we construct double-null foliations that are regular at either $\mathcal{H}^+$ or $\mathcal{CH}^+$ and discuss their properties. We state in Section \ref{sec:theorems} the main theorems that are proved in the paper. We give precise details of the requirements on the characteristic initial data that are needed for the results discussed in Section \ref{sec:discussionresults} to hold. 

In Section \ref{sec:sphericalsymmetry}, we restrict to spherically symmetric solutions $\phi=\phi_0$ of (\ref{eq:waveqkerr}). We establish $L^1$ estimates for the derivatives of $\phi_0$ along null hypersurfaces, in order to prove pointwise boundedness of both $\phi_0$ and its first-order derivatives at $\mathcal{CH}^+$, under suitable pointwise decay assumptions along $\mathcal{H}^+$ for $\phi_0$ and its tangential derivatives. Under more precise assumptions on the asymptotic behaviour of $\phi_0$, we prove moreover pointwise boundedness of second-order derivatives at $\mathcal{CH}^+$. We show that $\phi_0$ can be extended as a $C^0$, $C^1$ or $C^2$ function across $\mathcal{CH}^+$, where we need increasingly stronger assumptions on the asymptotics of $\phi_0$ along $\mathcal{H}^+$ to be able to increase the regularity at $\mathcal{CH}^+$.

In Section \ref{sec:eestimatesrn} we establish $L^2$ estimates for solutions $\phi$ to (\ref{eq:waveqkerr}) \emph{without} any symmetry restrictions, by employing suitably weighted vector fields in the null directions and performing energy estimates. We also extend these estimates to arbitrarily many derivatives of $\phi$ in the angular or ingoing null directions.

In Section \ref{sec:pointwise} we use the $L^2$ estimates of Section \ref{sec:eestimatesrn} to prove pointwise boundedness of $\phi$ and arbitrarily many tangential derivatives at $\mathcal{CH}^+$. We show moreover that $\phi$ and its derivatives in the angular or ingoing null directions can be extended as $C^0$ functions across $\mathcal{CH}^+$. Finally, by proving also pointwise estimates for the outgoing null derivative of $\phi$, we establish $C^{0,\alpha}$ extendibility of $\phi$ across $\mathcal{CH}^+$, where $\alpha<1$.

\subsection{Acknowledgments}
I wish to thank Mihalis Dafermos for suggesting the problem to me and for his guidance and invaluable advice throughout its execution. I also wish to thank Jonathan Luk and Harvey Reall for numerous helpful discussions, and Norihiro Tanahashi for sharing his numerical results.

\section{The geometry of extremal Reissner--Nordstr\"om}
\label{sec:geom}
We will discuss some basic geometric properties of extremal Reissner--Nordstr\"om and construct regular double-null coordinates that cover the regions on both sides of either the event horizon or Cauchy horizon.

\subsection{The extremal Reissner--Nordstr\"om metric}
Fix a mass parameter $M>0$. We define the \emph{exterior region of extremal Reissner--Nordstr\"om} as a manifold $\mathcal{M}_{\textnormal{ext}}$, together with a metric $g$, where $\mathcal{M}_{\textnormal{ext}}=\R\times (M,\infty)\times \s^2$ can be equipped with the coordinate chart $(t,r,\theta,\varphi)$, with $t\in \R$, $r\in (M,\infty)$, $\theta\in (0,\pi)$, $\varphi\in (0,2\pi)$. 

In these coordinates the metric $g$ in $\mathcal{M}_{\textnormal{ext}}$ is given by
\begin{equation*}
g=-\Omega^2(r)dt^2+\Omega^{-2}(r)dr^2+\sg.
\end{equation*}
Here,
\begin{equation*}
\Omega^2(r)=\left(1-\frac{M}{r}\right)^2
\end{equation*}
and $\sg=r^2 \sg_{\s^2}$, with
\begin{equation*}
\sg_{\s^2}=d\theta^2+\sin^2\theta d\varphi^2
\end{equation*}
the metric on a round sphere of radius 1.

We define the \emph{interior region of extremal Reissner--Nordstr\"om} as a manifold $\mathcal{M}_{\textnormal{int}}=\R\times (0,M)\times \s^2$ that can similarly be equipped with the chart $(t,r,\theta,\varphi)$, with $t\in \R$, $r\in (0,M)$, $\theta\in (0,\pi)$, $\varphi\in (0,2\pi)$. 

In these coordinates the metric $g$ in $\mathcal{M}_{\textnormal{int}}$ is also given by
\begin{equation*}
g=-\Omega^2(r)dt^2+\Omega^{-2}(r)dr^2+\sg.
\end{equation*}

Let $u$ and $v$ be functions in $\mathcal{M}_{\textnormal{int}}$, such that 
\begin{align*}
2v&=t+r_*,\\
2u&=t-r_*,
\end{align*}
where $r_*(r)$ is the \emph{tortoise function}, which is defined as a solution to
\begin{equation*}
\frac{dr_*}{dr}=\Omega^{-2}(r),
\end{equation*}
given explicitly by
\begin{equation}
\label{eq:rntortoise}
r_*(r)=\frac{M^2}{M-r}+2M\log(M-r)+r.
\end{equation}

We can change to \emph{ingoing} Eddington--Finkelstein coordinates $(v,r,\theta,\varphi)$ to show that the spacetime $\mathcal{M}_{\textnormal{int}}$ can be smoothly patched to the spacetime $\mathcal{M}_{\textnormal{ext}}$. The corresponding boundary of $\mathcal{M}_{\textnormal{ext}}$ inside the patched spacetime is made up of the points $(v,M,\theta,\varphi)$, with $v\in \R$, $\theta\in (0,\pi)$ and $\varphi\in (0,2\pi)$. This boundary is called the event horizon and is denoted by $\mathcal{H}^+$. It lies in the causal past of $\mathcal{M}_{\rm int}$. We denote the patched manifold by $\mathcal{M}:=\mathcal{M}_{\textnormal{int}}\cup \mathcal{M}_{\textnormal{ext}}\cup \mathcal{H}^+$.

We can also change to \emph{outgoing} Eddington--Finkelstein coordinates $(u,r,\theta,\varphi)$ in $\mathcal{M}_{\textnormal{int}}$ to show that $\mathcal{M}_{\textnormal{int}}$ can be smoothly embedded into another spacetime $\mathcal{M}'$, by patching $\mathcal{M}_{\rm int}$ to a spacetime $\mathcal{M}_{\rm ext}'$ that is isometric to $\mathcal{M}_{\rm ext}$. The corresponding boundary of $\mathcal{M}_{\rm int}$ and $\mathcal{M}'_{\rm ext}$ lies in the causal future of $\mathcal{M}_{\rm int}$ and is denoted by $\mathcal{CH}^+$. We refer to this boundary as the \emph{inner horizon}; it is composed of the points $(u,M,\theta,\varphi)$, with $u\in \R$, $\theta\in (0,\pi)$ and $\varphi\in (0,2\pi)$. We can write  $\mathcal{M}'=\mathcal{M}_{\textnormal{int}}\cup \mathcal{M}'_{\textnormal{ext}}\cup \mathcal{CH}^+$.

As $\mathcal{M}_{\rm ext}'$ is isometric to $\mathcal{M}_{\rm ext}$, the manifold $\mathcal{M}\cup \mathcal{M}'$ can be further extended to form an infinite sequence of patched manifolds isometric to either $\mathcal{M}_{\rm ext}$ or $\mathcal{M}_{\rm int}$, glued across horizons. This spacetime $\widetilde{\mathcal{M}}$ is called maximal analytically extended Reissner--Nordstr\"om and it is depicted in Figure \ref{fig:fullspacetime}. For the remainder of this paper we will, however, mainly direct our attention to the subset $\mathcal{M}\cup \mathcal{CH}^+$.

Note that $T:=\partial_t$ is a causal Killing vector field in $(\mathcal{M}\cup \mathcal{CH}^+, g)$, which is timelike in $\mathcal{M}_{\textnormal{int}}\cup \mathcal{M}_{\textnormal{ext}}$ and null along $\mathcal{H}^+$ and $\mathcal{CH}^+$. Moreover, we denote the three spacelike Killing vector fields corresponding to the spherical symmetry of the spacetime by $O_i$, with $i=1,2,3$. They can be expressed in spherical coordinates as:
\begin{align*}
O_1&=\sin \varphi \partial_{\theta}+\cot\theta \cos \varphi \partial_{\varphi},\\
O_2&=-\cos \varphi \partial_{\theta}+\cot\theta \sin \varphi \partial_{\varphi},\\
O_3&=\partial_{\varphi}.
\end{align*}

We will also consider Eddington--Finkelstein double-null coordinates, $(u,v,\theta,\varphi)$, with $u,v\in \R$, in $\mathcal{M}_{\textnormal{int}}$, in which the metric is given by
\begin{equation*}
g=-4\Omega^2dudv+\sg.
\end{equation*}
We have that $u\to -\infty$ as we approach $r=M$ along constant $v$ null hypersurfaces and $v\to \infty$ as we approach $r=M$ along constant $u$ null hypersurfaces.

\subsection{Double-null foliations of the interior region}
Since we will be considering energy estimates along both ingoing and outgoing null hypersurfaces, it is more convenient to work with double-null coordinates in $\mathcal{M}$, instead of either ingoing or outgoing Eddington--Finkelstein coordinates.

We will show that we can extend the range of the double-null Eddington-Finkelstein coordinates $(u,v,\theta,\varphi)$ from $\mathcal{M}_{\rm int}$ to $\mathcal{M}$ by a suitable rescaling of the $u$ coordinate. We can similarly extend the range from $\mathcal{M}_{\rm int}$ to $\mathcal{M}'$ by suitable rescaling of the $v$ coordinate. 

Let $-\infty<u_0<0$ and $0<v_0<\infty$. We introduce the rescaled functions $U$ and $V$, with $U\in (0,M)$ and $V\in(0,M)$, defined by
\begin{align*}
U(u)&=M-r(u,v_0),\\
V(v)&=r(u_0,v).
\end{align*}
Then we have that
\begin{align*}
\frac{dU}{du}&=-\partial_ur(u,v_0)=\Omega^2(u,v_0),\\
\frac{dV}{dv}&=\partial_vr(u_0,v)=\Omega^2(u_0,v).
\end{align*}
In double-null coordinates $(U,v,\theta,\varphi)$, the metric is given by
\begin{equation*}
g=-4\frac{\Omega^2(u,v)}{\Omega^2(u,v_0)}dUdv+\sg.
\end{equation*}

We can also express the metric in double-null coordinates $(u,V,\theta,\varphi)$ by
\begin{equation*}
g=-4\frac{\Omega^2(u,v)}{\Omega^2(u_0,v)}dudV+\sg.
\end{equation*}

We will show that we can extend $U$, a function on $\mathcal{M}_{\rm int}$, to the bigger domain $\mathcal{M}$. We can cover $\mathcal{M}$ with double-null coordinates $(\tilde{U},\tilde{V},\theta,\varphi)$, where $\tilde{U}$ is an ingoing null coordinate and $\tilde{V}$ an outgoing coordinate. That is to say, $\tilde{U}$ is constant along outgoing null hypersurfaces and $\tilde{V}$ is constant along ingoing null hypersurfaces that are preserved under isometries of spherical symmetry. We can therefore express $U=r(\tilde{U})$ and use this expression to extend $U$ as a function on the entire manifold $\mathcal{M}$. Then $U\in (0,\infty)$. Moreover, by using that $r$ is a smooth function on $\mathcal{M}$ with non-vanishing ingoing null derivative, it follows that $U=r(\tilde{U})$ must be also be a smooth function on $\mathcal{M}$. We conclude that the map
\begin{equation*}
(v,r,\theta,\varphi)\mapsto (v,U,\theta,\varphi)
\end{equation*}
is smooth. Moreover, $\partial_U^kr$ is well-defined at each point along $\mathcal{H}^+=\{U=0\}$ for all $k\in \N$.

We can use similar arguments to extend $V$ as a smooth function to the entire manifold $\mathcal{M}'$, with $V\in(0,\infty)$. The map\begin{equation*}
(u,r,\theta,\varphi)\mapsto (u,V,\theta,\varphi)
\end{equation*}
is smooth, so $\partial_V^kr$ is well-defined at each point along $\mathcal{CH}^+=\{V=M\}$ for all $k\in \N$.
\begin{figure}[h!]
\begin{center}
\includegraphics[width=2.5in]{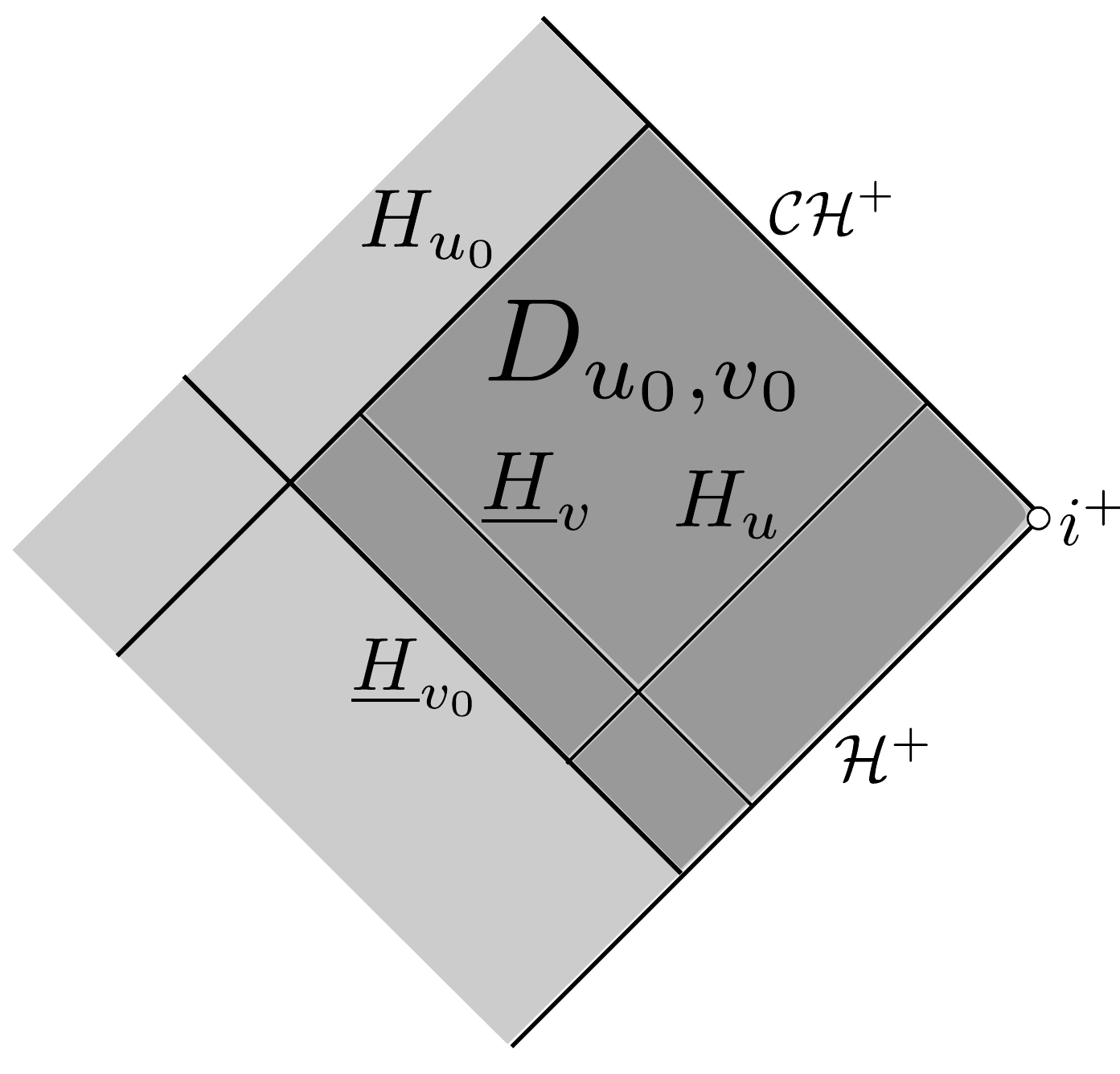}
\caption{\label{fig:ru1v1} }
\end{center}
\end{figure}

We will use the notation $(u,v,\theta,\varphi)=(-\infty,v_0,\theta,\varphi)$ and $(u,v,\theta,\varphi)=(u_0,\infty,\theta,\varphi)$, with $u_0,v_0<\infty$, for points on $\mathcal{H}^+$ and $\mathcal{CH}^+$, respectively, for the sake of convenience. These points lie in the domain of either the $(U,v)$ or $(u,V)$ double-null coordinates.

In $\mathcal{M}_{\textnormal{int}}\cup \mathcal{H}^+\cup \mathcal{CH}^+$ we restrict to the region
\begin{equation*}
D_{u_0,v_0}=\{x\in\mathcal{M}_{\textnormal{int}}\cup \mathcal{H}^+\cup \mathcal{CH}^+\::\: U(x)\in[0,U(u_0)],\:V(x)\in[V(v_0),M],\:(U(x),V(x))\neq (0,M)\}.
\end{equation*}

Let $v'\in[v_0,\infty)$ and $u'=[-\infty,u_0]$. We will often restrict to the following null hypersurfaces:
\begin{align*}
\uline{H}_{v'}&:=\{x\in \mathcal{M}\::\:U(x)\in [0,U(u_0)],\: v(x)=v'\},\\
H_{u'}&:=\{x\in \mathcal{M}\::\:U(x)=U(u'),\: v(x)\in[v_0,\infty)\}.
\end{align*}
We refer to the hypersurfaces $\uline{H}_{v'}$ and $H_{u'}$ as ingoing and outgoing null hypersurfaces, respectively.

We will now introduce some notation to group terms that decay in $v$, $|u|$, or in certain linear combinations of $v$ and $|u|$. 
\begin{definition}
\hspace{1pt}
\begin{itemize}\setlength\itemsep{1em}
\item[(i)]
Let $f: \mathcal{M}_{\rm int}\cap D_{u_0,v_0}\to \R$ be a $C^k$-function. We say that $f\in \mathcal{O}_{k}((v+|u|)^{-l})$, where $u$ and $v$ are Eddington--Finkelstein double-null coordinates in $\mathcal{M}_{\textnormal{int}}$, if there exists a constant $C=C(M,\Sigma)>0$, such that
\begin{equation*}
|\partial_u^{j_1}\partial_v^{j_2} f|(u,v,\theta,\varphi)\leq C (v+|u|)^{-l-j},
\end{equation*}
for all $0\leq j_1+j_2 \leq j$, with $j\leq k$. We also denote $\mathcal{O}((v+|u|)^{-l}):=\mathcal{O}_{0}((v+|u|)^{-l})$.
\item[(ii)]
We say $f\in \mathcal{O}_{k}(|u|^{-l})$ if there exist a constant $C=C(M,\Sigma)>0$, such that
\begin{equation*}
|\partial_u^{j} f|(u,v,\theta,\varphi)\leq C |u|^{-l-j},
\end{equation*}
for all $0\leq j \leq k$ and we use the notation $\mathcal{O}(|u|^{-l}):=\mathcal{O}_{0}(|u|^{-l})$.

Now let $f: \mathcal{M}\cap D_{u_0,v_0}\to \R$ be a $C^k$-function. Then we say that $f\in \mathcal{O}_{k}(v^{-l})$ if there exists a constant $C=C(M,\Sigma)>0$, such that
\begin{equation*}
|\partial_v^{j} f|(u,v,\theta,\varphi)\leq C v^{-l-j},
\end{equation*}
for all $0\leq j \leq k$ and we use the notation $\mathcal{O}(v^{-l}):=\mathcal{O}_{0}(v^{-l})$.
\end{itemize}
\end{definition}
We will now determine the leading-order behaviour of $\partial_Ur$ and $\partial_U^2r$ in $v$ and $|u|$.
\begin{proposition}
\label{prop:metricc1rn}
In $\mathcal{M}_{\rm int}\cap D_{u_0,v_0}$, we can expand
\begin{align*}
-\partial_Ur &=1-2\frac{v-v_0}{v+|u|}+(v-v_0)\log(v+|u|)\mathcal{O}_2((v+|u|)^{-2})\\
-\partial_u\partial_Ur&=-\Omega^{2}(v_0,u)\partial_U^2r =2\frac{v-v_0}{(v+|u|)^2}+(v-v_0)\log(v+|u|)\mathcal{O}_1((v+|u|)^{-3})
\end{align*}\end{proposition}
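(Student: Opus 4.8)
The plan is to reduce the whole statement to the one–variable asymptotics of $\Omega^2$ along the foliation, since after the chain rule both displayed identities are built out of $\Omega^2$ evaluated at two nearby points. Because $U=U(u)$ depends on $u$ alone with $\tfrac{dU}{du}=\Omega^2(u,v_0)$, I have $\partial_U=\Omega^{-2}(u,v_0)\,\partial_u$; and since $r$ is a function of $r_\ast=v+|u|$ only, with $\partial_u r|_v=-\Omega^2(u,v)$, this gives the closed form $-\partial_U r=\Omega^2(u,v)/\Omega^2(u,v_0)$. Writing $s:=v+|u|=r_\ast$ and $s_0:=v_0+|u|$ (so $s-s_0=v-v_0$, $\partial_u s=\partial_u s_0=-1$, $\partial_v s=1$, $\partial_v s_0=0$) and setting $\omega(s):=\Omega^2(r(s))$, the first identity is exactly $-\partial_U r=\omega(s)/\omega(s_0)$. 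Differentiating once more and using $\tfrac{d\Omega^2}{dr}=-\tfrac{2M(M-r)}{r^3}=:\mu$ (again read as a function of $s$) produces the clean identity $\partial_u\partial_U r=\tfrac{\omega(s)}{\omega(s_0)}\,[\mu(s)-\mu(s_0)]$, and the middle equality $-\partial_u\partial_U r=-\Omega^2(u,v_0)\partial_U^2 r$ is just $\partial_U^2 r=\omega(s_0)^{-1}\partial_u\partial_U r$. The key structural point is that every $u$– and $v$–derivative reduces to an $s$–derivative times a factor in $\{-1,0,1\}$, so the required $\mathcal{O}_k$ bounds follow once I control $\omega,\mu$ and their $s$–derivatives.

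Next I would establish those one–variable asymptotics. Setting $x:=M-r$, the relation $r_\ast=s$ reads $\tfrac{M^2}{x}+2M\log x+(M-x)=s$, which I invert for large $s$ to get $x(s)=\tfrac{M^2}{s}-\tfrac{2M^3\log s}{s^2}+\mathcal{O}(s^{-2})$; it is precisely this $\log s$ term that generates the $\log(v+|u|)$ factor in the two remainders. Rather than differentiating this expansion termwise, I would control all derivatives of $x$ through the autonomous ODE $x'(s)=-\omega=-\tfrac{x^2}{(M-x)^2}$, bootstrapping bounds $|x^{(k)}(s)|\lesssim s^{-1-k}$ together with the leading expansions. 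Substituting into $\omega=x^2/(M-x)^2$ and $\mu=-2Mx/(M-x)^3$ then yields $\omega(s)=\tfrac{M^2}{s^2}\bigl(1-\tfrac{4M\log s}{s}+\mathcal{O}(s^{-1})\bigr)$ and $\mu(s)=-\tfrac{2}{s}+\mathcal{O}(s^{-2}\log s)$, each with matching derivative bounds.

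To assemble the first identity I would expand $\omega(s)/\omega(s_0)=\tfrac{s_0^2}{s^2}\cdot\tfrac{1+p(s)}{1+p(s_0)}$, where $p=\mathcal{O}(s^{-1}\log s)$ is the subleading part of $\omega$. The leading factor is exact, $\tfrac{s_0^2}{s^2}=\bigl(1-\tfrac{v-v_0}{v+|u|}\bigr)^2$, which supplies the displayed terms $1-2\tfrac{v-v_0}{v+|u|}$, while a mean–value estimate gives $|p(s)-p(s_0)|\lesssim (v-v_0)\,s^{-2}\log s$, placing the log–correction ratio into the remainder; collecting these and checking derivatives via the reductions above yields the stated $\mathcal{O}_2$ bound. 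For the second identity I would insert $\mu(s)-\mu(s_0)=2\bigl(\tfrac{1}{s_0}-\tfrac{1}{s}\bigr)+\cdots=\tfrac{2(v-v_0)}{(v+|u|)(v_0+|u|)}+\cdots$ and $\omega(s)/\omega(s_0)=1+\mathcal{O}(\cdots)$ into $\partial_u\partial_U r=\tfrac{\omega(s)}{\omega(s_0)}[\mu(s)-\mu(s_0)]$, so that the product reproduces the leading behaviour $\sim (v-v_0)(v+|u|)^{-2}$ and the $(v-v_0)\log(v+|u|)\,\mathcal{O}_1(s^{-3})$ remainder.

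The hard part will be the uniform $\mathcal{O}_k$ bookkeeping of the remainders rather than any single estimate: one must carry the $\log s$ corrections through the tortoise inversion and verify that every mixed $u$– and $v$–derivative of the remainder decays at the advertised rate. The crucial pitfall to avoid is differentiating the truncated expansions, which is illegitimate here — indeed, naively differentiating the truncated leading part $1-2\tfrac{v-v_0}{v+|u|}$ of the first identity fails to reproduce the leading term of the second, because the neglected remainder contributes at the same order after one derivative — so I would always differentiate the exact closed forms $\omega(s)/\omega(s_0)$ and $\tfrac{\omega(s)}{\omega(s_0)}[\mu(s)-\mu(s_0)]$ and substitute the ODE–controlled asymptotics of $\omega,\mu$ and their derivatives only at the very end.
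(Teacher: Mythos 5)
Your reduction is the same one the paper uses: its proof also starts from $-\partial_Ur=\Omega^{-2}(u,v_0)/\Omega^{-2}(u,v)$ and expands this ratio via the inversion of the tortoise relation, so this is not a different route. Where you improve on the paper is the differentiation step: the paper obtains the second identity by ``differentiating the expansion in $u$'', whereas you differentiate the exact closed form first, obtaining $\partial_u\partial_Ur=\frac{\omega(s)}{\omega(s_0)}\,[\mu(s)-\mu(s_0)]$ with $s=v+|u|$, $s_0=v_0+|u|$, and only then insert asymptotics whose derivatives are controlled through the autonomous ODE $x'=-x^2/(M-x)^2$. That is the legitimate way to justify the $\mathcal{O}_k$ claims, and it is a genuine gain in rigour. (One small caveat: the ``pitfall'' you describe — that termwise differentiation of the truncated first identity fails to reproduce the second — is largely an artifact of a stray sign in the displayed statement; the exact formula gives $\partial_u\partial_Ur>0$, matching the sign used in the paper's proof, and a remainder genuinely of class $(v-v_0)\log(v+|u|)\mathcal{O}_2((v+|u|)^{-2})$ would have a $u$-derivative of strictly lower order than the leading term.)

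There is, however, a concrete gap at exactly the point where you declare the bookkeeping closed, and it is shared with the paper's own proof. The exact leading factor is
\begin{equation*}
\frac{s_0^2}{s^2}=\Bigl(1-\frac{v-v_0}{v+|u|}\Bigr)^2=1-2\frac{v-v_0}{v+|u|}+\frac{(v-v_0)^2}{(v+|u|)^2},
\end{equation*}
and the quadratic term is \emph{not} of the form $(v-v_0)\log(v+|u|)\mathcal{O}((v+|u|)^{-2})$ unless $v-v_0\lesssim\log(v+|u|)$: at fixed $u$ it tends to $1$ as $v\to\infty$, while the claimed remainder tends to $0$ (consistently, $-\partial_Ur=\Omega^{2}(u,v)/\Omega^{2}(u,v_0)\to 0$ in that limit, whereas the two displayed terms tend to $-1$). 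Your mean-value estimate has the same defect: $\sup_{[s_0,s]}|p'|$ is attained near $s_0$, so it yields $|p(s)-p(s_0)|\lesssim(v-v_0)s_0^{-2}\log s_0$, which is not $\lesssim(v-v_0)s^{-2}\log s$ when $v\gg|u|$; and in the second identity the true leading term is $2(v-v_0)(v_0+|u|)(v+|u|)^{-3}$, which agrees with the displayed $2(v-v_0)(v+|u|)^{-2}$ only when $v\lesssim|u|$. None of this is visible in the regime $v-v_0\lesssim|u|$ — which is where the proposition is actually invoked (at $v=v_0$ and as $u\to-\infty$), and the paper's proof performs the identical absorption without comment — but a proof of the statement on all of $\mathcal{M}_{\rm int}\cap D_{u_0,v_0}$ must either retain the quadratic term explicitly or weaken the stated remainder; your argument as written does not close this.
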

\begin{proof}
We use expression (\ref{eq:rntortoise}) to obtain the following implicit relation between $r$, $u$ and $v$:
\begin{equation*}
\frac{r}{M-r}=\frac{M}{M-r}-1=M^{-1}(v-u)-2\log(M-r)+M^{-1}(M-r)-2.
\end{equation*}
Consequently, we can express
\begin{equation}
\label{eq:rnOmegainverseexp}
\begin{split}
\Omega^{-2}(u,v)&=\frac{r^2}{(M-r)^2}=M^{-2}(v-u)^2-4M^{-1}(v-u)\log(M-r)(u,v)-4M^{-1}(v-u)\\
&\quad+4(\log(M-r)(u,v))^2+8\log(M-r)(u,v)+\mathcal{O}_{2}(1).
\end{split}
\end{equation}

We can estimate
\begin{equation}
\label{eq:exppartialUR}
\begin{split}
-\partial_Ur(u,v)&=\frac{\Omega^{-2}(u,v_0)}{\Omega^{-2}(u,v)}\\
&=\left(1-\frac{v-v_0+2M\log\left(\frac{(M-r)(u,v)}{(M-r)(u,v_0)}\right)}{v+|u|-2M\log (M-r)(u,v)-2+(M-r)(u,v)}\right)^2\\
&\qquad+\mathcal{O}_2((v+|u|)^{-2})\\
&=\left(1-\frac{v-v_0}{v+|u|}+\frac{2M\log\left(\frac{(M-r)(u,v)}{(M-r)(u,v_0)}\right)}{v+|u|-2M\log (M-r)(u,v)-2+(M-r)(u,v)}\right)^2\\
&\qquad+\log(v+|u|)(v-v_0)\mathcal{O}_2((v+|u|)^{-2})\\
&=1-2\frac{v-v_0}{v+|u|}+\log(v+|u|)(v-v_0)\mathcal{O}_2((v+|u|)^{-2}).
\end{split}
\end{equation}

In particular, it follows immediately that
\begin{equation*}
-\partial_Ur(0,v)=\lim_{U\downarrow 0} (-\partial_U r)(U,v)=\lim_{u\to -\infty} \frac{\Omega^{-2}(u,v_0)}{\Omega^{-2}(u,v)}=1.
\end{equation*}

Furthermore, we can differentiate (\ref{eq:exppartialUR}) in $u$ to obtain
\begin{equation*}
\begin{split}
\Omega^2(u,v_0)\partial_U^2r&=\partial_u\partial_U r=2\frac{v-v_0}{(v+|u|)^2}+\log(v+|u|)(v-v_0)\mathcal{O}_1((v+|u|)^{-3})
\end{split}
\end{equation*}

In particular, it follows that
\begin{equation*}
\partial_U^2r(0,v)=\lim_{U\downarrow 0}\partial_U^2r=2M^{-2}(v-v_0). \qedhere
\end{equation*}
\end{proof}  

We will also need a precise expansion in $u$ and $v$ of $\partial_V^2r$ in $\mathcal{M}_{\textnormal{int}}$.
\begin{proposition}
\label{prop:part2dervr}
In $\mathcal{M}_{\rm int}\cap D_{u_0,v_0}$, we can expand
\begin{align}
\label{eq:part1dervr}
\partial_Vr &=1-2\frac{|u|-|u_0|}{v+|u|}+(|u|-|u_0|)\log(v+|u|)\mathcal{O}_2((v+|u|)^{-2}),\\
\label{eq:part2dervr}
\partial_v\partial_Vr&=\Omega^{2}(u_0,v)\partial_V^2r =2\frac{|u|-|u_0|}{(v+|u|)^2}+(|u|-|u_0|)\log(v+|u|)\mathcal{O}_1((v+|u|)^{-3}).
\end{align}
\end{proposition}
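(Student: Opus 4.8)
Proposition \ref{prop:metricc1rn} already gives the expansions for $-\partial_U r$ and $-\partial_u\partial_U r = -\Omega^2(v_0,u)\partial_U^2 r$, and the definitions of $U$ and $V$ are symmetric under interchanging the roles of the two null coordinates: $U(u) = M - r(u,v_0)$ is built by fixing $v=v_0$ and reading off the area radius, while $V(v) = r(u_0,v)$ is built by fixing $u=u_0$. The defining relation (\ref{eq:rntortoise}) enters only through the combination $r_* = \tfrac{M^2}{M-r} + 2M\log(M-r) + r$, and the implicit relation $\tfrac{r}{M-r} = M^{-1}(v-u) - 2\log(M-r) + M^{-1}(M-r) - 2$ used in the proof of Proposition \ref{prop:metricc1rn} is \emph{symmetric} in the sense that $v$ and $-u = |u|$ appear only through the sum $v + |u| = v - u$. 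First I would write down the analogue of (\ref{eq:rnOmegainverseexp}), which is identical since $\Omega^{-2}$ depends on $r$ alone, and then repeat the estimate (\ref{eq:exppartialUR}) with the roles of the two coordinates swapped.

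Concretely, I would compute
\begin{equation*}
\partial_V r(u,v) = \frac{\Omega^{-2}(u_0,v)}{\Omega^{-2}(u,v)},
\end{equation*}
which follows from $\tfrac{dV}{dv} = \Omega^2(u_0,v)$ together with $\partial_v r = \Omega^2(u,v)$, in exact parallel to the formula $-\partial_U r = \Omega^{-2}(u,v_0)/\Omega^{-2}(u,v)$ used in the previous proposition. Substituting the expansion (\ref{eq:rnOmegainverseexp}) into numerator and denominator, the leading term $M^{-2}(v-u)^2 = M^{-2}(v+|u|)^2$ is common to both, and the difference in the subleading terms is governed by $|u| - |u_0|$ rather than $v - v_0$ (since now $v$ is held fixed at the reference value in the numerator, while $u$ varies). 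Expanding the quotient to first order reproduces (\ref{eq:part1dervr}):
\begin{equation*}
\partial_V r = \left(1 - \frac{|u|-|u_0|}{v+|u|} + \frac{2M\log\!\left(\tfrac{(M-r)(u,v)}{(M-r)(u_0,v)}\right)}{v+|u| - 2M\log(M-r)(u,v) - 2 + (M-r)(u,v)}\right)^2 + (|u|-|u_0|)\log(v+|u|)\,\mathcal{O}_2((v+|u|)^{-2}),
\end{equation*}
and squaring out the leading bracket gives $1 - 2\frac{|u|-|u_0|}{v+|u|}$ modulo the stated error. Differentiating this expression in $v$ (rather than in $u$, as was done before) and using $\partial_v \partial_V r = \Omega^2(u_0,v)\partial_V^2 r$ from $\tfrac{dV}{dv} = \Omega^2(u_0,v)$ yields (\ref{eq:part2dervr}), with leading term $2\frac{|u|-|u_0|}{(v+|u|)^2}$.

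The main obstacle is purely bookkeeping rather than conceptual: I must verify that the logarithmic error terms $\log(M-r)$, which depend on $r$ and hence on \emph{both} $u$ and $v$, are correctly absorbed into the stated $\mathcal{O}_k$ classes after the coordinate roles are swapped, and that differentiating in $v$ (the ``bad'' direction in which $\partial_v r \geq 0$ and $r \to M$) does not spoil the weights. The one genuine asymmetry to watch is that in the interior $r$ increases toward $M$ as $v \to \infty$ along constant $u$, whereas it is the $u \to -\infty$ limit that realizes $\mathcal{CH}^+$ in the $V$-regular chart; so I would confirm that the prefactor $|u| - |u_0|$ vanishes on $H_{u_0}$ and grows linearly as one moves toward $\mathcal{CH}^+$, exactly mirroring the role of $v - v_0$ in Proposition \ref{prop:metricc1rn}. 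Since the implicit relation is symmetric in $v$ and $|u|$ and the $\mathcal{O}_k$ definitions are stated symmetrically in these variables, this verification amounts to transcribing the earlier computation with the substitution $(v - v_0, \partial_u) \leftrightarrow (|u| - |u_0|, \partial_v)$, and I expect no new analytic difficulty beyond careful tracking of the error terms.
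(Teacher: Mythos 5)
Your proposal is correct and is essentially the paper's own proof, which consists of the single sentence ``We repeat the arguments in Proposition \ref{prop:metricc1rn}, with $v$ and $|u|$ interchanged''; your identity $\partial_V r = \Omega^{-2}(u_0,v)/\Omega^{-2}(u,v)$ is the exact mirror of the paper's $-\partial_U r = \Omega^{-2}(u,v_0)/\Omega^{-2}(u,v)$, and the substitution $(v-v_0,\partial_u)\leftrightarrow(|u|-|u_0|,\partial_v)$ you describe is precisely what the paper intends.
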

\begin{proof}
We repeat the arguments in Proposition \ref{prop:metricc1rn}, with $v$ and $|u|$ interchanged.
\end{proof}

\subsection{The divergence theorem and integration norms}
\label{sec:vectorfieldmethod}
In this section we will introduce some basic notation corresponding to the ``vector field method'', which was mentioned in Section \ref{sec:mainideasThm12}. In particular, we will set notation for spacetime integrals and integrals over null hypersurfaces.

Let $V$ be a vector field in a Lorentzian manifold $(\mathcal{N},g)$. We consider the stress-energy tensor $\mathbf{T}[\phi]$ corresponding to (\ref{eq:waveqkerr}), with components
\begin{equation*}
\mathbf{T}_{\alpha \beta}[\phi]=\partial_{\alpha}\phi \partial_{\beta} \phi-\frac{1}{2}g_{\alpha \beta} \partial^{\gamma} \phi \partial_{\gamma} \phi.
\end{equation*}

Let $J^V[\phi]$ denote the energy current corresponding to $V$, which is obtained by applying $V$ as a vector field multiplier, i.e.\ in components
\begin{equation*}
J^V_{\alpha}[\phi]=\mathbf{T}_{\alpha \beta}[\phi] V^{\beta}.
\end{equation*}

An energy flux is an integral of $J^V[\phi]$ contracted with the normal to a hypersurface with the natural volume form corresponding to the metric induced on the hypersurface. We apply the divergence theorem to relate the energy flux along the boundary of a spacetime region to the spacetime integral of the divergence of the energy current $J^V$. If the boundary has a null segment, there is no natural volume form, so the volume form is chosen in such a way that the divergence theorem holds.

That is to say, if we take $-\infty\leq u_1<u_2\leq u_0$ and $v_0\leq v_1<v_2\leq \infty$, the divergence theorem in the open rectangle $\{u_1<u<u_2,\:v_1<v<v_2\}$ in $\mathcal{M}_{\rm int}$ gives the following identity:
\begin{equation}
\label{divergenceidentity}
\begin{split}
\int_{\{u_1<u<u_2,\:v_1<v<v_2\}} \textnormal{div}J^V[\phi]&=-\int_{H_{u_2}\cap \{v_1\leq v\leq v_2\}}J^V[\phi]\cdot \partial_v+\int_{H_{u_1}\cap \{v_1\leq v\leq v_2\}}J^V[\phi]\cdot \partial_v\\
&\quad - \int_{\uline{H}_{v_2}\cap \{u_1\leq u\leq u_2\}}J^V[\phi]\cdot \partial_u+\int_{\uline{H}_{v_1}\cap \{u_1\leq u\leq u_2\}}J^V[\phi]\cdot \partial_u.
\end{split}
\end{equation}

Here, we introduced the following notation:
\begin{equation*}
J^V[\phi]\cdot W=\mathbf{T}(V,W),
\end{equation*}
for vector fields $V$ and $W$. Moreover, in the notation on the left-hand side of (\ref{divergenceidentity}), we integrate over spacetime with respect to the standard volume form, i.e.\ let $f: \mathcal{M}\cap D_{u_0,v_0}\to \R$ be a suitably regular function and $U$ an open subset of $\mathcal{M}$, then
\begin{equation*}
\int_{U}f:=\int_{U}f(u,v,\theta,\varphi)\,\sqrt{-\det g}dudvd\theta d\varphi= \int_{U}f(u,v,\theta,\varphi)\,2\Omega^2r^2d\mu_{\s^2}dudv,
\end{equation*}
where $d\mu_{\s^2}$ is the standard volume form on the round sphere  $\s^2$ of radius 1.

When integrating over $H_u$ and $\uline{H}_{v}$, we do not have a standard volume form at our disposal, so we used the following convention in the notation on the right-hand side of (\ref{divergenceidentity}):
\begin{align*}
\int_{\uline{H}_v}f&:= \int_{0}^{U(u_0)}\int_{\s^2}f\,r^2d\mu_{\s^2}dU=\int_{-\infty}^{u_0}\int_{\s^2}f\,r^2d\mu_{\s^2}du,\\
\int_{{H}_u}f&:=\int_{V(v_0)}^M\int_{\s^2}f\,r^2d\mu_{\s^2}dV= \int_{v_0}^{\infty}\int_{\s^2}f\,r^2 d\mu_{\s^2}dv.
\end{align*}

In the notation of \cite{Christodoulou2008} we decompose the divergence term appearing in (\ref{divergenceidentity}) in the following way:
\begin{equation*}
\textnormal{div} J^V[\phi]=K^V[\phi]+\mathcal{E}^V[\phi],
\end{equation*}
where
\begin{align*}
K^V[\phi]&:=\mathbf{T}^{\alpha \beta}[\phi]\nabla_{\alpha}V_{\beta},\\
\mathcal{E}^V[\phi]&:=V(\phi)\square_g\phi.
\end{align*}
In particular, $\mathcal{E}^V[\phi]=0$ if $\phi$ is a solution to (\ref{eq:waveqkerr}).

We can write
\begin{align*}
\int_{\uline{H}_v}(\partial_Uf)^2&= \int_{0}^{U(u_0)}\int_{\s^2}(\partial_Uf)^2\,r^2d\mu_{\s^2}dU= \int_{-\infty}^{u_0}\int_{\s^2}\frac{du}{dU}(\partial_uf)^2\,r^2 d\mu_{\s^2}du,\\
\int_{{H}_u}(\partial_Vf)^2&=\int_{V(v_0)}^M\int_{\s^2}(\partial_Vf)^2\,r^2d\mu_{\s^2}dV= \int_{v_0}^{\infty}\int_{\s^2}\frac{dv}{dV}(\partial_vf)^2\,r^2 d\mu_{\s^2}dv.
\end{align*}

We have that
\begin{align*}
\frac{du}{dU}&=-\frac{1}{\partial_ur (u,v_0)},\\
\frac{dv}{dV}&=\frac{1}{\partial_vr (u_0,v)}.
\end{align*}

By (\ref{eq:rnOmegainverseexp}), there exist constants $C_1=C_1(M,u_0,v_0)>0$ and $C_2=C_2(M,u_0,v_0)>0$ such that
\begin{align*}
C_1 u^2\leq \frac{du}{dU}\leq C_2 u^2,\\
C_1 v^2\leq\frac{dv}{dV}\leq C_2 v^2.
\end{align*}

We rewrite the estimates above by using the following notation:
\begin{align}
\label{est:utoU}
\frac{du}{dU}\sim u^2,\\
\label{est:vtoV}
\frac{dv}{dV}\sim v^2,
\end{align}
so that
\begin{align*}
\int_{\uline{H}_v}(\partial_Uf)^2 &\sim \int^{u_0}_{-\infty}\int_{\s^2}u^2(\partial_uf)^2\,d\mu_{\s^2}du,\\
\int_{{H}_u}(\partial_Vf)^2 & \sim \int_{v_0}^{\infty}\int_{\s^2}v^2(\partial_vf)^2\,d\mu_{\s^2}dv.
\end{align*}

Now consider a compact subset $K\subset \mathcal{M}'$, such that moreover $K\subset D_{u_0,v_0}$. Then we define the following $L^2$ norms:
\begin{align*}
||f||^2_{L^2(K)}&:=\int_{K\cap \mathcal{M}_{\rm int}} f^2,\\
||\partial f||^2_{L^2(K)}&:=\int_{K\cap \mathcal{M}_{\rm int}} (\partial_Vf)^2+(\partial_U f)^2+|\snabla f|^2,\\
\end{align*}
where $\snabla$ denotes the covariant derivative restricted to the spheres that are preserved under isometries corresponding to spherical symmetry.

We can in particular estimate
\begin{equation}
\label{est:energytoH1}
\begin{split}
||\partial f||_{L^2(K)}^2&=\int_{K\cap \mathcal{M}_{\rm int}}(\partial_Vf)^2+(\partial_U f)^2+|\snabla f|^2\leq C\sup_{v_0\leq v<\infty} \int_{\uline{H}_v}(\partial_U f)^2+\Omega^2 |\snabla f|^2\\
&\quad+ C\sup_{-\infty\leq u<u_0}\int_{{H}_u}(\partial_Vf)^2,
\end{split}
\end{equation}
where $C=C(K)>0$.

Consider now the following weighted vector field $N$ in Eddington--Finkelstein double-null coordinates:
\begin{equation*}
N=N^u(u,v)\partial_u+N^v(u,v)\partial_v.
\end{equation*}
The corresponding compatible current $K^N$ is given by
\begin{equation*}
K^N[\phi]:=\mathbf{T}_{\alpha\beta}[\phi]\prescript{N}{}\pi^{\alpha\beta},
\end{equation*}
with the components of the deformation tensor $\prescript{N}{}\pi_{\alpha\beta}=\frac{1}{2}[\nabla_{\alpha}N_{\beta}+\nabla_{\beta}N_{\alpha}]$ given by
\begin{align*}
\prescript{N}{}\pi^{uu}=g^{uv}g^{uv}\prescript{N}{}\pi_{vv}&=\frac{1}{4}\Omega^{-4}g(\nabla_{\partial_u} N,\partial_u)=-\frac{1}{2}\Omega^{-2}\partial_uN^v,\\
\prescript{N}{}\pi^{vv}=g^{uv}g^{uv}\prescript{N}{}\pi_{uu}&=\frac{1}{4}\Omega^{-4}g(\nabla_{\partial_v} N,\partial_v)=-\frac{1}{2}\Omega^{-2}\partial_vN^u,\\
\prescript{N}{}\pi^{uv}&=g^{uv}g^{uv}\prescript{N}{}\pi_{uv}=\frac{1}{8}\Omega^{-4}\left[g(\nabla_{\partial_v} N,\partial_u)+g(\nabla_{\partial_u} N,\partial_v)\right]\\
&=-\frac{1}{4}\Omega^{-2}\left(\partial_uN^u+\partial_vN^v+\Omega^{-2}N^{\sigma}\partial_{\sigma}\Omega^2\right),\\
\prescript{N}{}\pi^{uA}&=\prescript{N}{}\pi^{vA}=0,\\
\prescript{N}{}\pi^{AB}&=\frac{1}{2}\slashed{g}^{AC}\slashed{g}^{BD}(N^{u}\partial_{u}+N^v\partial_v)\slashed{g}_{CD}=r^{-1}\Omega^2(N^v-N^u)\slashed{g}^{AB}
\end{align*}

Moreover,
\begin{align*}
\mathbf{T}_{vv}[\phi]&=(\partial_v\phi)^2,\\
\mathbf{T}_{uu}[\phi]&=(\partial_u\phi)^2\\
\mathbf{T}_{uv}[\phi]&=\Omega^2|\snabla\phi|^2,\\
\mathbf{T}_{AB}[\phi]&=(\partial_A\phi)(\partial_B\phi)+\frac{1}{2}\slashed{g}_{AB}(\Omega^{-2}\partial_u\phi\partial_v\phi-|\snabla\phi|^2).
\end{align*}
Consequently,
\begin{equation}
\label{eq:genexprKN}
K^N[\phi]=r^{-1}(N^v-N^u)\partial_u\phi\partial_v\phi-\frac{1}{2}\left[\partial_vN^v+\partial_uN^u+\Omega^{-2}\partial_u(\Omega^2)N^u+\Omega^{-2}\partial_v(\Omega^2)N^v\right]|\snabla \phi|^2.
\end{equation}

\section{Main results}
\label{sec:theorems}
In this section we present precise statements of the main results proved in this paper. We first give a formulation of the standard global existence and uniqueness for the characteristic initial value problem for (\ref{eq:waveqkerr}) in $\mathcal{M}\cap D_{u_0,v_0}$.

\begin{proposition}
\label{prop:wellposedness}
Let $\phi$ be a continuous function on the union of null hypersurfaces 
\begin{equation*}
\left(\mathcal{H}\cap\{v\geq v_0\}\right)\cup \uline{H}_{v_0},
\end{equation*}
such that the restriction to $\mathcal{H}^+$ and the restriction to $\uline{H}_{v_0}$ are smooth functions. Then there exists a unique, smooth extension of $\phi$ to $\mathcal{M}_{\rm int}\cap D_{u_0,v_0}$ that satisfies (\ref{eq:waveqkerr}). We also denote this extension by $\phi$. We refer to the restriction $\phi|_{\left(\mathcal{H}\cap\{v\geq v_0\}\right)\cup \uline{H}_{v_0}}$ as characteristic initial data.
\end{proposition}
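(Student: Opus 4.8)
The plan is to recognize the statement as the standard Goursat (characteristic initial value) problem for a linear, second-order, strictly hyperbolic equation, and to produce the solution by a Picard iteration on the associated integral (Duhamel) formulation, globalizing by linearity. First I would write $\square_g\phi=0$ in the double-null coordinates $(U,v,\theta,\varphi)$ that are regular at $\mathcal{H}^+$. Using the metric $g=-4\frac{\Omega^2(u,v)}{\Omega^2(u,v_0)}dUdv+\sg$ and the standard formula for the wave operator, the equation takes the normalized form
\begin{equation*}
\partial_U\partial_v\phi = a\,\partial_U\phi + b\,\partial_v\phi + c\,\sD\phi,
\end{equation*}
with coefficients $a,b,c$ built out of $r$, $\partial_Ur$, $\partial_vr$ and $\Omega^2$. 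By Proposition~\ref{prop:metricc1rn} and the regularity of $r$ established in Section~\ref{sec:geom}, these coefficients extend smoothly up to $\mathcal{H}^+=\{U=0\}$ and to $\uline{H}_{v_0}$, and the principal part remains non-degenerate on every compact subrectangle of $\overline{D_{u_0,v_0}}$ that avoids the corner $(U,V)=(0,M)$. The two data hypersurfaces $\mathcal{H}^+\cap\{v\geq v_0\}$ and $\uline{H}_{v_0}$ are transverse null hypersurfaces meeting along the sphere $\{U=0,\,v=v_0\}$, on which the prescribed data must match; this is exactly guaranteed by the assumed continuity of $\phi$ on the union.

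Next I would solve the problem on a compact slab $R_{v_1}=\{0\leq U\leq U(u_0),\,v_0\leq v\leq v_1\}$ for arbitrary fixed $v_1<\infty$. Integrating the equation once in $U$ from $0$ and once in $v$ from $v_0$, and incorporating the characteristic data through the representation $\phi(U,v)=\phi(U,v_0)+\phi(0,v)-\phi(0,v_0)+\int_0^U\!\int_{v_0}^v\partial_{U'}\partial_{v'}\phi$, converts the equation into a fixed-point problem $\phi=\mathcal{T}[\phi]$, where $\mathcal{T}$ encodes the data contributions plus the double integral of the first-order and angular terms. For the angular dependence I would decompose into spherical harmonics $\phi=\sum_{\ell,m}\phi_{\ell m}(U,v)Y_{\ell m}$, using that the metric and the $O_i$ are spherically symmetric, so that each mode solves a $(1+1)$-dimensional Goursat problem with $\sD$ replaced by $-\ell(\ell+1)$; alternatively one keeps the full system and treats $\sD\phi$ as a source controlled by commuting with $O_i$. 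A contraction-mapping argument for $\mathcal{T}$ in $C^0(R_{v_1})$ (equivalently, a Gr\"onwall estimate on successive iterates) then yields a unique continuous fixed point. Uniqueness of the solution follows either from the same estimate applied to the difference of two solutions, or from a standard energy estimate based on the divergence identity~(\ref{divergenceidentity}) with a timelike multiplier, which gives Gr\"onwall control of the energy flux and hence vanishing of the difference.

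To upgrade to smoothness I would commute the equation with $\partial_U$, $\partial_v$ and the angular Killing fields $O_i$: each commuted quantity again solves a linear Goursat problem of the same type with source terms built from lower-order derivatives, so an inductive bootstrap, together with the smoothness of the data along each hypersurface, gives $\phi\in C^\infty$ on $R_{v_1}\cap\mathcal{M}_{\rm int}$. Globalization is then immediate from linearity: since the iteration and the energy estimates produce no finite-$v$ blow-up, the solutions on the nested slabs agree on overlaps by uniqueness and patch to a single smooth solution on $\mathcal{M}_{\rm int}\cap D_{u_0,v_0}=\bigcup_{v_1<\infty}R_{v_1}\cap\mathcal{M}_{\rm int}$.

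I expect the genuinely standard character of this statement to mean that no single step is a serious obstacle; the only points requiring care are (i) verifying that the coefficients are smooth and the hyperbolic structure non-degenerate up to and including $\mathcal{H}^+$, which is precisely what the $(U,v)$ coordinates and Proposition~\ref{prop:metricc1rn} provide, and (ii) handling the non-compactness of $D_{u_0,v_0}$ in the $v$-direction (the approach to $\mathcal{CH}^+$ and $i^+$), which is dealt with by the slab exhaustion above rather than by any uniform-in-$v$ estimate — such uniform estimates being the content of the later theorems, not of this local well-posedness statement. If one prefers not to reprove the Goursat theory, the argument can instead be replaced by a citation to the standard theory of the linear characteristic initial value problem for wave equations on a globally hyperbolic region.
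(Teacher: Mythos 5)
The paper offers no proof of this proposition at all: it is explicitly framed as ``a formulation of the standard global existence and uniqueness for the characteristic initial value problem'' and is invoked as known theory, so there is nothing in the text to compare your argument against line by line. Your proposal correctly reconstructs the standard argument that the paper leaves implicit: reduction to a Goursat problem in the $(U,v)$ coordinates regular at $\mathcal{H}^+$, with coefficients that extend smoothly to $\{U=0\}$ by the constructions of Section \ref{sec:geom}, followed by a fixed-point/Gr\"onwall argument on compact slabs, commutation for smoothness, and exhaustion in $v$. Two points deserve the care you partially flag. First, the naive contraction in $C^0$ does not close if $\sD\phi$ is treated as a source, since $\sD$ is second order in the angular variables and each iterate loses derivatives; your fallback to the spherical-harmonic decomposition (legitimate here because the background is spherically symmetric, and the resummation converges by rapid decay of the mode coefficients of smooth data) or to energy estimates after commuting with the $O_i$ is the correct fix, and one of these must actually be carried out rather than offered as an alternative. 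Second, the data are only assumed continuous across the corner sphere $\{U=0,\,v=v_0\}$ with each piece separately smooth, so the solution is smooth in the interior of the domain of dependence but not necessarily up to the corner without higher-order compatibility conditions; this matches the statement, which only claims smoothness on $\mathcal{M}_{\rm int}\cap D_{u_0,v_0}$, but is worth stating explicitly. With those caveats your proof is correct, and your closing remark --- that one may simply cite the standard theory of the linear characteristic initial value problem --- is in fact exactly what the paper does.
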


It is important to note that Proposition \ref{prop:wellposedness} does not give any information about the asymptotic behaviour of $\phi$ towards $\mathcal{CH}^+$. We will state in the subsections below further properties of $\phi$, relating to boundedness and extendibility of $\phi$ beyond $\mathcal{CH}^+$. We will need to impose suitable additional decay requirements along $\mathcal{H}^+$.

\subsection{Energy estimates along null hypersurfaces}
Consider solutions $\phi$ to (\ref{eq:waveqkerr}) that arise from the characteristic initial data in Proposition \ref{prop:wellposedness}. 

We will also consider higher-order derivatives of $\phi$. Let us introduce the differential operator $X_{m,n,l}$ that is defined as follows:
\begin{equation*}
X_{m,n,l}\phi:=\partial_u^m\partial_v^nO^l\phi.
\end{equation*}

We can prove boundedness of weighted $L^2$ norms along null hypersurfaces for $X_{m,n,l}\phi$, under additional assumptions on suitable initial $L^2$ norms.
\begin{theorem}[Higher-order weighted energy estimates]
\label{thm:eestimatesint}
Let $m,n\geq 0$ and take $0<q\leq 2$. Let $\phi$ be a solution to (\ref{eq:waveqkerr}) corresponding to initial data from Proposition \ref{prop:wellposedness} satisfying
\begin{equation*}
\begin{split}
E_{q;(m,n,l)}[\phi]&:=\sum_{m'=\min\{1,m\}}^m\sum_{n'=\min\{1,n\}}^n\sum_{s=0}^{\max\{m-m',n-n'\}}\Bigg[ \int_{\mathcal{H}^+\cap\{v\geq v_0\}} v^q(\partial_vX_{m',n',l+s}\phi)^2\\
&\quad+|\snabla X_{m',n',l+s}\phi|^2+ \int_{\uline{H}_{v_0}} |u|^2(\partial_uX_{m',n',l+s}\phi)^2+\Omega^2|\snabla X_{m',n',l+s}\phi|^2\Bigg]<\infty.
\end{split}
\end{equation*}

Then there exists a constant $C=C(M,u_0,v_0,q,m,n)>0$ such that for all $H_u$ and $\uline{H}_v$,
\begin{equation*}
\begin{split}
&\int_{H_u} v^q(\partial_v X_{m,n,l}\phi)^2+|u|^2\Omega^2|\snabla X_{m,n,l}\phi|^2+\int_{\uline{H}_v} |u|^2(\partial_uX_{m,n,l}\phi)^2+v^q\Omega^2|\snabla X_{m,n,l}\phi|^2\\
&\leq CE_{q;(m,n,l)}[\phi].
\end{split}
\end{equation*}
\end{theorem}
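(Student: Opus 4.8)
The plan is to run the vector field method with the single multiplier
\[
N=|u|^2\partial_u+v^q\partial_v,
\]
and to induct on the total order $m+n$, treating the angular operators $O^l$ as essentially free since the $O_i$ are Killing and commute with $\square_g$. First I would note that $N$ is exactly adapted to the fluxes in the statement: using the components of $\mathbf{T}$ from Section \ref{sec:vectorfieldmethod},
\[
\mathbf{T}(N,\partial_v)=|u|^2\Omega^2|\snabla\phi|^2+v^q(\partial_v\phi)^2,\qquad
\mathbf{T}(N,\partial_u)=|u|^2(\partial_u\phi)^2+v^q\Omega^2|\snabla\phi|^2,
\]
so that the fluxes of $J^N$ through $H_u$ and $\uline{H}_v$ are precisely the two bracketed groups on the left-hand side. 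Applying the divergence identity \eqref{divergenceidentity} on the rectangle $\{\tilde u\le u,\ v_0\le\tilde v\le v\}$, the past fluxes through $\mathcal H^+$ (where $\Omega=0$ and, by $\Omega^2\sim(v+|u|)^{-2}$, the weight $|u|^2\Omega^2$ remains comparable to $1$) and through $\uline{H}_{v_0}$ reproduce $E_{q;(m,n,l)}[\phi]$, while the remaining interior contribution is $\int K^N[\phi]+\int\mathcal E^N[\phi]$, with $\mathcal E^N[\phi]=N(\phi)\square_g\phi$.

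For the base case $m=n=0$ (arbitrary $l$), $O^l\phi$ solves \eqref{eq:waveqkerr}, so $\mathcal E^N$ vanishes and only $\int K^N[O^l\phi]$ remains. I would estimate $K^N$ from \eqref{eq:genexprKN}. Neither the coefficient of $|\snabla\phi|^2$ nor the cross term $r^{-1}(v^q-|u|^2)\partial_u\phi\,\partial_v\phi$ is sign-definite, so rather than seeking a coercive bulk current I would bound $|K^N[O^l\phi]|$ by the flux densities $\mathbf{T}(N,\partial_u)$, $\mathbf{T}(N,\partial_v)$ times a coefficient that decays in the interior. The decisive inputs are the metric expansions of Section \ref{sec:geom}: from \eqref{eq:rnOmegainverseexp} one has $\Omega^2\sim(v+|u|)^{-2}$, and $\partial_u(\Omega^2),\partial_v(\Omega^2)$ as well as $\partial_u r$, $\partial_v r$ are controlled with an extra power of $(v+|u|)^{-1}$ by Propositions \ref{prop:metricc1rn} and \ref{prop:part2dervr}. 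Splitting the cross term by Cauchy--Schwarz against the weights $|u|^2$ and $v^q$, the spacetime integral of $K^N$ is then absorbed into the null fluxes; the hypothesis $0<q\le2$ enters exactly here, balancing the growth $v^q$ of the outgoing weight against the fixed power $|u|^2$ and the decay of $\Omega^2$.

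For the inductive step I would commute \eqref{eq:waveqkerr} with $X_{m,n,l}=\partial_u^m\partial_v^nO^l$. Since $\partial_u$ and $\partial_v$ are not separately Killing---only $\partial_u+\partial_v=2T$ is---$X_{m,n,l}\phi$ solves the inhomogeneous equation $\square_g X_{m,n,l}\phi=F_{m,n,l}$, where $F_{m,n,l}=-[X_{m,n,l},\square_g]\phi$ is a sum of terms whose coefficients are derivatives of $\Omega^2$, $r$, $\partial_u r$ and $\partial_v r$---all quantitatively controlled by Propositions \ref{prop:metricc1rn} and \ref{prop:part2dervr}---multiplied by strictly lower-order quantities $X_{m',n',l+s}\phi$ and their single angular derivatives, with $m'\le m$, $n'\le n$, $m'+n'<m+n$ and $s\le\max\{m-m',n-n'\}$. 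This is precisely the index range summed over in $E_{q;(m,n,l)}[\phi]$. The energy identity for $X_{m,n,l}\phi$ now carries the extra term $\int\mathcal E^N[X_{m,n,l}\phi]=\int N(X_{m,n,l}\phi)\,F_{m,n,l}$, which I would control by Cauchy--Schwarz: the top-order factor $N(X_{m,n,l}\phi)$ is absorbed into the fluxes exactly as in the base case, and the lower-order factors are bounded by the inductive hypothesis. Closing the induction on $m+n$ yields the stated estimate.

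The main obstacle is the absorption step common to both the base case and the inductive step: showing that the spacetime integrals of $K^N$ and of $N(X_{m,n,l}\phi)\,F_{m,n,l}$, carrying the weights $|u|^2$ and $v^q$, can genuinely be reabsorbed into the energy fluxes along $H_u$ and $\uline{H}_v$. This requires the precise, uniform decay of $\Omega^2$ and of its derivatives towards $i^+$ established in Section \ref{sec:geom}---a feature special to the extremal case---and it is exactly this balance that forces the restriction $0<q\le2$ and the matched power $|u|^2$ in the choice of $N$.
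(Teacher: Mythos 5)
Your overall architecture coincides with the paper's (Propositions \ref{prop:mainenergyestimate} and \ref{prop:highordeestimate}): the multiplier $N_{2,q}=|u|^2\partial_u+v^q\partial_v$, the divergence identity, Cauchy--Schwarz on the cross term $r^{-1}(v^q-|u|^2)\partial_u\phi\,\partial_v\phi$, and commutation with $X_{m,n,l}$ plus induction on the order, with the commutator errors carrying coefficients that decay like powers of $(v+|u|)^{-1}$. However, your treatment of the bulk term has a genuine gap. You propose to bound $|K^N|$ in absolute value by the flux densities times integrable coefficients, explicitly declining to seek any coercivity. This fails for the angular part of the deformation tensor term. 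From (\ref{eq:genexprKN}) the coefficient of $|\snabla\phi|^2$ contains $-\tfrac12\bigl(\partial_uN^u+\Omega^{-2}\partial_u(\Omega^2)N^u\bigr)=|u|-\tfrac{M(M-r)}{r^3}|u|^2+\ldots$, and since $\tfrac{2M}{r^3}(M-r)=\tfrac{2}{v+|u|}+\ldots$, the two pieces are of the \emph{same} order and nearly cancel, leaving a leading contribution of size $\tfrac{|u|v}{v+|u|}\,|\snabla\phi|^2\sim\min(|u|,v)\,|\snabla\phi|^2$. This is too large to be dominated by $h(u)\,|u|^2\Omega^2|\snabla\phi|^2+k(v)\,v^q\Omega^2|\snabla\phi|^2$ with $h,k$ integrable (one would need $h(u)\gtrsim|u|^{-1}$ or, for $q\le 2$, $k(v)\gtrsim v^{1-q}$). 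The point you are missing --- and the reason the paper fixes $p=2$ --- is that after the exact algebraic regrouping in (\ref{eq:KNangularkerrn2}) this leading part of $K^{N_{2,q}}_{\textnormal{angular}}$ has a \emph{favourable sign} for $0\le q\le 2$ and can simply be discarded, leaving only logarithmically corrected remainders with genuinely integrable coefficients $|u|^{-2+\epsilon}v^q+v^{-2+\epsilon}|u|^2$. The derivative bounds on $\Omega^2$ that you cite do not rescue the absolute-value strategy, because $\Omega^{-2}\partial_u(\Omega^2)N^u$ is not a lower-order correction; it is the partner in the cancellation.

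A second, smaller issue is your closing mechanism. Even once all bulk contributions are reduced to the form $\int h(u')\,(\text{flux through }H_{u'})\,du'+\int k(v')\,(\text{flux through }\uline{H}_{v'})\,dv'$ with $h,k\in L^1$, there is no smallness available, so these terms cannot literally be ``absorbed'' into the fluxes on the left-hand side. The paper closes the estimate with the two-variable Gr\"onwall inequality of Lemma \ref{lm:gronwall}, proved by a bootstrap/continuity argument; some such device (or an iteration over a partition of the domain into regions where the $L^1$ norms of $h,k$ are small) is needed to finish both your base case and your inductive step.
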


Theorem \ref{thm:eestimatesint} is proved in Proposition \ref{prop:mainenergyestimate} and Proposition \ref{prop:highordeestimate}. Using the estimate (\ref{est:energytoH1}) it follows that Theorem \ref{thm:eestimatesint} with $q=2$ implies Theorem \ref{thm:H1boundv1}.

\subsection{Pointwise estimates and continuous extendibility beyond $\mathcal{CH}^+$}
We can also obtain $L^{\infty}$ estimates in $\mathcal{M}\cap D_{u_0,v_0}$ and moreover extend $\phi$ as a continuous function beyond $\mathcal{CH}^+$.
\begin{theorem}[$L^{\infty}$ estimate $\partial_u^m\partial_v^nO^l\phi$]
\label{thm:pointwiseestho}
Let $m,n\geq 0$ and take $0<q\leq 2$. Let $\phi$ be a solution to (\ref{eq:waveqkerr}) corresponding to initial data from Proposition \ref{prop:wellposedness} satisfying $\sum_{|k|\leq 2}E_{q;(m,n,l)}[O^k\phi]<\infty$, such that moreover 
\begin{equation*}
\sum_{|k|\leq 2} \sup_{v_0\leq {v'} \leq \infty}\int_{\s^2} |\partial_u^m\partial_v^n O^{l+k} \phi|^2\,d\mu_{\s^2}\Big|_{U=0}<\infty.
\end{equation*}

Then there exists a constant $C(M,u_0,v_0,m,n,q)>0$ such that
\begin{equation*}
\begin{split}
||X_{m,n,l} \phi||_{L^{\infty}(H_u)}^2 &\leq C\sum_{|k|\leq 2} \sup_{v_0\leq {v'} \leq \infty}\int_{\s^2} |\partial_u^m\partial_v^n O^{l+k} \phi|^2\,d\mu_{\s^2}\Big|_{U=0}+ C|u|^{-1}\sum_{|k|\leq 2}E_{q;(m,n,l)}[O^k\phi].
\end{split}
\end{equation*}
\end{theorem}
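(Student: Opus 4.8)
The plan is to deduce the pointwise bound from the weighted $L^2$ energy fluxes of Theorem \ref{thm:eestimatesint} in two stages: first a Sobolev estimate on the round 2-spheres to trade $L^{\infty}$ for $L^2$ of up to two angular derivatives, and then a fundamental-theorem-of-calculus argument along the ingoing null hypersurfaces $\uline{H}_v$ that propagates the $L^2$ norms on spheres from $\mathcal{H}^+$ into the interior.

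First I would fix a point and apply the standard Sobolev embedding $H^2(\s^2)\hookrightarrow L^{\infty}(\s^2)$ on the unit sphere, which bounds $\sup_{\s^2}|X_{m,n,l}\phi|^2(u,v,\cdot)$ by $C\sum_{|k|\leq 2}\int_{\s^2}|O^k X_{m,n,l}\phi|^2\,d\mu_{\s^2}(u,v)$ for each $(u,v)$. Since each $O_i$ is a Killing field commuting with $\partial_u$ and $\partial_v$, I have $O^k X_{m,n,l}\phi=X_{m,n,l+k}\phi$, and moreover $O^k\phi$ again solves (\ref{eq:waveqkerr}); this is precisely what allows the angular derivatives to be absorbed into the commuted energies $E_{q;(m,n,l)}[O^k\phi]$ rather than requiring a separate angular flux.

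Writing $f=X_{m,n,l+k}\phi$ and $G(u)=\int_{\s^2}f^2(u,v)\,d\mu_{\s^2}$ at fixed $v$, the core step is the transport estimate $|\partial_u G^{1/2}|\leq \left(\int_{\s^2}(\partial_u f)^2\,d\mu_{\s^2}\right)^{1/2}$, obtained from Cauchy--Schwarz on the sphere. Integrating in $u$ from $\mathcal{H}^+$ (where $u\to-\infty$, $U=0$) and applying Cauchy--Schwarz in $u$ via the split $1=|u'|^{-1}\cdot|u'|$ yields $G^{1/2}(u)\leq G^{1/2}(-\infty)+\left(\int_{-\infty}^u |u'|^{-2}\,du'\right)^{1/2}\left(\int_{-\infty}^{u_0}|u'|^2\int_{\s^2}(\partial_{u'}f)^2\,d\mu_{\s^2}\,du'\right)^{1/2}$. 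The first factor equals exactly $|u|^{-1/2}$, and, using the Jacobian $\frac{du}{dU}\sim u^2$ from (\ref{est:utoU}) together with the boundedness of $r$ on $D_{u_0,v_0}$, the second factor is controlled by the $|u|^2$-weighted flux $\int_{\uline{H}_v}|u|^2(\partial_u X_{m,n,l+k}\phi)^2$, which Theorem \ref{thm:eestimatesint} applied to $O^k\phi$ bounds by $CE_{q;(m,n,l)}[O^k\phi]$. The boundary value $G^{1/2}(-\infty)$ is the horizon data term, and the finiteness of this same weighted integral guarantees the limit as $u\to-\infty$ exists and coincides with the trace on $\mathcal{H}^+$.

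Squaring, summing over $|k|\leq 2$, combining with the Sobolev step, and taking the supremum over $v\in[v_0,\infty)$ (the estimate being uniform in $v$) produces precisely the claimed inequality. The argument is essentially routine once the weights are tracked; the one point demanding care is the extraction of the $|u|^{-1}$ improvement, which hinges on matching the integrable measure $|u|^{-2}\,du$ near $\mathcal{H}^+$ against the $|u|^2$-weight carried by the energy, correctly accounting for the degeneration $\partial_u\sim u^{-2}\partial_U$ as $U\downarrow 0$ and thereby justifying the passage of the trace onto the degenerate event horizon.
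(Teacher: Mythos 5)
Your argument is correct and takes essentially the same route as the paper's proof of Proposition \ref{prop:boundpsirn}: both rest on the fundamental theorem of calculus in $u$ starting from $\mathcal{H}^+$, Cauchy--Schwarz with the $|u'|^{-1}\cdot|u'|$ split to extract the $|u|^{-1}$ gain against the $|u|^2$-weighted flux, a Sobolev inequality on $\s^2$ after commuting with the Killing fields $O_i$, and the energy estimates of Propositions \ref{prop:mainenergyestimate} and \ref{prop:highordeestimate}. The only difference is the order of operations (you apply the Sobolev embedding before integrating in $u$, the paper integrates pointwise first and applies Sobolev afterwards), which is immaterial.
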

Theorem \ref{thm:pointwiseestho} is proved in Proposition \ref{prop:boundpsirn}.

\begin{theorem}[$C^0$-extendibility of $\partial_u^m\partial_v^nO^l\phi$]
\label{thm:C0extension}
Let $m,n\geq 0$. Let $\phi$ be a solution to (\ref{eq:waveqkerr}) corresponding to initial data from Proposition \ref{prop:wellposedness} satisfying $\sum_{|k|\leq 2}E_{q;(m,n,l)}[O^k\phi]<\infty$ for $q>1$. Then $X_{m,n,l}\phi$ can be extended as a $C^0$ function beyond $\mathcal{CH}^+\cap D_{u_0,v_0}$.
\end{theorem}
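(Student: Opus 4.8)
The plan is to show $C^0$-extendibility of $X_{m,n,l}\phi=\partial_u^m\partial_v^nO^l\phi$ by proving that its limit exists along ingoing null directions as we approach $\mathcal{CH}^+=\{V=M\}$, and that this limiting boundary function is itself continuous on the segment of the Cauchy horizon. The natural way to exhibit the limit is to control $\partial_V(X_{m,n,l}\phi)$ in $L^1$ along outgoing null directions. Concretely, for fixed $(u,\theta,\varphi)$, I would write the value of $X_{m,n,l}\phi$ at $(u,V)$ as its value at $(u,V(v_0))$ plus $\int_{V(v_0)}^{V}\partial_{V'}(X_{m,n,l}\phi)\,dV'$, and show the integral converges as $V\uparrow M$ by a Cauchy criterion. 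The key ingredient is that by switching from $V$ to the coordinate $v$ (using $\frac{dv}{dV}\sim v^2$ from \eqref{est:vtoV}) one has $\int_{H_u}|\partial_V(X_{m,n,l}\phi)|\,dV \sim \int_{v_0}^\infty |\partial_v (X_{m,n,l}\phi)|\,dv$, and this is finite provided $\int_{v_0}^\infty v^q(\partial_v X_{m,n,l}\phi)^2\,dv<\infty$ with $q>1$, by Cauchy--Schwarz against the weight $v^{-q}$, which is integrable precisely when $q>1$. This explains the hypothesis $q>1$ in the statement.

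**The main steps**, in order, are as follows. First, I would invoke Theorem~\ref{thm:eestimatesint} (with the given $q>1$) to obtain, for each fixed outgoing hypersurface $H_u$, the bound $\int_{H_u} v^q(\partial_v X_{m,n,l}\phi)^2 \leq CE_{q;(m,n,l)}[\phi]$ uniformly in $u$. Second, I would fix a spherical point and a value of $u$, and estimate $\int_{V_1}^{V_2}|\partial_V(X_{m,n,l}\phi)|\,dV$ for $V(v_0)\le V_1<V_2<M$ by Cauchy--Schwarz, splitting into the weighted energy and the tail $\int_{v_1}^{v_2} v^{-q}\,dv$; since $q>1$ this tail tends to $0$ as $v_1\to\infty$, giving a uniform Cauchy criterion and hence a well-defined pointwise limit $X_{m,n,l}\phi(u,M,\theta,\varphi):=\lim_{V\uparrow M}X_{m,n,l}\phi(u,V,\theta,\varphi)$ for each $(u,\theta,\varphi)$. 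Third, I would upgrade pointwise convergence to \emph{uniform} convergence on $\mathcal{CH}^+\cap D_{u_0,v_0}$ by noting the Cauchy tail bound is uniform in $(u,\theta,\varphi)$, since the energy $E_{q;(m,n,l)}$ controlling the flux does not depend on $u$. Fourth, I would establish continuity of the limit function \emph{along} $\mathcal{CH}^+$: because the convergence is uniform and each $X_{m,n,l}\phi(u,V,\cdot)$ is continuous in $(u,\theta,\varphi)$ for $V<M$ (by the smoothness from Proposition~\ref{prop:wellposedness} together with the $L^\infty$ bounds of Theorem~\ref{thm:pointwiseestho}), a uniform limit of continuous functions is continuous.

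**The subtle point** is that $C^0$-extendibility \emph{beyond} $\mathcal{CH}^+$ requires the extended function to be jointly continuous in a full neighbourhood across the horizon, not merely to possess a continuous limiting trace from one side. To close this, I would work in the regular coordinate system $(u,V,\theta,\varphi)$ in which $\mathcal{M}'$ is smooth across $\{V=M\}$, and argue that the extension defined by gluing the limit trace to $\phi$ on $\mathcal{M}_{\mathrm{int}}$ is continuous there by a two-variable argument: for $(u,V,\theta,\varphi)\to(u_*,M,\theta_*,\varphi_*)$, one controls $|X_{m,n,l}\phi(u,V,\theta,\varphi)-X_{m,n,l}\phi(u_*,M,\theta_*,\varphi_*)|$ by the uniform-in-$V$ modulus of convergence (handling the $V$-direction) plus the tangential continuity on each fixed $\{V=\mathrm{const}\}$ (handling the $(u,\theta,\varphi)$ directions), the two estimates being combined via the triangle inequality. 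The fact that the Cauchy tail is controlled by a single $u$-independent energy is exactly what makes the $V$-modulus of convergence uniform, which is the crux.

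**The hard part** I expect is ensuring the transition from the $V$-coordinate flux integral to the finite weighted $v$-energy is quantitative and uniform, i.e.\ tracking that the constants in $\frac{dv}{dV}\sim v^2$ and in Theorem~\ref{thm:eestimatesint} can be taken uniform over the relevant range of $u$ so that the Cauchy tail estimate is genuinely independent of the base point; this uniformity is what upgrades pointwise to uniform convergence and thereby delivers joint continuity rather than mere directional limits. The angular derivatives $O^l$ introduce no new difficulty beyond including the commuted energies $E_{q;(m,n,l)}[O^k\phi]$ for $|k|\le 2$, which are already assumed finite and which, via Sobolev embedding on $\s^2$, promote the $L^2(\s^2)$ control to pointwise control uniformly in $(u,V)$.
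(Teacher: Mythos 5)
Your proposal is correct and follows essentially the same route as the paper's proof (Proposition \ref{prop:C0extension}): in both cases the outgoing-direction difference is controlled by Cauchy--Schwarz against the weight $v^{-q}$ (equivalently $(V-M)^{q-2}$), whose tail vanishes precisely for $q>1$, the flux being bounded uniformly in $u$ by Theorem \ref{thm:eestimatesint}, with the commuted energies $\sum_{|k|\le 2}E_{q;(m,n,l)}[O^k\phi]$ and Sobolev embedding on $\s^2$ supplying pointwise control. The paper packages the argument as a Cauchy-sequence criterion with the difference split into the four coordinate directions, while you package it as existence of a trace, uniformity of the $V$-modulus, and a triangle-inequality gluing; these are the same estimates.
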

Theorem \ref{thm:C0extension} is proved in Proposition \ref{prop:C0extension}. Moreover Theorem \ref{thm:pointwiseestho} and \ref{thm:C0extension}, together with the estimates along the event horizon in Theorem 4 of \cite{Aretakis2011}, imply Theorem \ref{thm:linftyboundv1}.

\begin{theorem}[$L^{\infty}$ estimates for $\partial_V\phi$]
\label{thm:estpartialVphi}
Let $\phi$ be a solution to (\ref{eq:waveqkerr}) corresponding to initial data from Proposition \ref{prop:wellposedness} and denote
\begin{equation*}
D=||\partial_U\phi||^2_{L^{\infty}\left(\uline{H}_{v_0}\right)}+||\snabla\phi ||^2_{L^{\infty}\left(\uline{H}_{v_0}\right)}.
\end{equation*}
\hspace{1pt}
\begin{itemize}\setlength\itemsep{1em}
\item[(i)]
Let $0<q\leq 2$ and take $\epsilon>0$ to be arbitrarily small. Assume that $\sum_{|k|\leq 2}E_{q}[O^k\phi]<\infty$, where $E_{q}[\phi]:=E_{q;(0,0,0)}[\phi]$, and moreover
\begin{equation*}
\begin{split}
\tilde{E}_{1;\epsilon}[\phi]&:=\int_{\mathcal{H}^+\cap\{v\geq v_0\}} v^{1+\epsilon}\left[({\partial_v}\phi)^2+|\snabla\phi|^2+|\snabla^2\phi|^2\right]+ D<\infty.
\end{split}
\end{equation*}
Then there exists a constant $C=C(M,v_0,u_0,\epsilon,q)>0$ such that
\begin{equation*}
\begin{split}
||\partial_V\phi||^2_{L^{\infty}(S^2_{u,v})}&\leq C\sum_{|k|\leq 2}\int_{S^2_{-\infty,v}}v^{2}({\partial_v}O^k\phi)^2\,d\mu_{\slashed{g}_{\s^2}}+C|u|^{-q}\sum_{|k|\leq 2}E_{q}[O^k\phi]\\
&\quad+C\left(\frac{v}{v+|u|}\right)^4\log\left(\frac{v}{v+|u|}\right)\sum_{1\leq |k|\leq 4}\tilde{E}_{1;\epsilon}[O^k\phi]\\
&\quad+C(v+|u|)^{-2}\sum_{1\leq |k|\leq 4} \int_{S^2_{-\infty,v}}(O^k\phi)^2\,d\mu_{\s^2}.
\end{split}
\end{equation*}
In particular, $\phi$ can be extended as a $C^{0,\alpha}$ function beyond $\mathcal{CH}^+$, for any $\alpha<1$.
\item[(ii)]
Let $\phi=\phi_0$, the spherically symmetric angular mode. Then we can estimate
\begin{equation*}
||\partial_V\phi_0||^2_{L^{\infty}(S^2_{u,v})}\leq Cv^{2}({\partial_v}\phi_0)^2(-\infty,v)+C|u|^{-q}E_{q}[\phi_0].
\end{equation*}
\end{itemize}
\end{theorem}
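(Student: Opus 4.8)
The plan is to turn the wave equation (\ref{eq:waveqkerr}) into a transport equation for $\partial_V\phi$ in the ingoing direction and integrate it from the event horizon. Writing (\ref{eq:waveqkerr}) in double-null coordinates $(u,v,\theta,\varphi)$ and dividing by $\Omega^2(u_0,v)=\frac{dV}{dv}$ one obtains
\[
\partial_u(\partial_V\phi)=\partial_V r\left[\frac{1}{r}(\partial_v\phi-\partial_u\phi)+\sD\phi\right],
\]
where $\partial_V r=\Omega^2(u,v)/\Omega^2(u_0,v)$ is the bounded factor whose asymptotics are supplied by Proposition \ref{prop:part2dervr}. Integrating in $u$ from $\mathcal{H}^+$ (where $U=0$, $u=-\infty$) to $(u,v)$ expresses $\partial_V\phi(u,v)$ as the horizon boundary value $\partial_V\phi|_{\mathcal{H}^+}(v)=\frac{dv}{dV}\partial_v\phi|_{\mathcal{H}^+}(v)$ plus the $u'$-integral of the source. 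By (\ref{est:vtoV}) the boundary value is controlled by the rescaled horizon flux of $\partial_v\phi$, which after commuting with the angular operators $O^k$ and invoking the Sobolev embedding $H^2(\s^2)\hookrightarrow L^\infty(\s^2)$ produces (up to the weights recorded in the statement) the first term on the right-hand side. The self-interaction term $\partial_V r\,r^{-1}\partial_v\phi=\Omega^2(u_0,v)\,\partial_V r\,r^{-1}\partial_V\phi$ carries an extra $\Omega^2(u_0,v)$, so by Proposition \ref{prop:part2dervr} its coefficient integrates to $O((v+|u|)^{-1})$ in $u'$; a Grönwall argument absorbs it, contributing only a bounded multiplicative constant.

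For the remaining two source terms I would estimate the $u'$-integral by Cauchy--Schwarz, placing the $L^2$-in-$u'$ weights so as to reproduce the null fluxes controlled by Theorem \ref{thm:eestimatesint}. For the term $-\partial_V r\,r^{-1}\partial_u\phi$ I pair $|u'|\partial_u\phi$ against the weight $\partial_V r\,(r|u'|)^{-1}$; since $\partial_V r\sim (v+|u_0|)^2(v+|u'|)^{-2}$, the weight integral $\int_{-\infty}^u(\partial_V r)^2r^{-2}|u'|^{-2}\,du'$ is $O(|u|^{-q})$ uniformly in $v$, while the energy factor $\int_{\uline{H}_v}|u'|^2(\partial_u\phi)^2$ is bounded by $E_q[\phi]$. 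This yields the term $C|u|^{-q}E_q[O^k\phi]$. In the spherically symmetric case $\sD\phi_0=0$, no angular commutation is needed, and the boundary term together with this energy term are the only contributions; since both are bounded, $\partial_V\phi_0$ is uniformly bounded, which is exactly part (ii).

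The delicate term is the angular source $\partial_V r\,\sD\phi$, which is where the logarithm originates. I would commute the transport equation with $O^k$, $|k|\le 2$, apply the Sobolev embedding on $\s^2$, and then Cauchy--Schwarz in $u'$ against the $v$-weighted angular flux $\int_{\uline{H}_v}v^q\Omega^2|\snabla\,\cdot\,|^2$ controlled by Theorem \ref{thm:eestimatesint}, respectively against the stronger flux encoded by $\tilde{E}_{1;\epsilon}[\phi]$. Because $\partial_V r$ concentrates the $u'$-integral near $u'=u$ while the source needs decay of order $(v+|u|)^{-2}$ but the flux supplies only $v^{-q}$ with $q\le 2$, the weight is borderline and is controlled only up to a logarithm; tracking the ratios $\Omega^2(u,v)/\Omega^2(u_0,v)$ through Proposition \ref{prop:part2dervr} then produces the factor $\left(\frac{v}{v+|u|}\right)^4\log\left(\frac{v}{v+|u|}\right)$ multiplying $\sum_{1\le|k|\le4}\tilde{E}_{1;\epsilon}[O^k\phi]$, together with an endpoint ($u'=-\infty$) contribution $(v+|u|)^{-2}\sum_{1\le|k|\le4}\int_{S^2_{-\infty,v}}(O^k\phi)^2$. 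Extracting this logarithm cleanly, i.e.\ showing the borderline weight yields no genuine growth, is the main obstacle; it is precisely here that one must exploit the improved $v^{1+\epsilon}$ decay recorded in $\tilde{E}_{1;\epsilon}$ rather than the bare $q\le 2$ energy.

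Finally, the extendibility statements follow by integrating $\partial_V\phi$ in $V$ at fixed $u$. In the spherically symmetric case the uniform bound on $\partial_V\phi_0$ shows $\phi_0$ extends as a $C^1$ function (consistent with Theorem \ref{thm:c1boundphi0}). In general the bound on $\|\partial_V\phi\|_{L^\infty(S^2_{u,v})}$ degenerates at most logarithmically as $V\to M$; since $\int^{M}|\log(M-V)|\,dV<\infty$ and $|M-V|\,|\log(M-V)|\lesssim|M-V|^{\alpha}$ for every $\alpha<1$, integrating $\partial_V\phi$ in $V$ yields a Hölder modulus of continuity, so that $\phi$ extends as a $C^{0,\alpha}$ function beyond $\mathcal{CH}^+$ for each $\alpha<1$.
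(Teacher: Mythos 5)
Your overall skeleton --- recasting the wave equation as a transport equation in $u$ for the outgoing derivative, integrating from the horizon value, and treating the $\partial_u\phi$ source by Cauchy--Schwarz against the $|u|^2(\partial_u\phi)^2$ flux --- matches the paper (which works with $\partial_u(r\partial_v\phi)=-\partial_vr\,\partial_u\phi+r\partial_vr\,\slashed{\Delta}\phi$ and converts $\partial_v$ to $\partial_V$ at the end), and your treatment of part (ii) is correct since the angular source vanishes there. The genuine gap is in the angular source term, which you rightly single out as the delicate one but do not actually control. A direct Cauchy--Schwarz of $\int_{-\infty}^{u} r\partial_vr\,|\slashed{\Delta}O^k\phi|\,du'$ against the available flux $\int_{\uline{H}_{v}}v^q\Omega^2|\snabla O^s\phi|^2$ gives at best $\bigl(\int_{-\infty}^{u}r^2\Omega^2v^{-q}\,du'\bigr)^{1/2}E_q^{1/2}\sim v^{-q/2}(v+|u|)^{-1/2}E_q^{1/2}$; after multiplying by $dv/dV\sim v^2$ this is $v^{2-q/2}(v+|u|)^{-1/2}$, which for every admissible $q\le 2$ is unbounded as $v\to\infty$ at fixed $u$. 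The loss is a power of $v$, not a logarithm, so the claim that the ``borderline weight'' costs only a log is not substantiated. Moreover, $\tilde{E}_{1;\epsilon}[\phi]$ is an initial-data quantity on $\mathcal{H}^+\cup\uline{H}_{v_0}$, not a flux on the interior cones $\uline{H}_{v}$, so there is nothing to Cauchy--Schwarz against until it has been converted into interior information.

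What is missing is the paper's key device: subtract the horizon trace, $\psi:=\phi-\phi_{\mathcal{H}^+}$ with $\phi_{\mathcal{H}^+}(u,v,\cdot):=\phi(-\infty,v,\cdot)$. Then $\psi$ vanishes on $\mathcal{H}^+$ and satisfies an inhomogeneous wave equation whose source involves only horizon data weighted by $\Omega^2\sim(v+|u|)^{-2}$; a weighted energy estimate (Proposition \ref{improvedpointwisedecaykn}) shows that its fluxes decay like $|u|^{-s}\tilde{E}_{s;\epsilon}[\phi]$, whence $\int_{S^2_{u,v}}|\snabla^2\psi|^2\lesssim|u|^{-(s+1)}$ (Proposition \ref{cor:improvedpointbound}). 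Splitting $\slashed{\Delta}\phi=\slashed{\Delta}\psi+\slashed{\Delta}\phi_{\mathcal{H}^+}$ in the transport equation, the $\psi$-part produces the logarithm through the explicit integral $\int_{-\infty}^{u}|u'|^{-1}(v+|u'|)^{-2}\,du'\lesssim(v+|u|)^{-2}\log\bigl(\tfrac{v+|u|}{|u|}\bigr)$, while the $u$-independent part $\slashed{\Delta}\phi_{\mathcal{H}^+}$ integrates exactly and yields the $(v+|u|)^{-2}\sum_{1\le|k|\le4}\int_{S^2_{-\infty,v}}(O^k\phi)^2$ term. Your ``endpoint contribution'' hints at this splitting, but without the decay-in-$|u|$ estimate for $\snabla^2\psi$ the argument does not close.
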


Theorem \ref{thm:estpartialVphi} is proved in Proposition \ref{prop:boundderVpsi} and implies Theorem \ref{thm:c11boundphi}.

\subsection{$C^1$- and $C^2$-extendibility of spherically symmetric solutions}
If we restrict to spherically symmetric initial data in Proposition \ref{prop:wellposedness}, we can further show extendibility in $C^1$ and $C^2$, under suitable decay assumptions along $\mathcal{CH}^+$.
\begin{theorem}[$L^{\infty}$ estimate for $\partial_V^2\phi_0$]
\label{thm:bound2nddersphsymm}
Let $\phi_0$ be a solution to (\ref{eq:waveqkerr}) corresponding to spherically symmetric initial data from Proposition \ref{prop:wellposedness}, which also satisfies the following asymptotics along $\mathcal{H}^+$:
\begin{equation*}
\phi_0|_{\mathcal{H}^+}(v)=-M^2H_0v^{-1}+M^3H_0v^{-2}\log v +\mathcal{O}_2(v^{-2}).
\end{equation*} 

Then there exists a constant $C_0>0$, depending on the initial data for $\phi_0$, such that
\begin{equation*}
|\partial_V^2\phi_0|(u,v)\leq C_0.
\end{equation*}
for all $-\infty <u \leq u_0$ and $v_0\leq v < \infty$.
\end{theorem}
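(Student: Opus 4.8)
The plan is to derive a transport equation for $\partial_V^2\phi_0$ in the ingoing null direction, integrate it from $\mathcal{H}^+$, and show that the logarithmically growing part of the resulting source integral cancels exactly against the boundary term on $\mathcal{H}^+$. In the $(u,V)$ double-null coordinates that are regular at $\mathcal{CH}^+$, the spherically symmetric wave equation reads $\partial_u(r\partial_V\phi_0)=-(\partial_V r)(\partial_u\phi_0)$. Commuting twice with $\partial_V$ and repeatedly eliminating the mixed derivative $\partial_u\partial_V\phi_0$ by this same relation produces a transport equation of the schematic form
\begin{equation*}
\partial_u(r\partial_V^2\phi_0)=\left[\frac{2\partial_u r\,\partial_V r}{r}-\partial_u\partial_V r\right]\partial_V\phi_0+\left[\frac{2(\partial_V r)^2}{r}-\partial_V^2 r\right]\partial_u\phi_0.
\end{equation*}
Integrating in $u$ at fixed $V$ from $\mathcal{H}^+$ (where $u=-\infty$ and $r=M$) to an interior point gives
\begin{equation*}
(r\partial_V^2\phi_0)(u,v)=M\,\partial_V^2\phi_0|_{\mathcal{H}^+}(v)+\int_{-\infty}^{u}\left\{\cdots\right\}(u',v)\,du',
\end{equation*}
so it suffices to bound the boundary term together with the source integral, after which boundedness of $\partial_V^2\phi_0$ follows since $r$ is bounded below by a positive constant on $D_{u_0,v_0}$.

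Next I would assemble the asymptotics needed to evaluate both pieces. The metric factors $\partial_V r$, $\partial_V^2 r$ and $\partial_u\partial_V r$ are controlled by Proposition \ref{prop:part2dervr} and Proposition \ref{prop:metricc1rn}, together with $\partial_u r=-\Omega^2$ and the expansion (\ref{eq:rnOmegainverseexp}) of $\Omega^{-2}$; in particular $\partial_V r\to 1$, $r\to M$ and $\partial_V^2 r$ is bounded. The factor $\partial_V\phi_0$ is uniformly bounded by Theorem \ref{thm:estpartialVphi}(ii). The remaining ingredient is the interior behaviour of $\partial_u\phi_0$, which I would obtain by propagating the characteristic data on $\uline{H}_{v_0}$: integrating the companion transport equation $\partial_v(r\partial_u\phi_0)=\Omega^2\partial_v\phi_0$ in $v$ and using the decay of $\partial_v\phi_0$ supplied by the energy estimates of Theorem \ref{thm:eestimatesint} and by the assumed horizon asymptotics. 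Finally the boundary term must be computed precisely: writing $\partial_V=\Omega^{-2}(u_0,v)\partial_v$ and differentiating $\phi_0|_{\mathcal{H}^+}(v)=-M^2H_0v^{-1}+M^3H_0v^{-2}\log v+\mathcal{O}_2(v^{-2})$ twice, one finds that $\partial_V^2\phi_0|_{\mathcal{H}^+}(v)$ grows like $\log v$ as $v\to\infty$. Here the $v^{-2}\log v$ term in the data and the logarithmic corrections in the expansion of $\Omega^{-2}(u_0,v)$ both feed into the leading logarithm and must be tracked jointly.

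The crux is to extract the leading logarithmic growth of the source integral and show it cancels the boundary term. Using the propagated value of $\partial_u\phi_0$ and the coefficient $\frac{2(\partial_V r)^2}{r}-\partial_V^2 r$, one shows that $\int_{-\infty}^{u}\{\cdots\}\,du'$ likewise grows like $\log v$, with a coefficient proportional to $H_0$; equivalently, the difference $\partial_V^2\phi_0(u,v)-\partial_V^2\phi_0|_{\mathcal{H}^+}(v)$ blows up logarithmically as one approaches $\mathcal{CH}^+$. The assertion is that, given the precise relation between the coefficients of $v^{-1}$ and $v^{-2}\log v$ in the assumed asymptotics (both proportional to $H_0$ in the stated ratio), these two logarithms carry opposite signs and cancel exactly, leaving $(r\partial_V^2\phi_0)(u,v)$, and hence $\partial_V^2\phi_0(u,v)$, uniformly bounded by a constant $C_0$ depending only on the data.

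The main obstacle is precisely this cancellation. It forces one to compute both the boundary value $\partial_V^2\phi_0|_{\mathcal{H}^+}(v)$ and the leading part of the source integral to sufficient accuracy, keeping track of every logarithmic contribution — from the $v^{-2}\log v$ term in the data, from the $\log$ terms in $\Omega^{-2}$, and from the $u'$-integration of the source — and verifying that the $H_0$-dependent coefficients match. When $H_0=0$ the logarithms are absent entirely and boundedness follows directly from the decay of the sources with no fine cancellation; the delicate case is $H_0\neq 0$, where the exact compatibility of the next-to-leading coefficient $M^3H_0$ with the leading coefficient $-M^2H_0$ is what rescues boundedness of the second-order transversal derivative.
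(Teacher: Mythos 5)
Your overall strategy coincides with the paper's: you derive the same transport equation $\partial_u(r\partial_V^2\phi_0)=\bigl[2r^{-1}\partial_ur\,\partial_Vr-\partial_u\partial_Vr\bigr]\partial_V\phi_0+\bigl[2r^{-1}(\partial_Vr)^2-\partial_V^2r\bigr]\partial_u\phi_0$ (this is exactly (\ref{eq:partialudv2phi})), integrate from $\mathcal{H}^+$, compute $M\partial_V^2\phi_0|_{\mathcal{H}^+}(v)=-2H_0\log v+\mathcal{O}(1)$ from the assumed asymptotics, and observe that the source integral produces an equal and opposite logarithm. This is precisely Proposition \ref{prop:blowup2nddersphsymm} combined with (\ref{eq:asymppartialV2phi}) and Proposition \ref{cor:c2boundphi0}.

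However, there is a genuine gap at the technical heart of the argument: your derivation of the interior asymptotics of $\partial_u\phi_0$. For the cancellation to close, you must show $\partial_u\phi_0(u,v)=-M^2H_0|u|^{-2}+\textnormal{(error)}$ where the error contributes only $\mathcal{O}(1)$ to $\int_{-\infty}^u(\partial_V^2r)\,\partial_u\phi_0\,du'$; since $\partial_V^2r\sim 2M^{-2}v^2|u'|(v+|u'|)^{-2}$, any error of the bare size $\mathcal{O}(|u|^{-2})$ with an uncontrolled, non-decaying coefficient produces an extra $\log v$ with an unknown coefficient, destroying the cancellation. Your proposed route --- integrating $\partial_v(r\partial_u\phi_0)=\Omega^2\partial_v\phi_0$ in $v$ and invoking the decay of $\partial_v\phi_0$ from the energy estimates --- only yields $|\partial_v\phi_0|\lesssim v^{-2}$ (via Theorem \ref{thm:estpartialVphi}(ii)), so the accumulated error $\int_{v_0}^v\Omega^2\partial_v\phi_0\,dv'$ is of size $|u|^{-2}$ with a $u$-dependent coefficient you have not computed; this is exactly the dangerous borderline case. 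The paper instead rewrites the companion equation as $\partial_v\partial_u(r\phi_0)=(\partial_u\partial_vr)\,\phi_0$, whose source involves $\phi_0$ itself against the strongly decaying kernel $\partial_u\partial_vr=\mathcal{O}((v+|u|)^{-3})$, and runs a bootstrap on $r\phi_0(u,v)-r\phi_0(-\infty,v)$ (Proposition \ref{prop:asymptoticspartialuphi0}) to get the error in the form $\mathcal{O}(|u|^{-2-\min\{s,1\}})+v^{-s}\mathcal{O}(|u|^{-2})$, which does integrate to $\mathcal{O}(1)$ against $\partial_V^2r$. You would either need to reproduce that bootstrap or establish comparably refined asymptotics for $\partial_v\phi_0$ in the interior; neither is supplied by Theorem \ref{thm:eestimatesint}. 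A second, minor omission: the refined asymptotics only hold for $|u|\geq|u_1|$ with $|u_1|$ large, so one must separately integrate the transport equation over the compact range $[u_1,u_0]$ (as in Proposition \ref{cor:c2boundphi0}) to cover all $u\leq u_0$.
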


Theorem \ref{thm:bound2nddersphsymm} follows from Proposition \ref{cor:c2boundphi0}.

\begin{theorem}[$C^1$- and $C^2$-extendibility of $\phi_0$]
\label{thm:c1c2ext}
Let $\phi_0$ be a solution to (\ref{eq:waveqkerr}) corresponding to spherically symmetric initial data from Proposition \ref{prop:wellposedness}.
\hspace{1pt}
\begin{itemize}\setlength\itemsep{1em}
\item[(i)]
Assume that
\begin{equation*}
\lim_{v\to \infty} v^2 \partial_v\phi_0|_{\mathcal{H}^+}(v) \: \textnormal{is well-defined}.
\end{equation*}
Then $\phi_0$ can be extended as a $C^1$ function beyond $\mathcal{CH}^+$.
\item[(ii)]
Assume that $\phi_0$ satisfies the following asymptotics along $\mathcal{H}^+$:
\begin{equation*}
\phi_0|_{\mathcal{H}^+}(v)=-M^2H_0v^{-1}+M^3H_0v^{-2}\log v +\mathcal{O}_2(v^{-2}),
\end{equation*}
such that moreover
\begin{equation}
\label{eq:addassumpinitial2}
\lim_{v\to \infty}\left[M\partial_V^2\phi_0(-\infty,v)+2H_0\log v\right]\:\textnormal{is well-defined}.
\end{equation}
Then $\phi_0$ can be extended as a $C^2$ function beyond $\mathcal{CH}^+$.
\end{itemize}
\end{theorem}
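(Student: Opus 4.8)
\emph{Proof strategy.} The plan is to reduce everything to transport equations for the spherically symmetric mode along ingoing null rays. In Eddington--Finkelstein double-null coordinates $(u,v)$ one has $\partial_u r=-\Omega^2$ and $\partial_v r=\Omega^2$, and for $\phi_0$ the equation $\square_g\phi_0=0$ is equivalent to the pair of identities
\begin{align*}
\partial_u(r\partial_v\phi_0)&=-\partial_v r\,\partial_u\phi_0,\\
\partial_v(r\partial_u\phi_0)&=-\partial_u r\,\partial_v\phi_0=\Omega^2\partial_v\phi_0.
\end{align*}
Since $V=V(v)$ rescales the $v$-coordinate alone and therefore commutes with $\partial_u$, dividing the first identity by $\Omega^2(u_0,v)$ gives the transport equation
\begin{equation*}
\partial_u(r\partial_V\phi_0)=-\partial_V r\,\partial_u\phi_0,
\end{equation*}
which I would integrate along $\uline{H}_v$ from $\mathcal{H}^+$ (where $u=-\infty$ and $r=M$) to $(u,v)$. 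As the coordinates $(u,V)$ are regular at $\mathcal{CH}^+=\{V=M\}$ and $\phi_0$ is already continuous up to $\mathcal{CH}^+$ by Theorem \ref{thm:linftyboundv1}, $C^1$- and $C^2$-extendibility reduce to proving that the first (resp. first and second) coordinate derivatives of $\phi_0$ extend continuously as $v\to\infty$.

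For part (i), note first that the hypothesis that $\lim_{v\to\infty}v^2\partial_v\phi_0|_{\mathcal{H}^+}$ exists is equivalent, via $\partial_V\phi_0=\Omega^{-2}(u_0,v)\partial_v\phi_0$ and $\Omega^{-2}(u_0,v)\sim v^2$ from (\ref{eq:rnOmegainverseexp}), to the existence of $\lim_{v\to\infty}\partial_V\phi_0|_{\mathcal{H}^+}$. Integrating the transport equation gives
\begin{equation*}
r\partial_V\phi_0(u,v)=M\partial_V\phi_0(-\infty,v)-\int_{-\infty}^u\partial_V r\,\partial_u\phi_0(u',v)\,du'.
\end{equation*}
The boundary term converges as $v\to\infty$ by the hypothesis, and $\partial_V r$ is uniformly bounded by (\ref{eq:part1dervr}); it remains to show the $u'$-integral converges and is continuous up to $v=\infty$. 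I would obtain this from a uniform $L^1$ bound on $\partial_u\phi_0$ along $\uline{H}_v$ together with pointwise continuity of the tangential derivative $\partial_u\phi_0$ at $\mathcal{CH}^+$ --- the latter following by integrating the second identity $\partial_v(r\partial_u\phi_0)=\Omega^2\partial_v\phi_0$ in $v$ from $\uline{H}_{v_0}$ and using $\int_{v_0}^\infty\Omega^2|\partial_v\phi_0|\,dv<\infty$. Dominated convergence then produces a continuous limit of $\partial_V\phi_0(u,v)$ as $v\to\infty$, and since $r\to M$ there, $\partial_V\phi_0$ extends continuously to $\mathcal{CH}^+$; with the continuous extension of $\partial_u\phi_0$ this yields $C^1$.

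For part (ii) I would commute the $\partial_V$-transport equation once more with $\partial_V$, giving
\begin{equation*}
\partial_u(r\partial_V^2\phi_0)=-\partial_V^2 r\,\partial_u\phi_0-2\partial_V r\,\partial_u\partial_V\phi_0-\partial_u\partial_V r\,\partial_V\phi_0,
\end{equation*}
integrate in $u$ from $\mathcal{H}^+$, and study
\begin{equation*}
\partial_V^2\phi_0(u,v)-\partial_V^2\phi_0(-\infty,v)=\frac{M-r}{r}\partial_V^2\phi_0(-\infty,v)+\frac1r\int_{-\infty}^u(\cdots)\,du'.
\end{equation*}
The assumed horizon asymptotics fix, through $H_0$, the boundary behaviour $M\partial_V^2\phi_0(-\infty,v)=-2H_0\log v+(\text{convergent})$, which is precisely assumption (\ref{eq:addassumpinitial2}); and since $M-r\sim M^2(v+|u|)^{-1}$, the first term on the right tends to $0$. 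The decisive point is that the integral term \emph{also} diverges logarithmically: using $\partial_V^2 r(u',\infty)=2M^{-2}(|u'|-|u_0|)$, the $u\leftrightarrow v$ analogue in Proposition \ref{prop:part2dervr} of the limit $\partial_U^2 r(0,v)$ in Proposition \ref{prop:metricc1rn}, together with the leading $|u'|^{-2}$ decay of $\partial_u\phi_0$ toward $i^+$ (obtained by propagating $\partial_u\phi_0$ from $\uline{H}_{v_0}$ via $\partial_v(r\partial_u\phi_0)=\Omega^2\partial_v\phi_0$), the dominant term $-\partial_V^2 r\,\partial_u\phi_0$ makes $\frac1r\int_{-\infty}^u(\cdots)\,du'\sim+\tfrac{2H_0}{M}\log v$ once the $u'$-integral is effectively cut off at the scale $|u'|\sim v$. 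The assumed next-to-leading term $M^3H_0v^{-2}\log v$ in (\ref{eq:beyondpriceslaw}) is exactly what forces the coefficient of $|u'|^{-2}$ in $\partial_u\phi_0$ to the value making these two logarithms cancel, so that $\partial_V^2\phi_0(u,v)$ converges as $v\to\infty$ to a limit depending continuously on $u$. Combined with the continuous extension of $\partial_u\partial_V\phi_0$ and $\partial_u^2\phi_0$, read off from the transport equations, this gives $C^2$.

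The main obstacle is the exact cancellation in part (ii): one must compute the coefficient of $\log v$ in $\tfrac1r\int_{-\infty}^u(-\partial_V^2 r\,\partial_u\phi_0)\,du'$ precisely and verify that it equals $+\tfrac{2H_0}{M}$, matching the $-\tfrac{2H_0}{M}\log v$ dictated by (\ref{eq:addassumpinitial2}). This is delicate because it hinges on the next-to-leading-order term in the horizon asymptotics rather than merely on the decay rate, and any mismatch there produces genuine logarithmic blow-up of $\partial_V^2\phi_0$ at $\mathcal{CH}^+$. By contrast, part (i) is comparatively routine, its only subtlety being the uniform $L^1$ control of $\partial_u\phi_0$ needed to pass to the limit inside the integral.
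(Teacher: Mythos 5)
Your proposal is correct and follows essentially the same route as the paper: integrating the transport equations for $r\partial_V\phi_0$ and $r\partial_V^2\phi_0$ in $u$ from $\mathcal{H}^+$, controlling the bulk integrals via the weighted $L^1$ bounds on $\partial_u\phi_0$ (Proposition \ref{lm:l1} and Proposition \ref{prop:asymptoticspartialuphi0}), and for part (ii) exploiting exactly the cancellation between the $-2H_0\log v$ divergence of $M\partial_V^2\phi_0(-\infty,v)$ and the $+2H_0\log v$ divergence of $-\int_{-\infty}^u(\partial_V^2 r)\,\partial_u\phi_0\,du'$, with (\ref{eq:addassumpinitial2}) supplying convergence of the boundary term (cf.\ Propositions \ref{prop:blowup2nddersphsymm}, \ref{prop:sphsymmc1} and \ref{prop:sphsymmc2}). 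The only difference is presentational: you phrase continuity at $\mathcal{CH}^+$ via dominated convergence in the integral representation, whereas the paper verifies it by bounding the $u$- and $v$-integrals of the relevant third-order derivatives, but both hinge on the same estimates.
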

Theorem \ref{thm:c1c2ext} (i) follows from Proposition \ref{prop:sphsymmc1} and Theorem \ref{thm:c1c2ext} (ii) follows from Proposition \ref{prop:sphsymmc1} and \ref{prop:sphsymmc2}, together with Theorem \ref{thm:C0extension}. Theorem \ref{thm:c1c2ext} (i) and (ii) imply Theorem \ref{thm:c1boundphi0} and Theorem \ref{thm:c2blowup}, respectively.\footnote{Note that the assumption (\ref{eq:addassumpinitial2}) in Theorem \ref{thm:c1c2ext} (ii) can easily be seen to be equivalent to the assumption (\ref{eq:addassumpinitial}) in Theorem \ref{thm:c2blowup}, by using the precise $v$ dependence of the expression $\frac{dV}{dv}$ from Section \ref{sec:geom}.}

\section{The spherically symmetric mode $\phi_0$}
\label{sec:sphericalsymmetry}
We first consider the special case of spherically symmetric solutions $\phi_0$ to the wave equation (\ref{eq:waveqkerr}). \textbf{In this section $\phi_0$ will always denote a solution to} (\ref{eq:waveqkerr}) \textbf{corresponding to characteristic initial data from Proposition} \ref{prop:wellposedness} \textbf{that we additionally assume to be spherically symmetric}. It follows that the solution $\phi_0$ must be spherically symmetric in the entire set $\mathcal{M}\cap D_{u_0,v_0}$.

In the spherically symmetric setting we can do estimates with respect to weighted $L^1$ norms. We can use these to prove uniform boundedness of $\phi_0$ and $\partial_V\phi_0$. We can moreover show that $\partial_V^2\phi_0$ is uniformly bounded, if we assume precise asymptotics for $\phi$ along $\mathcal{H}^+$, and that $\partial_V^2\phi_0$ blows up at $\mathcal{CH}^+$ if these asymptotics do not exactly hold. 

\subsection{Weighted $L^1$ estimates}
\label{sec:weightedL1}
Uniform boundedness of spherically symmetric solutions $\phi_0$ and their derivatives follows from considering appropriately weighted $L^1$ norms for the derivatives of $\phi_0$. We can rewrite the wave equation (\ref{eq:waveqkerr}) as a system of transport equations for ${r\partial_u \phi_0}$ and ${r\partial_v \phi_0}$:
\begin{align}
\label{eq:weth}
\partial_u ({r\partial_v \phi_0})&=-\partial_v r{\partial_u \phi_0},\\
\label{eq:wez}
\partial_v ({r\partial_u \phi_0})&=-\partial_u r{\partial_v \phi_0}.
\end{align}
These transport equations allow us to obtain weighted $L^1$ estimates for $r\partial_v \phi_0$ and ${r\partial_u \phi_0}$ that will be central in proving pointwise boundedness results.

We first prove a lemma that can be interpreted as Gr\"onwall's inequality in two variables.
\begin{lemma}
\label{lm:gronwall}
Let $-\infty \leq u_1<u_2\leq \infty$ and $-\infty \leq v_1<v_2\leq \infty$. Consider continuous, non-negative functions $f,g: [u_1,u_2]\times [v_1,v_2] \to \R$, and continuous, non-negative functions $h: [u_1,u_2]\to \R$ and $k: [v_1,v_2]\to \R$. Suppose
\begin{equation}
\label{eq:gronassumption}
f(u,v)+g(u,v)\leq A+ B\left[\int_{u_1}^u h(u')f(u',v)\,du'+\int_{v_1}^v k(v')g(u,v')\,dv'\right],
\end{equation}
for all $u\in [u_1,u_2]$ and $v\in [v_1,v_2]$, where $A,B>0$ are constants. Then:
\begin{equation}
\label{eq:gronba}
f(u,v)+g(u,v)\leq (1+\eta)A e^{\beta B\left[\int_{u_1}^u h(u')\,du'+\int_{v_1}^v k(v')\,dv'\right]},
\end{equation}
for all $u\in [u_1,u_2]$ and $v\in [v_1,v_2]$, where $\eta>0$ can be taken arbitrarily small and $\beta\geq\frac{2(1+\eta)}{\eta}$.
\end{lemma}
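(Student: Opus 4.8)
The plan is to collapse the two unknowns into one and then run a supersolution (normalised--supremum) argument against the trial weight $e^{\beta B G}$, where I abbreviate $G(u,v):=\int_{u_1}^u h(u')\,du'+\int_{v_1}^v k(v')\,dv'$. Since $f,g\geq 0$, both are bounded above by $F:=f+g$, so the hypothesis (\ref{eq:gronassumption}) immediately yields the closed inequality
\[
F(u,v)\leq A+B\int_{u_1}^u h(u')F(u',v)\,du'+B\int_{v_1}^v k(v')F(u,v')\,dv'.
\]
It then suffices to establish (\ref{eq:gronba}) for $F$, and the exponential on the right is chosen precisely because it is almost a fixed point of the integral operator appearing here.

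The key computation is that this candidate weight reproduces itself under integration, with a gain of $\beta^{-1}$. Indeed $\partial_{u'}G(u',v)=h(u')$, so $B\,h(u')e^{\beta B G(u',v)}=\beta^{-1}\partial_{u'}e^{\beta B G(u',v)}$ and hence
\[
B\int_{u_1}^u h(u')e^{\beta B G(u',v)}\,du'=\frac{1}{\beta}\left(e^{\beta B G(u,v)}-e^{\beta B G(u_1,v)}\right)\leq \frac{1}{\beta}e^{\beta B G(u,v)},
\]
the discarded boundary term having the favourable sign; the $v$--integral obeys the identical bound. Now set $m:=\sup_{[u_1,u_2]\times[v_1,v_2]}F\,e^{-\beta B G}$. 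Substituting $F(u',v)\leq m\,e^{\beta B G(u',v)}$ and $F(u,v')\leq m\,e^{\beta B G(u,v')}$ into the closed inequality and applying the two integral estimates gives $F(u,v)\leq A+\tfrac{2m}{\beta}e^{\beta B G(u,v)}$; since $G\geq 0$ forces $e^{-\beta B G}\leq 1$, this becomes $F(u,v)e^{-\beta B G(u,v)}\leq A+\tfrac{2m}{\beta}$. Taking the supremum over $(u,v)$ yields $m\leq A+\tfrac{2m}{\beta}$, so $m\leq A\beta/(\beta-2)$ as soon as $\beta>2$. A direct check shows that $\beta/(\beta-2)\leq 1+\eta$ is \emph{equivalent} to $\beta\geq 2(1+\eta)/\eta$, which is exactly the stated hypothesis; hence $m\leq(1+\eta)A$ and $F\leq (1+\eta)A\,e^{\beta B G}$, i.e.\ (\ref{eq:gronba}). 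It is satisfying that the somewhat awkward threshold $\beta\geq 2(1+\eta)/\eta$ falls out of this elementary rearrangement rather than being imposed by hand.

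The one delicate point, and the step I expect to be the main obstacle, is the finiteness of $m$, which is what licenses the rearrangement of $m\leq A+\tfrac{2m}{\beta}$. On any compact sub-rectangle this is immediate from continuity of $f,g$ and of $G$. When $u_1=-\infty$ or $v_2=\infty$ the domain is genuinely non-compact, and one must verify that $F$ cannot blow up at the initial corner before closing the estimate. Note that as $(u,v)\to(u_1,v_1)$ the two integral terms vanish and the right-hand side tends to $A$, so the inequality itself suppresses growth there; assuming, as is natural in the applications (where $F\sim r|\partial\phi|$ extends continuously up to the compact boundary in the regular $(U,V)$ coordinates), that $f$ and $g$ are locally bounded up to that corner, one exhausts the domain by compact sub-rectangles sharing the corner $(u_1,v_1)$ and concludes $m<\infty$. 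I would therefore present this finiteness reduction first, treating it as the substantive input, and only then carry out the short supersolution computation, which is what actually pins down the constant $\beta\geq 2(1+\eta)/\eta$.
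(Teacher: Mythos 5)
Your proof is correct, but it closes the estimate by a different mechanism than the paper. The core computation is the same in both: the weight $e^{\beta B G}$, with $G(u,v)=\int_{u_1}^{u}h+\int_{v_1}^{v}k$, reproduces itself under the integral operator with a gain of $\beta^{-1}$, because $Bh(u')e^{\beta BG(u',v)}=\beta^{-1}\partial_{u'}e^{\beta BG(u',v)}$; and the same arithmetic then forces the threshold $\beta\geq 2(1+\eta)/\eta$. Where you differ is in how the self-improving inequality is turned into an actual bound. The paper runs a continuity (bootstrap) argument: it defines the set $S$ of points where (\ref{eq:gronba}) holds and shows $S$ is nonempty, closed and open, concluding by connectedness. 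You instead take the global weighted supremum $m=\sup F e^{-\beta BG}$ and absorb the term $2m/\beta$ into the left-hand side. Your route is shorter and has the virtue of isolating the one genuine analytic input — the a priori finiteness of $m$ — which you correctly reduce to boundedness of $f+g$ on an exhaustion by compact sub-rectangles containing the corner $(u_1,v_1)$ (and which, under the stated hypothesis of continuity on the closed rectangle, i.e.\ including the endpoints at infinity, is automatic). The paper's bootstrap conceals the analogous input inside the claims that $S$ is closed and that (\ref{eq:gronba}) holds ``trivially'' on the initial edges $\{u=u_1\}$ and $\{v=v_1\}$ — the latter in fact already requires a one-variable Gr\"onwall argument — and its openness step implicitly uses the bound on the whole lower-left sub-rectangle of $(u,v)$ rather than at the single point, a standard but unstated refinement of the definition of $S$. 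Your supremum argument sidesteps both of these subtleties at the cost of requiring global (rather than merely local) a priori control, which is harmless here.
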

\begin{proof}
We will prove the lemma using a continuity argument. Consider the set
\begin{equation*}
S:=\{(u,v)\in [u_1,u_2]\times [v_1,v_2]\,:\, \textnormal{the inequality}\, (\ref{eq:gronba})\, \textnormal{holds}\}.
\end{equation*}
Since (\ref{eq:gronba}) trivially holds for all points $(u_1,v)$, with $v\in [v_1,v_2]$ and $(u,v_1)$, with $u\in [u_1,u_2]$, by (\ref{eq:gronassumption}), $S$ is nonempty. Moreover, by continuity of $f$ and $g$, $S$ is closed. We are left with showing that $S$ is open. It is sufficient to show openess via a \emph{bootstrap argument}, i.e.\ we will show that for all $(u,v)\in S$:
\begin{equation}
\label{eq:gronimprov}
f(u,v)+g(u,v)\leq \alpha A e^{\beta B\left[\int_{u_1}^u h(u')\,du'+\int_{v_1}^v k(v')\,dv'\right]},
\end{equation}
for some $\alpha<1+\eta$. Indeed, if we can show (\ref{eq:gronimprov}) for $(u,v)\in S$, $S$ must also contain an open neighbourhood of $(u,v)$, by continuity of $f$ and $g$.

By combining (\ref{eq:gronassumption}) and (\ref{eq:gronba}) we obtain
\begin{equation*}
\begin{split}
&f(u,v)+g(u,v)\\
&\leq A+ (1+\eta)A B\left[e^{\beta B\int_{v_1}^v k(v')\,dv'}\int_{u_1}^u h(u') e^{\beta B\int_{u_1}^{u'} h(\tilde{u})\,d\tilde{u}}\,du'+e^{\beta B\int_{u_1}^u h(u')\,du'}\int_{v_1}^v k(v')e^{ \beta B\int_{v_1}^{v'} k(\tilde{v})\,d\tilde{v}}\,dv'\right]\\
&= A+ \frac{1+\eta}{\beta}A \Bigg[e^{\beta B\int_{v_1}^v k(v')\,dv'}\int_{u_1}^u \frac{d}{du'}\left(e^{\beta B\int_{u_1}^{u'} h(\tilde{u})\,d\tilde{u}}\right)\,du'\\
&\quad+e^{\beta B\int_{u_1}^u h(u')\,du'}\int_{v_1}^v \frac{d}{dv'}\left(e^{ \beta B\int_{v_1}^{v'} k(\tilde{v})\,d\tilde{v}}\right)\,dv'\Bigg]\\
&\leq A+\frac{1+\eta}{\beta}Ae^{\beta B\int_{v_1}^v k(v')\,dv'}\left(e^{\beta B\int_{u_1}^u h(u')\,du'}-1\right)+\frac{1+\eta}{\beta}Ae^{\beta B\int_{u_1}^u h(u')\,du'}\left(e^{\beta B\int_{v_1}^v k(v')\,dv'}-1\right)\\
&\leq A\left(1+\frac{2(1+\eta)}{\beta}e^{\beta B\left[\int_{u_1}^u h(u')\,du'+\int_{v_1}^v k(v')\,dv'\right]}\right)\\
&\leq \left(1+\frac{2(1+\eta)}{\beta}\right)Ae^{\beta B\left[\int_{u_1}^u h(u')\,du'+\int_{v_1}^v k(v')\,dv'\right]}.
\end{split}
\end{equation*}
Hence, (\ref{eq:gronimprov}) holds if $\beta>\frac{2(1+\eta)}{\eta}$.

We conclude that $S$ is non-empty, open and closed, so by connectedness of $[u_1,u_2]\times [v_1,v_2]$, $S=[u_1,u_2]\times [v_1,v_2]$. By continuity, we can take $\beta\geq \frac{2(1+\eta)}{\eta}$.
\end{proof}

\begin{proposition}
\label{lm:l1}
Let $-\infty<u<u_0<0$ and $v_0<v<\infty$. For $-1< q<1$ and $0\leq p< 1-q$, there exists a constant $C=C(M,u_0,v_0,p,q)>0$ such that
\begin{equation}
\label{eq:sphsymmeestimate1}
\begin{split}
&\int^{u}_{-\infty} |u'|^{p}|{r\partial_u \phi_0}|({u'},v) \,d{u'}+\int_{v_0}^v v'^{-q}|{r\partial_v \phi_0}|(u,{v'})\,d{v'}\\
&\quad\leq C\left[\int_{-\infty}^{u} |u'|^{p}|{r\partial_u \phi_0}|({u'},v_0) \,d{u'}+\int_{v_0}^v v'^{-q}|{r\partial_v \phi_0}|(u_0,{v'})\,d{v'}\right].
\end{split}
\end{equation}

Similarly, for $-1< p < 1$ and $0\leq q< 1-p$, there exists a constant $C=C(M,u_0,v_0,p,q)>0$ such that
\begin{equation}
\label{eq:sphsymmeestimate2}
\begin{split}
&\int_{-\infty}^{u} |u'|^{-p}|{r\partial_u \phi_0}|({u'},v) \,d{u'}+\int_{v_0}^v v'^{q}|{r\partial_v \phi_0}|(u,{v'})\,d{v'}\\
&\quad \leq C\left[\int_{-\infty}^{u} |u'|^{-p}|{r\partial_u \phi_0}|({u'},v_0) \,d{u'}+\int_{v_0}^v v'^{q}|{r\partial_v \phi_0}|(u_0,{v'})\,d{v'}\right].
\end{split}
\end{equation}
\end{proposition}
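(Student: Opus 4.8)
The plan is to use the transport equations (\ref{eq:weth})--(\ref{eq:wez}) for $r\partial_v\phi_0$ and $r\partial_u\phi_0$, integrate each along its characteristic direction starting from the appropriate data hypersurface, and then close the resulting \emph{coupled} pair of weighted integral inequalities by invoking the two-variable Gr\"onwall inequality of Lemma \ref{lm:gronwall}. Writing $\Psi:=r\partial_u\phi_0$ and $\Phi:=r\partial_v\phi_0$, the only input from the geometry I need is a bound on the transport coefficient. In the interior $\partial_v r=\Omega^2$ and $\partial_u r=-\Omega^2$, so both (\ref{eq:weth}) and (\ref{eq:wez}) carry the same coefficient $r^{-1}\Omega^2$; from the expansion (\ref{eq:rnOmegainverseexp}) the leading behaviour is $\Omega^{-2}\sim M^{-2}(v+|u|)^2$, with the logarithmic terms of strictly lower order, and since $r\to M$ on $D_{u_0,v_0}$ this gives the uniform bound
\begin{equation*}
\frac{\Omega^2}{r}(u',v')\leq C\,(v'+|u'|)^{-2}.
\end{equation*}

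Next I integrate. Integrating (\ref{eq:wez}) in $v$ from $v_0$ produces the boundary term $\Psi(u',v_0)$ on $\uline{H}_{v_0}$, while integrating (\ref{eq:weth}) in $u$ from $u_0$ produces the boundary term $\Phi(u_0,v')$ on $H_{u_0}$; these are exactly the two terms appearing on the right-hand side of (\ref{eq:sphsymmeestimate1}). Taking absolute values, multiplying by the flux weights $|u'|^p$ and $v'^{-q}$, and integrating then yields inequalities for
\begin{equation*}
P(u,v):=\int_{-\infty}^u|u'|^p|\Psi|(u',v)\,du',\qquad Q(u,v):=\int_{v_0}^v v'^{-q}|\Phi|(u,v')\,dv'
\end{equation*}
of precisely the shape required by Lemma \ref{lm:gronwall}: the error in the $P$-inequality is of the form $\int h(u')\,Q(u',v)\,du'$ and the error in the $Q$-inequality is of the form $\int k(v')\,P(u,v')\,dv'$, with constant term $A$ equal to the two boundary integrals on the right of (\ref{eq:sphsymmeestimate1}).

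The crux is choosing the Gr\"onwall weights $h,k$, which is a matter of factorising $\tfrac{\Omega^2}{r}\lesssim(v'+|u'|)^{-2}$ against the flux weights. In the $P$-error I estimate $|u'|^p(v'+|u'|)^{-2}v'^{q}\leq|u'|^{p+q-2}$, using $v'^{q}\le(v'+|u'|)^{q}$ and $q<2$, so $h(u')=|u'|^{p+q-2}$; in the $Q$-error I extract the full $v'$-decay, $v'^{-q}(v'+|u'|)^{-2}\le|u_0|^{-p}\,v'^{-q-2}|u'|^{p}$ on the relevant range $|u'|\ge|u_0|$, so $k(v')=v'^{-q-2}$. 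The hypotheses are then exactly the integrability thresholds of these weights: $\int^{u_0}_{-\infty}|u'|^{p+q-2}\,du'<\infty\iff p<1-q$, and $\int_{v_0}^{\infty}v'^{-q-2}\,dv'<\infty\iff q>-1$. Lemma \ref{lm:gronwall} then controls $P+Q$ by $(1+\eta)A$ times $\exp\!\big(\beta B[\int h+\int k]\big)$, and since both integrals are finite and uniform in $(u,v)$ the exponential collapses to an absolute constant $C=C(M,u_0,v_0,p,q)$, giving (\ref{eq:sphsymmeestimate1}). The estimate (\ref{eq:sphsymmeestimate2}) follows by the identical scheme with weights $|u'|^{-p}$, $v'^{q}$, whose integrability thresholds reproduce $-1<p<1$ and $q<1-p$.

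The main obstacle is this factorisation: the coupling kernel $(v'+|u'|)^{-2}$ must be split as a function of $u'$ times the $v'$-weight (and conversely), with the two one-variable weights \emph{simultaneously} integrable, and it is exactly the stated ranges that make this possible. Some care with the sign of the exponents is required, since the factorisation that is sharp when the relevant weight is increasing differs from the one needed when it is decreasing; the monotonicity of $(v'+|u'|)^{-2}$ in each variable, together with the lower bounds $v'\ge v_0$ and $|u'|\ge|u_0|$, is what keeps the constant uniform. Once the weights are fixed, the integration and the application of Lemma \ref{lm:gronwall} are routine.
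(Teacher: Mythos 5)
Your proposal is correct in substance and is essentially the paper's own proof: the same transport equations, the same factorisation of the kernel $r^{-1}|\partial_u r|,\,r^{-1}|\partial_v r|\lesssim (v'+|u'|)^{-2}$ against the flux weights to produce the one-variable Gr\"onwall weights $h(u')=|u'|^{p+q-2}$ and $k(v')=v'^{-q-2}$, and the same closing step via Lemma \ref{lm:gronwall}, with the integrability of $h$ and $k$ reproducing exactly the stated ranges of $p$ and $q$. The one wrinkle is your choice to integrate the equation for $r\partial_v\phi_0$ in $u$ starting from $u_0$: that produces an error integral over $[u,u_0]$, which is \emph{not} dominated by $P(u,v')=\int_{-\infty}^{u}|u'|^p|r\partial_u\phi_0|\,du'$, so the coupled system is not literally of the shape required by Lemma \ref{lm:gronwall}; integrating instead from $\mathcal{H}^+$ (i.e.\ from $u=-\infty$, where the characteristic data actually live, so that the boundary term is $|r\partial_v\phi_0|(-\infty,v')$) restores the Gr\"onwall structure, and this is what the paper's proof implicitly does and what every later application of the proposition uses.
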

\begin{proof}
First of all, by (\ref{eq:weth}) and (\ref{eq:wez}), $|\partial_v ({r\partial_u \phi_0})|=\frac{|\partial_u r|}{r}|{r\partial_v \phi_0}|$ and $|\partial_u ({r\partial_v \phi_0})|= \frac{|\partial_v r|}{r}|{r\partial_u \phi_0}|$. In combination with the fundamental theorem of calculus in the $v$ direction, these estimates give
\begin{equation*}
\int_{-\infty}^{u} |u'|^{p}|{r\partial_u \phi_0}|({u'},v) \,d{u'}\leq\int_{-\infty}^{u} |u'|^{p}|{r\partial_u \phi_0}|({u'},v_0) \,d{u'}+\int_{v_0}^v\int_{-\infty}^{u} |u'|^{p}\frac{|\partial_u r|}{r}|{r\partial_v \phi_0}|({u'},{v'})\,d{u'}d{v'}.
\end{equation*}
We can rearrange the terms inside the double integral:
\begin{equation*}
\begin{split}
\int_{v_0}^v\int_{-\infty}^{u} |u'|^{p}\frac{|\partial_u r|}{r}|{r\partial_v \phi_0}|\,d{u'}d{v'} &\leq C\int_{-\infty}^{u} |u'|^{p}\sup_{v_0\leq {v'}\leq v} {v'}^{q}|\partial_u r|({u'},{v'})\left[\int_{v_0}^v {v'}^{-q}|{r\partial_v \phi_0}|({u'},{v'})\,d{v'}\right]\,d{u'}.
\end{split}
\end{equation*}
Since $\partial_ur=-\Omega^2$, we use (\ref{eq:rnOmegainverseexp}) to estimate
\begin{equation*}
|\partial_ur(u,v)|=\Omega^2(u,v)\leq C(v+|u|)^{-2}.
\end{equation*}
Therefore,
\begin{equation*}
\sup_{v_0\leq v'\leq v} {v'}^{q}|\partial_u r|(u',v')\leq C\sup_{v_0\leq v'\leq v} \frac{{v'}^q}{(v'+|u'|)^2}\leq C {|u'|}^{-2+q}.
\end{equation*}
Consequently,
\begin{equation*}
\begin{split}
\int_{v_0}^v\int_{-\infty}^{u} |u'|^{p}\frac{|\partial_u r|}{r}|{r\partial_v \phi_0}|\,d{u'}d{v'}\leq C\int_{-\infty}^{u}|{u'}|^{p+q-2}\left[\int_{v_0}^v {v'}^{-q}|{r\partial_v \phi_0}|({u'},{v'})\,d{v'}\right]\,du'.
\end{split}
\end{equation*}

By interchanging the roles of $u$ and $v$ we can also obtain
\begin{equation*}
\begin{split}
\int_{v_0}^v\int_{-\infty}^{u} v'^{-q}\frac{|\partial_v r|}{r}|{r\partial_u \phi_0}|\,d{u'}d{v'} &\leq C\int_{v_0}^v v'^{-q}\sup_{-\infty\leq u'\leq u} |u'|^{-p}|\partial_v r|(u',v)\left[\int_{-\infty}^{u} |u'|^p|{r\partial_u \phi_0}|({u'},{v'})\,d{u'}\right]\,dv'\\
&\leq C\int_{v_0}^v v'^{-q-2}\left[\int_{-\infty}^{u} |u'|^p|{r\partial_u \phi_0}|({u'},{v'})\,d{u'}\right]\,dv',
\end{split}
\end{equation*}
for $p\geq 0$.

We now apply Lemma \ref{lm:gronwall} with
\begin{align*}
A&=\int_{-\infty}^{u} |u'|^{p}|{r\partial_u \phi_0}|({u'},v_0) \,d{u'}+\int_{v_0}^v v^{-q}|{r\partial_v \phi_0}|(u_0,{v'})\,d{v'},\\
f(u,v)&=\int_{v_0}^v {v'}^{-q}|{r\partial_v \phi_0}|({u},{v'})\,d{v'},\\
g(u,v)&=\int^{u}_{-\infty} |u'|^p|{r\partial_u \phi_0}|({u'},{v})\,du',\\
h(u)&=|u|^{p+q-2},\\
k(v)&=v^{-q-2}.
\end{align*}
Note that $h$ and $k$ are integrable for $q>-1$ and $p+q<1$, so we obtain
\begin{equation*}
\begin{split}
&\int_{-\infty}^{u} |u'|^{p}|{r\partial_u \phi_0}|({u'},v) \,d{u'}+\int_{v_0}^v v^{-q}|{r\partial_v \phi_0}|(u,{v'})\,d{v'}\\
&\quad\leq C\left[\int_{-\infty}^{u} |u'|^{p}|{r\partial_u \phi_0}|({u'},v_0) \,d{u'}+\int_{v_0}^v v^{-q}|{r\partial_v \phi_0}|(u_0,{v'})\,d{v'}\right].
\end{split}
\end{equation*}
We prove (\ref{eq:sphsymmeestimate2}) by interchanging the roles of $u$ and $v$ everywhere above.
\end{proof}

\subsection{$L^{\infty}$ estimates for $\phi_0$ and first-order derivatives}
\label{sec:Linftyfirstorder}

We use the $L^1$ estimates for the derivatives of $\phi_0$ in Proposition \ref{lm:l1} to obtain pointwise estimates for $\phi_0$.
\begin{proposition}
\label{prop:boundsph}
Let $u_0<0$ and $v_0>0$ and fix $p=0$. For $ q <1$, there exists a constant $C=C(M,u_0,v_0,q)>0$, such that for all $(u,v)\in[-\infty,u_0]\times[v_0,\infty)$,
\begin{equation*}
|\phi_0|(u,v)\leq |\phi_0|(-\infty,v)+C F_{0;0,q}[\phi_0],
\end{equation*}
where
\begin{equation*}
F_{0;p,q}[\phi_0]:=\int_{-\infty}^{u_0} |u'|^{p}|\partial_u \phi_0|(u',v_0)\,du'+ \int_{v_0}^{\infty} v'^{-q}|\partial_v\phi_0|(-\infty,v')\,dv'.
\end{equation*}
\end{proposition}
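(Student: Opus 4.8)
The plan is to recover the pointwise value of $\phi_0$ from its data on $\mathcal{H}^+$ by integrating $\partial_u\phi_0$ along the ingoing null ray $\{v=\mathrm{const}\}$, and then to estimate the resulting $L^1$-integral of $\partial_u\phi_0$ by means of Proposition \ref{lm:l1} with the weight exponents $p=0$ and the given $q$. Concretely, fix $(u,v)\in[-\infty,u_0]\times[v_0,\infty)$. For finite $u_1<u$ the fundamental theorem of calculus gives $\phi_0(u,v)-\phi_0(u_1,v)=\int_{u_1}^u\partial_u\phi_0(u',v)\,du'$. Since $\phi_0$ extends continuously up to $\mathcal{H}^+=\{u=-\infty\}$ by Proposition \ref{prop:wellposedness}, and since the weighted $L^1$ bound below shows $\int_{-\infty}^u|\partial_u\phi_0|(u',v)\,du'<\infty$, I may pass to the limit $u_1\to-\infty$ and conclude
\[
|\phi_0|(u,v)\le|\phi_0|(-\infty,v)+\int_{-\infty}^u|\partial_u\phi_0|(u',v)\,du'.
\]
It therefore remains only to bound the last integral by $CF_{0;0,q}[\phi_0]$.

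For this I would first pass from $\partial_u\phi_0$ to $r\partial_u\phi_0$. The region $D_{u_0,v_0}$ is bounded away from the singularity: since $\partial_u r<0$ and $\partial_v r\ge0$ in the interior, $r$ attains its infimum over $D_{u_0,v_0}$ at the corner $(u_0,v_0)$, so $r\ge r(u_0,v_0)>0$ there and $|\partial_u\phi_0|\le C(M,u_0,v_0)\,|r\partial_u\phi_0|$. Hence $\int_{-\infty}^u|\partial_u\phi_0|(u',v)\,du'\le C\int_{-\infty}^u|r\partial_u\phi_0|(u',v)\,du'$, and the right-hand side is precisely the first term on the left of (\ref{eq:sphsymmeestimate1}) with $p=0$. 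Applying Proposition \ref{lm:l1} (valid for $-1<q<1$ once $p=0$) controls it by the two initial-data integrals on the right-hand side of (\ref{eq:sphsymmeestimate1}). Using $r\le M$ on the initial hypersurfaces to return from $r\partial\phi_0$ to $\partial\phi_0$, and enlarging the integration ranges from $(-\infty,u]$ and $(v_0,v)$ to $(-\infty,u_0]$ and $(v_0,\infty)$ (the integrands being non-negative), then bounds everything by $CF_{0;0,q}[\phi_0]$, which is the claim.

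The point requiring the most care is the provenance of the $\partial_v\phi_0$ term: $F_{0;0,q}$ records this data on $\mathcal{H}^+$ (at $u=-\infty$) rather than on $H_{u_0}$. This is exactly what the Gr\"onwall scheme of Proposition \ref{lm:l1} supplies, provided the transport equation (\ref{eq:weth}) for $r\partial_v\phi_0$ is integrated in $u$ \emph{from the event horizon}; the correction term then ranges over $(-\infty,u]$ and slots into the Gr\"onwall functional $g(u,v)=\int_{-\infty}^u|r\partial_u\phi_0|\,du'$, while the integrability of $|u'|^{q-2}$ on $(-\infty,u_0]$ — which holds precisely for $q<1$ — is what keeps the resulting constant $C=C(M,u_0,v_0,q)$ finite. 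Aside from this bookkeeping, and the mild consistency check that the $L^1$ bound be finite before invoking the fundamental theorem of calculus up to $\mathcal{H}^+$, every step is routine.
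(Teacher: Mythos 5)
Your proof is correct and follows essentially the same route as the paper's: the fundamental theorem of calculus in the $u$-direction followed by the weighted $L^1$ estimate of Proposition \ref{lm:l1} with $p=0$. Your added care about the provenance of the $\partial_v\phi_0$ data term (it must sit on $\mathcal{H}^+$ at $u=-\infty$, which is indeed what the Gr\"onwall integration of (\ref{eq:weth}) from the event horizon produces, even though the literal statement of Proposition \ref{lm:l1} records it at $u_0$) and about the harmless $r$-weights is warranted but does not change the argument.
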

\begin{proof}
Applying the fundamental theorem of calculus in the $u$-direction, together with the first estimate of Proposition \ref{lm:l1} with $p=0$, results in the required estimate:
\begin{equation*}
\begin{split}
|\phi_0|(u,v)&\leq |\phi_0|(-\infty,v)+ \int_{-\infty}^{u} |\partial_u \phi_0|({u'},v)\,d{u'}\\
&\leq |\phi_0|(-\infty,v)+C\left[\int_{-\infty}^{u} |\partial_u \phi_0|({u'},v_0)\,d{u'}+ \int_{v_0}^{v} {v'}^{-q}|\partial_v\phi_0|(-\infty,{v'})\,d{v'}\right],\\
\end{split}
\end{equation*}
where $C>0$ is a uniform constant.
\end{proof}

We can use the transport equations (\ref{eq:weth}) and (\ref{eq:wez}) to moreover establish boundedness of $\partial_{V}\phi_0$ and $\partial_u\phi_0$, where we replace $v$ by $V$ in (\ref{eq:weth}). 
\begin{proposition}
\label{prop:boundsphderv}
Let $u_0<0$ and $v_0>0$. For $ q <1$, there exists a constant $C=C(M,u_0,v_0,q)>0$, such that for all $(u,v)\in[-\infty,u_0]\times[v_0,\infty)$,
\begin{equation}
\label{eq:estdervpsi0}
|\partial_{V}\phi_0|(u,v)\leq C\left[|v^2\partial_v\phi_0|(-\infty,v)+F_{0;0,q}[\phi_0]\right].
\end{equation}
\end{proposition}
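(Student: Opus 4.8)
The plan is to mimic the transport-equation argument behind Proposition~\ref{prop:boundsph}, but applied to the rescaled outgoing derivative $\partial_V\phi_0$, following the hint of working with (\ref{eq:weth}) after replacing $v$ by $V$. Since $V=V(v)$ depends on $v$ alone, with $\partial_V=\Omega^{-2}(u_0,v)\partial_v$, the weight $\Omega^{-2}(u_0,v)$ is independent of $u$. Multiplying (\ref{eq:weth}) by it and using $\partial_V r=\Omega^{-2}(u_0,v)\partial_v r$ therefore gives
\begin{equation*}
\partial_u\left(r\partial_V\phi_0\right)=-\partial_V r\,\partial_u\phi_0,
\end{equation*}
which is exactly (\ref{eq:weth}) with $v$ replaced by $V$.

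Next I would integrate this transport equation in $u$ at fixed $v$, from the event horizon $\{u=-\infty\}$ up to $u$, obtaining
\begin{equation*}
r\partial_V\phi_0(u,v)=r\partial_V\phi_0(-\infty,v)-\int_{-\infty}^u\partial_V r(u',v)\,\partial_u\phi_0(u',v)\,du'.
\end{equation*}
The boundary term is controlled directly: on $\mathcal{H}^+$ one has $r=M$ and $\partial_V\phi_0(-\infty,v)=\Omega^{-2}(u_0,v)\partial_v\phi_0(-\infty,v)$, so the relation $\tfrac{dv}{dV}\sim v^2$ from (\ref{est:vtoV}) bounds it by $C\,|v^2\partial_v\phi_0|(-\infty,v)$, which is the first term on the right-hand side of (\ref{eq:estdervpsi0}).

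For the bulk integral I would first note, using the expansion (\ref{eq:part1dervr}) of Proposition~\ref{prop:part2dervr}, that $\partial_V r=1-2\tfrac{|u|-|u_0|}{v+|u|}+\dots$ is uniformly bounded on $\mathcal{M}_{\rm int}\cap D_{u_0,v_0}$ (including up to both horizons), so that $|\partial_V r|\leq C$. Pulling this supremum out of the integral reduces the bulk term to $C\int_{-\infty}^u|\partial_u\phi_0|(u',v)\,du'$. Since $r$ is bounded above and below by positive constants on $D_{u_0,v_0}$, I can then invoke the weighted $L^1$ estimate (\ref{eq:sphsymmeestimate1}) of Proposition~\ref{lm:l1} with $p=0$ and $q<1$ to bound $\int_{-\infty}^u|\partial_u\phi_0|(u',v)\,du'$ uniformly in $v$ by $C\,F_{0;0,q}[\phi_0]$. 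Dividing through by $r$, which is bounded below by a positive constant on $D_{u_0,v_0}$, then yields (\ref{eq:estdervpsi0}).

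The one point requiring care — and the main obstacle — is the anchoring of the $L^1$ estimate on $\mathcal{H}^+$ rather than on $H_{u_0}$: Proposition~\ref{lm:l1} is stated with its $\partial_v$-data on $\{u=u_0\}$, whereas $F_{0;0,q}[\phi_0]$ carries $\partial_v\phi_0$ on $\{u=-\infty\}$. As in the proof of Proposition~\ref{prop:boundsph}, this is resolved by running the Gr\"onwall argument of Lemma~\ref{lm:gronwall} with the $\partial_v$-integral anchored at $u=-\infty$; the integrability of $\partial_v\phi_0$ along $\mathcal{H}^+$ guaranteed by the finiteness of $F_{0;0,q}[\phi_0]$ makes this limiting anchoring legitimate. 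Everything else is a routine combination of the fundamental theorem of calculus with the uniform bounds on $r$ and $\partial_V r$.
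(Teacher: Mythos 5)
Your proof is correct and follows essentially the same route as the paper: integrate the transport equation $\partial_u(r\partial_V\phi_0)=-\partial_Vr\,\partial_u\phi_0$ in $u$ from $\mathcal{H}^+$, bound $|\partial_Vr|$ by a constant via Proposition \ref{prop:part2dervr}, control the bulk integral with Proposition \ref{lm:l1} at $p=0$, and convert the boundary term using $\frac{dv}{dV}\sim v^2$. The anchoring issue you flag is real --- the paper's own proof silently applies Proposition \ref{lm:l1} with the $\partial_v$-data on $\mathcal{H}^+$ rather than on $H_{u_0}$ as literally stated --- and your fix (rerunning Lemma \ref{lm:gronwall} with $u_1=-\infty$, which that lemma explicitly permits) is exactly what is needed.
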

\begin{proof}
Recall that
\begin{equation*}
\frac{d V}{dv}=\left(1-\frac{M}{V}\right)^2.
\end{equation*}
We now integrate (\ref{eq:weth}) to find in $(u,V)$ coordinates
\begin{equation}
\label{ineq:eest}
\begin{split}
|\partial_{V} \phi_0|(u,V)&\leq |\partial_{V}\phi_0|(-\infty,V)+C\int_{-\infty}^{u} \left|\frac{\partial_{V}r}{r}\right||{r\partial_u \phi_0}|({u'},V)\,d{u'}\\
&\leq |\partial_{V}\phi_0|(-\infty,V)+C\int_{-\infty}^{u}|{r\partial_u \phi_0}|({u'},V)\,d{u'}\\
&\leq |\partial_{V}\phi_0|(-\infty,V)+  C\left[\int_{-\infty}^{u} |{r\partial_u \phi_0}|({u'},v_0)\,d{u'}+\int_{v_0}^{v} {v'}^{-q}|{r\partial_v \phi_0}|(-\infty,{v'})\,d{v'}\right]\\
\end{split}
\end{equation}
where we used in the second inequality (\ref{eq:part1dervr}) of Proposition \ref{prop:part2dervr} to estimate
\begin{equation*}
|\partial_Vr|\leq C
\end{equation*}
and we applied Proposition \ref{lm:l1} to arrive at the third inequality.

Finally, from (\ref{est:vtoV}) it follows that
\begin{equation*}
|\partial_{V}\phi_0|(-\infty,v)\leq  \left(1-\frac{M}{V}\right)^{-2} |\partial_v\phi_0|(-\infty,v)\leq C v^{2}|\partial_v\phi_0|(u,v). \qedhere
\end{equation*}
\end{proof}

Theorem \ref{thm:estpartialVphi} (ii) follows from Proposition \ref{prop:boundsphderv}.

Similarly, we have uniform boundedness of $\partial_u\phi_0$. Note that unlike $v$, the coordinate $u$ is a regular coordinate at the Cauchy horizon.

\begin{proposition}
\label{prop:boundsphderu}
Let $u_0<0$ and $v_0>0$. For $ q <1$, there exists a constant $C=C(M,u_0,v_0,q)>0$, such that for all $(u,v)\in[-\infty,u_0]\times[v_0,\infty)$,
\begin{equation}
\label{eq:estdupsi0}
|\partial_{u}\phi_0|(u,v)\leq C\left[|\partial_u\phi_0|(u,v_0)+(v_0+|u|)^{-2}F_{0;0,q}[\phi_0]\right].
\end{equation}
\end{proposition}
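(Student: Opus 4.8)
The plan is to mirror the proof of Proposition~\ref{prop:boundsphderv}, but now integrating the \emph{other} transport equation in the \emph{ingoing} direction. Concretely, I would use the wave equation in the form (\ref{eq:wez}), $\partial_v(r\partial_u\phi_0)=-\partial_u r\,\partial_v\phi_0$, and integrate it in $v$ from $v_0$ to $v$ at fixed $u$, obtaining
\[
(r\partial_u\phi_0)(u,v)=(r\partial_u\phi_0)(u,v_0)-\int_{v_0}^v \partial_u r(u,v')\,\partial_v\phi_0(u,v')\,dv'.
\]
Dividing by $r$, which is bounded above by $M$ and below by $r(u_0,v_0)>0$ throughout $D_{u_0,v_0}$ (since $\partial_u r<0$, $\partial_v r\geq 0$), the boundary term produces the desired $|\partial_u\phi_0|(u,v_0)$, so that it only remains to estimate the integral term. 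Recall that $u$ is a regular coordinate at $\mathcal{CH}^+$, so we indeed expect $\partial_u\phi_0$ to stay bounded.

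For the integral term I would use $\partial_u r=-\Omega^2$ together with the expansion (\ref{eq:rnOmegainverseexp}), which gives $\Omega^2(u,v')\le C(v'+|u|)^{-2}$. Writing $|\partial_v\phi_0|=r^{-1}|r\partial_v\phi_0|$ and inserting the weight $v'^{-q}$, I would estimate
\[
\int_{v_0}^v \Omega^2|\partial_v\phi_0|\,dv'\le C\Big(\sup_{v'\ge v_0}\Omega^2(u,v')\,v'^{q}\Big)\int_{v_0}^v v'^{-q}|r\partial_v\phi_0|(u,v')\,dv'.
\]
The crucial structural feature, absent in Proposition~\ref{prop:boundsphderv} where the relevant coefficient $\partial_V r$ is only $O(1)$ by (\ref{eq:part1dervr}), is that here $\partial_u r=-\Omega^2$ carries genuine decay, and it is exactly this factor that should furnish the weight $(v_0+|u|)^{-2}$ appearing in (\ref{eq:estdupsi0}).

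It then remains to bound $\int_{v_0}^v v'^{-q}|r\partial_v\phi_0|(u,v')\,dv'$, and for this I would apply the weighted $L^1$ estimate (\ref{eq:sphsymmeestimate1}) of Proposition~\ref{lm:l1} with $p=0$, taking the characteristic boundary data on $\mathcal{H}^+$ (i.e.\ $u=-\infty$) and on $\uline{H}_{v_0}$, so that the $\partial_u$-boundary integral is taken along $\uline{H}_{v_0}$ and the weighted $\partial_v$-boundary integral along $\mathcal{H}^+$; this controls the integral by $C\,F_{0;0,q}[\phi_0]$. Combining the three steps yields (\ref{eq:estdupsi0}).

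The main obstacle is the supremum bound $\sup_{v'\ge v_0}\Omega^2(u,v')\,v'^{q}\le C(v_0+|u|)^{-2}$ that is needed to produce \emph{precisely} the weight $(v_0+|u|)^{-2}$ while retaining the $v'^{-q}$-weighted norm that Proposition~\ref{lm:l1} supplies. Using $\Omega^2(u,v')\le C(v'+|u|)^{-2}$ this reduces to controlling $\sup_{v'\ge v_0} v'^{q}(v'+|u|)^{-2}$. For $q\le 0$ this is immediate, since $v'^{q}(v'+|u|)^{-2}$ is then decreasing and attains its maximum at $v'=v_0$, giving exactly $C(v_0+|u|)^{-2}$; for $q\in(0,1)$ the product develops an interior maximum at $v'\sim|u|$ and the clean power $(v_0+|u|)^{-2}$ has to be recovered using the detailed decay of $\Omega^2$ rather than the crude bound alone. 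This weight bookkeeping is the only genuinely delicate point; the remaining ingredients—the fundamental theorem of calculus and the two-sided bounds on $r$ on $D_{u_0,v_0}$ recorded in Section~\ref{sec:geom}—are routine.
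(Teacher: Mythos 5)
Your proposal follows essentially the same route as the paper's proof: integrate the transport equation (\ref{eq:wez}) in $v$ at fixed $u$, extract the decaying factor $|\partial_u r|=\Omega^2$ (the paper simply uses $\sup_{v_0<v'\leq v}|\partial_ur|(u,v')=\Omega^2(u,v_0)\leq C(v_0+|u|)^{-2}$, by monotonicity of $r$ in $v$), and control the remaining $v$-integral of $|r\partial_v\phi_0|$ by $F_{0;0,q}[\phi_0]$ via Proposition \ref{lm:l1}. The weight-bookkeeping subtlety you flag for $q\in(0,1)$ is a fair observation, but the paper does not address it either: its proof is a one-line reduction to Proposition \ref{prop:boundsphderv} with the roles of $u$ and $V$ interchanged.
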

\begin{proof}
By interchange the roles of $u$ and $V$ in the proof of Proposition \ref{prop:boundsphderv}, together with
\begin{equation*}
\sup_{v_0<v'\leq v} |\partial_ur|(u,v')=\Omega^2(u,v_0)\leq C(v_0+|u|)^{-2}, 
\end{equation*}
we immediately arrive at (\ref{eq:estdupsi0}).
\end{proof}

\subsection{$L^{\infty}$ estimates for $\partial_V^2\phi_0$}
\label{sec:Linftysecondorder}
From (\ref{eq:rntortoise}) it follows that
\begin{equation}
\label{eq:relationutoU}
U=M-r(u,v_0)\leq C(v_0+|u|)^{-1}\leq C|u|^{-1}.
\end{equation}
Furthermore, since the initial data in Proposition \ref{prop:wellposedness} for $\phi_0$ are smooth, we can apply Taylor's formula at $U=0$, together with (\ref{eq:relationutoU}) to obtain
\begin{equation*}
\label{eq:sphsymdatain}
\partial_U(r\phi_0)(u,v_0)=-MH_0+\mathcal{O}(|u|^{-1})
\end{equation*}
where $H_0$ is the zeroth Aretakis constant, a conserved quantity for $\phi_0$ along $\mathcal{H}^+$; see the discussion in Section \ref{sec:priceslaw}.

Note that (\ref{eq:sphsymdatain}) is equivalent to
\begin{equation}
\label{eq:mainestduphi0}
\partial_u(r\phi_0)(u,v_0)=-M^3H_0|u|^{-2}+\log |u|\mathcal{O}(|u|^{-3}),
\end{equation}
by the estimates in Proposition \ref{prop:metricc1rn}.

Let us consider the following pointwise decay assumption for $\phi_0$ along $\mathcal{H}^+$.
\begin{align}
\label{eq:sphsymdataout}
|r \phi_0|(-\infty,v)&\leq C_1v^{-s},
\end{align}
where $s>0$ and $0<C_1<\infty$. 

\begin{proposition}
\label{prop:asymptoticspartialuphi0}
Let the initial data for $\phi_0$ satisfy (\ref{eq:sphsymdataout}). Then we can expand in the region $|u|\geq |u_1|$, for $|u_1|$ suitably large,
\begin{equation}
\label{eq:lowerboundderupsi0}
\partial_u\phi_0(u,v)=-M^2H_0|u|^{-2}+ \mathcal{O}(|u|^{-2-\min\{s,1\}})+v^{-s}\mathcal{O}(|u|^{-2}).
\end{equation}
\end{proposition}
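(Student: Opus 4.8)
The plan is to track the quantity $\partial_u(r\phi_0)$, for which the boundary behaviour on $\uline{H}_{v_0}$ is already pinned down, rather than $\partial_u\phi_0$ directly. Starting from the transport system (\ref{eq:weth})--(\ref{eq:wez}) and using $\partial_u r=-\Omega^2$, a direct computation (differentiating $\partial_u(r\phi_0)=-\Omega^2\phi_0+r\partial_u\phi_0$ in $v$ and substituting (\ref{eq:wez})) yields the clean identity
\begin{equation*}
\partial_v\left[\partial_u(r\phi_0)\right]=-\partial_v(\Omega^2)\,\phi_0,
\end{equation*}
so that integrating from $v_0$ to $v$ gives
\begin{equation*}
\partial_u(r\phi_0)(u,v)=\partial_u(r\phi_0)(u,v_0)-\int_{v_0}^v\partial_v(\Omega^2)(u,v')\,\phi_0(u,v')\,dv'.
\end{equation*}
The boundary term is known: by (\ref{eq:mainestduphi0}) it equals $-M^3H_0|u|^{-2}+\log|u|\,\mathcal{O}(|u|^{-3})$. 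I would then pass back to $\partial_u\phi_0$ through $\partial_u\phi_0=r^{-1}\partial_u(r\phi_0)+r^{-1}\Omega^2\phi_0$, using $r^{-1}=M^{-1}+\mathcal{O}((v+|u|)^{-1})$ (which follows from $M-r=\mathcal{O}((v+|u|)^{-1})$, cf.\ (\ref{eq:relationutoU}) and (\ref{eq:rnOmegainverseexp})); this extracts the leading coefficient $-M^2H_0|u|^{-2}$, the gap between $r^{-1}$ and $M^{-1}$ feeding only into the error.

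The remaining task is to place the two corrections — the bulk integral $r^{-1}\int_{v_0}^v\partial_v(\Omega^2)\phi_0\,dv'$ and the algebraic term $r^{-1}\Omega^2\phi_0(u,v)$ — into the two advertised error classes. The sharp quantitative inputs are $\Omega^2=\mathcal{O}((v+|u|)^{-2})$ and, since $\partial_v(\Omega^2)=\tfrac{d\Omega^2}{dr}\Omega^2$ with $1-M/r=\mathcal{O}((v+|u|)^{-1})$, also $\partial_v(\Omega^2)=\mathcal{O}((v+|u|)^{-3})$, both read off from (\ref{eq:rnOmegainverseexp}). In each correction I would split $\phi_0(u,\cdot)=\phi_0(-\infty,\cdot)+\big(\phi_0(u,\cdot)-\phi_0(-\infty,\cdot)\big)$. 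The boundary-value piece is controlled by the hypothesis (\ref{eq:sphsymdataout}), i.e.\ $|\phi_0|(-\infty,v')\leq C_1M^{-1}v'^{-s}$: bounding $(v+|u|)^{-2}\leq|u|^{-2}$ turns $r^{-1}\Omega^2\phi_0(-\infty,v)$ into exactly the term $v^{-s}\mathcal{O}(|u|^{-2})$, while $\int_{v_0}^v(v'+|u|)^{-3}v'^{-s}\,dv'\lesssim|u|^{-2-\min\{s,1\}}$ handles the corresponding integral piece.

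For the bulk-remainder piece I would invoke the weighted $L^1$ estimate of Proposition \ref{lm:l1}. Since $|u'|\geq|u|$ on the range of integration, (\ref{eq:sphsymmeestimate1}) gives, for any admissible $0\leq p<1$,
\begin{equation*}
\left|\phi_0(u,v)-\phi_0(-\infty,v)\right|\leq\int_{-\infty}^{u}|\partial_u\phi_0|(u',v)\,du'\leq C|u|^{-p}\int_{-\infty}^{u}|u'|^p|r\partial_u\phi_0|(u',v)\,du'\leq C|u|^{-p},
\end{equation*}
the last integral being uniformly bounded by the initial data norms. Feeding this into the two corrections converts them into $\mathcal{O}(|u|^{-2-p})\subseteq\mathcal{O}(|u|^{-2-\min\{s,1\}})$ once $p\geq\min\{s,1\}$ is chosen. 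Assembling the leading term with these two error classes produces (\ref{eq:lowerboundderupsi0}).

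The main obstacle is precisely this separation of errors. Naively, every correction is only $\mathcal{O}((v+|u|)^{-2})=\mathcal{O}(|u|^{-2})$ and would therefore contaminate the leading coefficient; the whole point of the argument is to show, using the $\mathcal{H}^+$-decay (\ref{eq:sphsymdataout}) together with the weighted $L^1$ bound, that each correction carries either an extra factor $|u|^{-\min\{s,1\}}$ or a compensating $v^{-s}$. The most delicate book-keeping is the threshold $\min\{s,1\}$ itself: for $s<1$ one simply takes $p\geq s$, but at the borderline $s\geq1$ the remainder estimate only reaches $p<1$, so recovering the full $|u|^{-3}$ requires working at the endpoint weight and absorbing a logarithmic loss, which is the one place the estimate must be handled with care.
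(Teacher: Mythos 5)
Your skeleton coincides with the paper's own proof: the identity $\partial_v\left[\partial_u(r\phi_0)\right]=(\partial_u\partial_v r)\,\phi_0$ (your $-\partial_v(\Omega^2)\phi_0$ is the same thing, since $\partial_ur=-\Omega^2$), the boundary input (\ref{eq:mainestduphi0}), the decomposition $\partial_u\phi_0=r^{-1}\partial_u(r\phi_0)+r^{-1}\Omega^2\phi_0$, and the splitting of each correction into its horizon value (handled by (\ref{eq:sphsymdataout})) plus the remainder $\phi_0(u,v)-\phi_0(-\infty,v)$ are exactly the steps of the paper's argument. The one place you diverge is how that remainder is controlled, and this is where there is a genuine gap.

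You bound $|\phi_0(u,v)-\phi_0(-\infty,v)|\leq C|u|^{-p}$ by inserting a weight $|u'|^{p}$ into $\int_{-\infty}^{u}|\partial_u\phi_0|(u',v)\,du'$ and invoking Proposition \ref{lm:l1}. Two problems. First, the right-hand side of (\ref{eq:sphsymmeestimate1}) contains $\int_{v_0}^{\infty}v'^{-q}|r\partial_v\phi_0|(-\infty,v')\,dv'$ with $q<1-p$; its finiteness is a decay hypothesis on $\partial_v\phi_0$ along $\mathcal{H}^+$ that is \emph{not} implied by (\ref{eq:sphsymdataout}), which only bounds $\phi_0$ itself. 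Second, even granting such finiteness, you cannot reach $p=1$: the other data integral $\int_{-\infty}^{u}|u'|^{p}|r\partial_u\phi_0|(u',v_0)\,du'$ already diverges logarithmically at $p=1$, because $\partial_u(r\phi_0)(u,v_0)\sim -M^3H_0|u|^{-2}$ by (\ref{eq:mainestduphi0}). So for $s\geq 1$ your remainder is only $\mathcal{O}(|u|^{-1+\epsilon})$ and the error in (\ref{eq:lowerboundderupsi0}) degrades to $\mathcal{O}(|u|^{-3+\epsilon})$; you flag this endpoint as ``delicate'', but the loss is not a removable logarithm --- Proposition \ref{lm:l1} simply does not apply at that exponent. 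The paper sidesteps both issues by running a bootstrap directly on the quantity $|r\phi_0(u,v)-r\phi_0(-\infty,v)|\leq 2M^3|H_0||u|^{-1}$: feeding this ansatz and (\ref{eq:sphsymdataout}) into the double integral $\int_{-\infty}^{u}\int_{v_0}^{v}\frac{\partial_u\partial_vr}{r}\,r\phi_0\,dv'du'$ (together with the $u$-integral of (\ref{eq:mainestduphi0})) returns $M^3|H_0||u|^{-1}+\mathcal{O}(|u|^{-1-\min\{1,s\}})$, which improves the constant and closes the continuity argument for $|u|$ large; this uses only (\ref{eq:sphsymdataout}) and smoothness of the data at $U=0$, and it lands exactly at the $|u|^{-1}$ rate required for the stated error. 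Replacing your appeal to Proposition \ref{lm:l1} by this bootstrap repairs the proof; the rest of your error book-keeping then goes through as written.
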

\begin{proof}
We rewrite (\ref{eq:wez}) in the form
\begin{equation}
\label{eq:waveqalt}
\partial_v\partial_u(r\phi_0)=(\partial_u\partial_vr) \phi_0.
\end{equation}
By the fundamental theorem of calculus, we have that
\begin{equation}
\label{eq:partialurpsi0}
\partial_u(r\phi_0)(u,v)=\partial_u(r\phi_0)(u,v_0)+\int_{v_0}^v\frac{\partial_u\partial_vr}{r}r\phi_0(u,v')\,dv'.
\end{equation}
Consequently, by applying the fundamental theorem of calculus once more and using (\ref{eq:mainestduphi0}), we obtain
\begin{equation}
\label{est:mainestlowerboundpsi0}
\begin{split}
r\phi_0(u,v)&=r\phi_0(-\infty,v)+\int_{-\infty}^u\partial_u(r\phi_0)(u',v)\,du'\\
&=r\phi_0(-\infty,v)+\int_{-\infty}^u\partial_u(r\phi_0)(u',v_0)\,du'+\int_{-\infty}^u\int_{v_0}^v\frac{\partial_u\partial_vr}{r}r\phi_0(u',v')\,dv'du'\\
&=r\phi_0(-\infty,v)-M^3H_0|u|^{-1}+\int_{-\infty}^u\int_{v_0}^v\frac{\partial_u\partial_vr}{r}r\phi_0(u',v')\,dv'du'+\log |u|\mathcal{O}(|u|^{-2}).
\end{split}
\end{equation}
We have that $0<\partial_u\partial_v r<C(v+|u|)^{-3}$. Let us make the bootstrap assumption
\begin{equation}
\label{ba:psi0lowerbound}
|r\phi_0(u,v)-r\phi_0(-\infty,v)|\leq 2M^3|H_0||u|^{-1},
\end{equation}
for all $v\in [v_0,\infty)$. If we use (\ref{ba:psi0lowerbound}) together with (\ref{est:mainestlowerboundpsi0}) and moreover apply (\ref{eq:sphsymdataout}), we obtain
\begin{equation*}
\begin{split}
|r\phi_0(u,v)-r\phi_0(-\infty,v)|&\leq M^3|H_0||u|^{-1}+\mathcal{O}(|u|^{-2}\log |u|)+C\int_{-\infty}^u\int_{v_0}^v\frac{1}{|u'|(v'+|u'|)^3}\,dv'du'\\
&\quad+C\int_{-\infty}^u\int_{v_0}^v\frac{1}{v'^s(v'+|u')^3}\,dv'du'\\
&\leq M^3|H_0||u|^{-1}+\mathcal{O}(|u|^{-1-\min\{1,s\}}).
\end{split}
\end{equation*}
The above estimate improves the bootstrap assumption (\ref{ba:psi0lowerbound}), if we restrict to $|u|>|u_1|$, with $|u_1|$ suitably large.

Applying (\ref{ba:psi0lowerbound}) and (\ref{eq:sphsymdataout}), we can conclude by (\ref{eq:partialurpsi0}), using again that $0<\partial_u\partial_v r<C(v+|u|)^{-3}$, 
\begin{equation}
\label{eq:mainlowerbound}
\begin{split}
\left|\partial_u(r\phi_0)(u,v)+M^3H_0|u|^{-2}\right|&\leq C\int_{v_0}^v \frac{1}{{v'}^s(v'+|u'|)^3}\,dv'+C\int_{v_0}^v \frac{1}{|u'|(v'+|u'|)^3}\,dv'\\
&\leq C|u|^{-2-\min\{1,s\}},
\end{split}
\end{equation}
for $|u|>|u_1|$, where $|u_1|$ is chosen suitably large.

The lower bound (\ref{eq:mainlowerbound}), together with (\ref{ba:psi0lowerbound}) and (\ref{eq:sphsymdataout}), therefore results in the estimate
\begin{equation*}
\begin{split}
\partial_u\phi_0&=r^{-1}\partial_u(r\phi_0)-r^{-1}(\partial_ur) \phi_0\\
&=-M^2H_0|u|^{-2}+\mathcal{O}(|u|^{-2-\min\{s\}})+v^{-s}\mathcal{O}(|u|^{-2}). \qedhere
\end{split}
\end{equation*}
\end{proof}

The next step is to use the refined estimate for $\partial_u\phi$ to obtain an estimate for $\partial_V^2\phi(u,v)$. We will need an additional assumption on the initial data for $\phi_0$ along $\mathcal{H}^+$:
\begin{equation}
\label{eq:sphsymdataout2}
|\partial_v\phi_0|(-\infty,v)\leq C_2v^{-2},
\end{equation}
where $C_2>0$ is a constant.

\begin{proposition}
\label{prop:blowup2nddersphsymm}
Let the initial data for $\phi_0$ satisfy (\ref{eq:sphsymdataout}) and (\ref{eq:sphsymdataout2}). For $q<1$ and $|u_1|$ suitably large, there exists a uniform constant $C=C(M,u_0,v_0,u_1,q)>0$ such that
\begin{equation}
\label{est:lowerboundder2vpsi0}
\left|r\partial_V^2\phi_0(u,V)-M\partial_V^2\phi_0(-\infty,V)+2H_0\log v\right|\leq CF_{0;0,q}[\phi_0]+C_0,
\end{equation}
for all $u\in (-\infty,u_1]$, where $C_0>0$ depends on the initial data for $\phi_0$.
\end{proposition}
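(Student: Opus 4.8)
The plan is to convert the wave equation into an ingoing transport equation for $r\partial_V^2\phi_0$ and integrate it from $\mathcal{H}^+$. Commuting the first-order transport equation $\partial_u(r\partial_V\phi_0)=-(\partial_V r)\partial_u\phi_0$ — the $(u,V)$-analogue of (\ref{eq:weth}) already used in Proposition \ref{prop:boundsphderv} — with $\partial_V$, and then eliminating the resulting mixed derivative via the same first-order relation $r\partial_u\partial_V\phi_0=-(\partial_V r)\partial_u\phi_0-(\partial_u r)\partial_V\phi_0$, I obtain
\begin{equation*}
\partial_u(r\partial_V^2\phi_0)=\Big[\tfrac{2(\partial_V r)^2}{r}-\partial_V^2 r\Big]\partial_u\phi_0+\Big[\tfrac{2(\partial_V r)(\partial_u r)}{r}-\partial_u\partial_V r\Big]\partial_V\phi_0.
\end{equation*}
All the geometric coefficients here are controlled by Proposition \ref{prop:part2dervr} and (\ref{eq:rnOmegainverseexp}): $\partial_V r$ is bounded by (\ref{eq:part1dervr}), $\partial_u r=-\Omega^2=\mathcal{O}((v+|u|)^{-2})$, and $\partial_u\partial_V r$ is small, while $\partial_V^2 r=\Omega^{-2}(u_0,v)\,\partial_v\partial_V r$ is the one genuinely large factor, growing because $\Omega^{-2}(u_0,v)\sim M^{-2}v^2$.

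Integrating in $u$ from the horizon and using that $r\to M$ and $\partial_V^2\phi_0(u,V)\to\partial_V^2\phi_0(-\infty,V)$ as $u\to-\infty$ at fixed $V$, the fundamental theorem of calculus gives $r\partial_V^2\phi_0(u,V)-M\partial_V^2\phi_0(-\infty,V)=\int_{-\infty}^u\partial_u(r\partial_V^2\phi_0)\,du'$. Into this I substitute the refined expansion $\partial_u\phi_0=-M^2H_0|u|^{-2}+\mathcal{O}(|u|^{-2-\min\{s,1\}})+v^{-s}\mathcal{O}(|u|^{-2})$ of Proposition \ref{prop:asymptoticspartialuphi0}, which is valid throughout the integration range since there $|u'|\geq|u_1|$, together with the uniform bound on $\partial_V\phi_0$ coming from Proposition \ref{prop:boundsphderv} and the decay hypothesis (\ref{eq:sphsymdataout2}), namely $|\partial_V\phi_0|\lesssim C_2+F_{0;0,q}[\phi_0]$.

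The crux is to extract the logarithmic contribution cleanly. The product $-\partial_V^2 r\,\partial_u\phi_0$ behaves at leading order like a positive multiple of $H_0\,\Omega^{-2}(u_0,v)\,(|u'|-|u_0|)\,|u'|^{-2}(v+|u'|)^{-2}$; since $\Omega^{-2}(u_0,v)\sim M^{-2}v^2$, in the range $|u_0|\ll|u'|\ll v$ this integrand decays only like $|u'|^{-1}$, and it is precisely the integration of this borderline term that produces the $H_0\log v$ contribution recorded in (\ref{est:lowerboundder2vpsi0}). I expect the main difficulty to lie entirely in the remaining bookkeeping: one must check that every other piece — the subleading terms of $\partial_u\phi_0$, the $\tfrac{2(\partial_V r)^2}{r}\partial_u\phi_0$ term, and the full $\partial_V\phi_0$ contribution — integrates in $u'$ to a quantity bounded uniformly by $CF_{0;0,q}[\phi_0]+C_0$. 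This should follow because each of these integrands decays strictly faster than the leading $|u'|^{-1}$ (the geometric factors supply extra powers of $(v+|u'|)^{-1}$ or $|u'|^{-1}$, making them integrable with an $|u'|$-independent bound), and because $\partial_V\phi_0$ was already shown to be uniformly bounded. The appearance of exactly $\log v$, rather than any faster growth, is the structural reflection of the conserved Aretakis constant $H_0$ entering through the leading $-M^2H_0|u|^{-2}$ behaviour of $\partial_u\phi_0$, and it is this term that must subsequently be matched against the horizon asymptotics of $M\partial_V^2\phi_0(-\infty,V)$.
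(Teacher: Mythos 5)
Your proposal is correct and follows essentially the same route as the paper: the commuted transport identity you derive for $\partial_u(r\partial_V^2\phi_0)$ is exactly the paper's equation (\ref{eq:partialudv2phi}), the integration from $\mathcal{H}^+$ is the same, and the isolation of the borderline $|u'|^{-1}$ term in $-\partial_V^2 r\,\partial_u\phi_0$ (via Propositions \ref{prop:part2dervr} and \ref{prop:asymptoticspartialuphi0}) as the source of the $-2H_0\log v$ contribution is precisely the paper's argument. The remaining terms are bounded in the paper just as you indicate, using Proposition \ref{lm:l1} and the uniform bound on $\partial_V\phi_0$ from Proposition \ref{prop:boundsphderv} together with (\ref{eq:sphsymdataout2}).
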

\begin{proof}
Consider the transport equation (\ref{eq:weth}), where we replace $v$ by $V$. We apply $\partial_V$ to both sides of the equation to arrive at
\begin{equation}
\label{eq:partialudv2phi}
\begin{split}
\partial_u(r\partial_V^2\phi_0)&=-\partial_V^2r\partial_u\phi_0-\partial_Vr\partial_V\partial_u\phi_0-\partial_u(\partial_Vr \partial_V\phi_0)\\
&=-\partial_V^2r \partial_u\phi_0-\partial_u\partial_Vr \partial_V\phi_0-2r^{-1}\partial_Vr \partial_u(r\partial_V\phi_0)+2r^{-1}\partial_Vr\partial_ur \partial_V\phi_0\\
&=-\partial_V^2r \partial_u\phi_0-\partial_u\partial_Vr \partial_V\phi_0+2r^{-1}(\partial_Vr)^2 \partial_u\phi_0+2r^{-1}\partial_Vr\partial_ur \partial_V\phi_0,
\end{split}
\end{equation}
where we used the transport equation (\ref{eq:weth}) in the last equality. 

We integrate in the $u$-direction to obtain
\begin{equation*}
\begin{split}
r\partial_V^2\phi_0(u,V)&=M\partial_V^2\phi_0(-\infty,V)-\int_{-\infty}^{u}(\partial_V^2r)\partial_u\phi_0(u',V)\,du'\\
&\quad+\int_{-\infty}^u-\partial_u\partial_Vr \partial_V\phi_0+2r^{-1}(\partial_Vr)^2 \partial_u\phi_0+2r^{-1}\partial_Vr\partial_ur \partial_V\phi_0\,du'.
\end{split}
\end{equation*}
By applying Proposition \ref{lm:l1} and Proposition \ref{prop:boundsphderv}, we can estimate
\begin{equation*}
\begin{split}
\int_{-\infty}^u-\partial_u\partial_Vr \partial_V\phi_0+2r^{-1}(\partial_Vr)^2 \partial_u\phi_0+2r^{-1}\partial_Vr\partial_ur \partial_V\phi_0\,du'&\leq \sup_{-\infty\leq u'\leq u}|\partial_V\phi_0|(u',V)+CF_{0;0,q}[\phi_0]\\
&\leq C|\partial_V\phi_0|(-\infty,V)+CF_{0;0,q}[\phi_0].
\end{split}
\end{equation*}
We rearrange the terms above to conclude that
\begin{equation*}
\begin{split}
\left|r\partial_V^2\phi_0(u,V)-M\partial_V^2\phi_0(-\infty,V)+\int_{-\infty}^{u}(\partial_V^2r)\partial_u\phi_0(u',V)\,du'\right|&\leq  C|\partial_V\phi_0|(-\infty,V)+CF_{0;0,q}[\phi_0],
\end{split}
\end{equation*}
where $C>0$ is a uniform constant.

By Proposition \ref{prop:part2dervr} we can expand
\begin{equation*}
\Omega^{2}(u_0,v)\partial_V^2 r(u,v) = 2\frac{|u|-|u_0|}{(v+|u|)^2}+(|u|-|u_0|)\log(v+|u|)\mathcal{O}_1((v+|u|)^{-3}),
\end{equation*}
where $C>0$ is a uniform constant. Hence, by moreover using Proposition \ref{prop:asymptoticspartialuphi0} and (\ref{eq:rnOmegainverseexp}), we obtain
\begin{equation*}
\begin{split}
\int_{-\infty}^{u}(\partial_V^2r)\partial_u\phi_0(u',V)\,du'&=-2H_0\Omega^{-2}(u_0,v) \int_{-\infty}^u \frac{1}{M^2|u'|(v+|u'|)^2}\,du'+\mathcal{O}_1(1)\\
&=-2H_0\log v +\mathcal{O}_1(1).
\end{split}
\end{equation*}

Applying moreover (\ref{eq:estdervpsi0}) and (\ref{eq:sphsymdataout2}), we arrive at the final estimate
\begin{equation*}
\begin{split}
\left|r\partial_V^2\phi_0(u,V)-M\partial_V^2\phi_0(-\infty,V)-2H_0\log v\right|&\leq CF_{0;0,q}[\phi_0]+C
\end{split}
\end{equation*}
for $|u|>|u_1|$, with $|u_1|$ suitably large.
\end{proof}

To determine whether $\partial_V^2\phi_0(u,v)$ is bounded, we need to find out whether the logarithmic term in (\ref{est:lowerboundder2vpsi0}) gets cancelled by the leading order term in $\partial_V^2\phi_0(-\infty,v)$. We need a more precise assumption on the asymptotics of $\phi_0$, compared to assumption (\ref{eq:sphsymdataout2}).

We consider the following asymptotic behaviour for $\phi_0$ along the event horizon, motivated by the numerics in \cite{Lucietti2013} (see the discussion in Section \ref{sec:priceslaw}):
\begin{equation}
\label{eq:asympoticsphiinitial}
\phi_0(-\infty,v)=-\frac{M^2H_0}{v}+\frac{M^3 H_0}{v^2}\log v+\mathcal{O}_{2}(v^{-2}),
\end{equation}
Note that this expansion implies in particular the assumption (\ref{eq:sphsymdataout}) with $q=1$.

Consequently, by using the expression (\ref{eq:rnOmegainverseexp}) for $(\partial_vr)^{-1}(u_0,v)=\Omega^{-2}(u_0,v)$, we obtain
\begin{equation}
\label{eq:asymppartialVphi}
\begin{split}
\partial_V\phi_0(-\infty,v)&=(\partial_vr)^{-1}(u_0,v) \left(\frac{M^2H_0}{v^2}-\frac{2M^3 H_0}{v^3}\log v+ \mathcal{O}_{1}(v^{-3})\right)\\
&=M^{-2}(v+|u_0|)^2\left(\frac{M^2H_0}{ v^2}-\frac{2 M^3H_0}{v^3}\log v+ \mathcal{O}_{1}(v^{-3})\right)\\
&\quad+4M(v+|u_0|)\log(v+|u_0)\frac{H_0}{v^2}+\mathcal{O}_{1}(v^{-1})\\
&=H_0\left(\frac{v+|u_0|}{v}\right)^2+\frac{2H_0M}{v}\log v+\mathcal{O}_{1}(v^{-1})\\
&=H_0+\frac{2H_0M}{v}\log v+\mathcal{O}_{1}(v^{-1}).
\end{split}
\end{equation}
The above expansion implies the assumption (\ref{eq:sphsymdataout2}).

We take one more $v$-derivative to obtain
\begin{equation*}
\partial_v\partial_V\phi_0(-\infty,v)=-\frac{2H_0M}{v^2}\log v+\mathcal{O}(v^{-2}).
\end{equation*}
Hence,
\begin{equation}
\label{eq:asymppartialV2phi}
M\partial_V^2\phi_0(-\infty,v)=-2H_0 \log v+\mathcal{O}(1).
\end{equation}

We can use the asymptotics in (\ref{eq:asymppartialVphi}) together with the asymptotics in Proposition \ref{prop:asymptoticspartialuphi0} to first of all obtain the asymptotics of $\partial_V\phi_0(u,v)$.
\begin{proposition}
\label{cor:asymppartialVphieverywhere}
Let the initial data for $\phi_0$ satisfy the asymptotic behaviour (\ref{eq:asympoticsphiinitial}). Then for $|u_1|$ suitably large,
\begin{align*}
\partial_V\phi_0(u,v)&=H_0+\frac{2H_0}{v}\log v+\mathcal{O}(|u|^{-1})+\mathcal{O}(v^{-1}),\\
\partial_V(r\phi_0)(u,v)&=MH_0+\frac{2MH_0}{v}\log v+ \mathcal{O}(v^{-1}),
\end{align*}
for all $|u|>|u_1|$.
\end{proposition}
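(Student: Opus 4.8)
The plan is to integrate the transport equation (\ref{eq:weth}), rewritten in the regular outgoing coordinate $V$, along the ingoing direction, starting from the event horizon $\{u=-\infty\}$, where the asymptotics (\ref{eq:asymppartialVphi}) are already known. Since $V=V(v)$ is a function of $v$ alone, $\partial_V$ commutes with $\partial_u$, so (\ref{eq:weth}) takes the form $\partial_u(r\partial_V\phi_0)=-\partial_Vr\,\partial_u\phi_0$. Integrating in $u'$ from $-\infty$ to $u$ at fixed $v$, and using that $r\to M$ as $u'\to-\infty$, I obtain
\[
r\partial_V\phi_0(u,v)=M\partial_V\phi_0(-\infty,v)-\int_{-\infty}^u\partial_Vr\,\partial_u\phi_0(u',v)\,du'.
\]
The first term on the right is governed by (\ref{eq:asymppartialVphi}), while the integral is to be estimated using the interior asymptotics for $\partial_u\phi_0$ from Proposition \ref{prop:asymptoticspartialuphi0} together with the expansion of $\partial_Vr$ from Proposition \ref{prop:part2dervr}.

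Next I would substitute $\partial_u\phi_0(u',v)=-M^2H_0|u'|^{-2}+\mathcal{O}(|u'|^{-3})+v^{-1}\mathcal{O}(|u'|^{-2})$, which is Proposition \ref{prop:asymptoticspartialuphi0} with $s=1$ (the decay implied by (\ref{eq:asympoticsphiinitial})), and $\partial_Vr=1-2\tfrac{|u'|-|u_0|}{v+|u'|}+\ldots$ into the integral. The constant part of $\partial_Vr$ against the leading $-M^2H_0|u'|^{-2}$ integrates via $\int_{-\infty}^u|u'|^{-2}\,du'=|u|^{-1}$ and contributes only at order $\mathcal{O}(|u|^{-1})$; the error pieces $\mathcal{O}(|u'|^{-3})$ and $v^{-1}\mathcal{O}(|u'|^{-2})$ integrate to $\mathcal{O}(|u|^{-2})$ and $\mathcal{O}(v^{-1}|u|^{-1})$ and are harmless. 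The genuinely dangerous contribution is the cross term between $-2\tfrac{|u'|-|u_0|}{v+|u'|}$ and $-M^2H_0|u'|^{-2}$, which I would evaluate explicitly by partial fractions, $\int_{|u|}^\infty\frac{dw}{(v+w)w}=\frac1v\log\frac{v+|u|}{|u|}$, producing a term of size $\frac{\log(v+|u|)}{v}$. I note that the bootstrap region $|u|>|u_1|$ is exactly what is needed for Proposition \ref{prop:asymptoticspartialuphi0} to apply along the entire ray of integration.

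The main obstacle is precisely this logarithmic bookkeeping. Both the horizon term $M\partial_V\phi_0(-\infty,v)$ and the integral generate contributions of order $\frac{\log v}{v}$, respectively $\frac{\log(v+|u|)}{v}$, and one cannot replace $\log(v+|u|)$ by $\log v$ prematurely, since the discrepancy reappears as $|u|$-dependence. The delicate point is therefore to add these competing contributions with their exact coefficients, keeping $\log\frac{v+|u|}{|u|}$ unexpanded, and only then to expand, verifying that the net leading logarithm is $\frac{2H_0}{v}\log v$ as claimed while everything else genuinely collapses into $\mathcal{O}(|u|^{-1})+\mathcal{O}(v^{-1})$. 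Dividing through by $r=M+\mathcal{O}((v+|u|)^{-1})$, using (\ref{eq:rnOmegainverseexp}) and (\ref{eq:relationutoU}), then yields the first asserted asymptotic for $\partial_V\phi_0$.

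Finally, for the second identity I would pass from $r\partial_V\phi_0$ to $\partial_V(r\phi_0)=(\partial_Vr)\phi_0+r\partial_V\phi_0$. By Proposition \ref{prop:boundsph} together with the interior estimate just used for $\partial_u\phi_0$, one has $\phi_0(u,v)=\phi_0(-\infty,v)+\int_{-\infty}^u\partial_u\phi_0=\mathcal{O}(v^{-1})+\mathcal{O}(|u|^{-1})$, while $\partial_Vr$ is bounded by Proposition \ref{prop:part2dervr}. Hence $(\partial_Vr)\phi_0$ contributes only at order $\mathcal{O}(v^{-1})+\mathcal{O}(|u|^{-1})$ and does not perturb the leading behaviour; multiplying the first asymptotic by $r=M+\mathcal{O}((v+|u|)^{-1})$ produces the constant $MH_0$ and the logarithm $\frac{2MH_0}{v}\log v$, giving the second asymptotic and completing the argument.
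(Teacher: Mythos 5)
Your route is the paper's: integrate $\partial_u(r\partial_V\phi_0)=-\partial_Vr\,\partial_u\phi_0$ from $\mathcal{H}^+$, use (\ref{eq:asymppartialVphi}) for the horizon term and Proposition \ref{prop:asymptoticspartialuphi0} for the integrand. The gap is that you stop at the step you yourself identify as the crux. You assert that the horizon logarithm and the cross term $\frac{2M^2H_0}{v}\log\frac{v+|u|}{|u|}$ are ``competing contributions'' of order $\frac{\log v}{v}$ that must be added with exact coefficients to ``verify'' the net leading logarithm, but you never perform the addition, and the mechanism you describe is the wrong one: if you did expand $\log\frac{v+|u|}{|u|}\approx\log v-\log|u|$ for $v\gg|u|$ and added, you would find the $\frac{\log v}{v}$ terms \emph{cancel}, apparently contradicting the statement. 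The actual resolution is that the cross term is not a competitor at all: since $\log(1+x)\le x$, one has $\frac{1}{v}\log\frac{v+|u|}{|u|}=\frac1v\log\bigl(1+\tfrac{v}{|u|}\bigr)\le\frac{1}{|u|}$, so the entire cross term is already $\mathcal{O}(|u|^{-1})$ and is absorbed into the stated error, while the leading logarithm comes \emph{entirely} from $M\partial_V\phi_0(-\infty,v)$. (There is no inconsistency with the apparent cancellation, because $\frac{|u|}{v}\log\frac{v}{|u|}\le e^{-1}$.) The paper avoids the issue by recognising the $\partial_Vr\approx1$ part of the integral as a total derivative, writing $\int_{-\infty}^u\partial_Vr\,\partial_u\phi_0\,du'=\phi_0(u,v)-\phi_0(-\infty,v)+\text{error}$; its error tag $\mathcal{O}_2((v+|u|)^{-1})$ for the remainder is in fact slightly too strong (the remainder is only $\mathcal{O}(|u|^{-1})$), so your instinct to track this term explicitly was sound --- you just need the one-line bound above to close it.

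The second identity is also not reached by your argument. Bounding $(\partial_Vr)\phi_0(u,v)$ crudely by $\mathcal{O}(v^{-1})+\mathcal{O}(|u|^{-1})$ and adding it to $r\partial_V\phi_0$ yields $\partial_V(r\phi_0)=MH_0+\frac{2MH_0}{v}\log v+\mathcal{O}(v^{-1})+\mathcal{O}(|u|^{-1})$, which is strictly weaker than the claim, whose content is precisely the \emph{absence} of the $\mathcal{O}(|u|^{-1})$ error. That error cancels exactly: $r\partial_V\phi_0$ contains the term $-(\phi_0(u,v)-\phi_0(-\infty,v))$ from the integral, and writing $\partial_Vr\,\phi_0(u,v)=\phi_0(u,v)+(\partial_Vr-1)\phi_0(u,v)$, the sum collapses to $M\partial_V\phi_0(-\infty,v)+\phi_0(-\infty,v)+(\partial_Vr-1)\phi_0(u,v)$ plus the integral remainder; here $\phi_0(-\infty,v)=\mathcal{O}(v^{-1})$ by (\ref{eq:asympoticsphiinitial}) and $(\partial_Vr-1)\phi_0(u,v)=\mathcal{O}(v^{-1})$ because $|\partial_Vr-1|\lesssim\frac{|u|}{v+|u|}$ by Proposition \ref{prop:part2dervr} while $\phi_0=\mathcal{O}(v^{-1})+\mathcal{O}(|u|^{-1})$. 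This algebraic cancellation between the integrated $\partial_u\phi_0$ and the $\partial_Vr\,\phi_0$ term is the point of the paper's second display, and it is lost if $(\partial_Vr)\phi_0$ is treated as a generic perturbation.
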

\begin{proof}
By applying (\ref{eq:part1dervr}) of Proposition \ref{prop:part2dervr} we can estimate
\begin{equation*}
\begin{split}
r\partial_V\phi_0(u,v)&=M\partial_V\phi_0(-\infty,v)-\int_{-\infty}^u \partial_Vr \partial_u\phi_0(u',v)\,du'\\
&=M\partial_V\phi_0(-\infty,v)-(\phi(u,v)-\phi(-\infty,v))+\mathcal{O}_2((v+|u|)^{-1})\\
&=MH_0+\frac{2H_0M}{v}\log v+\mathcal{O}(|u|^{-1})+\mathcal{O}(v^{-1}),
\end{split}
\end{equation*}
where we used (\ref{eq:asymppartialVphi}) and Proposition \ref{prop:asymptoticspartialuphi0} to arrive at the last equality. We can rewrite the above expressions to obtain
\begin{equation*}
\begin{split}
\partial_V(r\phi_0)(u,v)&=\partial_Vr \phi_0(u,v)+r\partial_V\phi_0(u,v)\\
&=M\partial_V\phi_0(-\infty,v)+\phi(-\infty,v)+\mathcal{O}_2((v+|u|)^{-1})\\
&=MH_0+\frac{2H_0M}{v}\log v+\mathcal{O}(v^{-1}).\qedhere
\end{split}
\end{equation*}
\end{proof}
\begin{remark}
Recall from the discussion in Section \ref{sec:priceslaw} that the limit $\lim_{r\to M} \partial_r(r\phi_0)(v,r)$ is independent of $v$ and is therefore conserved along $\mathcal{H}^+$. Indeed, we have that
\begin{equation*}
MH_0=\lim_{r\to M} \partial_r(r\phi_0)(v,r).
\end{equation*}
By Proposition \ref{cor:asymppartialVphieverywhere}, $\phi_0$ is $C^1$-extendible beyond $\mathcal{CH}^+$, so in outgoing Eddington-Finkelstein coordinates $(u,r,\theta,\varphi)$ on $\mathcal{M}_{\rm int}$ it follows similarly that the quantity $\lim_{r\to M} \partial_r(r\phi_0)(u,r)$ is independent of $u$ and therefore defines another constant:
\begin{equation*}
MH_0'=\lim_{r\to M} \partial_r(r\phi_0)(u,r).
\end{equation*}
Remarkably, by using the asymptotics in Proposition \ref{cor:asymppartialVphieverywhere}, it turns out that $H_0=H_0'$.
\end{remark}

Since the leading order term of $M\partial_V^2\phi(-\infty,v)$ in (\ref{eq:asymppartialV2phi}) \emph{exactly} cancels out the logarithmic term in (\ref{est:lowerboundder2vpsi0}), we can conclude that $\partial_V^2\phi$ is bounded at the Cauchy horizon.
\begin{proposition}
\label{cor:c2boundphi0}
Let the initial data for $\phi_0$ satisfy the asymptotic behaviour (\ref{eq:asympoticsphiinitial}). Then
\begin{equation*}
|\partial_V^2\phi_0|(u,v)\leq C_0,
\end{equation*}
everywhere in $\mathcal{M}\cap D_{u_0,v_0}$, where $C_0>0$ is a constant depending on the initial data.
\end{proposition}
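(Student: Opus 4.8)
The plan is to bound $r\partial_V^2\phi_0$ rather than $\partial_V^2\phi_0$ itself, using that in $D_{u_0,v_0}$ one has $r\geq r(u_0,v_0)>0$: since $\partial_ur<0$ and $\partial_vr\geq 0$, the minimum of $r$ is attained at the corner $(u_0,v_0)$. I would establish a bound uniform over all finite $u\in(-\infty,u_0]$ and $v\in[v_0,\infty)$ by splitting the region into $\{u\leq u_1\}$, which reaches $\mathcal{CH}^+$ near $i^+$ and is where the essential cancellation occurs, and the complementary bounded-$u$ strip $\{u_1\leq u\leq u_0\}$, handled by a transport argument.

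On $\{u\leq u_1\}$, for $|u_1|$ suitably large, Proposition \ref{prop:blowup2nddersphsymm} together with the computation $\int_{-\infty}^u(\partial_V^2r)\partial_u\phi_0\,du'=-2H_0\log v+\mathcal{O}(1)$ carried out there gives
\begin{equation*}
r\partial_V^2\phi_0(u,v)=M\partial_V^2\phi_0(-\infty,v)+2H_0\log v+\mathcal{O}(1),
\end{equation*}
with the $\mathcal{O}(1)$ controlled by $F_{0;0,q}[\phi_0]$ and the data, uniformly in $u$. On the other hand, differentiating the assumed horizon asymptotics (\ref{eq:asympoticsphiinitial}) twice and converting $\partial_v$ into $\partial_V$ through the expansion (\ref{eq:rnOmegainverseexp}) of $\Omega^{-2}$ yields exactly (\ref{eq:asymppartialV2phi}), namely $M\partial_V^2\phi_0(-\infty,v)=-2H_0\log v+\mathcal{O}(1)$. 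Substituting, the two logarithmic terms cancel \emph{exactly}, leaving $|r\partial_V^2\phi_0|(u,v)\leq C_0$ throughout $\{u\leq u_1\}$. I expect this to be the crux of the proof. The cancellation is rigid: the transport contribution $2H_0\log v$ is present for every finite $u$ (its leading logarithmic coefficient being $u$-independent) and precisely offsets the divergent $-2H_0\log v$ carried by the horizon value $M\partial_V^2\phi_0(-\infty,v)$; it works only because the coefficient of the $v^{-2}\log v$ term in (\ref{eq:asympoticsphiinitial}) equals $M^3H_0$. Any other next-to-leading coefficient, or the limit $u\to-\infty$ where the transport term vanishes, would leave a surviving $\log v$ and hence blow-up, consistent with the Aretakis-type growth of $\partial_V^2\phi_0$ along $\mathcal{H}^+$ itself.

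For the strip $\{u_1\leq u\leq u_0\}$, where $u$ is bounded but $v$ may tend to infinity, I would integrate the transport equation (\ref{eq:partialudv2phi}) for $r\partial_V^2\phi_0$ in $u$ from $u_1$ to $u$. The boundary value $r\partial_V^2\phi_0(u_1,v)$ is already bounded by the previous step, so it remains to verify that the right-hand side of (\ref{eq:partialudv2phi}) is bounded uniformly in $v$ on this fixed finite interval. Here $\partial_u\phi_0$ is bounded by Proposition \ref{prop:boundsphderu} and $\partial_V\phi_0$ by Proposition \ref{prop:boundsphderv}, while the geometric coefficients $\partial_V^2r$, $\partial_u\partial_Vr$, $\partial_Vr$ and $\partial_ur$ are all bounded for bounded $u$ by Propositions \ref{prop:metricc1rn} and \ref{prop:part2dervr}; in particular the factor $\Omega^{-2}(u_0,v)\sim v^2$ hidden in $\partial_V^2r$ is exactly compensated by the $(v+|u|)^{-2}$ decay, leaving a quantity of order $|u|-|u_0|$. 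Integrating a uniformly bounded integrand over $[u_1,u]$ then gives $|r\partial_V^2\phi_0|(u,v)\leq C_0$ on the strip as well.

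Combining the two regions and dividing by $r\geq r(u_0,v_0)>0$ yields $|\partial_V^2\phi_0|(u,v)\leq C_0$ uniformly for all finite $u\leq u_0$ and $v\geq v_0$, which is the asserted bound in $\mathcal{M}_{\rm int}\cap D_{u_0,v_0}$.
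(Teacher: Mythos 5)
Your proposal is correct and follows essentially the same route as the paper: the identical logarithmic cancellation between the transport contribution $-\int_{-\infty}^{u}(\partial_V^2r)\partial_u\phi_0\,du'$ and the horizon value $M\partial_V^2\phi_0(-\infty,v)=-2H_0\log v+\mathcal{O}(1)$ handles the region $\{u\leq u_1\}$, and the finite strip $\{u_1\leq u\leq u_0\}$ is treated by integrating (\ref{eq:partialudv2phi}) from $u_1$ with bounded coefficients. The only cosmetic difference is that you bound the integrand on the strip pointwise via Propositions \ref{prop:boundsphderu} and \ref{prop:boundsphderv}, whereas the paper invokes the weighted $L^1$ estimate of Proposition \ref{lm:l1}; both are adequate.
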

\begin{proof}
The estimate (\ref{est:lowerboundder2vpsi0}), together with (\ref{eq:asymppartialV2phi}) imply immediately that
\begin{equation*}
|\partial_V^2\phi_0|(u,v)\leq C_0,
\end{equation*}
for $|u|\geq |u_1|$, where $|u_1|$ is suitably large, where $C_0>0$ is a constant depending on the initial data. 

We now integrate the expression for $\partial_u(r\partial_V^2\phi_0)$ in (\ref{eq:partialudv2phi}) from $u_1$ to $u_0$ to obtain
\begin{equation*}
\begin{split}
\left|r\partial_V^2\phi_0(u,V)-M\partial_V^2\phi_0(u_1,V)+\int_{u_1}^{u}(\partial_V^2r)\partial_u\phi_0(u',V)\,du'\right|&\leq  C|\partial_V\phi_0|(-\infty,V)+CF_{0;0,q}[\phi_0],
\end{split}
\end{equation*}
where we repeated the arguments in Proposition \ref{est:lowerboundder2vpsi0}, with $u_1$ replacing $-\infty$. As the interval $[u_1,u_0]$ has a finite length, we can use (\ref{eq:rnOmegainverseexp}) together with Proposition \ref{lm:l1} to further estimate
\begin{equation*}
\begin{split}
\int_{u_1}^{u}(\partial_V^2r)\partial_u\phi_0(u',V)\,du'&\leq C \int_{u_1}^u |u| |\partial_u\phi_0|(u',V)\,du'\\
&\leq |u_1|\int_{u_1}^u |\partial_u\phi_0|(u',V)\,du',\\
&\leq C|u_1| F_{0;0,q}[\phi_0].
\end{split}
\end{equation*}

We can now conclude that for all $u\in (-\infty,u_0)$
\begin{equation*}
\begin{split}
|\partial_V^2\phi_0|(u,V)&\leq |\partial_V^2\phi_0|(u_1,V)+C|\partial_V\phi_0|(-\infty,V)+C|u_1|F_{0;0,q}[\phi_0]\\
&\leq C_0. \qedhere
\end{split}
\end{equation*}
\end{proof}

\begin{remark}
Let $H_0\neq 0$. If the leading-order term in $M\partial_V^2\phi_0(-\infty,v)$ is \uline{any} different from (\ref{eq:asymppartialV2phi}), the estimate (\ref{est:lowerboundder2vpsi0}) implies \emph{blow-up} of $|\partial_V^2\phi_0|(u,v)$ at $\mathcal{CH}^+$, for $|u|>|u_1|$. 

If $H_0=0$, the precise constant in the leading-order term of the asymptotics of $M\partial_V^2\phi_0(-\infty,v)$ does not matter and we only need to assume uniform boundedness of $M\partial_V^2\phi_0(-\infty,v)$ to conclude that $\partial_V^2\phi_0(u,v)$ is uniformly bounded for all $u\in (-\infty,u_0)$.
\end{remark}

We have now proved Theorem \ref{thm:bound2nddersphsymm}.

\subsection{Regularity at $\mathcal{CH}^+$}
\label{sec:regularityatCH+}
From the above sections, we can infer that derivatives of $\phi_0$ up to second order remain bounded at $\mathcal{CH}^+$, with respect to $(u,V,\theta,\varphi)$ coordinates that are regular at $\mathcal{CH}^+$, if we assume appropriate asymptotics along $\mathcal{H}^+$ and suitable regularity along $\uline{H}_{v_0}$. Under these assumptions, we will furthermore show in this section that $\phi_0$ can be extended as a $C^2$ function beyond $\mathcal{CH}^+$.

We first show that $\phi_0$ can be extended as a $C^0$ function beyond $\mathcal{CH}^+$.
\begin{proposition}
\label{prop:sphsymmc0}
Let the initial data for $\phi_0$ satisfy $F_{0;p,0}[\phi_0]<\infty$, for $0\leq p<1$. Then $\phi_0$ can be extended as a $C^0$ function beyond $\mathcal{CH}^+$.
\end{proposition}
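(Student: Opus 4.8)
The plan is to define candidate boundary values for $\phi_0$ along $\mathcal{CH}^+$ via a limit in the regular outgoing coordinate $V$, and then verify continuity of the resulting extension. Throughout I work in the $(u,V,\theta,\varphi)$ coordinates of Section \ref{sec:geom}, which are regular across $\mathcal{CH}^+=\{V=M\}$, and I use that on $\mathcal{M}_{\rm int}\cap D_{u_0,v_0}$ the area radius is bounded below by a positive constant: by the monotonicity $\partial_u r<0$, $\partial_v r\geq 0$, the infimum of $r$ over the region is attained at the corner $(u_0,v_0)$, so $r\geq r(u_0,v_0)=:c>0$.

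First I would establish, for each fixed $u$, that the limit $\Phi(u):=\lim_{v\to\infty}\phi_0(u,v)$ exists. Applying Proposition \ref{lm:l1}, estimate (\ref{eq:sphsymmeestimate1}), with the weight $p$ of the hypothesis and $q=0$ (which satisfies $-1<q<1$ and $0\leq p<1-q$), and using the reference form along $\mathcal{H}^+$ exactly as in the proof of Proposition \ref{prop:boundsph}, gives $\int_{v_0}^{v}|r\partial_v\phi_0|(u,v')\,dv'\leq C\,F_{0;p,0}[\phi_0]$ uniformly in $u$ and $v$. Dividing by $r\geq c$ yields $\int_{v_0}^{\infty}|\partial_v\phi_0|(u,v')\,dv'<\infty$, so by the Cauchy criterion $\Phi(u)$ exists, with the quantitative tail bound $|\phi_0(u,v)-\Phi(u)|\leq \int_{v}^{\infty}|\partial_v\phi_0|(u,v')\,dv'\to 0$ as $v\to\infty$. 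I then define the extension $\bar\phi_0$ on $[-\infty,u_0]\times[V(v_0),M]$ by $\bar\phi_0=\phi_0$ for $V<M$ and $\bar\phi_0(u,M)=\Phi(u)$.

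The key step is continuity of $\bar\phi_0$ at a boundary point $(u_*,M)$. Given $(u_n,V_n)\to(u_*,M)$ I would split
\[
\bar\phi_0(u_n,V_n)-\Phi(u_*)=\bigl[\phi_0(u_n,V_n)-\phi_0(u_*,V_n)\bigr]+\bigl[\phi_0(u_*,V_n)-\Phi(u_*)\bigr].
\]
The second bracket tends to $0$ by the definition of $\Phi(u_*)$ together with the tail bound from the previous step. For the first bracket I write it as $\int_{u_*}^{u_n}\partial_u\phi_0(u',V_n)\,du'$ and invoke Proposition \ref{prop:boundsphderu}, which gives $|\partial_u\phi_0|(u',v)\leq C\bigl[|\partial_u\phi_0|(u',v_0)+(v_0+|u'|)^{-2}F_{0;0,q}[\phi_0]\bigr]=:g(u')$ with $g$ independent of $v$ (hence of $V_n$) and integrable over $(-\infty,u_0)$, since $\int|\partial_u\phi_0|(\cdot,v_0)\leq F_{0;0,0}[\phi_0]<\infty$ and $(v_0+|u'|)^{-2}$ is integrable; note $F_{0;0,q}[\phi_0]\leq C\,F_{0;p,0}[\phi_0]<\infty$ for any $0\le q<1$. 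Thus the first bracket is bounded by $C\int_{[u_*,u_n]}g$, which tends to $0$ as $u_n\to u_*$ by absolute continuity of the integral of $g$. Taking $V_n=M$ in the same argument also shows $\Phi$ is continuous, so $\bar\phi_0$ is continuous on the whole closed region including $\mathcal{CH}^+$; finally, since $\mathcal{CH}^+$ is a regular boundary of $\mathcal{M}'$ in $(u,V)$ coordinates, extending the continuous boundary value $\Phi(u)$ into $V>M$ (e.g.\ constantly in $V$) furnishes a $C^0$ extension beyond $\mathcal{CH}^+$.

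The main obstacle is obtaining control \emph{uniform in} $u$ as one approaches $\mathcal{CH}^+$ under the weak hypothesis $F_{0;p,0}[\phi_0]<\infty$. The naive route — bounding $\partial_V\phi_0$ and integrating in $V$ — is unavailable here, since the estimate of Proposition \ref{prop:boundsphderv} requires pointwise decay of $v^2\partial_v\phi_0$ along $\mathcal{H}^+$, which is not assumed. The resolution, and the crux of the proof, is to avoid transversal-in-$V$ control entirely: I combine the $v$-integrated bound on $|r\partial_v\phi_0|$ (which converges pointwise in $u$ and yields the limit $\Phi$) with the pointwise $\partial_u$-bound that is uniform in $v$ and admits an integrable majorant in $u$, using the latter to absorb the variation in the $u$-direction near $\mathcal{CH}^+$.
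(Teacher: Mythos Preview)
Your argument is correct and follows essentially the same route as the paper: define the boundary value along $\mathcal{CH}^+$ via the $v$-integral of $\partial_v\phi_0$ (controlled by Proposition~\ref{lm:l1} with $q=0$), then prove continuity by splitting into a $u$-variation and a $V$-variation. The one place you differ is in handling the $u$-variation: the paper bounds $\int_{u_1}^{u}|\partial_u\phi_0|(u',V)\,du'$ directly from Proposition~\ref{lm:l1}, whereas you invoke the pointwise estimate of Proposition~\ref{prop:boundsphderu} to produce a $v$-independent $L^1$ majorant $g(u')$ and then use absolute continuity of the integral. This is a cleaner way to justify why the integral over $[u_*,u_n]$ tends to zero \emph{uniformly in the varying $V_n$}, a point the paper's write-up leaves implicit; conversely, the paper avoids appealing to the extra proposition. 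Either way the substance is the same.
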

\begin{proof}
We first need to show that we can extend $\phi_0$ as a function to the entire region $D_{u_0,v_0}$. That is to say, $\phi_0$ is well-defined on $\mathcal{CH}^+$.

Let $V_0:=V(v_0)$ and define,
\begin{equation*}
\phi_0(u,M):=\phi_0(u,V_0)+\lim_{V\to M} \int_{V_0}^V\partial_V\phi_0(u,V')\,dV'.
\end{equation*}
Since,
\begin{equation*}
\begin{split}
\int_{V_0}^M\partial_V\phi_0(u,V')\,dV'&=\int_{v_0}^{\infty}\partial_v\phi_0(u,v')\,dv'\\
&\leq CF_{0;p,0}[\phi_0],
\end{split}
\end{equation*}
for some $0\leq p<1$, by Proposition \ref{eq:sphsymmeestimate1}, we can conclude that $\phi_0(u,M)$ is well-defined, for all $-\infty<u<u_0$, if $F_{0;p,0}[\phi_0]<\infty$.

We will now show that $\phi_0$ is a continuous function on $D_{u_0,v_0}$. Let $(u_1,V_1)$ satisfy $-\infty\leq u_1 <u_0$ and $V_0\leq V_1\leq M$. Without loss of generality, we assume $u>u_1$ and $V<V_1$, so that
\begin{equation*}
\begin{split}
|\phi_0(u_1,V_1)-\phi_0(u,V)|&\leq |\phi_0(u_1,V)-\phi_0(u,V)|+|\phi_0(u_1,V)-\phi_0(u_1,V_1)|\\
&\leq \int_{u_1}^u |\partial_u\phi_0|(u',V)\,du'+\int_{V}^{V_1}|\partial_V\phi_0|(u_1,V')\,dV'.
\end{split}
\end{equation*}

Since,
\begin{align*}
\int_{u_1}^u |\partial_u\phi_0|(u',V)\,du'&\leq \tilde{C}(u,u_1)F_{0;p,0}[\phi_0],\\
\int_{V}^{V_1}|\partial_V\phi_0|(u_1,V')\,dV'&\leq CF_{0;p,0}[\phi_0],
\end{align*}
where $C,\tilde{C}>0$ are constants and $\tilde{C}>0$ depends on $u$ and $u_1$, we can conclude that
\begin{equation*}
\begin{split}
\lim_{(u,V)\to(u_1,V_1)}|\phi_0(u_1,V_1)-\phi_0(u,V)|&\leq \lim_{(u,V)\to (u_1,V_1)}\int_{u_1}^u |\partial_u\phi_0|(u',V)\,du'\\
&\quad+\lim_{V\to V_1}\int_{V}^{V_1}|\partial_V\phi_0|(u_1,V')\,dV'\\
&=0.
\end{split}
\end{equation*}
The function $\phi_0$, extended to $D_{u_0,v_0}$, is therefore continuous.
\end{proof}

Under stronger assumptions on the initial data, we can in fact conclude that $\phi_0$ is $C^1$-extendible. We will show that $\partial_V\phi_0$ is $C^1$-extendible. Continuity of $\partial_u\phi_0$ follows from commuting $\square_g$ with $\partial_u$. See also Section \ref{sec:unifboundcomm}.
\begin{proposition}
\label{prop:sphsymmc1}
Let the initial data for $\phi_0$ satisfy $F_{0;p,0}[\phi_0]<\infty$ for some $p>0$, such that moreover $\lim_{v\to \infty}v^2\partial_v\phi_0|_{\mathcal{H}^+}(v)$ is well-defined. Then $\partial_V\phi_0$ can be extended as a $C^0$ function beyond $\mathcal{CH}^+$.
\end{proposition}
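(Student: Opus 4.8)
The plan is to establish continuity of $\partial_V\phi_0$ up to $\mathcal{CH}^+=\{V=M\}$ from a representation formula, exactly as in the derivation of (\ref{ineq:eest}). Rewriting the transport equation (\ref{eq:weth}) in $(u,V)$ coordinates gives $\partial_u(r\partial_V\phi_0)=-(\partial_V r)\partial_u\phi_0$, and integrating in $u$ from $\mathcal{H}^+$ (where $r=M$) to an interior point yields
\begin{equation*}
r\partial_V\phi_0(u,V)=M\partial_V\phi_0(-\infty,V)-\int_{-\infty}^u(\partial_V r)(u',V)\,\partial_u\phi_0(u',V)\,du',
\end{equation*}
the convergence of the boundary integration being guaranteed by the tail estimate below. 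Since $r$ is continuous and bounded below by $r(u_0,v_0)>0$ on $D_{u_0,v_0}$ with $r=M$ on $\mathcal{CH}^+$, it suffices to show that the right-hand side extends continuously to $V=M$; then $\partial_V\phi_0=r^{-1}[\,\cdots]$ does as well.

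For the boundary term I would use $\partial_V\phi_0(-\infty,v)=\frac{dv}{dV}\partial_v\phi_0(-\infty,v)$ together with $\frac{dv}{dV}=\Omega^{-2}(u_0,v)=M^{-2}v^2(1+o(1))$, which follows from (\ref{eq:rnOmegainverseexp}) and (\ref{est:vtoV}). Thus $\partial_V\phi_0(-\infty,V)=M^{-2}\,v^2\partial_v\phi_0(-\infty,v)\,(1+o(1))$, and the hypothesis that $\lim_{v\to\infty}v^2\partial_v\phi_0|_{\mathcal{H}^+}(v)$ exists forces $\lim_{V\to M}\partial_V\phi_0(-\infty,V)$ to exist; the logarithmic corrections in (\ref{eq:rnOmegainverseexp}) are of lower order and do not obstruct this.

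The main obstacle is the integral term, whose $u'$-domain is non-compact. I would split $\int_{-\infty}^u=\int_{-\infty}^{u_*}+\int_{u_*}^u$ and control each piece. For the tail, propagating the weighted $L^1$ bound via Proposition \ref{lm:l1}, estimate (\ref{eq:sphsymmeestimate1}) with $q=0$ and $0<p<1$ (which we may assume), gives $\int_{-\infty}^u|u'|^p|r\partial_u\phi_0|(u',v)\,du'\le CF_{0;p,0}[\phi_0]$ uniformly in $(u,v)$; combined with $|\partial_V r|\le C$ from (\ref{eq:part1dervr}) of Proposition \ref{prop:part2dervr}, this yields $\int_{-\infty}^{u_*}|(\partial_V r)\partial_u\phi_0|\,du'\le C|u_*|^{-p}F_{0;p,0}[\phi_0]$, which tends to $0$ as $u_*\to-\infty$ \emph{uniformly} in $V$. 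For the compact piece, I would show that $\partial_u\phi_0(u',V)$ converges as $V\to M$ for each fixed $u'$: writing (\ref{eq:wez}) in $(u,V)$ coordinates as $\partial_V(r\partial_u\phi_0)=-(\partial_u r)\partial_V\phi_0$, the right-hand side is absolutely integrable in $V$ on $[V_0,M]$, since $|\partial_u r|=\Omega^2$, $|\partial_V\phi_0|\le C$ by Proposition \ref{prop:boundsphderv}, and $\int_{v_0}^\infty\Omega^2(u',v)\Omega^2(u_0,v)\,dv<\infty$. Hence $r\partial_u\phi_0(u',\cdot)$, and therefore $\partial_u\phi_0(u',\cdot)$, has a limit at $V=M$; being also uniformly bounded by Proposition \ref{prop:boundsphderu}, dominated convergence shows each $\int_{u_*}^u(\partial_V r)\partial_u\phi_0\,du'$ is continuous up to $V=M$. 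Combining the two pieces, the full integral is a uniform limit of functions continuous up to $\mathcal{CH}^+$, hence continuous there.

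I expect the convergence of $\partial_u\phi_0$ as $V\to M$ on the non-compact slice, kept compatible with the uniform tail control, to be the crux of the argument: this is exactly where the non-compactness of the $u$-integration meets the degeneration of the coordinates at $i^+$, and it is the weighted $L^1$ machinery of Proposition \ref{lm:l1} that renders the tail uniformly small. Once continuity of the right-hand side of the representation formula is established, continuity of $\partial_V\phi_0$ on $D_{u_0,v_0}\cup\mathcal{CH}^+$ follows, giving the asserted $C^0$ extension across $\mathcal{CH}^+$, with the far-side extension into $\mathcal{M}'_{\rm ext}$ handled exactly as for $\phi_0$ itself in Proposition \ref{prop:sphsymmc0}.
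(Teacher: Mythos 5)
Your proposal is correct, but it is organized differently from the paper's proof, which simply reruns the continuity scheme of Proposition \ref{prop:sphsymmc0} with $\phi_0$ replaced by $\partial_V\phi_0$: there the crux is the $v$-direction increment bound (\ref{eq:c12}), obtained by computing $\partial_v\big(\partial_u(r\partial_V\phi_0)\big)$ explicitly and controlling a double integral $\int_{-\infty}^{u_1}\int_{v_0}^{\infty}$, with the hypothesis on $\lim_{v\to\infty}v^2\partial_v\phi_0|_{\mathcal{H}^+}$ entering through $\lim_{v\to\infty}\partial_V\phi_0(-\infty,v)$. You instead integrate $\partial_u(r\partial_V\phi_0)=-\partial_Vr\,\partial_u\phi_0$ only once, from $\mathcal{H}^+$, to get a representation formula (the same one the paper uses later in Proposition \ref{cor:asymppartialVphieverywhere}), and then obtain continuity at $V=M$ by exhibiting the integral term as a uniform limit: the tail over $(-\infty,u_*]$ is $O(|u_*|^{-p})$ uniformly in $V$ by the propagated weighted $L^1$ bound of Proposition \ref{lm:l1} (this is exactly where $p>0$ is needed, in both proofs), while on the compact piece the pointwise convergence of $\partial_u\phi_0(u',\cdot)$ at $V=M$ comes from the second transport equation (\ref{eq:wez}) together with the uniform bounds of Propositions \ref{prop:boundsphderv} and \ref{prop:boundsphderu}. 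The ingredients (the two transport equations, the weighted $L^1$ machinery, the limit hypothesis for the boundary term) coincide, but your packaging avoids the second-order commuted computation entirely and is arguably more elementary and self-contained; the paper's increment-based scheme has the advantage that it sets up precisely the template ((\ref{eq:c21})--(\ref{eq:c22})) that is reused for the $C^2$ statement in Proposition \ref{prop:sphsymmc2}, where a representation-formula shortcut is no longer available because of the logarithmic cancellation. One small point worth making explicit in your write-up: joint continuity of the truncated integrals at a point $(u_1,M)$ requires passing the limit $V\to M$ under the integral sign over $[u_*,u_0]$, which your uniform bound on $\partial_u\phi_0$ justifies by dominated convergence; as stated this is implicit but sound.
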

\begin{proof}
We repeat the arguments of Proposition \ref{prop:sphsymmc0} with $\phi_0$ replaced by $\partial_V\phi_0$. We can therefore conclude that $\partial_V \phi_0$ is continuous everywhere in $D_{u_0,v_0}$, including at $\mathcal{CH}^+$, if we can bound
\begin{align}
\label{eq:c11}
\left|\int_{u_1}^u \partial_u\partial_V\phi_0(u',v)\,du'\right|&\leq \tilde{C}(u,u_1)D_0,\\
\label{eq:c12}
\left|\int_{v_0}^{\infty}\partial_v\partial_V\phi_0(u_1,v')\,dv'\right|&\leq CD_0,
\end{align}
with $u_0\leq u_1<u$ and $v_0\leq v<\infty$, where $D_0>0$ is a constant depending on the initial data. 

In order to estimate (\ref{eq:c11}), we use that $\partial_u(r\partial_V\phi_0)=-\partial_Vr \partial_u\phi_0$ and split
\begin{equation*}
\begin{split}
\left|\int_{u_1}^u \partial_u\partial_V\phi_0(u',v)\,du'\right|&\leq \left|\int_{u_1}^u \partial_u(r\partial_V\phi_0)(u',v)\,du'\right|+C\int_{u_1}^u |\partial_u\phi_0|\,du'\\
&\leq C\int_{u_1}^u |\partial_u\phi_0|\,du'\leq \tilde{C}(u,u_1)F_{0;p,0}[\phi_0],
\end{split}
\end{equation*}
for $0\leq p<1$. We can estimate (\ref{eq:c12}) by first computing $\partial_u(\partial_v(r\partial_V\phi_0))$. We have that
\begin{equation*}
\begin{split}
\partial_v(\partial_u(r\partial_V\phi_0))&=-\partial_v(\partial_V r \partial_u\phi_0)\\
&=-\partial_v(r^{-1}\partial_Vr)r\partial_u\phi_0-r^{-1}\partial_Vr \partial_v(r\partial_u\phi_0)\\
&=-\partial_v(r^{-1}\partial_Vr)r\partial_u\phi_0+r^{-1}\partial_Vr \partial_ur\partial_v\phi_0.
\end{split}
\end{equation*}
Therefore, by using the expression for $\partial_v\partial_Vr$ from Proposition \ref{prop:part2dervr}, we can estimate
\begin{equation*}
\begin{split}
\left|\int_{v_0}^{\infty}\partial_v\partial_V\phi_0(u_1,v')\,dv'\right|&=\left|\int_{v_0}^{\infty}\partial_v\partial_V\phi_0(-\infty,v')\,dv'+\int_{-\infty}^u\int_{v_0}^{\infty}\partial_u\partial_v\partial_V\phi_0(u',v')\,dv'\,dv'du'\right|\\
&\leq \left|\int_{v_0}^{\infty}\partial_v\partial_V\phi_0(-\infty,v')\,dV'\right|\\
&\quad+C\int_{-\infty}^{u_1}\int_{v_0}^{\infty}|u|(v+|u|)^{-2}|\partial_u\phi_0|+(v+|u|)^{-2}\partial_v\phi_0\,dv'du'\\
&\leq \left|\lim_{v\to \infty}\partial_V\phi_0(-\infty,v)-\partial_V\phi_0(-\infty,v_0)\right|+CF_{0;p,0}[\phi_0],
\end{split}
\end{equation*}
for $0<p<1$.

If we assume that $F_{0;p,0}[\phi_0]<\infty$ for $p>0$ and that $\lim_{v\to \infty}\partial_V\phi_0(-\infty,v)$ is well-defined, then we can conclude that $\partial_V\phi_0$ is continuous at $\mathcal{CH}^+$.
\end{proof}

In order to conclude that $\phi_0$ is $C^2$-extendible beyond $\mathcal{CH}^+$, we are left with showing that $\partial_V^2\phi_0$, $\partial_V\partial_u\phi_0$ and $\partial_u^2\phi_0$ can be extended as continuous functions beyond $\mathcal{CH}^+$. Continuous extendibility of $\partial_u^2\phi_0$ follows from the estimates in Section \ref{sec:unifboundcomm}. Continuous extendibility of $\partial_V\partial_u\phi_0$ follows by recalling that
\begin{equation*}
\begin{split}
r\partial_u\partial_V\phi_0&=\partial_u(r\partial_V\phi_0)-\partial_u r \partial_V\phi_0\\
&=-\partial_Vr \partial_u \phi_0-\partial_u r \partial_V\phi_0,
\end{split}
\end{equation*}
where we used (\ref{eq:weth}). We therefore only have to show that $\partial_V^2\phi_0$ can be continuously extended beyond $\mathcal{CH}^+$. We will need to assume the asymptotics in (\ref{eq:asympoticsphiinitial}).

\begin{proposition}
\label{prop:sphsymmc2}
Let the initial data of $\phi_0$ satisfy the asymptotic behaviour (\ref{eq:asympoticsphiinitial}) and assume moreover that
\begin{equation}
\label{eq:limitwelldefined}
\lim_{v\to \infty}M\partial_V^2\phi_0(-\infty,v)+2H_0\log v\quad \textnormal{is well-defined.}
\end{equation}
Then $\partial_V^2\phi_0$ can be extended as a $C^0$ function beyond $\mathcal{CH}^+$.
\end{proposition}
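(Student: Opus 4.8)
The plan is to mirror the structure of Propositions \ref{prop:sphsymmc0} and \ref{prop:sphsymmc1}, now applied to the second transversal derivative. Since $r$ is smooth and bounded away from zero on a neighbourhood of $\mathcal{CH}^+$ in $\mathcal{M}'$, it suffices to extend $\Psi:=r\partial_V^2\phi_0$ continuously across $\mathcal{CH}^+\cap D_{u_0,v_0}$. As $\Psi$ is already smooth in $\mathcal{M}_{\rm int}$, continuity at a boundary point $(u_1,M)$ reduces to two facts: (a) the variation of $\Psi$ in the $u$-direction is small, uniformly in $V$, namely $\int_{u_1}^{u}|\partial_u\partial_V^2\phi_0|(u',V)\,du'\leq \tilde C(u,u_1)$ with $\tilde C(u,u_1)\to 0$ as $u\to u_1$; and (b) the limit $\lim_{V\to M}\Psi(u_1,V)$ exists. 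Ingredient (a) is obtained exactly as in (\ref{eq:c11}): one integrates the transport equation (\ref{eq:partialudv2phi}) for $\partial_u(r\partial_V^2\phi_0)$ in $u$ and invokes Proposition \ref{lm:l1}, Proposition \ref{prop:boundsphderv} and the metric expansions of Propositions \ref{prop:metricc1rn} and \ref{prop:part2dervr} to bound each term; the integrand is bounded uniformly in $V\in[V(v_0),M]$, so the integral over the short interval $[u_1,u]$ tends to $0$ as $u\to u_1$.

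The core of the argument is ingredient (b). Integrating (\ref{eq:partialudv2phi}) in $u$ from the event horizon gives the representation
\begin{equation*}
r\partial_V^2\phi_0(u_1,v)=M\partial_V^2\phi_0(-\infty,v)-\int_{-\infty}^{u_1}(\partial_V^2r)\,\partial_u\phi_0(u',v)\,du'+\int_{-\infty}^{u_1}G(u',v)\,du',
\end{equation*}
where $G$ denotes the three remaining terms on the right-hand side of (\ref{eq:partialudv2phi}). By the hypothesis (\ref{eq:limitwelldefined}), the first term obeys $M\partial_V^2\phi_0(-\infty,v)=-2H_0\log v+\ell+o(1)$ for some constant $\ell$ as $v\to\infty$. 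Repeating the computation of Proposition \ref{prop:blowup2nddersphsymm}, i.e.\ inserting the expansion of $\partial_V^2r$ from Proposition \ref{prop:part2dervr} and the asymptotics $\partial_u\phi_0=-M^2H_0|u'|^{-2}+\ldots$ from Proposition \ref{prop:asymptoticspartialuphi0}, yields $-\int_{-\infty}^{u_1}(\partial_V^2r)\partial_u\phi_0\,du'=2H_0\log v+(\text{remainder})$, so that the two logarithmic divergences cancel precisely. This is the same cancellation that underlies the boundedness in Proposition \ref{cor:c2boundphi0}.

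The main obstacle, and the essential new point beyond Proposition \ref{cor:c2boundphi0}, is that boundedness of the remainders is not enough: I must show that both $-\int_{-\infty}^{u_1}(\partial_V^2r)\partial_u\phi_0\,du'-2H_0\log v$ and $\int_{-\infty}^{u_1}G(u',v)\,du'$ actually \emph{converge} as $v\to\infty$. To accomplish this I would retain the next-order terms in the expansions of Propositions \ref{prop:asymptoticspartialuphi0}, \ref{cor:asymppartialVphieverywhere} and \ref{prop:part2dervr}, perform the $u'$-integrals, and verify that every subleading contribution is of the form $(\text{const})+\mathcal{O}(v^{-1}\log v)$; equivalently, one checks that for fixed $u'$ the integrands converge pointwise as $v\to\infty$ while being dominated by a function integrable in $u'$, so that dominated convergence applies. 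Here the boundedness of $\partial_V\phi_0$ and its limit $H_0$ from Proposition \ref{cor:asymppartialVphieverywhere}, together with the $|u'|$-decay of the metric coefficients $\partial_V^2r$, $\partial_u\partial_Vr$, $\partial_Vr$ and $\partial_ur$, are exactly what is needed, and it is in this step that the \emph{precise} next-to-leading asymptotics (\ref{eq:asympoticsphiinitial}), encoded in the hypothesis (\ref{eq:limitwelldefined}), becomes indispensable. Granting (a) and (b), I define $\partial_V^2\phi_0(u_1,M)$ to be the limit in (b) and conclude, via the splitting $|\partial_V^2\phi_0(u,V)-\partial_V^2\phi_0(u_1,M)|\leq|\partial_V^2\phi_0(u,V)-\partial_V^2\phi_0(u_1,V)|+|\partial_V^2\phi_0(u_1,V)-\partial_V^2\phi_0(u_1,M)|$, that the extended $\partial_V^2\phi_0$ is continuous up to and across $\mathcal{CH}^+$, as in Proposition \ref{prop:sphsymmc0}.
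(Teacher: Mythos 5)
Your proposal is correct and follows essentially the same route as the paper: both reduce continuity across $\mathcal{CH}^+$ to (a) uniform smallness of the $u$-variation of $r\partial_V^2\phi_0$ and (b) existence of the limit at $V=M$ along each $H_{u_1}$, and both obtain (b) from the transport identity (\ref{eq:partialudv2phi}) via the exact cancellation of the $2H_0\log v$ contributions, with hypothesis (\ref{eq:limitwelldefined}) supplying convergence of what remains. The only difference is cosmetic: the paper verifies (b) by first differentiating (\ref{eq:partialudv2phi}) in $v$ and showing that $\int_{v_0}^{\infty}\partial_v(r\partial_V^2\phi_0)(u_1,v')\,dv'$ converges, whereas you pass to the limit $v\to\infty$ directly in the undifferentiated identity; the two formulations are equivalent by the fundamental theorem of calculus.
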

\begin{proof}
As in the proof of Proposition \ref{prop:sphsymmc1}, continuity of $\partial_V^2\phi_0$ follows from the inequalities below:
\begin{align}
\label{eq:c21}
\left|\int_{u_2}^u \partial_u\partial_V^2\phi_0(u',v)\,du'\right|&\leq \tilde{C}(u,u_2)D_0,\\
\label{eq:c22}
\left|\int_{v_0}^{\infty}\partial_v\partial_V^2\phi_0(u_1,v')\,dv'\right|&\leq CD_0,
\end{align}
By integrating (\ref{est:lowerboundder2vpsi0}) in $u$, we conclude that
\begin{equation*}
\left|\int_{u_2}^u \partial_u(r\partial_V^2\phi_0(u',v))\,du'\right|\leq C\sup_{u_2\leq u'\leq u}|\partial_V\phi_0|(u',V)+CF_{0;0,q}[\phi_0]\leq CD_0.
\end{equation*}
The estimate (\ref{eq:c21}) immediately follows. We are left with proving (\ref{eq:c22}). 

Let us first restrict to the region $u<u_1$, where $u_1$ is chosen suitably large, in accordance with Proposition \ref{prop:asymptoticspartialuphi0}. We can estimate
\begin{equation*}
\left|\int_{v_0}^{\infty}\partial_v(r\partial_V^2\phi_0)(u_1,v')\,dv'\right|=\left|\int_{v_0}^{\infty}\left[\partial_v(r\partial_V^2\phi_0)(-\infty,v')+\int_{-\infty}^{u_1}\partial_u\partial_v(r\partial_V^2\phi_0)(u',v')\,du'\right]\,dv'\right|.
\end{equation*}

We differentiate (\ref{eq:partialudv2phi}) in $v$ to obtain
\begin{equation*}
\begin{split}
\partial_v[\partial_u(r\partial_V^2\phi_0)]&=-\partial_v(r^{-1}\partial_V^2 r)r\partial_u\phi_0-r^{-1}\partial_V^2r \partial_v(r\partial_u\phi_0)-\partial_v\partial_u\partial_Vr \partial_V\phi_0-\partial_u\partial_Vr \partial_v\partial_V\phi_0\\
&\quad+2\partial_v(r^{-2}(\partial_Vr)^2)r\partial_u\phi+2r^{-2}(\partial_Vr)^2\partial_v(r\partial_u\phi_0)+2\partial_v(r^{-1}\partial_Vr\partial_u r)\partial_V\phi_0\\
&\quad+2r^{-1}\partial_V r\partial_u r \partial_v\partial_V\phi_0\\
&=\left[-\partial_v(r^{-1}\partial_V^2 r)+2\partial_v(r^{-2}(\partial_Vr)^2)\right]r\partial_u\phi_0\\
&\quad+\left[-\partial_v\partial_u\partial_Vr+2\partial_v(r^{-1}\partial_Vr\partial_u r)\right] \partial_V\phi_0\\
&\quad\left[2r^{-1}\partial_V r\partial_u r-\partial_u\partial_Vr\right] \partial_v\partial_V\phi_0+\left[2r^{-2}(\partial_Vr)^2\partial_vr-r^{-1}\partial_V^2r\partial_vr\right]\partial_v\phi_0
\end{split}
\end{equation*}
We can bound the integral over $u$ and $v$ of most of the terms on the right-hand side to obtain
\begin{equation*}
\begin{split}
&\left|\int_{v_0}^{\infty}\left[\partial_v(r\partial_V^2\phi_0)(-\infty,v')+\int_{-\infty}^{u_1}\partial_u\partial_v(r\partial_V^2\phi_0)(u',v')\,du'\right]\,dv'\right|\\
&\leq \left|\int_{v_0}^{\infty}\left[\partial_v(r\partial_V^2\phi_0)(-\infty,v')-\int_{-\infty}^{u_1}\partial_v\partial_V^2 r\partial_u\phi_0+\partial_u\partial_v\partial_Vr\partial_V\phi_0\,du'\right]\,dv'\right|+C_0,
\end{split}
\end{equation*}
where $C_0>0$ is a constant that depends on the initial data. We used in particular the estimates in Proposition \ref{prop:blowup2nddersphsymm} to deal with the $\partial_v\partial_V\phi_0$ terms.

We differentiate the terms in Proposition \ref{prop:part2dervr} to obtain
\begin{align*}
\partial_V^2r&=2M^{-2}(|u|-|u_0|)\left(\frac{v+|u_0|}{v+|u|}\right)^2+|u|\log(v+|u|)\mathcal{O}_1((v+|u|)^{-1}),\\
\partial_v\partial_V^2r&=4M^{-2}\frac{v+|u_0|}{(v+|u|)^3}(|u|-|u_0|)^2+|u|\log(v+|u|)\mathcal{O}((v+|u|)^{-2}),\\
\partial_u\partial_v\partial_Vr&=-2(v+|u|)^{-2}+4|u|(v+|u|)^{-3}+\log(v+|u|)\mathcal{O}((v+|u|)^{-2}).
\end{align*}
By using the expressions for $\partial_u\phi_0$ and $\partial_V\phi_0$ from Proposition \ref{prop:asymptoticspartialuphi0} and Proposition \ref{cor:asymppartialVphieverywhere}, we can write
\begin{align*}
\partial_v\partial_V^2r\partial_u\phi_0&=-4H_0v(v+|u|)^{-3}+\ldots,\\
\partial_u\partial_v\partial_Vr\partial_V\phi_0&=-2H_0(v+|u|)^{-2}+4H_0|u|(v+|u|)^{-3}+\ldots\\
&=2H_0(v+|u|)^{-2}-4H_0v(v+|u|)^{-3}+\ldots,
\end{align*}
where $\ldots$ indicates all terms that are bounded after integrating in $u$ and subsequently in $v$.
Hence,
\begin{equation*}
\partial_v\partial_V^2r\partial_u\phi_0+\partial_u\partial_v\partial_Vr\partial_V\phi_0=2H_0(v+|u|)^{-2}-8H_0v(v+|u|)^{-3}+\ldots
\end{equation*}
and
\begin{equation}
\label{eq:asympint}
\begin{split}
&\int_{-\infty}^{u_1}\partial_v\partial_V^2 r\partial_u\phi_0+\partial_u\partial_vr\partial_Vr\partial_V\phi_0\,du'\\
&=2H_0(v+|u_1|)^{-1}-4H_0v(v+|u_1|)^{-2}+\ldots\\
&=-2H_0(v+|u_1|)^{-1}+\ldots,
\end{split}
\end{equation}
where $\ldots$ indicates terms that are integrable in $v$. The initial asymptotics
\begin{equation*}
r\partial_V^2\phi_0(-\infty,v')=-2H_0\log v+\mathcal{O}(1)
\end{equation*}
ensure that the leading order term in (\ref{eq:asympint}) gets cancelled, so we can estimate
\begin{equation*}
\begin{split}
&\left|\int_{v_0}^{\infty}\left[\partial_v(r\partial_V^2\phi_0)(-\infty,v')-\int_{-\infty}^{u_1}\partial_v\partial_V^2 r\partial_u\phi_0+\partial_u\partial_v\partial_Vr\partial_V\phi_0\,du'\right]\,dv'\right|\\
&\leq C_0+ \left|\int_{v_0}^{\infty}\partial_v(M\partial_V^2\phi_0+2H_0\log v)\right|\\
& \leq C_0+\left|\lim_{v\to \infty}\left[M\partial_V^2\phi_0(-\infty,v)+2H_0\log v\right]-\partial_V^2\phi_0(-\infty,v_0)-2H_0\log v_0\right|\\
&\leq C_0,
\end{split}
\end{equation*}
where we used the assumption (\ref{eq:limitwelldefined}) to arrive at the final inequality above. Note that this assumption does not follow from the asymptotics in (\ref{eq:asympoticsphiinitial}), as they only imply that $M\partial_V^2\phi_0(-\infty,v)+2H_0\log v$ is uniformly bounded everywhere along $\mathcal{H}^+$.

It now follows that 
\begin{equation}
\label{eq:mainboundc2proof}
\left|\int_{v_0}^{\infty}\partial_v(r\partial_V^2\phi_0)(u_1,v')\,dv'\right|\leq C_0.
\end{equation}

By using (\ref{eq:mainboundc2proof}) and boundedness of $u_2-u_1$, it is straightforward to show that
\begin{equation*}
\left|\int_{v_0}^{\infty}\partial_v(r\partial_V^2\phi_0)(u_2,v')\,dv'\right|\leq C_0.
\end{equation*}
and conclude that $\partial_V^2\phi_0$ is continuously extendible beyond $\mathcal{CH}^+$.
\end{proof}

\section{Energy estimates}
\label{sec:eestimatesrn}
We now consider solutions to (\ref{eq:waveqkerr}) without any symmetry assumptions. \textbf{In this section $\phi$ will always denote a solution to} (\ref{eq:waveqkerr}) \textbf{corresponding to characteristic initial data from Proposition} \ref{prop:wellposedness}. In this case, weighted $L^2$ estimates derived by using the divergence identity from Section \ref{sec:vectorfieldmethod}, will replace the role of the weighted $L^1$ estimates from Proposition \ref{lm:l1}.

Consider the vector field $N_{p,q}$,
\begin{equation}
\label{def:mainvectorfieldN}
N_{p,q}=|u|^p\partial_u+v^q\partial_v,
\end{equation}
where $0\leq p,q\leq 2$.

We can express
\begin{equation*}
N_{p,q}=v(V)^q\left(1-\frac{M}{V}\right)^2\partial_V+|u(U)|^pU^2\partial_U,
\end{equation*}
where $U$ and $V$ are the regular double-null coordinates from Section \ref{sec:geom}.
We have that
\begin{align*}
v(V)^q\left(1-\frac{M}{V}\right)^2 \sim v^q(1+v)^{-2},\\
|u(U)|^pU^2  \sim |u|^p(1+|u|)^{-2},
\end{align*}
so $N_{p,q}$ is continuous at $\mathcal{H}^+$ and $\mathcal{CH}^+$ if and only if $p,q\leq 2$. Moreover, for $p=q=2$, $N_{p,q}$ is timelike everywhere in $D_{u_0,v_0}$.

By (\ref{eq:genexprKN}), we have that
\begin{equation}
\label{eq:exprbulk}
\begin{split}
K^N[\phi]&=r^{-1}(N^v-N^u)\partial_u\phi\partial_v\phi-\frac{1}{2}\left[\partial_vN^v+\partial_uN^u+\Omega^{-2}\partial_u(\Omega^2)N^u+\Omega^{-2}\partial_v(\Omega^2)N^v\right]\\
&=r^{-1}(N^v-N^u)\partial_u\phi \partial_v\phi-\frac{1}{2}\left[ \frac{2M}{r^3}(M-r)(N^u-N^v)+\partial_u N^u+\partial_vN^v\right]|\snabla \phi|^2.
\end{split}
\end{equation}
We let $N^v=v^q$ and $N^u=|u|^p$ and split
\begin{equation*}
K^{N_{p,q}}[\phi]=K^{N_{p,q}}_{\textnormal{null}}[\phi]+K^{N_{p,q}}_{\textnormal{angular}}[\phi],
\end{equation*}
where
\begin{align*}
K^{N_{p,q}}_{\textnormal{null}}[\phi]&:=r^{-1}(v^q-|u|^p)\partial_u\phi \partial_v\phi,\\
K^{N_{p,q}}_{\textnormal{angular}}[\phi]&:=-\left[ \frac{M}{r^3}(M-r)(|u|^p-v^q)-\frac{p}{2}|u|^{p-1}+\frac{q}{2}v^{q-1}\right]|\snabla \phi|^2.
\end{align*}
\begin{proposition}
\label{prop:mainenergyestimate}
Fix $p=2$. Take $0<q\leq 2$. There exists a constant $C=C(e,M,u_0,v_0,q)>0$ such that for all $H_u$ and $\uline{H}_v$ in ${D_{u_0,v_0}}$
\begin{equation*}
\begin{split}
&\int_{H_u} J^{N_{2,q}}[\phi]\cdot \partial_v+\int_{\uline{H}_v} J^{N_{2,q}}[\phi]\cdot \partial_u\\
&\leq C\left[ \int_{\mathcal{H}^+\cap\{v\geq v_0\}} J^{N_{2,q}}[\phi]\cdot\partial_v+ \int_{\uline{H}_{v_0}} J^{N_{2,q}}[\phi]\cdot \partial_u\right]=:CE_q[\phi].
\end{split}
\end{equation*}
\end{proposition}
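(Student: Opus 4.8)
The plan is to run the vector field method with the multiplier $N_{2,q}$ and to close the estimate by the two-variable Gr\"onwall argument of Lemma \ref{lm:gronwall}. First I would apply the divergence identity (\ref{divergenceidentity}) on the rectangle $\{-\infty\le u'\le u,\ v_0\le v'\le v\}$, working in the regular coordinate $U$ near $\mathcal{H}^+$ so that the event horizon $\{U=0\}$ is an admissible boundary. Since $\phi$ solves (\ref{eq:waveqkerr}) the error term $\mathcal{E}^{N_{2,q}}[\phi]$ vanishes and $\textnormal{div}\,J^{N_{2,q}}[\phi]=K^{N_{2,q}}[\phi]$. Writing $f(u,v):=\int_{H_u\cap\{v_0\le v'\le v\}}J^{N_{2,q}}[\phi]\cdot\partial_v$ and $g(u,v):=\int_{\uline{H}_v\cap\{u'\le u\}}J^{N_{2,q}}[\phi]\cdot\partial_u$, the identity reads $f(u,v)+g(u,v)=E_q^{u,v}[\phi]-\int_R K^{N_{2,q}}[\phi]$, where $E_q^{u,v}[\phi]\le E_q[\phi]$ collects the fluxes through the relevant sub-portions of $\mathcal{H}^+$ and $\uline{H}_{v_0}$. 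Both boundary fluxes are manifestly coercive, $J^{N_{2,q}}[\phi]\cdot\partial_v=v^q(\partial_v\phi)^2+|u|^2\Omega^2|\snabla\phi|^2$ and $J^{N_{2,q}}[\phi]\cdot\partial_u=|u|^2(\partial_u\phi)^2+v^q\Omega^2|\snabla\phi|^2$, so $f,g\ge 0$ and the whole problem reduces to bounding $-\int_R K^{N_{2,q}}[\phi]$ from above.

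The main work is the bulk estimate $K^{N_{2,q}}=K^{N_{2,q}}_{\textnormal{null}}+K^{N_{2,q}}_{\textnormal{angular}}$, whose crucial input is the uniform decay $\Omega^2\le C(v+|u|)^{-2}$ from (\ref{eq:rnOmegainverseexp}) together with the expansions of Proposition \ref{prop:metricc1rn} and Proposition \ref{prop:part2dervr}. For $K^{N_{2,q}}_{\textnormal{null}}=r^{-1}(v^q-|u|^2)\partial_u\phi\,\partial_v\phi$ I would apply Cauchy--Schwarz with a position-dependent weight, using $r\Omega^2|v^q-|u|^2|\le C(v^q+|u|^2)(v+|u|)^{-2}$, and choose the weight so that the $(\partial_u\phi)^2$ and $(\partial_v\phi)^2$ contributions are controlled respectively by $v^{-1-\delta}\cdot|u|^2(\partial_u\phi)^2$ and $|u|^{-1-\delta}\cdot v^q(\partial_v\phi)^2$; a short computation (examining separately the regions $|u|\ge v$ and $|u|<v$) shows this is possible for any $\delta\in(0,q/2)$.

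For the angular term I would expand $\frac{M}{r^3}(M-r)=(v+|u|)^{-1}+\ldots$ and observe that the leading contribution of $-K^{N_{2,q}}_{\textnormal{angular}}$ is $-\frac{|u|v}{v+|u|}|\snabla\phi|^2\le 0$, which has the favourable sign and may simply be discarded. The residual piece is of size $v^{q-1}|\snabla\phi|^2$, with an \emph{exact} cancellation when $q=2$; in the region where it carries the unfavourable sign it is absorbed into $|u|^{-1-\delta}\cdot|u|^2\Omega^2|\snabla\phi|^2$, which is possible precisely when $\delta<2-q$. The metric remainders supplied by Propositions \ref{prop:metricc1rn}--\ref{prop:part2dervr} carry an extra factor $\log(v+|u|)(v+|u|)^{-1}$ and are absorbed in the same way.

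Collecting these bounds puts the bulk term in exactly the form demanded by Lemma \ref{lm:gronwall}, namely $-\int_R K^{N_{2,q}}[\phi]\le B\bigl[\int_{-\infty}^u h(u')f(u',v)\,du'+\int_{v_0}^v k(v')g(u,v')\,dv'\bigr]$ with $h(u')=|u'|^{-1-\delta}$ and $k(v')=v'^{-1-\delta}$, both integrable on the relevant semi-infinite rays. Applying the lemma with $A=E_q[\phi]$ gives $f(u,v)+g(u,v)\le CE_q[\phi]$ uniformly in $(u,v)$; letting $v\to\infty$ (respectively $u\to u_0$) while discarding the other non-negative flux recovers the full fluxes $\int_{H_u}J^{N_{2,q}}[\phi]\cdot\partial_v$ and $\int_{\uline{H}_v}J^{N_{2,q}}[\phi]\cdot\partial_u$, each bounded by $CE_q[\phi]$, and hence their sum. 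I expect the hard part to be the bulk estimate: one must check that after the Cauchy--Schwarz balancing the weights $h,k$ are genuinely integrable, which forces the simultaneous constraints $\delta<q/2$ and $\delta<2-q$ and is exactly where the choice $p=2$, the range $0<q\le 2$, and the decay of $\Omega^2$ towards $i^+$ are used.
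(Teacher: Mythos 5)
Your proposal is correct and follows essentially the same route as the paper: the divergence identity for $J^{N_{2,q}}$, coercivity of the null fluxes, weighted Cauchy--Schwarz on $K^{N_{2,q}}_{\textnormal{null}}$ using $\Omega^2\lesssim (v+|u|)^{-2}$, the sign/exact-cancellation structure of $K^{N_{2,q}}_{\textnormal{angular}}$ at $p=2$ with absorption of the logarithmic remainders, and closure via the two-variable Gr\"onwall lemma with integrable weights. The only cosmetic difference is that the paper verifies directly that the full leading angular bracket has the favourable sign for all $0<q\leq 2$ (so no absorption of the $v^{q-1}|\snabla\phi|^2$ piece is ever needed), whereas you discard the $-\frac{|u|v}{v+|u|}$ piece first and invoke the $\delta<2-q$ absorption as a fallback, which is consistent since any unfavourable region is confined to $v\lesssim|u|$.
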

\begin{proof}
By applying the divergence theorem in ${D_{u_0,v_0}}$, we can estimate
\begin{equation}
\label{eq:stokesaxikn}
\begin{split}
&\int_{H_u} J^{N_{2,q}}[\phi]\cdot \partial_v+\int_{\uline{H}_v} J^{N_{2,q}}[\phi]\cdot \partial_u\\
&= \int_{\mathcal{H}^+\cap\{v\geq v_0\}} J^{N_{2,q}}[\phi]\cdot \partial_v+ \int_{\uline{H}_{v_0}} J^{N_{2,q}}[\phi]\cdot \partial_u-\int_{{D_{u_0,v_0}}} K^{N_{p,q}}_{\textnormal{null}}[\phi]+K^{N_{p,q}}_{\textnormal{angular}}[\phi].
\end{split}
\end{equation}

We first consider $K^{N_{p,q}}_{\textnormal{null}}[\phi]$. By (\ref{eq:rnOmegainverseexp}) we have that,
\begin{equation*}
\Omega^2|K^{N_{p,q}}_{\textnormal{null}}[\phi]|\leq C(v+|u|)^{-2}(v^q-|u|^p)|{\partial_v}\phi||{\partial_u}\phi|.
\end{equation*}

By the Cauchy--Schwarz inequality, we can estimate for $\eta>0$,
\begin{equation*}
\begin{split}
v^q(v+|u|)^{-2}|{\partial_v}\phi||{\partial_u}\phi|&\leq v^q(v+|u|)^{-1-q-\eta} |u|^p (\partial_u\phi)^2+|u|^{-p}(v+|u|)^{q+\eta-3}v^q(\partial_v\phi)^2\\
&\leq v^q\sup_{-\infty\leq u'<u}\left[(v+|u'|)^{-1-q-\eta}\right]|u|^p(\partial_u\phi)^2\\
&\quad+|u|^{-p}\sup_{v_0<v'<v}\left[(v'+|u|)^{q+\eta-3}\right]v^q(\partial_v\phi)^2\\
&\leq Cv^{-1-\eta}|u|^p(\partial_u\phi)^2+C|u|^{q+\eta-p-3}v^q(\partial_v\phi)^2,
\end{split}
\end{equation*}
for $\eta<3-q$.

Similarly, for $u$ and $v$ reversed, we obtain
\begin{equation*}
|u|^p(v+|u|)^{-2}|{\partial_v}\phi||{\partial_u}\phi|\leq C|u|^{-1-\eta}v^q(\partial_v\phi)^2+Cv^{p+\eta-q-3}|u|^p(\partial_u\phi)^2,
\end{equation*}
for $\eta<3-p$.

We will now estimate $K^{N_{p,q}}_{\textnormal{angular}}$. We have that,
\begin{equation}
\label{eq:KNangularkerrn1}
\begin{split}
K^{N_{p,q}}_{\textnormal{angular}}&=-\frac{1}{2}\left[qv^{q-1}-p|u|^{p-1}+2Mr^{-3}(M-r)(v^q-|u|^p)\right]|\snabla\phi|^2\\
&\quad+(v^q+|u|^p)\mathcal{O}((v+|u|)^{-2})|\snabla\phi|^2.
\end{split}
\end{equation}
Moreover, by (\ref{eq:rnOmegainverseexp}) we have that
\begin{equation*} 
\frac{2M}{r^3}(M-r)=-\frac{2}{v+|u|}+\log (v+|u|)\mathcal{O}((v+|u|)^{-2}).
\end{equation*}
Consequently, we can rewrite (\ref{eq:KNangularkerrn1}) to obtain
\begin{equation}
\label{eq:KNangularkerrn2}
\begin{split}
K^{N_{p,q}}_{\textnormal{angular}}&=-\frac{1}{2}\left[\left(q-2\frac{v^{\frac{q}{2}}+|u|^{\frac{p}{2}}}{v+|u|}v^{1-\frac{q}{2}}\right)v^{q-1}+\left(2\frac{v^{\frac{q}{2}}+|u|^{\frac{p}{2}}}{v+|u|}|u|^{1-\frac{p}{2}}-p\right)|u|^{p-1}\right]|\snabla\phi|^2\\
&\quad+(v^q+|u|^p)\log(v+|u|)\mathcal{O}((v+|u|)^{-2})|\snabla\phi|^2.
\end{split}
\end{equation}
Note first of all that, if $0\leq p<2$, the terms inside square brackets become positive in the region $|u|>v$, as we approach $\mathcal{H}^+$, so $K^{N_{p,q}}_{\textnormal{angular}}$ becomes negative and we are not be able to control it. We therefore restrict to $p=2$. 

If $p=2$ and $q<2$, the term inside the square brackets is negative and we can estimate
\begin{equation*}
K^{N_{p,q}}_{\textnormal{angular}}\geq Cv^{q-1}|\snabla\phi|^2+(v^q+|u|^2)\log(v+|u|)\mathcal{O}((v+|u|)^{-2})|\snabla\phi|^2.
\end{equation*}

If $p=2$ and $q=2$, both the first term and second term multiplying $|\snabla\phi|^2$ in (\ref{eq:KNangularkerrn2}) vanish, so we can estimate
\begin{equation*}
K^{N_{p,q}}_{\textnormal{angular}}=(v^2+|u|^2)\log(v+|u|)\mathcal{O}((v+|u|)^{-2})|\snabla\phi|^2.
\end{equation*}

If we fix $p=2$, we can therefore estimate for all $0\leq q\leq 2$,
\begin{equation*}
\begin{split}
\Omega^2 K^{N_{p,q}}_{\textnormal{angular}}&\geq (v^q+|u|^2)\log(v+|u|)\mathcal{O}((v+|u|)^{-2})\Omega^2|\snabla\phi|^2\\
&\geq -C_{\epsilon}|u|^{-2+\epsilon}v^q\Omega^2|\snabla\phi|^2-Cv^{-2+\epsilon}|u|^2\Omega^2|\snabla\phi|^2,
\end{split}
\end{equation*}
with arbitrarily small $\epsilon>0$ and $C_{\epsilon}=C_{\epsilon}(M,u_0,v_0,\epsilon)>0$. We will fix $0<\epsilon<1$.

Putting all the above estimates together, we find that for $0\leq q\leq 2$,
\begin{equation*}
\begin{split}
-\Omega^2 (K^{N_{2,q}}_{\textnormal{angular}}+K^{N_{2,q}}_{\textnormal{null}})&\leq C\Big[(v^{-1-\eta}+v^{\eta-q-1})|u|^2(\partial_u\phi)^2+(|u|^{-1-\eta}+|u|^{q+\eta-5})v^q(\partial_v\phi)^2\\
&\quad+|u|^{-2+\epsilon}v^q\Omega^2|\snabla\phi|^2+v^{-2+\epsilon}|u|^2\Omega^2|\snabla\phi|^2\Big].
\end{split}
\end{equation*}
We can now apply Lemma \ref{lm:gronwall} with
\begin{align*}
A&= \int_{\mathcal{H}^+\cap\{v\geq v_0\}} J^{N_{2,q}}[\phi]\cdot \partial_v+ \int_{\uline{H}_{v_0}} J^{N_{2,q}}[\phi]\cdot \partial_u,\\
f(u,v)&=\int_{H_{u}}v^q(\partial_v\phi)^2+|u|^2\Omega^2|\snabla\phi|^2,\\
g(u,v)&=\int_{\underline{H}_v}|u|^2(\partial_u\phi)^2+v^q\Omega^2|\snabla\phi|^2,\\
h(u)&=|u|^{-1-\eta}+|u|^{q+\eta-5}+|u|^{-2+\epsilon},\\
k(v)&=v^{-1-\eta}+v^{\eta-q-1}+v^{-2+\epsilon}.
\end{align*}
If we take $q>0$, $h$ and $k$ are integrable for $0<\eta<\min\{q,1\}$, and we arrive at the estimate in the proposition.
\end{proof}

Since $\square_g(O^k\phi)=0$ for all $k\in \N$ if $\square_g\phi=0$, the above energy estimate also holds for the angular derivatives $O^k\phi$. We can also consider the higher-order null derivative operator $X_{m,n}$, which is defined as follows:
\begin{equation*}
X_{m,n}f:=\partial_u^m\partial_v^nf.
\end{equation*}

As $\partial_u$ and $\partial_v$ are not Killing vector fields, we cannot conclude that $\square_g(X_{m,n}\phi)=0$, but we can still obtain energy estimates for $X_{m,n}\phi$ by controlling the additional error terms.

\begin{proposition}
\label{prop:highordeestimate}
Let $m,n\geq 0$, fix $p=2$ and take $0<q\leq 2$. Then there exists a constant $C=C(M,u_0,v_0,q,m,n)>0$ such that for all $l\geq 0$ and $H_u$ and $\uline{H}_v$ in ${D_{u_0,v_0}}$,
\begin{equation*}
\begin{split}
&\int_{H_u} J^{N_{2,q}}[X_{m,n}O^l\phi]\cdot \partial_v+\int_{\uline{H}_v} J^{N_{2,q}}[X_{m,n}O^l\phi]\cdot \partial_u\\
&\leq C\sum_{m'=\min\{1,m\}}^m\sum_{n'=\min\{1,n\}}^n\sum_{s=0}^{\max\{m-m',n-n'\}}\Bigg[ \int_{\mathcal{H}^+\cap\{v\geq v_0\}} J^{N_{2,q}}[X_{m',n'}O^sO^l\phi]\cdot \partial_v\\
&\quad+ \int_{\uline{H}_{v_0}} J^{N_{2,q}}[X_{m',n'}O^sO^l\phi]\cdot \partial_u\Bigg]=:CE_{q;(m,n,l)}[\phi].
\end{split}
\end{equation*}
\end{proposition}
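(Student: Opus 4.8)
The plan is to argue by induction on the total number $m+n$ of null derivatives, taking Proposition \ref{prop:mainenergyestimate} as the base case $m=n=0$ and exploiting that the angular operators $O_i$ are Killing. Since $\square_g O_i=O_i\square_g$, we have $\square_g(O^l\phi)=0$ for every $l$, so $O^l\phi$ is again a solution; moreover the angular Laplacian $\sD=r^{-2}\slashed\Delta_{\s^2}$ is built from the Killing fields through $\slashed\Delta_{\s^2}=O_1^2+O_2^2+O_3^2$. Consequently angular derivatives may be ``carried along for free'': I would prove the estimate simultaneously for all $l$, so that each time a factor of $\sD$ is produced it can be traded, modulo bounded factors of $r^{-2}$ and its derivatives, for a shift $l\mapsto l+2$. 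The genuine induction is therefore on $m+n$ only. Writing (\ref{eq:waveqkerr}) in double-null coordinates as $\partial_u\partial_v\psi=r^{-1}\Omega^2(\partial_v\psi-\partial_u\psi)+\Omega^2\sD\psi$ for $\psi:=O^l\phi$, and recalling $\partial_ur=-\Omega^2$, $\partial_vr=\Omega^2$, the quantity $X_{m,n}\psi=\partial_u^m\partial_v^n\psi$ solves an \emph{inhomogeneous} equation, which is the only essential difference from Proposition \ref{prop:mainenergyestimate}.

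I would compute the source by commuting $\square_g$ past $\partial_u^m\partial_v^n$ and using the equation itself to eliminate any mixed second derivative that appears, obtaining
\[
\square_g(X_{m,n}\psi)=F_{m,n},\qquad F_{m,n}=\sum b^{s'}_{m',n'}\,\partial_u^{m'}\partial_v^{n'}\sD^{s'}\psi,
\]
a finite sum over $m'\le m$, $n'\le n$, $s'\ge 0$, where each coefficient $b^{s'}_{m',n'}$ is a product of $\partial_u,\partial_v$-derivatives of $\Omega^2$ and $r$. The two structural facts I would extract are: each term is either genuinely lower order in null derivatives ($m'+n'<m+n$) or of the same null order ($m'=m$, $n'=n$) but carrying extra factors of $\sD$; and, using $\Omega^2\sim(v+|u|)^{-2}$ from (\ref{eq:rnOmegainverseexp}) together with $\partial_u\Omega^2,\partial_v\Omega^2=\mathcal{O}((v+|u|)^{-3})$, every coefficient obeys $|b^{s'}_{m',n'}|\le C(v+|u|)^{-2}$. (The relevant commutators, e.g. $[\sD,\partial_u]=2r^{-1}(\partial_ur)\sD$, exhibit precisely this decay.) The number of extra angular Laplacians is at most one per lost null derivative, which matches the range $s\le\max\{m-m',n-n'\}$ of the stated sum once the additional $\snabla$ present in the energy flux is accounted for.

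Next I would run the divergence identity (\ref{divergenceidentity}) for $X_{m,n}\psi$ with the multiplier $N_{2,q}$ exactly as in Proposition \ref{prop:mainenergyestimate}. In addition to the bulk current $K^{N_{2,q}}[X_{m,n}\psi]=K^{N_{2,q}}_{\textnormal{null}}+K^{N_{2,q}}_{\textnormal{angular}}$, which is controlled verbatim as before, one now has the inhomogeneous contribution $\mathcal{E}^{N_{2,q}}[X_{m,n}\psi]=N_{2,q}(X_{m,n}\psi)\,F_{m,n}$. I would estimate each product $N_{2,q}(X_{m,n}\psi)\cdot b^{s'}_{m',n'}\,\partial_u^{m'}\partial_v^{n'}\sD^{s'}\psi$ by Cauchy--Schwarz into a ``self'' part, quadratic in the derivatives of $X_{m,n}\psi$ and weighted by factors of the same integrable type as the $h(u),k(v)$ of Proposition \ref{prop:mainenergyestimate}, and a ``remainder'' part, quadratic in $\partial_u^{m'}\partial_v^{n'}\sD^{s'}\psi$ with coefficient $\lesssim (v+|u|)^{-4}\times(\text{multiplier weights})$. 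The bound $|b^{s'}_{m',n'}|\le C(v+|u|)^{-2}$ is exactly what renders both parts tractable: the self part only improves the integrability of the Grönwall weights, while the remainder is integrable in the transverse null direction and thus converts to a supremum of fluxes of $\partial_u^{m'}\partial_v^{n'}O^{s}O^l\phi$.

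Finally I would close by Lemma \ref{lm:gronwall}, taking $f,g$ to be the fluxes of $X_{m,n}\psi$ along $H_u,\uline H_v$ and $h,k$ the weights of Proposition \ref{prop:mainenergyestimate} augmented by the new $\mathcal{O}((v+|u|)^{-2})$ self-terms, which remain integrable for $0<\eta<\min\{q,1\}$ and $0<\epsilon<1$. The remainder spacetime integrals, involving either strictly lower null order or the angular-shifted quantities $\partial_u^{m}\partial_v^{n}\sD^{s'}\psi$, are bounded by the inductive hypothesis applied at null order $<m+n$ and at angular level $l+s$ with $s\le\max\{m-m',n-n'\}$; these are precisely the terms summed in $E_{q;(m,n,l)}[\phi]$. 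Summing over the finitely many terms of $F_{m,n}$ yields the claim. The main obstacle is the weight bookkeeping in the Cauchy--Schwarz split: one must check that after splitting, every self-term lands among integrable Grönwall weights (no borderline non-integrable power survives) and that the remainder coefficients decay fast enough, given that the multiplier weights $|u|^2$ and $v^q$ may be as large as $(v+|u|)^2$, to be dominated by the $(v+|u|)^{-2}$ of the commutator coefficients; this balance is exactly what forces the restriction $p=2$, $0<q\le 2$ inherited from Proposition \ref{prop:mainenergyestimate}.
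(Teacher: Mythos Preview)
Your approach is essentially the paper's: commute $\square_g$ with $\partial_u^m\partial_v^n$ via the Leibniz rule, observe that the commutator coefficients decay in $(v+|u|)$, split $\mathcal{E}^{N_{2,q}}[X_{m,n}\psi]$ by Cauchy--Schwarz into a self-part absorbed by Lemma~\ref{lm:gronwall} and a remainder controlled by fluxes of lower-order quantities (with extra angular derivatives coming from the $\mathring{\slashed\Delta}$ piece), and close by induction.

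One small correction to your description of the source: the commutator does \emph{not} produce terms of the form ``same null order $(m'=m,\,n'=n)$ with an extra $\sD$''. What actually appears (with $(k,l)\neq(0,0)$) is
\[
\partial_u^k\partial_v^l(r^{-1}\partial_vr)\,(\partial_v-\partial_u)X_{m-k,n-l}\phi
\quad\text{and}\quad
\partial_u^k\partial_v^l(r^{-2}\Omega^2)\,\mathring{\slashed\Delta}\,X_{m-k,n-l}\phi,
\]
i.e.\ \emph{first} derivatives $(\partial_v-\partial_u)$ or $\mathring{\slashed\Delta}$ of the strictly lower-order quantities $X_{m-k,n-l}\phi$. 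In your $(m',n',s')$ parametrisation this gives e.g.\ $X_{m-1,n+1}\phi$ and $X_{m+1,n-1}\phi$, which have $m'+n'=m+n$ but $(m',n')\neq(m,n)$ and $s'=0$, so they fit neither of your two cases. This does not break the argument: these terms are precisely $\partial_vX_{m-1,n}\phi$, $\partial_uX_{m,n-1}\phi$, etc., hence controlled by the energy fluxes $J^{N_{2,q}}[X_{m',n'}O^s\phi]$ at null order $m'+n'<m+n$ that your induction already provides. With that adjustment the proof goes through as you outline.
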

\begin{proof}
Without loss of generality we take $l=0$, as we can replace $\phi$ by $O^l\phi$ with $l>0$, since $O_i$, $i=1,2,3$ are Killing vector fields.

We apply the product rule for higher-order differentiation to obtain
\begin{equation*}
\begin{split}
2\Omega^2 \square_g(X_{m,n}\phi)&=X_{m,n}(\square_g\phi)+2 \sum_{k=1}^m\sum_{l=1}^n \binom{m}{k}\binom{n}{l} \partial_u^k\partial_v^l(r^{-1}\partial_v r)(\partial_v-\partial_u)(X_{m-k,n-l}\phi)\\
&\quad+2 \sum_{k=1}^m\sum_{l=1}^n \binom{m}{k}\binom{n}{l}\partial_u^k\partial_v^l(r^{-2}\Omega^2)X_{m-k,n-l}(\mathring{\slashed{\Delta}}\phi),
\end{split}
\end{equation*}
where $\mathring{\slashed{\Delta}}$ denotes the Laplacian on $\s^2$, the round sphere of radius 1.

In particular, we can estimate 
\begin{equation*}
\begin{split}
2\Omega^2 |\square_g(X_{m,n}\phi)|&\leq C\sum_{k=1}^m\sum_{l=1}^n (v+|u|)^{-(k+l+2)}\left[|\partial_v X_{m-k,n-l}\phi|+|\partial_u X_{m-k,n-l}\phi|+|\mathring{\slashed{\Delta}}X_{m-k,n-l}\phi|\right].
\end{split}
\end{equation*}
Consequently, we can estimate by using Cauchy--Schwarz,
\begin{equation*}
\begin{split}
2\Omega^2\left|\mathcal{E}^{N_{p,q}}[X_{m,n}\phi]\right|&=2\Omega^2\left||u|^p\partial_u(X_{m,n}\phi)+v^q\partial_v(X_{m,n}\phi)\right||\square_g\phi|\\
&\leq C (v+|u|)^{-1-\eta}|u|^p(\partial_u(X_{m,n}\phi))^2+C (v+|u|)^{-1-\eta}v^q(\partial_v(X_{m,n}\phi))^2\\
&\quad+ C (|u|^p+v^q)\sum_{k=1}^m\sum_{l=1}^n (v+|u|)^{-(2k+2l+3-\eta)}\Big[|\partial_v X_{m-k,n-l}\phi|^2+|\partial_u X_{m-k,n-l}\phi|^2\\
&\quad+|\mathring{\slashed{\Delta}}X_{m-k,n-l}\phi|^2\Big].
\end{split}
\end{equation*}

The following equality holds for the angular momentum operators $O_i$:
\begin{equation*}
\int_{\s^2}(\mathring{\slashed{\Delta}}f)^2\,d\mu_{\slashed{g}_{\s^2}}= r^2\sum_{s=0}^1|\snabla{O^sf}|^2\,d\mu_{\slashed{g}_{\s^2}}.
\end{equation*}
We can therefore estimate the spacetime integral of $\left|\mathcal{E}^{N_{p,q}}[X_{m,n}\phi]\right|$ by the flux terms of $J^{N_{p,q}}[X_{m-k,n-l}\phi]$ with $k,l\geq 0$ and $J^{N_{p,q}}[X_{m-k,n-l}O^s\phi]$, with $k\geq 1$ or $l\geq 1$, multiplied by an integral function in $u$ or $v$, as in Proposition \ref{prop:mainenergyestimate}. We subsequently apply Lemma \ref{lm:gronwall}.

We obtain the energy estimate in the proposition by induction on $m$ and $n$, in order to estimate the energy fluxes with additional $O^s$ derivatives. In the induction step, we estimate the fluxes of $J^{N_{p,q}}[X_{m+1,n}\phi]$, $J^{N_{p,q}}[X_{m,n+1}\phi]$ and $J^{N_{p,q}}[X_{m+1,n+1}\phi]$ by the fluxes of $J^{N_{p,q}}[X_{m,n}O^s\phi]$, where $s\leq 1$.
\end{proof}
We have now proved Theorem \ref{thm:eestimatesint}.

\begin{corollary}
\label{cor:highordeestimate}
Let $X_{m,n}=\partial_u^m\partial_v^n$ for $m,n\geq 0$ and let $\Sigma'\subset D_{u_0,v_0}$ be a spacelike hypersurface intersecting $\mathcal{CH}^+$, with unit normal $n_{\Sigma'}$. Then there exists a constant $C=C(M,\Sigma',u_0,v_0,m,n)>0$ such that for all $l\geq 0$,
\begin{equation*}
\int_{\Sigma'} J^{n_{\Sigma'}}[X_{m,n}O^l\phi]\cdot n_{\Sigma'}\leq CE_{2;(m,n,l)}[\phi].
\end{equation*}
\end{corollary}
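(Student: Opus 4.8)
The plan is to reduce the flux through $\Sigma'$ with respect to its own unit normal to a flux with respect to the uniformly timelike multiplier $N_{2,2}=u^2\partial_u+v^2\partial_v$ of Proposition \ref{prop:mainenergyestimate}, and then to bound the latter by the initial energy $E_{2;(m,n,l)}[\phi]$ using the divergence theorem together with the already-established null-hypersurface estimates. Throughout I write $\psi:=X_{m,n}O^l\phi$ and work in the regular double-null coordinates $(u,V)$ (and $(U,v)$ near $\mathcal{H}^+$) of Section \ref{sec:geom}, in which $\mathcal{CH}^+=\{V=M\}$ and $\mathcal{H}^+=\{U=0\}$ are regular null boundaries and all the weighted energy quantities remain finite by the $q=2$ estimate of Proposition \ref{prop:highordeestimate}.

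First I would carry out a pointwise comparison along $\Sigma'$. Since $p=q=2$, the vector field $N_{2,2}$ is future-directed timelike everywhere in $D_{u_0,v_0}$, and its coefficients $v(V)^2(1-M/V)^2\sim v^2(1+v)^{-2}$ and $|u(U)|^2U^2\sim |u|^2(1+|u|)^{-2}$ are bounded and bounded away from zero in the regular coordinates up to and including $\mathcal{CH}^+$ and $\mathcal{H}^+$. Hence on the fixed hypersurface $\Sigma'$ the quantity $-g(N_{2,2},N_{2,2})$ is bounded below by a positive constant and $N_{2,2}$ has bounded length, while $n_{\Sigma'}$ is future unit timelike. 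The standard coercivity of $\mathbf{T}$ for a pair of future-directed timelike vectors then gives, pointwise on $\Sigma'$,
\begin{equation*}
J^{n_{\Sigma'}}[\psi]\cdot n_{\Sigma'}=\mathbf{T}(n_{\Sigma'},n_{\Sigma'})\leq C(\Sigma')\,\mathbf{T}(N_{2,2},n_{\Sigma'})=C(\Sigma')\,J^{N_{2,2}}[\psi]\cdot n_{\Sigma'},
\end{equation*}
where $C(\Sigma')$ depends on the tilt of $n_{\Sigma'}$ relative to $N_{2,2}$ along $\Sigma'$. Integrating against the induced volume form reduces the claim to bounding $\int_{\Sigma'}J^{N_{2,2}}[\psi]\cdot n_{\Sigma'}$.

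Next I would apply the divergence theorem for $J^{N_{2,2}}[\psi]$ in the region $\mathcal{R}\subset D_{u_0,v_0}$ lying to the past of $\Sigma'$, whose remaining boundary consists of segments of $\mathcal{H}^+\cap\{v\geq v_0\}$, $\uline{H}_{v_0}$ and $H_{u_0}$; working in regular coordinates makes $\mathcal{R}$ bounded with $\mathcal{CH}^+$ meeting its boundary only at the single intersection point with $\Sigma'$, so $\mathcal{CH}^+$ contributes no flux. Because $N_{2,2}$ and all the boundary normals are future-directed causal, every boundary flux is nonnegative; discarding the favourably-signed flux through $H_{u_0}$ and estimating $-\int_{\mathcal{R}}\textnormal{div}\,J^{N_{2,2}}[\psi]$ by its absolute value, the divergence identity yields
\begin{equation*}
\int_{\Sigma'}J^{N_{2,2}}[\psi]\cdot n_{\Sigma'}\leq \int_{\mathcal{H}^+\cap\{v\geq v_0\}}J^{N_{2,2}}[\psi]\cdot\partial_v+\int_{\uline{H}_{v_0}}J^{N_{2,2}}[\psi]\cdot\partial_u+\int_{\mathcal{R}}\left|\textnormal{div}\,J^{N_{2,2}}[\psi]\right|.
\end{equation*}
The first two terms are dominated by $CE_{2;(m,n,l)}[\phi]$, since they are precisely the initial fluxes appearing in its definition with $q=2$.

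It remains to bound the bulk term $\int_{\mathcal{R}}|K^{N_{2,2}}[\psi]+\mathcal{E}^{N_{2,2}}[\psi]|$, and this is where the main work lies. Here I would reuse the pointwise estimates established in the proofs of Proposition \ref{prop:mainenergyestimate} and Proposition \ref{prop:highordeestimate}: the integrand $\Omega^2|K^{N_{2,2}}[\psi]+\mathcal{E}^{N_{2,2}}[\psi]|$ is controlled by integrable weights in $u$ (respectively $v$) multiplying the flux densities appearing in $\int_{\uline{H}_v}J^{N_{2,2}}$ (respectively $\int_{H_u}J^{N_{2,2}}$) of $\psi$ and of its lower-order analogues $X_{m',n'}O^{s}O^l\phi$. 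Integrating in the spacetime volume form and using that the weights $|u|^{-1-\eta}+|u|^{q+\eta-5}+|u|^{-2+\epsilon}$ and $v^{-1-\eta}+v^{\eta-q-1}+v^{-2+\epsilon}$ are integrable for $0<\eta<\min\{q,1\}$ and $0<\epsilon<1$, the spacetime integral factorises into a product of an integrated weight and a supremum over $u$ or $v$ of the corresponding null flux. Each such supremum is bounded by $CE_{2;(m,n,l)}[\phi]$ by Proposition \ref{prop:highordeestimate} (the lower-order energies being sub-sums of $E_{2;(m,n,l)}[\phi]$), so the bulk term is $\leq CE_{2;(m,n,l)}[\phi]$. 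Combining this with the boundary fluxes and the pointwise comparison yields the stated estimate. The only genuinely delicate point is the behaviour at the corner where $\Sigma'$ meets $\mathcal{CH}^+$: the $q=2$ weighting is precisely what guarantees that both the flux through $\Sigma'$ and the spacetime integral of the divergence are finite up to the Cauchy horizon, so that the divergence theorem may be applied on all of $\mathcal{R}$ (alternatively one truncates at $\{v\leq v_N\}$ and lets $v_N\to\infty$, using the uniform-in-$v$ flux bounds of Proposition \ref{prop:highordeestimate}).
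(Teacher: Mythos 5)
Your proposal is correct and follows essentially the same route as the paper: a pointwise comparison of $J^{n_{\Sigma'}}[\psi]\cdot n_{\Sigma'}$ with $J^{N_{2,2}}[\psi]\cdot n_{\Sigma'}$ using that $N_{2,2}$ is uniformly timelike along $\Sigma'$ up to $\mathcal{CH}^+$, followed by the divergence theorem in $J^-(\Sigma')\cap D_{u_0,v_0}$ with the bulk term controlled exactly as in Propositions \ref{prop:mainenergyestimate} and \ref{prop:highordeestimate}. Your additional remarks on discarding the favourably-signed $H_{u_0}$ flux and truncating in $v$ near the corner at $\mathcal{CH}^+$ are sensible elaborations of details the paper leaves implicit.
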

\begin{proof}
We have that
\begin{equation*}
\int_{\Sigma'} J^{n_{\Sigma'}}[X_{m,n}O^l\phi]\cdot n_{\Sigma'} \leq C\int_{\Sigma}(\partial X_{m,n}O^l\phi)^2\leq C \int_{\Sigma'} J^{N_{2,2}}[X_{m,n}O^l\phi]\cdot n_{\Sigma'},
\end{equation*}
because $N_{2,2}$ is uniformly timelike everywhere along $\Sigma'$, including at $\mathcal{CH}^+$.

We then apply the divergence theorem to $J^{N_{2,2}}$ in $J^-(\Sigma')\cap D_{u_0,v_0}$ and estimate $K^{N_{2,2}}$ as in Proposition \ref{prop:highordeestimate}.
\end{proof}

\section{Pointwise estimates}
\label{sec:pointwise}
\subsection{Uniform boundedness of $\partial_u^m\partial_v^nO^l\phi$}
\label{sec:unifboundcomm}

Uniform boundedness of $\phi$ and its $u$, $v$ and angular derivatives follows easily from the energy estimates derived in the previous section. As in the previous section, $\phi$ always denotes a solution to (\ref{eq:waveqkerr}).

\begin{proposition}
\label{prop:boundpsirn}
Let $0<q<2$ and $m,n\geq 0$. There exists a constant $C(M,u_0,v_0,q)>0$ such that
\begin{equation}
\label{eq:boundpsi}
\begin{split}
||X_{m,n}O^l\phi||_{L^{\infty}(H_u)}^2 &\leq C\sum_{|k|\leq 2} \sup_{v_0\leq {v'} \leq \infty}\int_{\s^2} (O^k X_{m,n}O^l\phi)^2\,d\mu_{\s^2}(-\infty,{v'})+ C|u|^{-1}\sum_{|k|\leq 2}E_{q;(m,n,l)}[O^k\phi].
\end{split}
\end{equation}
\end{proposition}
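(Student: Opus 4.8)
The plan is to reduce the pointwise bound on each sphere to a spherical $L^2$ bound on up to two angular derivatives, and then to propagate those $L^2$ norms from the event horizon $\{u=-\infty\}$ by the fundamental theorem of calculus in $u$, controlling the resulting flux term with the weighted energy estimate of Theorem~\ref{thm:eestimatesint}.

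First I would fix $(u,v)$ and invoke the dimensionless Sobolev embedding $H^2(\s^2)\hookrightarrow L^{\infty}(\s^2)$ on the round unit sphere. Since $\mathring{\slashed{\Delta}}=\sum_i O_i^2$, the $H^2(\s^2)$-norm is equivalent to $\sum_{|k|\leq 2}\|O^k(\cdot)\|_{L^2(\s^2)}$, so that $\sup_{\s^2}|X_{m,n}O^l\phi|^2(u,v,\cdot)\leq C\sum_{|k|\leq 2}F_k(u,v)$, where I set $F_k(u,v):=\int_{\s^2}(O^kX_{m,n}O^l\phi)^2\,d\mu_{\s^2}(u,v)$. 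As $H_u$ is a level set of $u$ and $r$ is bounded above and below on $D_{u_0,v_0}$, this reduces the claim to a bound on each $F_k(u,v)$ that is uniform in $v\in[v_0,\infty)$.

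The next step is to control $F_k$. Using $[O_i,\partial_u]=[O_i,\partial_v]=0$, write $\psi_k:=O^kX_{m,n}O^l\phi$, which is again of the form $X_{m,n}O^{l'}\phi$. Since $d\mu_{\s^2}$ is independent of $u$, the fundamental theorem of calculus in $u$ from $\mathcal{H}^+$ gives
\begin{equation*}
F_k(u,v)=F_k(-\infty,v)+\int_{-\infty}^{u}\int_{\s^2}2\psi_k\,\partial_{u'}\psi_k\,d\mu_{\s^2}\,du'.
\end{equation*}
Applying Cauchy--Schwarz \emph{globally} in $(u',\theta,\varphi)$ with the split $|\psi_k\partial_{u'}\psi_k|\leq(|u'|^{-1}|\psi_k|)(|u'||\partial_{u'}\psi_k|)$, the factor $\int_{-\infty}^u\int_{\s^2}|u'|^2(\partial_{u'}\psi_k)^2\,d\mu_{\s^2}\,du'$ is bounded, after absorbing the harmless weight $r^2$, by $\int_{\uline{H}_v}|u'|^2(\partial_{u'}\psi_k)^2$, which Theorem~\ref{thm:eestimatesint} applied to $O^k\phi$ controls by $CE_{q;(m,n,l)}[O^k\phi]$. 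The remaining factor satisfies $\int_{-\infty}^u\int_{\s^2}|u'|^{-2}\psi_k^2\,d\mu_{\s^2}\,du'\leq|u|^{-1}\sup_{u'\leq u}F_k(u',v)$, using $\int_{-\infty}^u|u'|^{-2}\,du'=|u|^{-1}$.

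Finally I would close by absorption. Writing $\Phi:=\sup_{u'\leq u}F_k(u',v)$, which is finite since $F_k(\cdot,v)$ extends continuously to $\mathcal{H}^+$ and the range $\{u'\leq u\}$ is compact in the regular coordinate $U$ and bounded away from $\mathcal{CH}^+$, the above yields $\Phi\leq F_k(-\infty,v)+2|u|^{-1/2}\Phi^{1/2}(CE_{q;(m,n,l)}[O^k\phi])^{1/2}$; Young's inequality then gives $F_k(u,v)\leq\Phi\leq 2F_k(-\infty,v)+4C|u|^{-1}E_{q;(m,n,l)}[O^k\phi]$. Summing over $|k|\leq 2$, combining with the Sobolev step, and taking the supremum over $v\geq v_0$ (noting that $\sup_v F_k(-\infty,v)$ is precisely the horizon term and that the data energy is $v$-independent) produces (\ref{eq:boundpsi}). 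The main subtlety, and the step I expect to be the crux, is that the decaying factor $|u|^{-1}$ in front of the energy must be extracted by performing Cauchy--Schwarz globally in $(u',\s^2)$ and then absorbing, rather than pulling out a supremum and running Gr\"onwall, which would lose this gain; keeping the horizon contribution as a clean $L^2$ term, and not an energy, is what forces this particular splitting.
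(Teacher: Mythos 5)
Your proof is correct and follows essentially the same route as the paper: the fundamental theorem of calculus in $u$ from $\mathcal{H}^+$, Cauchy--Schwarz with the split $|u'|^{-1}\cdot|u'|$ producing the factor $\int_{-\infty}^u|u'|^{-2}\,du'=|u|^{-1}$ times the weighted flux $\int_{\uline{H}_v}|u'|^{2}(\partial_{u}\,\cdot\,)^2$, which is then controlled via Theorem \ref{thm:eestimatesint}, together with a Sobolev embedding on $\s^2$ costing two angular commutations. The only difference is that the paper applies the fundamental theorem of calculus to $\phi$ itself pointwise in $(\theta,\varphi)$ and only afterwards squares, integrates over the sphere and invokes Sobolev, which renders your absorption/Young step (the part you flag as the crux) unnecessary.
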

\begin{proof}
Let $(u,v,\theta,\varphi)\in D_{u_0,v_0}$. We can then apply the fundamental theorem of calculus, while keeping $(\theta,\varphi)$ fixed, to obtain
\begin{equation*}
|\phi|(u,v,\theta,\varphi)\leq |\phi|(-\infty,v,\theta,\varphi)+\int_{-\infty}^u |\partial_{u}\phi|({u'},v,\theta,\varphi)\,du'.
\end{equation*}
In order to convert the $L^1$ norm above into an $L^2$ norm, we apply Cauchy--Schwarz
\begin{equation}
\label{mainestboundpsirn}
\phi^2(u,v,\theta,\varphi)\leq \phi^2(-\infty,v,\theta,\varphi)+ \int_{-\infty}^u |u'|^{-2}\,d{u'} \int_{-\infty}^u |{u'}|^{2} (\partial_u \phi)^2({u'},v,\theta,\varphi)\,d{u'},
\end{equation}
where $p>1$.

Now we integrate both sides over $\s^2$ and commute with angular momentum operators $O_i$, so that we can apply a standard Sobolev inequality on $\s^2$ to conclude
\begin{equation}
\label{preboundpsi}
\phi^2(u,v,\theta,\varphi)\leq C\sum_{|k|\leq 2} \int_{\s^2} (O^k \phi)^2\,d\mu_{\s^2}(-\infty,v)+ C |u|^{-1}\int_{\uline{H}_v} |u|^{2} (\partial_u O^k \phi)^2,
\end{equation}
where $C>0$ is a constant. The weighted $L^2(\uline{H}_v)$ integral on the right-hand side can be bounded by a suitably energy flux,
\begin{equation*}
\int_{\uline{H}_v}  |u|^{2} (\partial_u\phi)^2\leq \int_{\uline{H}_v}J^{N_{2,q}}[\phi]\cdot \partial_u,
\end{equation*}
with $q\geq 0$. Combining (\ref{preboundpsi}) with Proposition \ref{prop:mainenergyestimate} immediately gives uniform pointwise boundedness of $\phi$ in ${D_{u_0,v_0}}$ Similarly, we obtain uniform pointwise boundedness of $X_{m,n}\phi$ by applying the energy estimate in Proposition \ref{prop:highordeestimate} instead.
\end{proof}

We have now proved Theorem \ref{thm:pointwiseestho}.

\subsection{$C^0$-extendibility of $\partial_u^m\partial_v^nO^l\phi$}
We can use Proposition \ref{prop:highordeestimate} to show that $\partial_u^m\partial_v^n O^l\phi$ can be extended as continuous functions beyond the Cauchy horizon $\mathcal{CH}^+$. We introduce the second-order differential operator $X_{m,n,l}$, defined by
\begin{equation*}
X_{m,n,l}f:=\partial_u^m\partial_v^nO^lf.
\end{equation*}

\begin{proposition}
\label{prop:C0extension}
Let the initial data for $\phi$ satisfy
\begin{equation*}
\sum_{|k|\leq 2}E_{q;(m,n,l)}[O^k\phi]<\infty,
\end{equation*}
for $q>1$.

Let $x_{\mathcal{CH}^+}$ be a point on $\mathcal{CH}^+$. Then, for $x\in {D_{u_0,v_0}}$,
\begin{equation*}
\lim_{x\to x_{\mathcal{CH}^+}}(X_{m,n,l}\phi)(x)
\end{equation*}
is well-defined, so $X_{m,n,l}\phi$ can be extended as a $C^0$ function to the region beyond $\mathcal{CH}^+$.
\end{proposition}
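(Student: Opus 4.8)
The plan is to show that $X_{m,n,l}\phi$ satisfies a Cauchy criterion as one approaches $\mathcal{CH}^+$, at a rate that is uniform in the regular coordinate $u$ and in the angular variables, so that the resulting limit defines a continuous function on $D_{u_0,v_0}\cup\mathcal{CH}^+$. I work throughout in the $(u,V,\theta,\varphi)$ coordinates of Section \ref{sec:geom}, which are regular at $\mathcal{CH}^+=\{V=M\}$; the point $x_{\mathcal{CH}^+}$ has coordinates $(u_*,M,\theta_*,\varphi_*)$, and approaching it from $D_{u_0,v_0}$ means $v\to\infty$ together with $u\to u_*$ and $(\theta,\varphi)\to(\theta_*,\varphi_*)$. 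Writing $\psi:=X_{m,n,l}\phi$, I note that $\partial_v\psi=X_{m,n+1,l}\phi$ and that, since the $O_i$ commute with $\partial_v$, one has $O^k\partial_v\psi=\partial_v O^k\psi$.

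The key step controls the difference of $\psi$ along an ingoing null ray. For $v_0\le v_1<v_2$, the fundamental theorem of calculus in $v$ applied to $O^k\psi$ gives $O^k\psi(u,v_2,\cdot)-O^k\psi(u,v_1,\cdot)=\int_{v_1}^{v_2}\partial_vO^k\psi(u,v,\cdot)\,dv$. Squaring, integrating over $\s^2$, and applying Cauchy--Schwarz in $v$ with the weight $v^q$ yields
\begin{equation*}
\int_{\s^2}\big(O^k\psi(u,v_2,\cdot)-O^k\psi(u,v_1,\cdot)\big)^2\,d\mu_{\s^2}\le\Big(\int_{v_1}^{v_2}v^{-q}\,dv\Big)\int_{v_1}^{v_2}\int_{\s^2}v^q(\partial_vO^k\psi)^2\,d\mu_{\s^2}\,dv.
\end{equation*}
Since $r$ is bounded above and below on $D_{u_0,v_0}$, the last double integral is controlled by $\int_{H_u}v^q(\partial_vO^k\psi)^2\le C\int_{H_u}J^{N_{2,q}}[X_{m,n}O^{l+k}\phi]\cdot\partial_v$, which Proposition \ref{prop:highordeestimate} bounds by $CE_{q;(m,n,l)}[O^k\phi]$. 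Summing over $|k|\le2$ and invoking the Sobolev embedding $H^2(\s^2)\hookrightarrow C^0(\s^2)$ then gives
\begin{equation*}
\sup_{\s^2}\big|\psi(u,v_2,\cdot)-\psi(u,v_1,\cdot)\big|^2\le C\Big(\int_{v_1}^{v_2}v^{-q}\,dv\Big)\sum_{|k|\le2}E_{q;(m,n,l)}[O^k\phi].
\end{equation*}
The hypothesis $q>1$ makes $\int_{v_1}^{\infty}v^{-q}\,dv\to0$ as $v_1\to\infty$, and the constant $C$ is uniform over all $H_u$ in $D_{u_0,v_0}$; hence $\psi(u,v,\cdot)$ is uniformly Cauchy in $v$, uniformly in $u$ and in the angular variables.

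It follows that $\lim_{v\to\infty}\psi(u,v,\theta,\varphi)$ exists for every $(u,\theta,\varphi)$, and this limit defines the value of $X_{m,n,l}\phi$ on $\mathcal{CH}^+$. Because the convergence is uniform and $\psi$ is continuous in the interior (Proposition \ref{prop:wellposedness}), the extended function is continuous on $D_{u_0,v_0}\cup\mathcal{CH}^+$, so the limit at $x_{\mathcal{CH}^+}$ is independent of the direction of approach; extending $X_{m,n,l}\phi$ continuously across $\mathcal{CH}^+$ into $\mathcal{M}'_{\textnormal{ext}}$ is then immediate once the boundary data are defined. I expect the main obstacle to be purely the passage from the $L^2$-on-spheres control supplied by the energy estimate to genuine pointwise control: this is what forces the angular commutations $O^k$ with $|k|\le2$ and the Sobolev step, and it is precisely this that makes the hypothesis $\sum_{|k|\le2}E_{q;(m,n,l)}[O^k\phi]<\infty$ (rather than the single term $E_{q;(m,n,l)}[\phi]$) the natural one. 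The role of the strengthened weight $q>1$, as opposed to the merely-bounded regime $0<q\le2$ of Theorem \ref{thm:eestimatesint}, is solely to render $\int^{\infty}v^{-q}\,dv$ finite and thereby upgrade boundedness to convergence.
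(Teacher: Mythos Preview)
Your proof is correct and follows essentially the same approach as the paper: both use the fundamental theorem of calculus in the outgoing null direction, Cauchy--Schwarz with the weight $v^q$, the Sobolev inequality on $\s^2$ after commuting with $O^k$ for $|k|\le 2$, and Proposition~\ref{prop:highordeestimate} to close the estimate, with $q>1$ being exactly what makes $\int v^{-q}\,dv$ convergent. The only cosmetic difference is that the paper checks the Cauchy property along a sequence $x_k\to x_{\mathcal{CH}^+}$ by splitting into the four coordinate directions $(u,V,\theta,\varphi)$, whereas you phrase the same thing as uniform convergence of $\psi(u,v,\cdot)$ as $v\to\infty$ (uniformly in $u$ and angles) and then invoke continuity of uniform limits; this is a more compact packaging of the identical estimate.
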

\begin{proof}
Consider a sequence of points $x_k$ in ${D_{u_0,v_0}}\setminus \mathcal{H}^+$, such that $\lim_{k\to \infty}x_k=x_{\mathcal{CH}^+}$. The sequence $\{x_k\}$ is in particular a Cauchy sequence. We will show that the sequence of points $(X_{m,n,l}\phi)(x_k)$ must also be a Cauchy sequence, from which it follows immediately that the sequence converges to a finite number as $k\to \infty$.

For simplicity, we will take $m-n=l=0$, but the steps of the proof can be repeated for the general case. Denote $x_k=(u_k,V_k,\theta_k,\varphi_k)$. Let $l>k$, then
\begin{equation*}
\begin{split}
|\phi(x_l)-\phi(x_k)|^2&\leq |\phi(u_l,V_k,\theta_k,\varphi_k)-\phi(u_k,V_k,\theta_k,\varphi_k)|^2+|\phi(u_k,V_l,\theta_k,\varphi_k)-\phi(u_k,V_k,\theta_k,\varphi_k)|^2\\
&\quad+|\phi(u_k,V_k,\theta_l,\varphi_k)-\phi(u_k,V_k,\theta_k,\varphi_k)|^2+|\phi(u_k,V_k,\theta_k,\varphi_l)-\phi(u_k,V_k,\theta_k,\varphi_k)|^2
\end{split}
\end{equation*}
By the fundamental theorem of calculus, a Sobolev inequality on $\s^2$ and Cauchy--Schwarz, we can estimate for $q>0$
\begin{equation*}
\begin{split}
\left|\phi(u_l,V_k,\theta_k,\varphi_k)-\phi(u_k,V_k,\theta_k,\varphi_k)\right|^2&\leq C\sum_{|s|\leq 2}\left|\int_{u_k}^{u_l}\int_{\s^2}u^{1+\epsilon}(\partial_uO^s\phi)^2(u,V_k,\theta_k,\varphi_k)\,d\mu_{\s^2}du\right|\\
&\leq C \sum_{|s|\leq 2}\int_{\uline{H}_{v(V_k)}}J^N_{2,q}[O^s\phi]\cdot \partial_u.
\end{split}
\end{equation*}

Similarly, we find that for $q>1$,
\begin{equation*}
\begin{split}
&|\phi(u_k,V_l,\theta_k,\varphi_k)-\phi(u_k,V_k,\theta_k,\varphi_k)|^2\\
&\leq C\left|(V_l-M)^{q-1}-(V_k-M)^{q-1}\right|\sum_{|s|\leq 2}\left|\int_{V_k}^{V_l}\int_{\s^2}(V-M)^{2-q}(\partial_VO^s\phi)^2(u,V,\theta_k,\varphi_k)\,d\mu_{\s^2}dV\right|\\
&\leq C\left|(V_l-M)^{q-1}-(V_k-M)^{q-1}\right|\sum_{|s|\leq 2}\left|\int_{v(V_k)}^{v(V_l)}\int_{\s^2}v^q(\partial_vO^s\phi)^2(u,v,\theta_k,\varphi_k)\,d\mu_{\s^2}dv\right|\\
&\leq C \left|(V_l-M)^{q-1}-(V_k-M)^{q-1}\right|\sum_{|s|\leq 2}\int_{H_{u_k}}J^N_{2,q}[O^s\phi]\cdot \partial_v,
\end{split}
\end{equation*}
where we used that $(V-M)^{2-q}\sim v^{q-2}$ and $(\partial_VO^s\phi)^2dV\sim  v^2(\partial_vO^s\phi)^2dv$.

Finally, we can estimate by Cauchy--Schwarz on $\s^2$,
\begin{equation*}
\begin{split}
&|\phi(u_k,v_l,\theta_l,\varphi_k)-\phi(u_k,v_k,\theta_k,\varphi_k)|^2+|\phi(u_k,v_l,\theta_k,\varphi_l)-\phi(u_k,v_k,\theta_k,\varphi_k)|^2\\
&\leq C\int_{\s^2}|\snabla\phi|^2(u_k,v_k,\theta,\varphi)\,d\mu_{\s^2}\leq C \int_{\uline{H}_{v(V_k)}}J^N_{2,q}[O \phi]\cdot \partial_u,
\end{split}
\end{equation*}
where we need $q>0$.

By the above estimates it follows that $\phi(p_k)$ must also be a Cauchy sequence if the energies on the right-hand sides are finite.

We use Proposition \ref{prop:highordeestimate} with $q>1$ to estimate the energies on the right-hand side by initial energies. We also use that for $q>1$ we can estimate, by applying the fundamental theorem of calculus together with Cauchy--Schwarz,
\begin{equation*}
\sum_{|s|\leq 2} \sup_{v_0\leq {v'} \leq \infty}\int_{\s^2} (O^sX_{m,n,l}\phi )^2\leq C\sum_{|s|\leq 2} E_{q,(m,n,l)}[O^s\phi]. \qedhere
\end{equation*}
\end{proof}
We have now proved Theorem \ref{thm:C0extension}.

\subsection{Decay of $\partial_v\phi$}
\label{sec:decaydernosymm}
Consider the function $\mathcal{\phi_{\mathcal{H}^+}}:\mathcal{M}\cap {D_{u_0,v_0}}\to \R$ defined by
\begin{equation*}
\phi_{\mathcal{H}^+}(u,v,\theta,\varphi):=\phi(-\infty,v,\theta,\varphi).
\end{equation*}
In particular, ${\partial_u}\phi_{\mathcal{H}^+}=0$.

We consider $\psi:=\phi-\phi_{\mathcal{H}^+}$. We can obtain pointwise decay in $\psi$ with respect to $|u|$ and use the wave equation (\ref{eq:waveqkerr}) to obtain decay of $|{\partial_v}\phi|$ in $v$.
\begin{proposition}
\label{improvedpointwisedecaykn}
Let $0<q\leq 2$. There exists a constant $C=C(M,u_0,v_0,q)>0$, such that for all $H_u$ and $\uline{H}_v$ in ${D_{u_0,v_0}}$,
\begin{equation*}
\begin{split}
&\int_{H_u} J^{N_{2,q}}[\psi]\cdot {\partial_v}+\int_{\uline{H}_v\cap\{u'\leq u\}} J^{N_{2,q}}[\psi]\cdot {\partial_u}\\
&\leq C|u|^{-s} \int_{\mathcal{H}^+\cap\{v\geq v_0\}} v^{s+\epsilon}\left[({\partial_v}\phi)^2+|\snabla\phi|^2+|\snabla^2\phi|^2\right]+ C\int_{\uline{H}_{v_0}\cap\{u'\leq u\}} J^{N_{2,q}}[\psi]\cdot {\partial_u},
\end{split}
\end{equation*}
with $0\leq s\leq 1$ and $\epsilon>0$ arbitrarily small.

In particular, using that the initial data for $\phi$ must satisfy for any $0\leq s\leq 1$
\begin{equation*}
\int_{-\infty}^u\int_{\s^2} {u'}^2(\partial_u\phi)^2+{u'}^{-2}|\snabla \psi|^2\,d\mu_{\s^2}du'\Big|_{v=v_0}\leq D_s|u|^{-s},
\end{equation*}
where $D_s$ is a constant, we have that
\begin{equation*}
\begin{split}
&\int_{H_u} J^{N_{2,q}}[\psi]\cdot {\partial_v}+\int_{\uline{H}_v\cap\{u'\leq u\}} J^{N_{2,q}}[\psi]\cdot {\partial_u}\\
&\leq C|u|^{-s} \left[\int_{\mathcal{H}^+\cap\{v\geq v_0\}} v^{s+\epsilon}\left[({\partial_v}\phi)^2+|\snabla\phi|^2+|\snabla^2\phi|^2\right]+ D_s\right]\\
&:=C|u|^{-s}\tilde{E}_{s;\epsilon}[\phi].
\end{split}
\end{equation*}
\end{proposition}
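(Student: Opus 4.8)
The plan is to run the same weighted energy estimate as in Proposition \ref{prop:mainenergyestimate}, but for $\psi=\phi-\phi_{\mathcal{H}^+}$ rather than for $\phi$, and over the truncated rectangle $\{-\infty\le u'\le u,\ v_0\le v'\le v\}$. The decisive gain is that $\psi$ vanishes identically on $\mathcal{H}^+$ (since $\phi_{\mathcal{H}^+}|_{\mathcal{H}^+}=\phi|_{\mathcal{H}^+}$), so that its tangential derivatives $\partial_v\psi$ and $\snabla\psi$ vanish there too; consequently $J^{N_{2,q}}[\psi]\cdot\partial_v=|u|^2\Omega^2|\snabla\psi|^2+v^q(\partial_v\psi)^2$ vanishes on $\mathcal{H}^+$, and the divergence theorem yields
\[
\int_{H_u}J^{N_{2,q}}[\psi]\cdot\partial_v+\int_{\uline{H}_v\cap\{u'\le u\}}J^{N_{2,q}}[\psi]\cdot\partial_u=\int_{\uline{H}_{v_0}\cap\{u'\le u\}}J^{N_{2,q}}[\psi]\cdot\partial_u-\int K^{N_{2,q}}[\psi]-\int\mathcal{E}^{N_{2,q}}[\psi].
\]
The bulk term $K^{N_{2,q}}[\psi]$ is estimated verbatim as in Proposition \ref{prop:mainenergyestimate}, since it depends only on $\mathbf{T}[\psi]$ and the deformation tensor of $N_{2,q}$, and its contribution is absorbed into the fluxes by Lemma \ref{lm:gronwall}.

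The new ingredient is $\mathcal{E}^{N_{2,q}}[\psi]=N_{2,q}(\psi)\square_g\psi=-N_{2,q}(\psi)\square_g\phi_{\mathcal{H}^+}$, arising because $\phi_{\mathcal{H}^+}$ does not solve (\ref{eq:waveqkerr}). Since $\phi_{\mathcal{H}^+}$ is independent of $u$ and $\partial_ur=-\Omega^2$, a direct computation gives $2\Omega^2\square_g\phi_{\mathcal{H}^+}=2\Omega^2r^{-1}\partial_v\phi_{\mathcal{H}^+}+2\Omega^2r^{-2}\mathring{\slashed{\Delta}}\phi_{\mathcal{H}^+}$; crucially the source involves only $\partial_v\phi$ and $\mathring{\slashed{\Delta}}\phi$ restricted to $\mathcal{H}^+$, which are exactly the quantities appearing on the right-hand side after using $\int_{\s^2}(\mathring{\slashed{\Delta}}\phi)^2\sim\int_{\s^2}|\snabla^2\phi|^2+|\snabla\phi|^2$. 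I would then bound $\int\mathcal{E}^{N_{2,q}}[\psi]$ term by term, using $|N_{2,q}(\psi)|\le|u|^2|\partial_u\psi|+v^q|\partial_v\psi|$ together with a Cauchy--Schwarz split employing $u$- and $v$-dependent weights: one factor becomes a $\psi$-energy density carrying an integrable weight in $u'$ or $v'$ (to be absorbed by Grönwall, exactly as $K^{N_{2,q}}$), the other a source density proportional to $(\partial_v\phi)^2(-\infty,v')$ or $(\mathring{\slashed{\Delta}}\phi)^2(-\infty,v')$.

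The $|u|^{-s}$ decay is extracted in this source half. Because $\partial_v\phi_{\mathcal{H}^+}$ and $\mathring{\slashed{\Delta}}\phi_{\mathcal{H}^+}$ are $u$-independent, the $u'$-integral of $\Omega^2\sim(v'+|u'|)^{-2}$ against the chosen weight factors out, and the elementary estimate $\int_{-\infty}^u\Omega^2|u'|^{\alpha}\,du'\lesssim\int_{|u|}^{\infty}(v'+t)^{-2}t^{\alpha}\,dt\lesssim|u|^{-s}v'^{\,s+\epsilon-\alpha+\cdots}$ produces precisely the factor $|u|^{-s}$ together with the weight $v'^{\,s+\epsilon}$ matching $\tilde{E}_{s;\epsilon}[\phi]$. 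The interpolation underlying this bound, $(v'+|u|)^{-1}\le|u|^{-s}v'^{\,s-1}$, is where the restriction $0\le s\le1$ enters, and the loss $\epsilon>0$ is needed to keep the residual $v'$-weight in the $\psi$-energy half integrable for every $0<q\le2$. This simultaneous balancing of the weights, so that both halves of each Cauchy--Schwarz are controllable across the whole range of $q$, is the main technical obstacle.

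Finally I would assemble the resulting inequality into the hypothesis of Lemma \ref{lm:gronwall} for $f(u,v)=\int_{H_u\cap\{v'\le v\}}(\cdots)$ and $g(u,v)=\int_{\uline{H}_v\cap\{u'\le u\}}(\cdots)$, with forcing $A(u,v)=C|u|^{-s}\tilde{E}_{s;\epsilon}[\phi]+C\int_{\uline{H}_{v_0}\cap\{u'\le u\}}J^{N_{2,q}}[\psi]\cdot\partial_u$. Since $A$ is non-decreasing in $u$ (as $|u|^{-s}$ increases and the truncated initial flux grows) and in $v$, I would freeze $A$ at the target point $(u,v)$, apply Lemma \ref{lm:gronwall} with that constant on the rectangle, and use that the resulting exponential factor is uniformly bounded; this gives $f+g\le CA(u,v)$, which is the stated estimate. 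The ``in particular'' assertion then follows from the Hardy-type bound $\int_{-\infty}^u|u'|^2(\partial_u\phi)^2\,du'\le|u|^{-s}\int_{-\infty}^u|u'|^{2+s}(\partial_u\phi)^2\,du'$ applied to the initial flux on $\uline{H}_{v_0}$.
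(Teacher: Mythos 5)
Your proposal is correct and follows essentially the same route as the paper: exploit that $\square_g\psi=-\square_g\phi_{\mathcal{H}^+}$ is a source of size $\Omega^2\sim(v+|u|)^{-2}$ built from $\partial_v\phi$ and angular derivatives of $\phi$ on $\mathcal{H}^+$, run the $N_{2,q}$ energy identity with vanishing $\mathcal{H}^+$-flux for $\psi$, split $\mathcal{E}^{N_{2,q}}[\psi]$ by weighted Cauchy--Schwarz so that the interpolation $(v+|u|)^{-1}\leq |u|^{-s}v^{s-1}$ extracts the $|u|^{-s}$ factor at the cost of the $v^{s+\epsilon}$ weight, and close with the two-variable Gr\"onwall lemma. (Your formula for $\square_g\phi_{\mathcal{H}^+}$ has a sign discrepancy relative to the paper's, but this is immaterial since only absolute values enter the estimate.)
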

\begin{proof}
We have that
\begin{equation*}
2\Omega^2\square_g\psi=-2\Omega^2\square_g\phi|_{\mathcal{H}^+}=-2r^{-1}\partial_u r{\partial_v}\phi_{\mathcal{H}^+}+2\Omega^2\slashed{\Delta}\phi_{\mathcal{H}^+}.
\end{equation*}
Consequently, we can estimate,
\begin{equation*}
2\Omega^2|\square_g\psi|\leq C(v+|u|)^{-2}\left(|{\partial_v}\phi_{\mathcal{H}^+}|+|\snabla\phi_{\mathcal{H}^+}|+|\snabla^2\phi_{\mathcal{H}^+}|\right).
\end{equation*}
By applying the divergence theorem in $D_{u_0,v_0}$ we obtain the following error term:
\begin{equation*}
\begin{split}
&\left|\int_{\mathcal{M}\cap {D_{u_0,v_0}}\cap\{u'\leq u\}}\mathcal{E}^{N_{p,q}}[\psi]\right|\\
&\leq C\int_{-\infty}^u \int_{v_0}^v \int_{\s^2}(v+|u|)^{-2}(|u|^p|{\partial_u}\psi|+v^q|{\partial_v}\psi|)\left(|{\partial_v}\phi_{\mathcal{H}^+}|+|\snabla\phi_{\mathcal{H}^+}|+|\snabla^2\phi_{\mathcal{H}^+}|\right)\,2\Omega^2r^2d\mu_{\s^2}dvdu.
\end{split}
\end{equation*}
By Cauchy--Schwarz, we can estimate for $\eta>0$
\begin{equation*}
\begin{split}
&(v+|u|)^{-2}(|u|^p|{\partial_u}\psi|+v^q|{\partial_v}\psi|)\left(|{\partial_v}\phi_{\mathcal{H}^+}|+|\snabla\phi_{\mathcal{H}^+}|+|\snabla^2\phi_{\mathcal{H}^+}|\right)\lesssim v^{-1-\eta}|u|^p(\partial_u\psi)^2\\
&\quad+|u|^{-1-\eta'}v^q(\partial_v\psi)^2+(v+|u|)^{-4}(|u|^p{v}^{1+\eta}+v^q|u|^{1+\eta})\left[({\partial_v}\phi_{\mathcal{H}^+})^2+|\snabla\phi_{\mathcal{H}^+}|^2+|\snabla^2\phi_{\mathcal{H}^+}|^2\right]
\end{split}
\end{equation*}
We further estimate for $0\leq s\leq 1$,
\begin{equation*}
\begin{split}
&(v+|u|)^{-4}(|u|^p{v}^{1+\eta}+v^q|u|^{1+\eta})\left[({\partial_v}\phi_{\mathcal{H}^+})^2+|\snabla\phi_{\mathcal{H}^+}|^2+|\snabla^2\phi_{\mathcal{H}^+}|^2\right]\\
&\leq C |u|^{-1-s} (v^{p-2+s+\eta}+v^{q-2+s+\eta})\left[({\partial_v}\phi_{\mathcal{H}^+})^2+|\snabla\phi_{\mathcal{H}^+}|^2+|\snabla^2\phi_{\mathcal{H}^+}|^2\right]
\end{split}
\end{equation*}
Hence,
\begin{equation*}
\begin{split}
&\int_{\mathcal{M}\cap {D_{u_0,v_0}}\cap\{u'\leq u\}}|u|^{-1-s} (v^{p-2+s+\eta}+v^{q-2+s+\eta})\left[({\partial_v}\phi_{\mathcal{H}^+})^2+|\snabla\phi_{\mathcal{H}^+}|^2+|\snabla^2\phi_{\mathcal{H}^+}|^2\right]\\
&\leq C|u|^{-s}\int_{\mathcal{H}^+\cap \{v'\geq v_0\}}(v^{p-2+s+\eta}+v^{q-2+s+\eta})\left(({\partial_v}\phi)^2+|\snabla\phi|^2+|\snabla^2\phi|^2\right),
\end{split}
\end{equation*}
where we used that
\begin{equation*}
({\partial_v}\phi_{\mathcal{H}^+})^2+|\snabla\phi_{\mathcal{H}^+}|^2+|\snabla^2\phi_{\mathcal{H}^+}|^2 \sim ({\partial_v}\phi)|_{\mathcal{H}^+}^2+|\snabla\phi|^2|_{\mathcal{H}^+}+|\snabla^2\phi|^2|_{\mathcal{H}^+}.
\end{equation*}
The remaining terms in $\mathcal{E}^{N_{p,q}}[\psi]$ and the terms in $K^{N_{p,q}}[\psi]$ can be estimated as in Proposition \ref{prop:mainenergyestimate}, applying Lemma \ref{lm:gronwall}.
\end{proof}

\begin{proposition}
\label{cor:improvedpointbound}
Let $0\leq s\leq 1$. There exists a constant $C=C(M,v_0,u_0,s)>0$ such that,
\begin{equation*}
\begin{split}
\int_{S^2_{u,v}}|\snabla\psi|^2+|\snabla^2\psi|^2\,d\mu_{\slashed{g}}&\leq C|u|^{-(s+1)}\sum_{k=1}^2\tilde{E}_{s;\epsilon}[O^k\phi].
\end{split}
\end{equation*}
\end{proposition}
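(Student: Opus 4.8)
The plan is to transfer the $|u|^{-s}$ flux decay of Proposition \ref{improvedpointwisedecaykn} into pointwise $L^2(\s^2)$ decay for the angular derivatives of $\psi$, gaining one extra power of $|u|$ from an integration in the ingoing direction starting at the event horizon. First I would record the two structural facts that make this work. Since $\psi=\phi-\phi_{\mathcal{H}^+}$ with $\phi_{\mathcal{H}^+}(u,v,\theta,\varphi)=\phi(-\infty,v,\theta,\varphi)$, we have $\partial_u\psi=\partial_u\phi$ and $\psi|_{\mathcal{H}^+}=0$; because the $O_i$ differentiate only in the angular variables and commute with restriction to $\{U=0\}$, the same holds after applying any $O^k$, namely $\partial_u O^k\psi=\partial_u O^k\phi$ and $O^k\psi|_{\mathcal{H}^+}=0$. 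Moreover, each $O^k\phi$ solves (\ref{eq:waveqkerr}) (the $O_i$ being Killing), so Proposition \ref{improvedpointwisedecaykn} applies verbatim to $O^k\phi$, with $\psi$ there replaced by $O^k\psi$.

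Next I would estimate $\int_{\s^2}(O^k\psi)^2\,d\mu_{\s^2}$ for each multi-index with $1\le|k|\le2$. Using $O^k\psi|_{\mathcal{H}^+}=0$, the fundamental theorem of calculus in $u$ gives, for fixed $(\theta,\varphi)$,
\[
O^k\psi(u,v,\theta,\varphi)=\int_{-\infty}^u\partial_u O^k\phi(u',v,\theta,\varphi)\,du',
\]
and Cauchy--Schwarz with the weight $|u'|^{-2}$ yields
\[
(O^k\psi)^2(u,v,\theta,\varphi)\le\Big(\int_{-\infty}^u|u'|^{-2}\,du'\Big)\int_{-\infty}^u|u'|^2(\partial_u O^k\phi)^2(u',v,\theta,\varphi)\,du'=|u|^{-1}\int_{-\infty}^u|u'|^2(\partial_u O^k\phi)^2\,du'.
\]
Integrating over $\s^2$ and using that $r$ is bounded above and below by positive constants on $D_{u_0,v_0}$ (so the $r^2$ weights in the hypersurface measure are harmless), the right-hand side is controlled by $|u|^{-1}\int_{\uline{H}_v\cap\{u'\le u\}}J^{N_{2,q}}[O^k\psi]\cdot\partial_u$, since $J^{N_{2,q}}[\,\cdot\,]\cdot\partial_u\ge|u|^2(\partial_u\,\cdot\,)^2$. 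Proposition \ref{improvedpointwisedecaykn} applied to $O^k\phi$ then bounds this flux by $C|u|^{-s}\tilde{E}_{s;\epsilon}[O^k\phi]$, so that
\[
\int_{\s^2}(O^k\psi)^2\,d\mu_{\s^2}\le C|u|^{-(s+1)}\tilde{E}_{s;\epsilon}[O^k\phi].
\]

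Finally I would pass from the $O^k$-derivatives to the covariant derivatives $\snabla\psi,\snabla^2\psi$ by the standard elliptic comparison on the round sphere: $\int_{\s^2}|\snabla\psi|^2\,d\mu_{\slashed{g}}=\sum_i\int_{\s^2}(O_i\psi)^2\,d\mu_{\s^2}$, while the Bochner formula together with $\mathring{\slashed{\Delta}}=\sum_i O_i^2$ gives $\int_{\s^2}|\snabla^2\psi|^2\,d\mu_{\slashed{g}}\le C\sum_{1\le|k|\le2}\int_{\s^2}(O^k\psi)^2\,d\mu_{\s^2}$, the bounded $r$-weights again being absorbed into $C$. Summing the previous bound over $1\le|k|\le2$ then gives the claim. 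The one delicate point, and the reason the statement is phrased for $\psi$ rather than $\phi$, is the vanishing of the boundary contribution at $u=-\infty$: this is exactly $O^k\psi|_{\mathcal{H}^+}=0$, and convergence of the $u'$-integral down to the horizon is guaranteed by finiteness of the flux in Proposition \ref{improvedpointwisedecaykn}. I expect the main (though essentially routine) work to be this bookkeeping---carrying the angular derivatives and the $r$-weights consistently through the integration while converting the $|u|^{-s}$ flux decay into the pointwise $|u|^{-(s+1)}$ bound.
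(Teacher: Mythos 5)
Your proposal is correct and follows essentially the same route as the paper, whose proof is exactly the instruction to combine the fundamental-theorem-of-calculus/Cauchy--Schwarz argument of Proposition \ref{prop:boundpsirn} (applied to $O^k\psi$, using $O^k\psi|_{\mathcal{H}^+}=0$ and $\partial_uO^k\psi=\partial_uO^k\phi$) with the $|u|^{-s}$ flux decay of Proposition \ref{improvedpointwisedecaykn}. Your spelled-out bookkeeping---the $|u|^{-1}$ gain from the weight $|u'|^{-2}$, the comparison $J^{N_{2,q}}[\cdot]\cdot\partial_u\geq|u|^2(\partial_u\cdot)^2$, and the elliptic comparison between $O^k$ and $\snabla^k$ with harmless $r$-weights---is precisely what the paper leaves implicit.
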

\begin{proof}
Apply the proof of Proposition \ref{prop:boundpsirn}, together with the energy estimates for $\psi$ in Proposition \ref{improvedpointwisedecaykn}.
\end{proof}

We can rewrite the wave equation (\ref{eq:waveqkerr}) as a transport equation:
\begin{equation}
\label{eq:transpeqRN}
\partial_u(r\partial_{v}\phi)=-\partial_{v}r\partial_{u}\phi+r\partial_{v}r \slashed{\Delta}\phi.
\end{equation}
Consequently, after integrating in $u$ and over $\s^2$, we can estimate
\begin{equation}
\label{eq:dVpsiRN}
\begin{split}
\int_{\s^2}(\partial_v\phi)^2(u,v)\,d\mu_{\s^2}&\leq \int_{\s^2}(\partial_v\phi)^2(-\infty,v)\,d\mu_{\s^2}+\int_{\s^2}\left(\int_{-\infty}^u \partial_{v}r\partial_{u}\phi\,d{u'}\right)^2\,d\mu_{\s^2}\\
&\quad+\int_{\s^2}\left(\int_{-\infty}^u r\partial_{v}r \slashed{\Delta}\psi\,d{u'}\right)^2\,d\mu_{\s^2}\\
&\quad+\int_{\s^2}\left(\int_{-\infty}^u r\partial_{v}r \slashed{\Delta}\phi_{\mathcal{H}_+}\,d{u'}\right)^2\,d\mu_{\s^2}.
\end{split}
\end{equation}

\begin{proposition}
\label{prop:boundderVpsi}
For any $\epsilon>0$, there exists a constant $C=C(M,v_0,u_0,\epsilon)>0$ such that,
\begin{equation*}
\begin{split}
\int_{S^2_{u,v}}({\partial_v}\phi)^2(u,v,\theta,{\varphi})\,d\mu_{\slashed{g}}&\leq \int_{S^2_{-\infty,v}}({\partial_v}\phi)^2\,d\mu_{\slashed{g}}+C(v+|u|)^{-4}|u|^{-\eta}E_{\eta}[\phi]\\
&\quad+C(v+|u|)^{-4+\epsilon}\sum_{k=1}^2\tilde{E}_{1-\epsilon;0}[O^k\phi]\\
&\quad+C(v+|u|)^{-2}\sum_{k=1}^2 \int_{S^2_{-\infty,v}}(O^k\phi)^2\,d\mu_{\slashed{g}}.
\end{split}
\end{equation*} 

Moreover, we can estimate
\begin{equation*}
\begin{split}
\int_{S^2_{u,v}}({\partial_v}\phi)^2(u,v,\theta,{\varphi})\,d\mu_{\slashed{g}}&\leq \int_{S^2_{-\infty,v}}({\partial_v}\phi)^2\,d\mu_{\slashed{g}}+C(v+|u|)^{-4}|u|^{-\eta}E_{\eta}[\phi]\\
&\quad+C(v+|u|)^{-4}\log\left(\frac{v+|u|}{|u|}\right)\sum_{k=1}^2\tilde{E}_{1;\epsilon}[O^k\phi]\\
&\quad+C(v+|u|)^{-2}\sum_{k=1}^2 \int_{S^2_{-\infty,v}}(O^k\phi)^2\,d\mu_{\slashed{g}},
\end{split}
\end{equation*}
for $\epsilon>0$ arbitrarily small.

In particular, $\phi$ can be extended as a $C^{0,\alpha}$ function beyond $\mathcal{CH}^+$, for $\alpha<1$.
\end{proposition}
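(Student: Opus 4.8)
The plan is to estimate, one by one, the four terms on the right-hand side of the already-derived inequality \eqref{eq:dVpsiRN} for $\int_{\s^2}(\partial_v\phi)^2(u,v)\,d\mu_{\s^2}$, which comes from writing the wave equation as the transport equation \eqref{eq:transpeqRN} for $r\partial_v\phi$ and integrating in $u$ from $\mathcal{H}^+$. Throughout I would use $\partial_v r=\Omega^2\leq C(v+|u|)^{-2}$ from \eqref{eq:rnOmegainverseexp}, so that both $\partial_v r$ and $r\partial_v r$ decay quadratically, and I would convert $L^1_u$ integrals into $L^2$ quantities by Cauchy--Schwarz in $u'$ with carefully chosen weights, so that the surviving weighted fluxes are precisely those controlled by the energy estimates of Proposition \ref{prop:mainenergyestimate} and Proposition \ref{prop:highordeestimate}, and by the pointwise angular decay of Proposition \ref{cor:improvedpointbound}.

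The first term is the boundary contribution $\int_{S^2_{-\infty,v}}(\partial_v\phi)^2$, which I keep untouched as the leading term. For the second term, coming from $-\partial_v r\,\partial_u\phi$, I would split $\partial_v r\,\partial_u\phi=[\partial_v r\,|u'|^{-1}]\cdot[|u'|\,\partial_u\phi]$ and apply Cauchy--Schwarz in $u'$: the first factor integrates, using $v+|u'|\geq v+|u|$, to $\int_{|u|}^{\infty}(v+s)^{-4}s^{-2}\,ds\leq (v+|u|)^{-4}|u|^{-1}$, while the second factor, after integration over $\s^2$, is controlled by the flux $\int_{\uline{H}_v}J^{N_{2,\eta}}[\phi]\cdot\partial_u\leq CE_\eta[\phi]$; this yields the stated $(v+|u|)^{-4}|u|^{-\eta}E_\eta[\phi]$ contribution. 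The fourth term, carrying $\slashed{\Delta}\phi_{\mathcal{H}^+}$, is the simplest: since $\phi_{\mathcal{H}^+}$ is independent of $u'$ one factors out $\mathring{\slashed{\Delta}}\phi(-\infty,v,\cdot)$ and is left with $\int_{-\infty}^u r^{-1}\partial_v r\,du'\leq C(v+|u|)^{-1}$, producing the $(v+|u|)^{-2}\sum_{|k|\leq2}\int_{S^2_{-\infty,v}}(O^k\phi)^2$ term.

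The third term, built from $r\partial_v r\,\slashed{\Delta}\psi$ with $\psi=\phi-\phi_{\mathcal{H}^+}$, is where I expect the main obstacle. The essential input is the a priori pointwise angular decay $\int_{S^2_{u',v}}|\snabla^2\psi|^2\,d\mu\leq C|u'|^{-(s+1)}\sum_{k\leq2}\tilde{E}_{s;\epsilon}[O^k\phi]$ of Proposition \ref{cor:improvedpointbound} (itself a consequence of the improved energy estimate of Proposition \ref{improvedpointwisedecaykn} for $\psi$, which exploits that $\psi$ vanishes on $\mathcal{H}^+$). Applying Cauchy--Schwarz in $u'$ against the measure $r\partial_v r\,du'$ and then integrating over $\s^2$, one factor gives $\int_{-\infty}^u r\partial_v r\,du'\leq C(v+|u|)^{-1}$ and the other gives $\int_{-\infty}^u r\partial_v r\,|u'|^{-(s+1)}\,du'$. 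The delicate point is the sharp evaluation of this last integral and the optimal split of the weight $(v+|u'|)^{-2}$ between the factors $(v+|u|)$ and $|u'|$: taking $s=1-\epsilon$ produces the $(v+|u|)^{-4+\epsilon}\tilde{E}_{1-\epsilon;0}$ bound, whereas the borderline rate $s=1$ becomes logarithmically divergent and yields the $(v+|u|)^{-4}\log\!\big(\tfrac{v+|u|}{|u|}\big)\tilde{E}_{1;\epsilon}$ bound. This borderline logarithm is exactly the mechanism that later forces $\alpha<1$ rather than Lipschitz regularity.

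Finally, for the $C^{0,\alpha}$ statement I would commute the entire argument with the angular momentum operators $O^k$, $|k|\leq2$, and use the Sobolev inequality on $\s^2$ to upgrade the $L^2(S^2_{u,v})$ bounds to an $L^\infty(S^2_{u,v})$ bound for $\partial_V\phi=\frac{dv}{dV}\partial_v\phi$, using $\frac{dv}{dV}\sim v^2$ from \eqref{est:vtoV}. Inserting the decay rates of Theorem \ref{thm:c11boundphi} (rate $v^{-(\alpha+1)}$ for $\partial_v\phi$ and $v^{-\alpha}$ for the angular derivatives), the dominant contributions near $\mathcal{CH}^+$ grow like $v^{2(1-\alpha)}$, i.e. $|\partial_V\phi|\lesssim v^{1-\alpha}\sim (M-V)^{-(1-\alpha)}$. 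Since $\phi$ already extends continuously across $\mathcal{CH}^+$ by Proposition \ref{prop:C0extension}, integrating this bound in $V$ gives $|\phi(u,V)-\phi(u,V')|\lesssim |V-V'|^{\alpha}$; combining with the uniform boundedness of $\partial_u\phi$ from Proposition \ref{prop:boundpsirn} (Lipschitz control in the $u$ direction) and of the angular derivatives yields $C^{0,\alpha}$-extendibility for every $\alpha<1$.
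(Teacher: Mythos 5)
Your overall strategy coincides with the paper's: both start from the identity \eqref{eq:dVpsiRN} obtained by integrating the transport equation \eqref{eq:transpeqRN} in $u$, estimate the four resulting terms by Cauchy--Schwarz together with the weighted fluxes of Propositions \ref{prop:mainenergyestimate} and \ref{prop:highordeestimate} and the improved angular decay of Proposition \ref{cor:improvedpointbound}, and both obtain the $C^{0,\alpha}$ statement by commuting with $O^k$, applying Sobolev on the spheres, and integrating $|\partial_V\phi|\lesssim v^{1-\alpha}$ in $V$. Your treatment of the first, second and fourth terms is correct.

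The one genuine problem is the Cauchy--Schwarz split you commit to for the third term. Writing $f=r^{-1}\partial_v r\sim (v+|u'|)^{-2}$ and $g=\mathring{\slashed{\Delta}}\psi$ with $\int_{\s^2}g^2\,d\mu_{\s^2}\lesssim |u'|^{-(1+s)}\sum_k\tilde{E}_{s;\epsilon}[O^k\phi]$, your split $\left(\int fg\right)^2\le\left(\int f\right)\left(\int f g^2\right)$ produces $(v+|u|)^{-1}\cdot\int_{|u|}^{\infty}(v+x)^{-2}x^{-(1+s)}\,dx$. Since the exponent $1+s>1$ is supercritical, the sharp value of that integral is $(v+|u|)^{-2}|u|^{-s}$: one full power of decay is stuck on $|u|$ and cannot be transferred to $(v+|u|)$ by any finer evaluation. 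The product is therefore $(v+|u|)^{-3}|u|^{-s}$, not $(v+|u|)^{-4+\epsilon}$; for $u$ in a compact set this is only $v^{-3}$, which in your final step caps the H\"older exponent at $\alpha\le 1/2$ instead of $\alpha<1$. The fix --- and the reason the paper displays precisely the integrals $\int|u'|^{-s}(v+|u'|)^{-2}\,du'$ with $0<s<1$ together with the borderline case $\int|u'|^{-1}(v+|u'|)^{-2}\,du'$ --- is to distribute the angular decay between the two Cauchy--Schwarz factors, e.g.\ $\left(\int fg\right)^2\le\left(\int f\,|u'|^{-(1+s)/2}\right)\left(\int f\,|u'|^{(1+s)/2}g^2\right)$: for $s=1-\epsilon$ both factors then carry the subcritical weight $|u'|^{-(1-\epsilon/2)}$ and each evaluates to $(v+|u|)^{-2+\epsilon/2}$ up to a harmless $|u|^{-\eta}$, giving the claimed $(v+|u|)^{-4+\epsilon}$; at the borderline $s=1$ both exponents equal $1$ and each factor produces the logarithm. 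With this correction the remainder of your argument, including the concluding integration in $V$, goes through as in the paper.
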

\begin{proof}
We apply the energy estimate in Proposition \ref{prop:mainenergyestimate} and the improved $L^2(\s^2)$ estimate in Proposition \ref{cor:improvedpointbound} to estimate the right-hand side of (\ref{eq:dVpsiRN}). We need in particular the following integral estimates:
\begin{align*}
\int_{-\infty}^u|u'|^{-1}(v+|u'|)^{-2}\,du'&\leq C(v+|u|)^{-2}\log\left(\frac{v+|u|}{|u|}\right),\\
\int_{-\infty}^u|u'|^{-s}(v+|u'|)^{-2}\,du'&\leq C(v+|u|)^{-2+(1-s+\eta)}|u|^{-\eta},
\end{align*}
for $0<s<1$ and $\eta>0$ arbitrarily small, to arrive at the estimates in the proposition.

We can replace $\phi$ by $O\phi$ and $O^2\phi$ in the above arguments to obtain, after applying a standard Sobolev inequality on the spheres $S^2_{u,v}$, an $L^{\infty}$ estimate for $\partial_v\phi$. We have in particular that $v^{\alpha+1}\partial_v\phi$ is uniformly bounded in $\mathcal{M}\cap {D_{u_0,v_0}}$, for suitable initial data, for all $0\leq \alpha<1$. It immediately follows that $\phi$ can be extended as a $C^{0,\alpha}$ function beyond $\mathcal{CH}^+$.
\end{proof}
We have now proved Theorem \ref{thm:estpartialVphi} (i).

\bibliography{mybib2}
\bibliographystyle{amsplaininitials}
\end{document}